\theoremstyle{plain} 
    \newtheorem{theorem}{Theorem}[section]
    \newtheorem*{theorem*}{Theorem}
	\newtheorem{citedtheorem}{Theorem}[section]
    \newtheorem{proposition}[theorem]{Proposition}
    \newtheorem*{proposition*}{Proposition}
	\newtheorem{corollary}[theorem]{Corollary}
    \newtheorem*{corollary*}{Corollary}
	\newtheorem{lemma}[theorem]{Lemma}
    \newtheorem*{lemma*}{Lemma}
    \newtheorem*{conjecture*}{Conjecture}
\theoremstyle{definition} 
    \newtheorem{definition}[theorem]{Definition}
    \newtheorem*{definition*}{Definition}
    \newtheorem*{notation*}{Notation}
\theoremstyle{remark} 
    \newtheorem{remark}[theorem]{Remark}
    \newtheorem*{remark*}{Remark}
	\newtheorem{example}[theorem]{Example}
	\newtheorem*{example*}{Example}	
\numberwithin{equation}{section}
\numberwithin{figure}{section}
    \newcommand{\addQEDstyle}[2]{\AtBeginEnvironment{#1}{\pushQED{\qed}\renewcommand{\qedsymbol}{#2}}
    \AtEndEnvironment{#1}{\popQED}} 
\apptocmd{\sloppy}{\hbadness 10000\relax}{}{} 
\let\endminwd\relax
\newcolumntype{L}[1]{>{\collectcell\xminwd l{#1}}l<{\endminwd\endcollectcell}}
\newcolumntype{C}[1]{>{\collectcell\xminwd c{#1}}c<{\endminwd\endcollectcell}}
\newcolumntype{R}[1]{>{\collectcell\xminwd r{#1}}r<{\endminwd\endcollectcell}}
\def\minwd#1#2#3\endminwd{\stackengine{0pt}{#3}{\rule{#2}{0pt}}{O}{#1}{F}{F}{L}}
\newcommand\xminwd[1]{\minwd#1}
\global\long\def\bR{\mathbb{R}}
\global\long\def\bZ{\mathbb{Z}}
\global\long\def\bZpos{\mathbb{Z}_{>0}}
\global\long\def\bZnn{\mathbb{Z}_{\geq 0}}
\global\long\def\bQ{\mathbb{Q}}
\global\long\def\bC{\mathbb{C}}
\global\long\def\bH{\mathbb{H}}
\global\long\def\bD{\mathbb{D}}
 \global\long\def\sD{\mathcal{D}}
 \global\long\def\sG{\mathcal{G}}
 \global\long\def\sL{\mathcal{L}}
 \global\long\def\sE{\mathcal{E}}
 \global\long\def\sV{\mathcal{V}}
\global\long\def\ii{\mathfrak{i}}
\newcommand{\LP}{\mathsf{LP}}
\newcommand{\TL}{\mathsf{TL}}
\newcommand{\vTL}{\mathsf{TL}}
\newcommand{\Summed}{n}
\newcommand{\multii}{\varsigma}
\newcommand{\NB}{\textnormal{NB}^\lambda}
\newcommand{\SYT}{\textnormal{SYT}^\lambda}
\newcommand{\NBof}[1]{{\textnormal{NB}^{#1}}}
\newcommand{\RSYTof}[1]{{\textnormal{RSYT}^{#1}_\multii}}
\newcommand{\idpt}{\mathrm{p}_\multii}
\global\long\def\PartF{\mathcal{Z}}
\global\long\def\CobloF{\mathcal{U}}
\global\long\def\chamber{\mathfrak{X}}
\newcommand{\im}{\text{\upshape Im}}
\global\long\def\SLE{\textrm{SLE}}
\global\long\def\PR{\mathbb{P}}
\global\long\def\pder#1{\frac{\partial}{\partial#1}}
\global\long\def\pdder#1{\frac{\partial^{2}}{\partial#1^{2}}}
\newcommand{\sign}{\mathrm{sgn}}
\newcommand{\GreenK}{\mathsf{G}}
\newcommand{\ExcK}{\mathsf{K}}
\newcommand{\Walks}{\mathscr{W}}
\newcommand{\walk}{\omega}
\newcommand{\GreenKH}{\mathscr{G}}
\newcommand{\PoissonKH}{\mathscr{P}}
\newcommand{\ExcKH}{\mathscr{K}}
\newcommand{\KWleq}{\stackrel{\scriptsize{()}}{\leftarrow}}
\newcommand{\DPgeq}{\succeq} 
\newcommand{\CItilingsof}{\mathcal{C}}
\newcommand{\diff}{\mathrm{d}} 
\global\long\def\Mob{\phi}
\global\long\def\domain{\Lambda}
\global\long\def\summ{q}
\global\long\def\mult{\mathrm{m}}
\newcommand{\conn}{\vartheta_{\mathrm{UST}}}
\newcommand{\event}{\mathrm{Conn}}
\global\long\def\linkInEquation#1#2{\underset{\;#1\;\;#2}{
\vcenter{\hbox{\includegraphics[scale=0.5]{}}}}}
\global\long\def\link#1#2{\raisebox{.5ex}{\hspace{-1mm}
\scalebox{.75}{$\linkInEquation{\boldsymbol{#1}}{\boldsymbol{#2}}$}}}
\global\long\def\quote#1{$``${{#1}}$"$}
\global\long\def\SolSp{\mathcal{S}}
\global\long\def\fugacity{\nu}
\newcommand{\Vir}{\mathrm{Vir}}
\newcommand{\SymGrp}{\mathfrak{S}}
\newcommand{\Rows}{\mathfrak{R}^\lambda}
\newcommand{\Columns}{\mathfrak{C}^\lambda}
\newcommand{\super}[1]{^{\scaleobj{0.85}{(#1)}}}
\newcommand{\sub}[1]{_{\scaleobj{0.85}{(#1)}}}
\newcommand{\bs}[1]{{\boldsymbol{#1}}}
\global\long\def\nested{\boldsymbol{\underline{\Cap}}}
\newcommand{\rainbow}[1]{{\nested_{#1}}} 
\newcommand{\BSAOP}{\mathcal{P}}
\global\long\def\one{\scalebox{0.9}{\textnormal{1}} \hspace*{-.75mm} \scalebox{0.6}{\raisebox{.3em}{\bf |}} }
\global\long\def\onesmall{\scalebox{0.7}{\textnormal{1}} \hspace*{-.75mm} \scalebox{0.5}{\raisebox{.3em}{\bf |}} }
\newglossaryentry{symb:sG}{
name=${\sG=(\sV,\sE)}$,
description={a finite connected planar graph with a fixed embedding.},
sort=symbolsG, type=symbolslist
}
\newglossaryentry{symb:sVcirc}{
name=$\sV_\circ$,
description={the set of interior vertices of a graph $\sG=(\sV,\sE)$ of the above type.},
sort=symbolsVcirc, type=symbolslist
}
\newglossaryentry{symb:sVpartial}{
name=$\sV_\partial$,
description={the set of boundary vertices of a graph $\sG=(\sV,\sE)$ of the above type.},
sort=symbolsVpartial, type=symbolslist
}
\newglossaryentry{symb:event}{
name=$\event$,
description={the event (assumed throughout) that the UST boundary branches from the interior vertices of the odd-index edges connect to the boundary via the even-index edges, each using a different even edge (Fig.~\ref{fig:UST combinatorial}).},
sort=symbolevent, type=symbolslist
}
\newglossaryentry{symb:conn}{
name=$\conn$,
description={the random partition (planar link pattern) of the indices into pairs on the event $\event$, encoding the pairing of the UST boundary-to-boundary branches (Fig.~\ref{fig:UST combinatorial}).},
sort=symbolconn, type=symbolslist
}
\newglossaryentry{symb:LP}{
name=$\LP_N$,
description={the set of planar pair partitions, a.k.a.~(planar) $N$-link patterns.},
sort=symbolLP, type=symbolslist
}
\newglossaryentry{symb:connevent}{
name=${\event(\alpha)}$,
description={the crossing event that the random UST connectivity $\conn$ equals $\alpha$.},
sort=symbolconnevent, type=symbolslist
}
\newglossaryentry{symb:domain}{
name=${\domain^\delta \subsetneq \bC}$,
description={a bounded simply connected domain, whose boundary consists of a simple loop on the rescaled square lattice $\delta\bZ^2$; identified with the planar graph with vertices $\sV^\delta := \sV(\delta\bZ^2) \cap \smash{\overline{\domain}}^\delta$ and edges $\sE^\delta := \sE(\delta\bZ^2) \cap \smash{\overline{\domain}}^\delta$.},
sort=symboldomain, type=symbolslist
}
\newglossaryentry{symb:multii}{
name=${\multii = (s_1,\ldots,s_d)}$,
description={the tuple of fused multiplicities, a.k.a.~valences, upon fusing the endpoints of $N = \frac12(s_1+\cdots+s_d)$ curves onto $d$ limit points.},
sort=symbolmultii, type=symbolslist
}
\newglossaryentry{symb:summ}{
name=${\summ_j := s_0 + \cdots + s_j }$,
description={the partial sum of the fused multiplicities, for $0 \leq j \leq d$.},
sort=symbolsumm, type=symbolslist
}
\newglossaryentry{symb:LPmultii}{
name=$\LP_\multii$,
description={the set of (planar) $\multii$-valenced link patterns.},
sort=symbolLPmultii, type=symbolslist
}
\newglossaryentry{symb:imath}{
name=$\imath$,
description={the unfusing map sending a valenced link pattern $\alpha$ in $\LP_\multii$ to a link pattern $\imath(\alpha)$ in $\LP_N$ (Fig.~\ref{fig: UST fused}).},
sort=symbolimath, type=symbolslist
}
\newglossaryentry{symb:polygon}{
name=${(\domain; p_1, \ldots, p_d)}$,
description={an admissible $d$-polygon.},
sort=symbolpolygon, type=symbolslist
}
\newglossaryentry{symb:discrpolygon}{
name=${(\domain^\delta; e_1^\delta, \ldots, e_{2N}^\delta)}$,
description={an (discrete) admissible $2N$-polygon.},
sort=symboldiscrpolygon, type=symbolslist
}
\newglossaryentry{symb:BEK}{
name=${\ExcKH_\domain(x, y)}$,
description={the Brownian excursion kernel in $\domain$ from $x$ to $y$.},
sort=symbolBEK, type=symbolslist
}
\newglossaryentry{symb:DPgeq}{
name=$\DPgeq$,
description={the partial order relation on $\LP_N$ defined in~\cite[Definition~2.1]{KKP:Boundary_correlations_in_planar_LERW_and_UST}.},
sort=symbolDPgeq, type=symbolslist
}
\newglossaryentry{symb:PartF}{
name=$\PartF_\alpha$,
description={the pure partition function associated to $\alpha \in \LP_\multii$.},
sort=symbolPartF, type=symbolslist
}
\newglossaryentry{symb:chamber}{
name=$\chamber_{d}$,
description={the configuration space $\chamber_{d}:=\{(x_1, \dots, x_d) \in \bR^d \; | \; x_1<\cdots<x_d\}$.},
sort=symbolchamber, type=symbolslist
}
\newglossaryentry{symb:BPZop}{
name=${\sD_{s_j+1}^{{(x_j)}}}$,
description={the BPZ differential operator defined in Eq.~\eqref{eq: BPZ operator at kappa equals 2}.},
sort=symbolBPZop, type=symbolslist
}
\newglossaryentry{symb:Virop}{
name=${\sL_{-m}^{{(x_j)}}}$,
description={the first order differential operators related to Virasoro generators.},
sort=symbolVirop, type=symbolslist
}
\newglossaryentry{symb:SolSp}{
name=${\SolSp_\multii}$,
description={the solution space of BPZ PDEs~\eqref{eq: BPZ PDE at kappa equals 2} with global M\"obius covariance (COV~\eqref{eq: COV general at kappa equals 2}).},
sort=symbolSolSp, type=symbolslist
}
\newglossaryentry{symb:TL}{
name=${\TL_{2N}(\fugacity)}$,
description={Temperley-Lieb algebra with fugacity parameter $\fugacity$.},
sort=symbolTL, type=symbolslist
}
\newglossaryentry{symb:vTL}{
name=${\vTL_\multii}$,
description={valenced Temperley-Lieb algebra with fugacity parameter specialized to $\fugacity = -2$.},
sort=symbolvTL, type=symbolslist
}
\newglossaryentry{symb:kernel}{
name=$\mathfrak{K}$,
description={a kernel used to define Fomin type determinants.},
sort=symbolkernel, type=symbolslist
}
\newglossaryentry{symb:FomDet}{
name=${\Delta^{\mathfrak{K}}_\beta}$,
description={a Fomin type determinant with kernel $\mathfrak{K}$ and rows/columns indexed by links of $\beta$.},
sort=symbolFomDet, type=symbolslist
}
\newglossaryentry{symb:CItilings}{
name=${\# \CItilingsof (\alpha / \beta)}$,
description={the number of cover-inclusive Dyck tilings of the skew Young diagram~$\alpha/\beta$ for $\alpha, \beta \in \LP_N$, defined in~\cite[Definition~2.8]{KKP:Boundary_correlations_in_planar_LERW_and_UST}.},
sort=symbolCItilings, type=symbolslist
}
\newglossaryentry{symb:Fomintypesum}{
name=${\mathfrak{Z}^{\mathfrak{K}}_\alpha}$,
description={an inverse Fomin type sum  associated to $\alpha$, with kernel $\mathfrak{K}$.},
sort=symbolFomintypesum, type=symbolslist
}
\newglossaryentry{symb:exchangker}{
name=${\mathfrak{K}^{(k, \ell)}}$,
description={the symmetric kernel obtained by exchanging the kernel entries at $k$ and $\ell$.},
sort=symbolexchangker, type=symbolslist
}
\newglossaryentry{symb:RWweight}{
name=${\mathsf{w} (\walk)}$,
description={the random-walk weight of a nearest-neighbor walk $\walk$.},
sort=symbolRWweight, type=symbolslist
}
\newglossaryentry{symb:RWGF}{
name=${\GreenK_{\sG}}$,
description={the random-walk Green's function with Dirichlet b.c.},
sort=symbolRWGF, type=symbolslist
}
\newglossaryentry{symb:RWEK}{
name=${\ExcK_{\sG}}$,
description={the random-walk excursion kernel with Dirichlet b.c.},
sort=symbolRWEK, type=symbolslist
}
\newglossaryentry{symb:tan}{
name=$D^{\mathrm{tan}}$,
description={the discrete counterclockwise tangential difference on a boundary edge.},
sort=symboltan, type=symbolslist
}
\newglossaryentry{symb:GreenKH}{
name=$\GreenKH$,
description={the Green's function.},
sort=symbolGreenKH, type=symbolslist
}
\newglossaryentry{symb:PoissonKH}{
name=$\PoissonKH$,
description={the Poisson kernel.},
sort=symbolPoissonKH, type=symbolslist
}
\newglossaryentry{symb:ExcKH}{
name=$\ExcKH$,
description={the Brownian excursion kernel.},
sort=symbolExcKH, type=symbolslist
}
\newglossaryentry{symb:Vir}{
name=$\Vir$,
description={the Virasoro algebra.},
sort=symbolVir, type=symbolslist
}
\newglossaryentry{symb:VirLn}{
name=$L_n$,
description={a Virasoro generator.},
sort=symbolVirLn, type=symbolslist
}
\newglossaryentry{symb:VirC}{
name=$C$,
description={the central element in the Virasoro algebra.},
sort=symbolVirC, type=symbolslist
}
\newglossaryentry{symb:VirEnv}{
name=$\mathcal U(\Vir^-)$,
description={the enveloping algebra of the Virasoro generatores with negative index..},
sort=symbolVirEnv, type=symbolslist
}
\newglossaryentry{symb:Virhwv}{
name=$v_{h}^{c}$,
description={a highest-weight vector for the Virasoro algebra of weight $h$ and central charge $c$.},
sort=symbolVirhwv, type=symbolslist
}
\newglossaryentry{symb:VirVerma}{
name=$M_{h}^{c}$,
description={a Verma module for the Virasoro algebra of weight $h$ and central charge $c$.},
sort=symbolVirVerma, type=symbolslist
}
\newglossaryentry{symb:VirVerma2}{
name=$M_{h}$,
description={the Verma module for the Virasoro algebra of weight $h$ and central charge $c=-2$.},
sort=symbolVirVerma2, type=symbolslist
}
\newglossaryentry{symb:Kac}{
name=$h_{\ell}$,
description={a Kac type conformal weight at level $\ell \geq 1$ with central charge $c=-2$.},
sort=symbolKac, type=symbolslist
}
\newglossaryentry{symb:Virhwv2}{
name=$v_{\ell}$,
description={the highest-weight vector of weight $h_{\ell}$ at level $\ell \geq 1$ with central charge $c=-2$.},
sort=symbolVirhwv2, type=symbolslist
}
\newglossaryentry{symb:BSAOP}{
name=$\BSAOP_\ell$,
description={the polynomial in the negative Virasoro generators which gives the singular vector at level $\ell$, as $w_\ell = \BSAOP_\ell v_{\ell}$, found in~\cite{BSA:Degenerate_CFTs_and_explicit_expressions_for_some_null_vectors}.},
sort=symbolBSAOP, type=symbolslist
}
\newglossaryentry{symb:VOAF}{
name=$V_{\alpha,h}$,
description={a space of formal series in the indeterminate $t$ with finitely many negative terms. Note that it is a $\Vir$-module with zero central charge $c=0$.},
sort=symbolVOAF, type=symbolslist
}
\newglossaryentry{symb:VOAFAct}{
name=$L_n^0$,
description={the action of the Virasoro generators on the $\Vir$-module $V_{\alpha,h}$.},
sort=symbolVOAFAct, type=symbolslist
}
\newglossaryentry{symb:VOAFW}{
name=${W\otimes V_{\alpha,h}}$,
description={a space of formal series in the indeterminate $t$ with finitely many negative terms and coefficients in a $\Vir$-module $W$.},
sort=symbolVOAFW, type=symbolslist
}
\newglossaryentry{symb:VOAFWAct}{
name=$\hat{L}_n$,
description={the action of the Virasoro generators on the $\Vir$-module $W\otimes V_{\alpha,h}$.},
sort=symbolVOAFWAct, type=symbolslist
}
\newglossaryentry{symb:BSAOPhat}{
name=$\hat{\BSAOP}_\ell$,
description={the polynomial $\BSAOP_\ell$ with the substitutions $L_n \mapsto \hat L_n$.},
sort=symbolBSAOPhat, type=symbolslist
}
\newglossaryentry{symb:SolSpN}{
name=$\SolSp_N$,
description={the solution space of BPZ PDEs~\eqref{eq: BPZ PDE at kappa equals 2}~\&~COV~\eqref{eq: COV general at kappa equals 2} with unit valences $\multii = (1^{2N})$.},
sort=symbolSolSpN, type=symbolslist
}
\newglossaryentry{symb:CSymGrp}{
name=$\bC[\SymGrp_{\Summed}]$,
description={the group algebra of the symmetric group $\SymGrp_{\Summed}$.},
sort=symbolCSymGrp, type=symbolslist
}
\newglossaryentry{symb:SymGrp}{
name=$\SymGrp_{\Summed}$,
description={the symmetric group over the set $\{1,2,\ldots,\Summed\}$.},
sort=symbolSymGrp, type=symbolslist
}
\newglossaryentry{symb:partition}{
name=$\lambda$,
description={a partition $\lambda=(\lambda_1,\lambda_2,\ldots ,\lambda_l)$, so $\lambda_1 \geq \lambda_2 \geq\cdots \geq\lambda_l \geq 0$ and $\lambda_1+\lambda_2+\cdots +\lambda_l = \Summed$.},
sort=symbolpartition, type=symbolslist
}
\newglossaryentry{symb:transposition}{
name=$\tau_i$,
description={the transposition $(i,i+1) \in \SymGrp_\Summed$.},
sort=symboltransposition, type=symbolslist
}
\newglossaryentry{symb:NB}{
name=$\NB$,
description={the set of numberings of shape $\lambda$.},
sort=symbolNB, type=symbolslist
}
\newglossaryentry{symb:SYT}{
name=$\SYT$,
description={the set of standard Young tableaux of shape $\lambda$.},
sort=symbolSYT, type=symbolslist
}
\newglossaryentry{symb:Rows}{
name=$\Rows(A)$,
description={the subgroup of $\SymGrp_\Summed$ preserving the set of entries of each row of $A \in \NB$.},
sort=symbolRows, type=symbolslist
}
\newglossaryentry{symb:Columns}{
name=$\Columns(A)$,
description={the subgroup of $\SymGrp_\Summed$ preserving the set of entries of each column of $A \in \NB$.},
sort=symbolColumns, type=symbolslist
}
\newglossaryentry{symb:tabloid}{
name=$\{A\}$,
description={a tabloid, i.e., an equivalence class of numberings defined by $\{A'\} = \{A\}$ if and only if $A'=\sigma.A$ for some $\sigma \in \Rows(A)$.},
sort=symboltabloid, type=symbolslist
}
\newglossaryentry{symb:Mlambda}{
name=$M^\lambda$,
description={the vector space spanned by tabloids of shape $\lambda$. Note that it is a $\SymGrp_\Summed$-module.},
sort=symbolMlambda, type=symbolslist
}
\newglossaryentry{symb:antisym}{
name=$\epsilon_A$,
description={the column antisymmetrizer for the numbering $A \in \NB$.},
sort=symbolantisym, type=symbolslist
}
\newglossaryentry{symb:polytabloid}{
name=$v_A$,
description={the polytabloid for the numbering $A \in \NB$.},
sort=symbolpolytabloid, type=symbolslist
}
\newglossaryentry{symb:Vlambda}{
name=$V^\lambda$,
description={the vector space spanned by polytabloids of shape $\lambda$, which is a simple $\SymGrp_\Summed$-module.},
sort=symbolVlambda, type=symbolslist
}
\newglossaryentry{symb:pin}{
name=${\pi=\pi_N}$,
description={the rectangular Young diagram with $2N$ boxes and $N$ rows.},
sort=symbolpin, type=symbolslist
}
\newglossaryentry{symb:JW}{
name=${\textnormal{JW}_{i,j}}$,
description={a Jones-Wenzl idempotent.},
sort=symbolJW, type=symbolslist
}
\newglossaryentry{symb:clSYM}{
name=${\SymGrp_{s_1} \times \cdots \times \SymGrp_{s_d}}$,
description={a colored symmetric group.},
sort=symbolclSYM, type=symbolslist
}
\newglossaryentry{symb:idpt}{
name=$\idpt$,
description={an idempotent given by a product of Jones-Wenzl idempotents of fugacity $\fugacity = -2$.},
sort=symbolidpt, type=symbolslist
}
\newglossaryentry{symb:RSYT}{
name=${\RSYTof{\pi}}$,
description={ the set of row-strict Young tableaux of shape $\pi$ and content $\multii$.},
sort=symbolRSYT, type=symbolslist
}
\newglossaryentry{symb:Dmultii}{
name=${\mathfrak{D}_\multii}$,
description={the set $\big\{ (x_1,\ldots, x_{2N}) \in \bC^{2N} \; | \;  
x_{\summ_{k-1}+1}=x_{\summ_{k-1}+2}=\cdots =x_{\summ_{k}}\textnormal{ for all } k \in \{ 1, \ldots, d\} \big\}
$.},
sort=symbolDmultii, type=symbolslist
}
\newglossaryentry{symb:feval}{
name=${[f]_\textnormal{eval}}$,
description={the function $\bC^d \to \bC$ obtained from 
$f(x_1,\ldots,x_{2N})$ by the evaluations of variables (projection) 
$x_{\summ_{k-1}+1}=x_{\summ_{k-1}+2}=\cdots =x_{\summ_{k}}$ for all $k \in \{ 1, \ldots,d\}$.},
sort=symbolfeval, type=symbolslist
}
\newglossaryentry{symb:mult}{
name=$\mult_{j, j+1}$,
description={the multiplicity of the link $\link{j}{j+1}=\{ j, j+1\}$ in $\alpha$.},
sort=symbolmult, type=symbolslist
}
\newglossaryentry{symb:const}{
name=${C(s_j, s_{j+1}, \mult_{j,j+1})}$,
description={the structure constants in the asymptotic properties~\eqref{eq: asymptotic properties} of $\PartF_\alpha$.},
sort=symbolconst, type=symbolslist
}
\newglossaryentry{symb:pow}{
name=${\Delta^{s_{j},s_{j+1}}_m}$,
description={the fusion powers $\Delta^{s_{j},s_{j+1}}_m = h_{1,s_j+s_{j+1}+1-2m} - h_{1,s_j+1} - h_{1,s_{j+1}+1}$.},
sort=symbolpow, type=symbolslist
}
\newglossaryentry{symb:rainbow}{
name=${\rainbow{N}}$,
description={the rainbow $N$-link pattern $\rainbow{N} := \{ \{ 1, 2N\}, \{ 2, 2N -1 \}, \ldots, \{ N, N+1\} \}$.},
sort=symbolrainbow, type=symbolslist
}
\newglossaryentry{symb:CobloF}{
name=$\CobloF_\alpha$,
description={the partition function defined in Equation~\eqref{eq:fused conformal block function}.},
sort=symbolCobloF, type=symbolslist
}
\newglossaryentry{symb:CobloFdisc}{
name=${\mathfrak{U}^{\mathfrak{K}}_\alpha}$,
description={the linear combination of determinants defined in Equation~\eqref{eq: define Ufrak}.},
sort=symbolCobloFdisc, type=symbolslist
}
\renewcommand*{\glossarymark}[1]{}
\begin{document}



\title{Planar UST Branches and $c=-2$ Degenerate Boundary Correlations}

\vspace{2.5cm}

\begin{center}
{
\huge
\bf \scshape{
Planar UST Branches and $c=-2$ \\[.5em] Degenerate Boundary Correlations
}}
\end{center}

\vspace{0.75cm}

\begin{center}
{\Large \scshape Alex Karrila}{\footnotesize\footnotemark[1]} \\
{\footnotesize{\protect\url{alex.karrila@abo.fi}}}\\
\bigskip{}
{\Large \scshape Augustin Lafay}{\footnotesize\footnotemark[2]} \\
{\footnotesize{\protect\url{augustin.lafay@aalto.fi}}}\\
\bigskip{}
{\Large \scshape Eveliina Peltola}{\footnotesize\footnotemark[2]\textsuperscript{\&}\footnotemark[3]} \\
{\footnotesize{\protect\url{eveliina.peltola@aalto.fi}}}\\
\bigskip{}
{\Large \scshape Julien Roussillon}{\footnotesize\footnotemark[2]} \\
{\footnotesize{\protect\url{julien.roussillon@aalto.fi}}} 
\end{center}

\footnotetext[1]{{\r{A}}bo Akademi Matematik, Henriksgatan 2, FI-20500, {\r{A}}bo, Finland.}
\footnotetext[2]{Department of Mathematics and Systems Analysis, 
P.O. Box 11100, FI-00076, Aalto University, Finland.}
\footnotetext[3]{Institute for Applied Mathematics, University of Bonn, Endenicher Allee 60, D-53115 Bonn, Germany.}
	
\setcounter{footnote}{0}

\vspace{0.75cm}

\begin{center}
\begin{minipage}{0.85\textwidth} 
{\scshape Abstract.}
We provide a conformal field theory (CFT) description of
the probabilistic model of boundary effects in the wired uniform spanning tree (UST) and its algebraic content, 
concerning the entire first row of the Kac table with central charge $c=-2$.
Namely, we prove that all boundary-to-boundary connection probabilities for (potentially fused) branches in the wired UST converge in the scaling limit to explicit CFT quantities, expressed in terms of determinants, 
which can also be viewed as conformal blocks of degenerate primary fields in a boundary CFT with central charge $c=-2$. 

\smallskip

Moreover, we verify that the Belavin-Polyakov-Zamolodchikov (BPZ) PDEs (i.e.,~Virasoro degeneracies) of arbitrary orders hold, 
and we also reveal an underlying valenced Temperley-Lieb algebra action on the space of boundary correlation functions of primary fields in this model. 
To prove these results, we combine probabilistic techniques with representation theory.
\end{minipage}
\end{center}

\newpage

{\hypersetup{linkcolor=black}
\setcounter{tocdepth}{2}
\tableofcontents}

\newpage
\bigskip{}
\section{Introduction}

\begin{figure}
\includegraphics[width=0.3\textwidth]{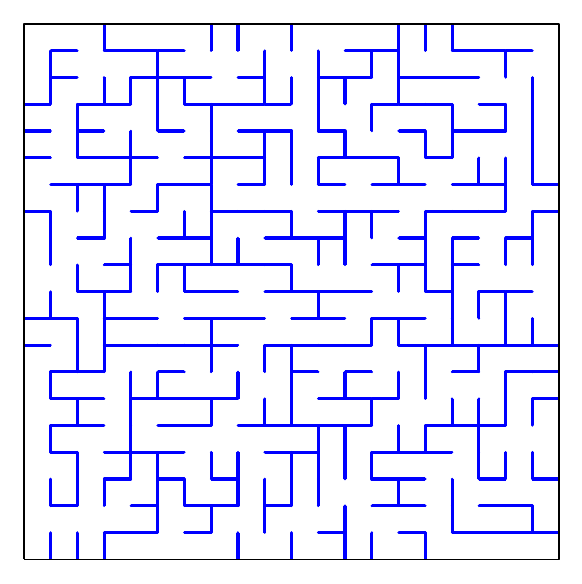}
\qquad \qquad
\raisebox{0.005\textwidth}{
\begin{overpic}[width=0.2918\textwidth, angle=-90, origin = c]{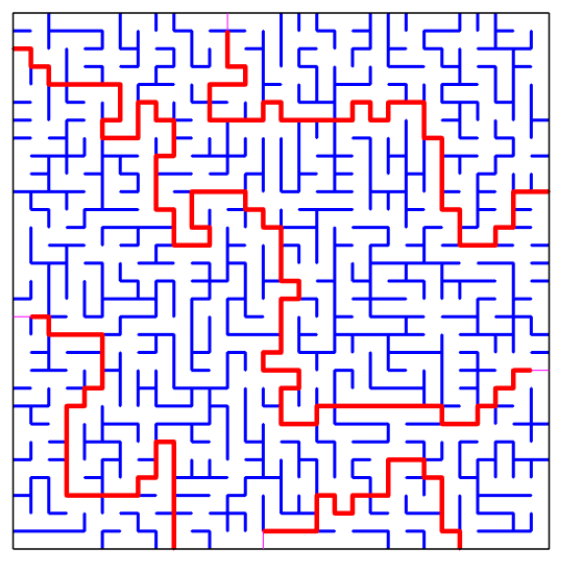}
 \put (-8,17) {\large $e_8$}
 \put (-8,52) {\large $e_7$}
 \put (-8,66) {\large $e_6$}
 \put (40,102) {\large $e_5$}
 \put (89,102) {\large $e_4$}
 \put (100,59) {\large $e_3$}
 \put (32,-4) {\large $e_1$}
 \put (64,-4) {\large $e_2$}
\end{overpic}
}
\caption{
\label{fig:UST combinatorial}
Two simulations of the wired uniform spanning tree (UST) on small graphs, illustrating the central combinatorial notions to this work. 
The~left panel depicts a wired uniform spanning tree on a grid graph of $20 \times 20$ squares. 
The~right panel depicts the same model on $30 \times 30$ squares, conditional on the event $\event$ with the marked boundary edges $e_1, \ldots, e_8$, 
where the related boundary-to-boundary UST branches are highlighted in (shades of) red. 
In~this case, the random connectivity of the interfaces of interest is $\conn = \{ \{ 1, 4 \}, \{ 2, 3\}, \{ 5, 6 \}, \{ 7, 8 \} \}$.
}
\end{figure}

Since the seminal work of Polyakov~\cite{Polyakov:Conformal_symmetry_of_critical_fluctuations} 
and in particular of Belavin, Polyakov, and Zamolodchikov (BPZ)~\cite{BPZ:Infinite_conformal_symmetry_in_2D_QFT}, conformal field theory (CFT) has made its appearance in a plethora of areas in both mathematics and physics. 
One striking application is that CFTs are generally expected to describe  scaling limits of critical lattice models\footnote{By a critical model, we refer to one with a second order (continuous) phase transition in Landau's classification. Polyakov's work~\cite{Polyakov:Conformal_symmetry_of_critical_fluctuations} is based on earlier works by Kadanoff, Patashinskii~\&~Pokrovskii, and Migdal, for instance, which crucially rely on the \quote{similarity hypothesis} that also underlies the celebrated renormalization group theory \`a la Wilson (see, e.g.,~\cite{Patashinskii-Pokrovskii:The_RG_method_in_the_theory_of_phase_transitions} for a physics review focusing on the mathematical methods).}. 
On the one hand, this statement is very strong in two dimensions, as (infinitesimal) conformal symmetries, 
encoded in the Virasoro algebra,
yield infinitely many conserved quantities which constrain the system enormously, providing information \quote{for free.}  
On the other hand, it largely remains mathematically mysterious, as even the construction (let alone the statistical physics interpretation) of the precise CFTs that are expected to describe many prominent lattice models remains unsolved. (See~\cite{CHI:Correlations_of_primary_fields_in_the_critical_planar_Ising_model} for the state-of-the art on the recent mathematical progress for the critical Ising model, 
and~\cite{GKR:Compactified_imaginary_Liouville} and references therein for a proposal that should also be relevant to the case of present interest.)
Consequently, it has been a vigorous quest in the past decades to verify that scaling limits of critical lattice models in two dimensions are indeed conformally invariant, and to reveal the CFT content therein. (E.g.~\cite{Schramm:ICM,
DCS:Conformal_invariance_of_lattice_models, Peltola:Towards_CFT_for_SLEs} 
provide reviews and further references on this topic.)

The purpose of this work is to investigate a general class of non-local crossing events in the uniform spanning tree (UST). 
More precisely, we construct all boundary-to-boundary connection probabilities for branches in wired UST, for any number of fused curves and for any valenced link pattern (as illustrated in Figure~\ref{fig: UST fused}, and formalized below). 
We also prove that these probabilities converge in the scaling limit to explicit determinantal quantities  (Section~\ref{sec:scaling_limit}), 
which we relate to appropriate boundary correlation functions in CFT with central charge $c=-2$. 
In particular, the limits satisfy a conformal covariance property, 
as well as BPZ PDEs 
of arbitrary orders appearing in the first row of the Kac table (Section~\ref{sec:blocks}).
In addition to the Virasoro algebra action, highlighted by the BPZ PDEs, 
we also reveal an underlying action of the valenced Temperley-Lieb algebra with fugacity~$-2$ on the correlation functions
(\quote{Schur-Weyl dual} to an $\mathfrak{sl}_2$ action~\cite[Appendix~C]{Flores-Peltola:Higher_spin_QSW}).

To our knowledge, the present work constitutes the first scaling limit result concerning the entire first row of the Kac table. 
Its first non-trivial element $h_{1,2}$, corresponding to second-order PDEs, has a well-established probabilistic interpretation --- with the same PDEs arising via It\^{o} calculus for SLE curves. For PDEs of higher order, only few rigorous relations to probabilistic lattice models are known, in special cases~\cite{Dubedat:Excursion_decomposition_for_SLE_and_Watts_crossing_formula, Sheffield-Wilson:Schramms_proof_of_Watts_formula, KKP:Boundary_correlations_in_planar_LERW_and_UST, Liu-Wu:Scaling_limits_of_crossing_probabilities_in_metric_graph_GFF,
FPW:Crossing_probabilities_of_critical_percolation_interfaces}. 
Hence, arguably, we provide one of the widest known dictionaries between a lattice model and correlation functions of  primary fields in the corresponding CFT.

Our mathematical arguments combine probability, combinatorics, and representation theory into the physical context, and in the presentation we have strived for accessibility for audiences from different backgrounds.
The rest of the introduction details the context of our work from CFT and probability perspectives and formulates our main results.

\subsection{CFT framework}

As tangented above, it is generally believed that scaling limits of relevant observables in critical lattice models are described by correlation functions of CFT local fields. In this context, the central charge $c \leq 1$ is a parameter that, roughly speaking, is expected to be related to different universality classes of the various critical models\footnote{For instance, for the critical Ising model the central charge equals $c=1/2$, and for the UST and LERW models considered in the present work the central charge equals $c=-2$.}.

An axiomatic approach to CFT termed \quote{conformal bootstrap} was developed by Belavin, Polyakov~\&~Zamolodchikov~\cite{BPZ:Infinite_conformal_symmetry_in_2D_QFT,Ribault:CFT_in_the_plane}. 
It provides an algebraic meaning to such local fields, thereby connecting the probabilistic and representation-theoretic frameworks that we adopt in the present work.  
A crucial ingredient of the conformal bootstrap is the \emph{spectrum} of the CFT, which is a module of the Virasoro algebra\footnote{In general, the spectrum is a module of one or two copies of the Virasoro algebra (depending on whether one considers a boundary or bulk CFT) --- the present work is concerned with boundary CFT.}. 
The so-called \emph{state-field correspondence} postulates the existence of an injective linear map from the spectrum of the CFT to the space of its local fields. 
However, local fields are usually mathematically ill-defined (and in any case should be treated, e.g., 
as operator-valued distributions rather than functions~\cite{DMS:CFT, Schottenloher:Mathematical_introduction_to_CFT}), whereas \emph{correlation functions} of local fields are expected to be well-defined functions of the field locations --- and indeed, it is the latter that feature in the statistical physics models, e.g., in formulas for scaling limits of connection probabilities.

The local fields whose correlation functions often appear in critical lattice model applications are so-called \emph{primary fields}. 
By the state-field correspondence, they are expected to correspond to highest-weight vectors in highest-weight modules over the Virasoro algebra. 
In particular, correlation functions of primary fields are \emph{conformally covariant}, with their behavior under conformal transformations entirely characterized by their \emph{conformal weights}. 
This is part of the conformal symmetry prominently present in CFT --- and the first feature that one tries to rigorously verify for a scaling limit of a critical lattice model 
(cf.~\cite{Schramm:ICM, 
Smirnov:Towards_conformal_invariance_of_2D_lattice_models}, Lemma~\ref{lem:covariance of fused partition functions} of the present article, as well as~\cite{CHI:Correlations_of_primary_fields_in_the_critical_planar_Ising_model} for recent progress). 

One of the most stunning discoveries of BPZ in~\cite{BPZ:Infinite_conformal_symmetry_in_2D_QFT} 
was that correlation functions of a special class of primary fields, called \emph{degenerate fields}, should furthermore satisfy certain linear homogeneous partial differential equations, now termed \emph{BPZ PDEs}. 
These equations proved to be extremely useful: 
for instance, BPZ 
exploited them to derive critical exponents for the Ising model (and other minimal models)~\cite{BPZ:Infinite_conformal_symmetry_of_critical_fluctuations_in_2D}; 
starting from Cardy's work~\cite{Cardy:Critical_percolation_in_finite_geometries} 
the BPZ PDEs have been used to solve connection probabilities in many models (see~\cite{Peltola:Towards_CFT_for_SLEs, FLPW:Multiple_SLEs_Coulomb_gas_integrals_and_pure_partition_functions} and references therein); 
and they were recently used to prove the DOZZ formula for the structure constants of Liouville conformal field theory~\cite{KRV:Integrability_of_Liouville_theory:proof_of_the_DOZZ_formula}.

Let us briefly outline the formal derivation of the BPZ equations. The first important ingredient is representation-theoretic: a Verma module of the Virasoro algebra is said to be \emph{degenerate} if it contains a least one singular vector, i.e., a vector which generates a nontrivial submodule~\cite{Feigin-Fuchs:Verma_modules_over_Virasoro_book, Iohara-Koga:Representation_theory_of_Virasoro}. 
 Singular vectors have been completely classified (see, e.g.,~\cite{Kac:Highest_weight_representations_of_infinite_dimensional_Lie_algebras,
Schottenloher:Mathematical_introduction_to_CFT}), 
and they are characterized by (and only exist with) explicit 
conformal weights denoted $h_{r,s}$, where $r$ and $s$ are positive integers which lie in the so-called Kac table. 
Next, degenerate fields, denoted as $\Phi_{r,s}$, are defined as those that correspond to \emph{irreducible quotients} of degenerate Verma modules, and their conformal weights equal $h_{r,s}$. 
Finally, the conformal Ward identities, which express how CFT correlation functions change under infinitesimal variations of the complex structure of the underlying space (Riemann surface), 
(at least formally) enable one to translate singular vectors in degenerate Verma modules into BPZ PDEs satisfied by the correlation functions
(see~\cite{BPZ:Infinite_conformal_symmetry_in_2D_QFT, DMS:CFT, Dubedat:SLE_and_Virasoro_representations_localization, Dubedat:SLE_and_Virasoro_representations_fusion}).

Interestingly, special cases of BPZ PDEs have emerged independently from critical lattice models and random geometry.
In particular, BPZ PDEs of second order --- corresponding to the \quote{SLE field} $\Phi_{1, 2}$ --- 
naturally arise from martingale arguments that lie at the heart of the analysis 
of probabilistic observables in lattice models and their scaling limits 
(as observed early on, cf.~\cite{Cardy:Critical_percolation_in_finite_geometries, Bauer-Bernard:Conformal_field_theories_of_SLEs, Friedrich-Werner:Conformal_restriction_highest_weight_representations_and_SLE, BBK:Multiple_SLEs_and_statistical_mechanics_martingales, Smirnov:Towards_conformal_invariance_of_2D_lattice_models, Dubedat:Commutation_relations_for_SLE, Kozdron-Lawler:Configurational_measure_on_mutually_avoiding_SLEs}). 
These PDE solutions, commonly termed SLE \emph{partition functions} $\PartF (x_1, \ldots ,x_{2N})$ 
and representing correlation functions of the form 
\quote{$\langle \Phi_{1, 2} (x_1) \cdots \Phi_{1, 2} (x_{2N}) \rangle$,} 
provide scaling limits of (unfused) connection probabilities in various critical models~\cite{BBK:Multiple_SLEs_and_statistical_mechanics_martingales,
Izyurov:Smirnovs_observable_for_free_boundary_conditions_interfaces_and_crossing_probabilities,
FSKZ:A_formula_for_crossing_probabilities_of_critical_systems_inside_polygons,
Miller-Werner:Connection_probabilities_for_conformal_loop_ensembles,
Peltola-Wu:Global_and_local_multiple_SLEs_and_connection_probabilities_for_level_lines_of_GFF, 
Karrila:Computation_of_pairing_probabilities_in_multiple-curve_models,
Peltola-Wu:Crossing_probabilities_of_multiple_Ising_interfaces,
FPW:Connection_probabilities_of_multiple_FK_Ising_interfaces,  
LPW:UST_in_topological_polygons_partition_functions_for_SLE8_and_correlations_in_logCFT}.
Moreover, multiple non-intersecting chordal SLE curves, which describe critical interfaces in many models,
can be classified in terms of these SLE partition functions 
--- see, e.g.,~\cite{Peltola:Towards_CFT_for_SLEs, Karrila:Computation_of_pairing_probabilities_in_multiple-curve_models,  FLPW:Multiple_SLEs_Coulomb_gas_integrals_and_pure_partition_functions} and references therein. 

More generally, one could coalesce (or \quote{fuse}) starting points of the random curves (Figure~\ref{fig:fusion}), which should correspond to \emph{fusion} in CFT --- another important ingredient in conformal bootstrap.
Roughly speaking, the fields in a CFT are assumed to form an algebra with multiplication given by an asymptotic product (termed \textit{operator product expansion}, OPE) of their correlation functions as the field insertions collide.
For instance, comparing the OPE $\Phi_{1,2} \boxtimes \Phi_{1,2} = \Phi_{1,1} \boxplus \Phi_{1,3}$ 
of the SLE fields $\Phi_{1, 2}$
with the subleading asymptotic properties of the SLE partition functions with a given connection (see, e.g.,~\cite[Equation~(3.14), Remark~2.2, and the discussion in Section~4]{Peltola:Towards_CFT_for_SLEs} for details), 
it is natural to expect that two interface endpoints fused together at a point $x$ analogously corresponds to a field $\Phi_{1, 3} (x)$ in the correlation function. 
This was indeed rigorously verified for wired UST branches in~\cite{KKP:Boundary_correlations_in_planar_LERW_and_UST}.
In general,
as already proposed by physicists~\cite{Cardy:Conformal_invariance_and_surface_critical_behavior,Duplantier-Saleur:Exact_surface_and_wedge_exponents_for_polymers_in_two_dimensions, Bauer-Saleur:On_some_relations_between_local_height_probabilities_and_conformal_invariance,
Watts:A_crossing_probability_for_critical_percolation_in_two_dimensions, Duplantier:Conformal_random_geometry}, 
a number $s$ of fused interface endpoints at $x$ results in a field $\Phi_{1, s+1} (x)$.

In the present work, we focus on a particular case where the central charge equals $c=-2$ (and the $\SLE(\kappa)$ parameter $\kappa = 2$), 
corresponding to loop-erased random walk (LERW) 
paths viewed as branches in a wired uniform spanning tree (UST). 
The aforementioned results in~\cite{KKP:Boundary_correlations_in_planar_LERW_and_UST} verify the lattice\footnote{We briefly discuss continuum SLE curves in Appendix~\ref{app:fused SLE}. In particular, the fused partition functions in our main results indeed serve as SLE partition functions of fused SLE(2) curves.} 
curve interpretation of the fields $\Phi_{1,2}$ and $\Phi_{1,3}$ in terms of second and third order BPZ PDEs for their correlation functions, 
and the main purpose of the present work is to extend this result for correlation functions of all $\{\Phi_{1,s+1} \colon s \in \bZpos\}$. 

\subsection{CFT consequences}

Our proof of the higher-order BPZ PDEs in the UST model reveals an important feature about the spectrum of the underlying CFT. Indeed, it is sometimes 
wrongfully assumed that the spectrum of a given CFT can only contain irreducible quotients of degenerate Verma modules. 
A notable example which led to confusion in the physics community is the fact that the so-called \quote{imaginary} DOZZ formula does not vanish when one conformal dimension is specialized to $h_{1,1} = 0$ 
and the other two are arbitrary but different \cite{Ribault-Santachiara:Liouville_theory_with_a_central_charge_less_than_one}. 
This suggests the existence of a non-irreducible quotient of the degenerate Verma module associated with $h_{1,1}$ in the spectrum of imaginary Liouville CFT. 
In the context of Liouville CFT (with real coupling), all BPZ PDEs have been recently verified by probabilistic methods~\cite{KRV:Local_conformal_structure_of_LQG, Zhu:Higher_order_BPZ_equations_for_Liouville_CFT, Ang:Liouville_CFT_and_quantum_zipper, Baverez-Wu:Irreducibility_of_Virasoro_representations_in_Liouville_CFT}.
Interestingly, the so-called higher equations of motion in Liouville CFT were recently addressed mathematically~\cite{Baverez-Wu:Higher_equations_of_motion_at_level_2_in_LCFT, Cercle:Higher_equations_of_motion_for_boundary_LCFT} 
--- they provide another example leading to degenerate modules which are not the irreducible quotient of the corresponding Verma module. 
 
In contrast, a consequence of our work is that the spectrum of the CFT underlying the correlation functions 
of the UST boundary branches indeed contains \emph{irreducible quotients} of degenerate Verma modules, 
associated with conformal weights $h_{1,s+1}$ for any $s \in \mathbb Z_{>0}$. It would be very interesting to understand what such a CFT is precisely. 
A natural candidate would be an extension of the recently constructed \quote{compactified imaginary} Liouville theory~\cite{GKR:Compactified_imaginary_Liouville} 
--- here, in the case of surfaces with boundary, and central charge $c=-2$.

Finally, it is worth mentioning that the probabilistic works on Liouville theory generally focus on \emph{analytic} properties of the correlation functions, which tend to obscure the underlying \emph{algebraic} structures discussed above. 
Only very recently, the structure of degenerate Verma modules was addressed in~\cite{Baverez-Wu:Irreducibility_of_Virasoro_representations_in_Liouville_CFT} in the context of Liouville CFT, 
and in~\cite{Baverez-Jego:The_CFT_of_SLE_loop_measures_and_the_Kontsevich-Suhov_conjecture} in the context of SLE loops.
It would be interesting to extend the latter results to include field insertions at the boundary, which would correspond to chordal SLE curves.

\medskip 

We shall next describe the precise setup and the main (mathematical) results of our work. 

\subsection{UST framework}
\label{subsec:UST framework}

\begin{figure}
\begin{subfigure}[b]{\textwidth}
\begin{center}
\begin{overpic}[width=0.4\textwidth]{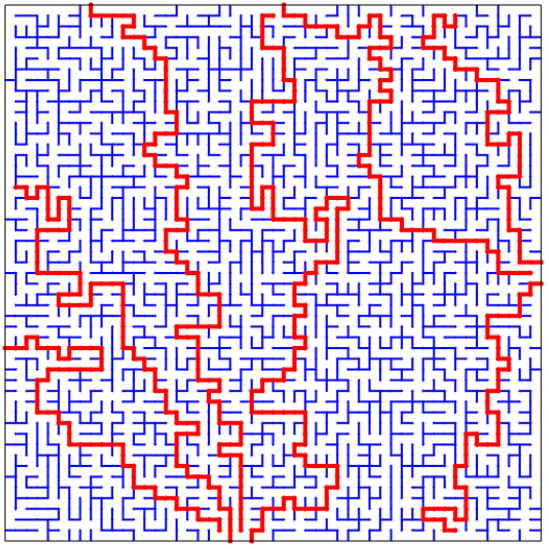}
 \put (-9,36) {\large $e_{14}$}
 \put (-9,63) {\large $e_{13}$}
 \put (16,102) {\large $e_{12}$}
 \put (42,102) {\large $e_{11} e_{10}$}
 \put (35,-3) {\large $e_1 \cdots e_4$}
 \put (80,-3) {\large $e_5$}
 \put (101,43) {\large $e_6$}
 \put (101,48) {\large $e_7$}
 \put (101,53) {\large $e_8$}
 \put (80,102) {\large $e_9$}
\end{overpic}
\qquad
\raisebox{0\textwidth}{
\includegraphics[width=0.35\textwidth]{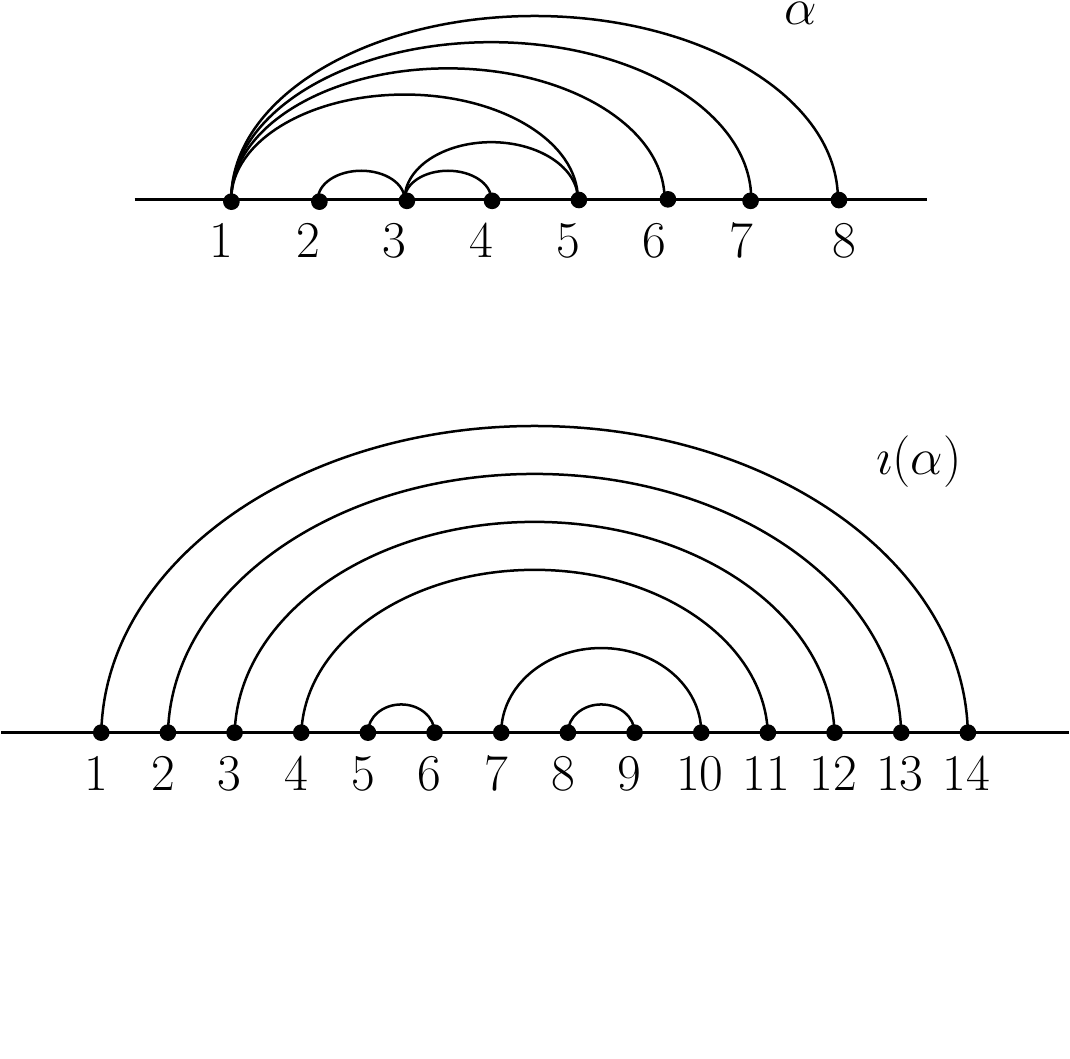}
}

\bigskip
\bigskip
\end{center}
\caption{A simulation of the wired uniform spanning tree on a grid of $50 \times 50$ squares, conditional on the presence of long boundary-to-boundary branches pairing up the marked boundary edges $e_1, \ldots, e_{14}$ with $7$ disjoint curves according to the link pattern $\imath(\alpha)$. Some of the marked edges are only microscopically apart ---  
hence, the scaling limit of such pairings would rather be depicted by the \textit{valenced} (i.e., \quote{fused}) link pattern $\alpha$. 
Here and throughout, $\imath$ denotes the \quote{unfusing} map, which, as illustrated on the right side of this figure, associates to each valenced link pattern $\alpha \in \LP_\multii$ with multiplicities $\multii = (s_1,\ldots,s_d) \in \bZpos^d$ 
an $N$-link pattern $\imath(\alpha) \in \LP_N$ with $2N = \sum_{i=1}^d s_i$. 
Precisely, $\imath(\alpha)$ is defined combinatorially as follows: we split each index $i \in \{1,2,\ldots,d \}$ of $\alpha$ into $s_i$ distinct indices, and attach the $s_i$ links ending at $i$ in $\alpha$ to these new $s_i$ indices,  
so that the link possessing the leftmost endpoint gets the leftmost one of the new indices, etc. 
Finally, we label all indices from left to right by $1,2,\ldots,2N$ to obtain an $N$-link pattern $\imath(\alpha) \in \LP_N$. 
}
\end{subfigure}

\bigskip
\bigskip
\bigskip

\begin{subfigure}[b]{\textwidth}
\begin{center}
\begin{overpic}[width=0.5\textwidth]{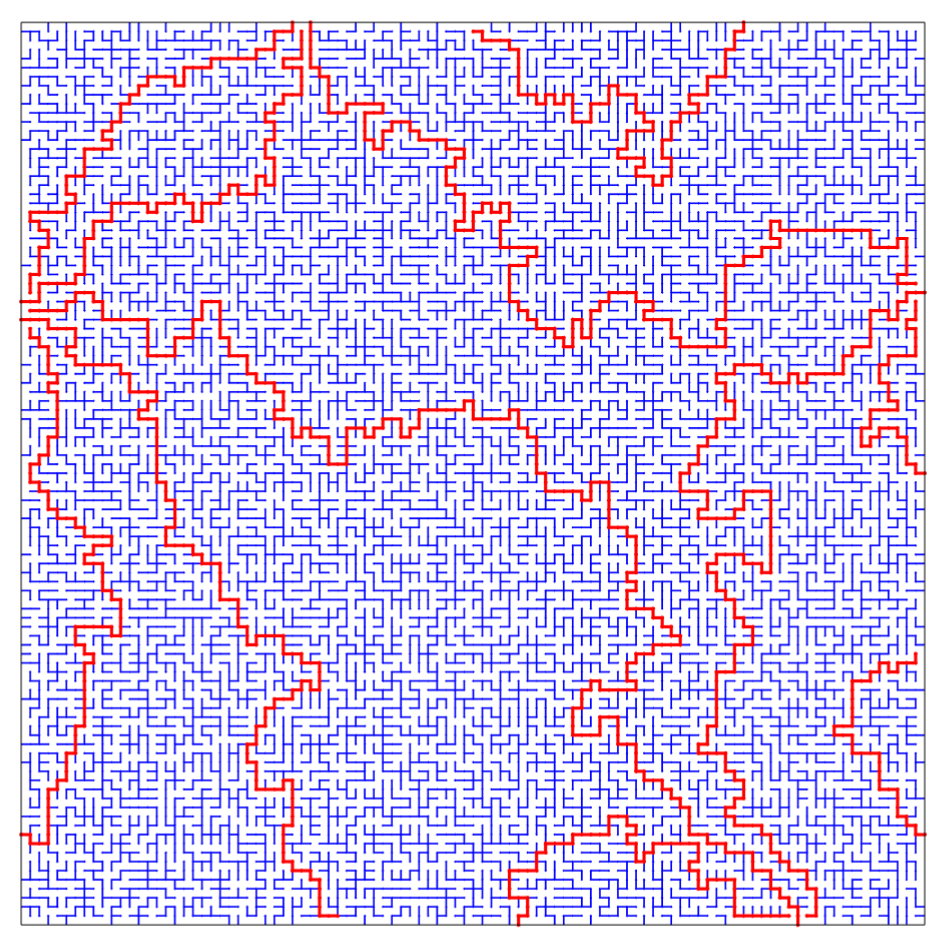}
 \put (34,-2) {\large $e_1$}
  \put (55,-2) {\large $e_2 \cdots$}
\end{overpic}
\qquad
\raisebox{0.1\textwidth}{
\includegraphics[width=0.38\textwidth]{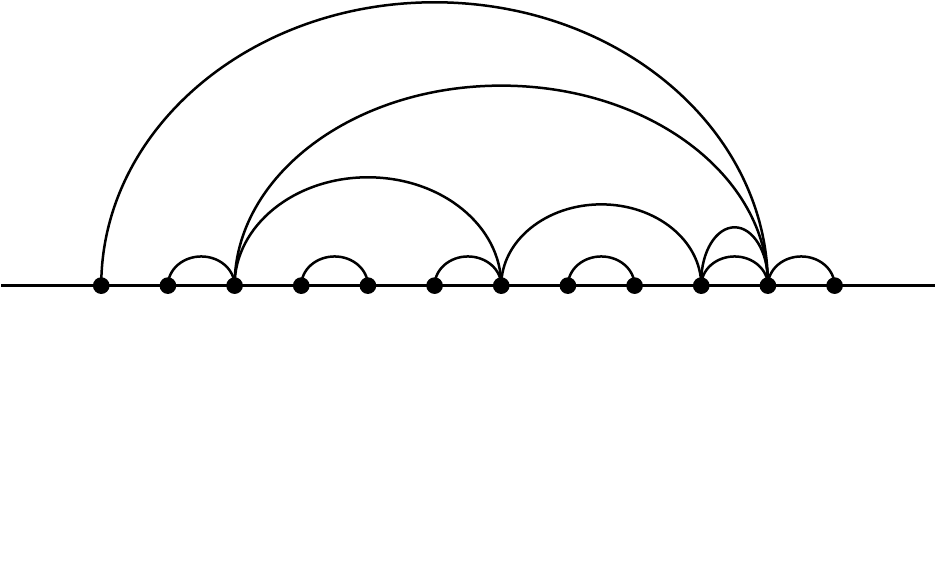}
}

\bigskip
\bigskip
\end{center}
\caption{A simulation and the valenced link pattern of a more complicated geometry with $11$ curves and a $100\times 100$ grid. The main result of the present article relates the scaling-limit probability of such fused pairings to certain explicit conformal field theory correlation functions.}
\end{subfigure}
\caption{\label{fig: UST fused}
Simulations of USTs with some boundary branches highlighted.}
\end{figure}

We first discuss the combinatorial setup without reference to scaling limits. 
A~\emph{spanning tree} of a finite connected (multi)graph is a connected acyclic subgraph (i.e., a tree) that contains all the vertices of the original graph (i.e., is spanning). 
On any finite connected graph, there is a finite positive number of spanning trees, so one can be sampled uniformly at random.

Conventionally in planar statistical mechanics 
(and throughout this article) \quote{wired} boundary conditions are imposed to this model, in order to guarantee a spatial Markov property. 
Let 
\gls{symb:sG} 
be a finite connected planar graph with a fixed embedding, 
and partition $\sV$ into \emph{interior vertices} 
\gls{symb:sVpartial} 
and \emph{boundary vertices} 
\gls{symb:sVcirc}, the latter being those adjacent to the infinite face. 
The~\emph{wired uniform spanning tree} (referred to from now on as \emph{UST}) on $\sG$ is a uniformly sampled random spanning tree on the (multi)graph 
$G = (V, E)$ 
obtained by identifying all vertices $\sV_\partial$ on $\sG$ together. 
Via the bijection $E \leftrightarrow \sE$, we then identify such spanning trees on $G$ with certain spanning forests on~$\sG$ (namely those whose component trees each contain one of the boundary vertices). 
See Figure~\ref{fig:UST combinatorial} (left) for an illustration.

In a wired spanning tree, each interior vertex $v \in \sV_\circ$ connects to the set $ \sV_\partial$ of boundary vertices by a unique simple path, called the \emph{boundary branch from $v$}. 
Let $e_1, \ldots, e_{2N}$ be distinct boundary edges 
(defined as edges connecting an interior vertex in $\sV_\circ$ to a boundary vertex in $\sV_\partial$), by convention always indexed counterclockwise along the infinite face. 
We will denote by 
\gls{symb:event} 
the event that the boundary branches from the interior vertices of the odd-index edges $e_1, e_3, \ldots, e_{2N-1}$ connect to the boundary $\partial \sV$ via the even-index edges $e_2, e_4, \ldots, e_{2N}$, 
each using a different even edge; 
see Figure~\ref{fig:UST combinatorial} (right).
On this event, the union of the odd edges $e_1, e_3, \ldots, e_{2N - 1}$ and the boundary branches from their interior vertices to the even edges form a collection of $N$ chordal, vertex-disjoint simple paths on $\sG$, 
called the \emph{boundary-to-boundary branches}, 
pairing the marked edges $e_1, \ldots, e_{2N}$.

On the event $\event$, we denote by 
\gls{symb:conn} 
the random partition of the indices $\{ 1,2, \ldots, 2N \}$ into pairs, encoding the pairing of the edges $e_1, \ldots, e_{2N}$ by the boundary-to-boundary branches. 
Due to topological constraints,  the partition $\conn$ only takes values in the set 
\gls{symb:LP} 
of \emph{planar pair partitions}, a.k.a.~(planar) $N$-\emph{link patterns},
i.e., $\conn$ is a random partition 
\begin{align*}
\alpha = \{ \linkInEquation{a_1\;}{\;b_1\,}, \ldots, \linkInEquation{a_N\;}{\;b_N\,} \} 
= \{ \{a_1,b_1\} , \ldots, \{a_N,b_N\} \} 
\end{align*}
of the integers $\{ 1, \ldots, 2N\}$ into pairs $\link{a\;}{\;b\,} = \{a,b\} \subset \{ 1, \ldots, 2N\}$ called \emph{links},
such that the pairs of $\alpha$ (as pairs of points on the real axis) 
can be connected by $N$ disjoint curves in the upper half-plane. 
For fixed $N$, there are exactly a Catalan number 
$\tfrac{1}{N+1} \binom{2N}{N}$ of different $N$-link patterns.

The central objects of interest in this article are the probabilities of the different crossing events, 
which we shall denote as 
\begin{align} \label{eq: event of interest}
\textnormal{\gls{symb:connevent}}
 := \event \cap \{\conn = \alpha\}.
\end{align}
They describe important geometric information of the model 
(indeed, one could even define a topology in terms of all crossing events, say, in the spirit of~\cite{Schramm-Smirnov:Scaling_limits_of_planar_percolation}) and provide natural martingale observables (e.g.,~\cite{BBK:Multiple_SLEs_and_statistical_mechanics_martingales, Kemppainen-Smirnov:Configurations_of_FK_Ising_interfaces, Karrila:UST_branches_martingales_and_multiple_SLE2}). 
Moreover, their scaling limits can be interpreted both as CFT correlation functions (Theorem~\ref{thm:CFT properties})
and as $\SLE(2)$ measure conversions (Appendix~\ref{app:fused SLE}); see~\cite{BBK:Multiple_SLEs_and_statistical_mechanics_martingales, Dubedat:Commutation_relations_for_SLE, Kozdron-Lawler:Configurational_measure_on_mutually_avoiding_SLEs, JJK:SLE_boundary_visits, KKP:Boundary_correlations_in_planar_LERW_and_UST, Karrila:UST_branches_martingales_and_multiple_SLE2}.

The key to deriving the scaling limit results is an underlying \emph{determinantal structure} of the connection probabilities. Indeed, 
in both the prior and present scaling limit results, the starting point is to express the probabilities of $\event(\alpha)$ 
in terms of determinants of matrices whose entries are suitable random walk excursion kernels,
building on Fomin's  formula~\cite{Fomin:LERW_and_total_positivity} 
for loop-erased random walks (LERW) and 
Wilson's algorithm~\cite{Wilson:Generating_random_spanning_trees_more_quickly_than_cover_time}  
relating LERW to UST branches~\cite{
Kenyon-Wilson:Boundary_partitions_in_trees_and_dimers, 
Kenyon-Wilson:Double_dimer_pairings_and_skew_Young_diagrams, 
KKP:Boundary_correlations_in_planar_LERW_and_UST}.
With appropriate boundary regularity assumptions and renormalizations, such excursion kernels are known to converge to the Brownian excursion kernel~\cite{CFL:Uber_PDE_der_mathphys, Chelkak-Smirnov:Discrete_complex_analysis_on_isoradial_graphs}. 
Thus, in the case where the marked boundary edges $e_1, \ldots, e_{2N}$ 
stay at a \emph{macroscopic} distance away from each other, it is not hard to see that 
the probabilities of the different UST boundary-to-boundary branch connectivities (when suitably renormalized) converge 
to a conformally covariant limit, expressed analogously in terms of determinants of Brownian excursion kernels~\cite{Kenyon-Wilson:Boundary_partitions_in_trees_and_dimers,
 Kenyon-Wilson:Double_dimer_pairings_and_skew_Young_diagrams,
 KKP:Boundary_correlations_in_planar_LERW_and_UST}. 
Their CFT/SLE interpretation then boils down to the second order BPZ PDEs for these limits, 
which can be verified by a direct computation, see~\cite[Theorem~4.1]{KKP:Boundary_correlations_in_planar_LERW_and_UST}.

In contrast, in the case where the distance of some of the boundary edges is \emph{microscopic} (for instance, one lattice step as in Figure~\ref{fig: UST fused}), some of the UST boundary-to-boundary branches would typically be shrunken to a point in the scaling limit. 
In order to still force the existence of $N$ macroscopic interfaces connecting the marked boundary edges, we restrict our attention to connectivities that do not have a link between 
boundary edges that converge together in the limit. 
Upon passing to the scaling limit (which we do in Theorems~\ref{thm:scaling limit of pinched pertition functions}~\&~\ref{thm:CFT properties} below),
both the convergence of the connection probabilities 
and especially the BPZ PDEs turn out to be significantly harder than in the above, unfused case.

\subsection{Main results: UST connection probabilities and boundary correlations}

\subsubsection{Precise scaling limit setup}
\label{subsubsec:scaling limit setup}

Our scaling limit results are valid in any setup where the random walk excursion 
kernels and their discrete boundary derivatives on the graphs converge to the Brownian 
excursion kernels and their derivatives. 
To be specific, we present the results in square grid approximations obtained from subgraphs of the rescaled square lattice $\delta\bZ^2 = (\sV(\delta\bZ^2), \sE(\delta\bZ^2))$ as the scaling parameter $\delta$ tends to zero. 
Let 
\gls{symb:domain} 
be a bounded simply connected domain, whose boundary $\partial \domain^\delta$ consists of a simple loop on $\delta\bZ^2$. 
Abusing notation, 
we also denote by $\domain^\delta = (\sV^\delta, \sE^\delta)$ the planar graph with vertices\footnote{Here, $\smash{\overline{\domain}}^\delta \subsetneq \bC$ denotes the closure as a planar set.}
 $\sV^\delta := \sV(\delta\bZ^2) \cap \smash{\overline{\domain}}^\delta$ and edges $\sE^\delta := \sE(\delta\bZ^2) \cap \smash{\overline{\domain}}^\delta$. 
The boundary vertices of $\domain^\delta$ are $\sV^\delta_\partial = \sV^\delta \cap \partial \domain^\delta$. 

Consider distinct counterclockwise boundary edges $e_1^\delta, \ldots, e_{2N}^\delta$ of $\domain^\delta$ 
and a bounded simply connected domain $\domain \subsetneq \bC$ together 
with distinct counterclockwise boundary points $p_1, \ldots, p_d \in \partial \domain$, with $d \leq 2N$. 
Fusing the endpoints of $N$ curves onto $d$ limit points, 
we shall denote throughout by 
\gls{symb:multii} $ \in \bZpos^d$  
their fused \emph{multiplicities}, a.k.a.~\emph{valences}, 
\begin{align*}
s_0 := 0, 
\qquad \sum_{i=1}^d s_i = 2N ,
\qquad 
\underset{1 \leq j \leq d}{\max} s_j \leq N ,
\qquad \textnormal{and} \qquad 
\textnormal{\gls{symb:summ}.}
\end{align*}

The corresponding scaling limit objects are naturally labeled by the set 
\gls{symb:LPmultii} 
of $\multii$-\emph{valenced link patterns} without defects.
These are defined precisely in~\cite[Section~2.5]{Peltola:Basis_for_solutions_of_BSA_PDEs_with_particular_asymptotic_properties}, where they were called \quote{planar link patterns} involving multiplicities 
(and in~\cite[Section~2]{Flores-Peltola:Standard_modules_radicals_and_the_valenced_TL_algebra}, where they were called \quote{valenced link patterns}). 
Due to the embedding of $\LP_\multii$ into the set  $\LP_N$ of $N$-link patterns given by the \quote{unfusing} map 
defined in Figure~\ref{fig: UST fused} and its caption, 
we will throughout identify each valenced link pattern $\alpha \in \LP_\multii$ with the corresponding unfused link pattern $\imath(\alpha) \in \LP_N$. 
For convenience, we will primarily just use the same notation $\alpha$ for both. 
(It is important to note, however, that not all elements in $\LP_N$ have a corresponding valenced counterpart; the map $\imath$ is not a surjection.) 

\begin{definition}
Consider $(\domain^\delta; e_1^\delta, \ldots, e_{2N}^\delta)$, valences $\multii$, and $(\domain; p_1, \ldots, p_d)$ as above.
\begin{itemize}
\item 
We say that $(\domain^\delta; e_1^\delta, \ldots, e_{2N}^\delta)$ converges as $\delta \to 0$ to $(\domain; p_1, \ldots, p_d)$ with valences $\multii$ 
in the \emph{Carath\'{e}odory sense} if there exist conformal maps $\varphi^\delta$ from $\domain^\delta$ onto the unit disk $\bD$,
and a conformal map $\varphi$ from $\domain$ onto $\bD$,
such that $(\varphi^\delta)^{-1}$ converges to $\varphi^{-1}$ locally uniformly on $\bD$,
and for each $1 \leq j \leq d$, 
the boundary edges $e_{\summ_{j-1}+1}^\delta, \ldots, e_{\summ_j}^\delta$ converge to the boundary point $p_j$ in the sense that 
$\smash{\varphi^\delta(v_i^\delta) \to \varphi(p_j)}$ for all $i \in \{ \summ_{j-1}+1,\ldots,\summ_j \}$, where $v_i^\delta \in \sV^\delta_\partial$ is the boundary vertex of $e_i^\delta$.

\smallskip

\item 
We will furthermore need compactness and boundary regularity: we always assume that the sets $\domain^\delta$, $\delta > 0$, are uniformly bounded and that the boundary of both $\domain$ and each $\domain^\delta$ 
is locally a straight horizontal or vertical line segment at some fixed neighborhoods of the marked boundary points $p_1, \ldots, p_d$. 
\end{itemize}
Throughout, we call 
\gls{symb:polygon}
with the above assumptions an \emph{admissible $d$-polygon}, and 
\gls{symb:discrpolygon}
with this property \emph{(discrete) admissible $2N$-polygons}.
\end{definition}

\subsubsection{Convergence of connection probabilities}

To state the convergence result, we need a few more definitions. 
First, the \emph{Brownian excursion kernel} is the unique conformally covariant function which in the upper half-plane 
$\bH := \{z \in \bC \;|\; \im(z) > 0 \}$ equals
\begin{align*}
\ExcKH_\bH(x, y) 
:= \frac{1}{\pi} \frac{1}{(x-y)^2} , \qquad 
\textnormal{for $x , y \in \bR$, $x \neq y$},
\end{align*}
and which in other simply connected domains $\domain$ is obtained from $\ExcKH_\bH$ by the formula
\begin{align}
\label{eq:BM exc ker covariance}
\textnormal{\gls{symb:BEK}} 
:= |\varphi'(x)| \, |\varphi'(y)| \, \ExcKH_\bH(\varphi(x), \varphi(y)) 
=  \frac{1}{\pi} \frac{|\varphi'(x)| \, |\varphi'(y)|}{(\varphi(x) - \varphi(y))^2}  ,
\end{align}
where $\varphi \colon \domain \to \bH$ is any conformal bijection and the boundary of $\domain$ is assumed to be sufficiently regular near $x$ and $y$ (e.g. $C^{1+\epsilon}$ for some $\epsilon>0$). 

Additionally, Equation~\eqref{eq: LPdet scaling limit} in our result includes certain combinatorial objects, namely 
the \emph{partial order relation} 
\gls{symb:DPgeq} 
on $\LP_N$, and the count $\# \CItilingsof (\alpha / \beta)$ of \emph{cover-inclusive Dyck tilings}
of the skew Young diagram~$\alpha/\beta$ for $\alpha, \beta \in \LP_N$, as detailed in Definitions~2.1~\&~2.8 of~\cite{KKP:Boundary_correlations_in_planar_LERW_and_UST}, 
respectively (both extended to $\LP_\multii$ via the unfusing map $\imath$). 
Because we shall not need the precise definitions in the present work, let us just make two comments:  
the limit~\eqref{eq:scaling limit of pinched pertition functions} below is a linear combination of certain, explicitly defined determinants, 
but no general closed-form algebraic expression is known for $\# \CItilingsof (\alpha / \beta)$ (to the best of our knowledge).

\medskip

Now, with the setup explicitly described, we state our first main result. 

\begin{theorem} \label{thm:scaling limit of pinched pertition functions} 
Let admissible $2N$-polygons $(\domain^\delta; e_1^\delta, \ldots, e_{2N}^\delta)$ converge as $\delta \to 0$ to an admissible $d$-polygon $(\domain; p_1, \ldots, p_d)$ with valences $\multii = (s_1,\ldots,s_d)$ 
in the Carath\'{e}odory sense, 
furthermore so that the fused boundary edges are for each $\delta$ exactly one lattice step apart 
(as illustrated in Figure~\ref{fig: UST fused}).
Then, for any $\multii$-valenced link pattern $\alpha \in \LP_\multii$ we have 
\begin{align} \label{eq:scaling limit of pinched pertition functions}
\lim_{\delta \to 0} \, \delta^{-2N} \prod_{i=1}^d \delta^{-(s_i-1)s_i/2} \, \PR^\delta[\event(\alpha)] 
= \; & \sum_{ \substack{\beta \in \LP_N \\ \beta \DPgeq \alpha}} \# \CItilingsof (\alpha / \beta) \, \Delta^{\mathfrak{K}}_\beta  ,
\end{align} 
where 
\begin{align} \label{eq: LPdet scaling limit}
\Delta^{\mathfrak{K}}_\beta := \det \big( \mathfrak{K}(a_k, b_\ell) \big)_{k,\ell=1}^N 
, \qquad 
\beta = \{ \linkInEquation{a_1}{b_1}, \ldots, \linkInEquation{a_N}{b_N} \} \in \LP_N ,
\end{align}
is a Fomin type determinant (defined in Section~\ref{subsec:inv Fomin}); the kernel $\mathfrak{K}$ here is given by
\begin{align}
\label{eq: valenced kernel}
\mathfrak{K}(a,b) & =
(\partial^{\mathrm{tan}}_i)^{m_a-1} (\partial^{\mathrm{tan}}_j)^{m_b-1}
\ExcKH_\domain(p_i, p_j) ,
\quad 
a = \summ_{i-1}+m_a \; \textnormal{ and } \; b = \summ_{j-1}+m_b ,
\end{align}
if the boundary edges $e^\delta_a$ and $e^\delta_b$ are not fused together, i.e., $\summ_{i-1} <  a \leq \summ_i$  and  $\summ_{j-1} < b \leq \summ_j$ for some $i \neq j$, 
and $\mathfrak{K}(a,b) = 0$ otherwise, i.e., if $\summ_{j-1} < a, b \leq \summ_j$ for some $1 \leq j \leq d$.
Here, $\partial^{\mathrm{tan}}_k$ denotes the counterclockwise tangential derivative at the boundary point $p_k$, for $1 \leq k \leq d$.
\end{theorem}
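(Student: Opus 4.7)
The plan is to exploit the exact discrete determinantal identity
\begin{align*}
\PR^\delta[\event(\alpha)] = \sum_{\beta \DPgeq \imath(\alpha)} \#\CItilingsof(\imath(\alpha)/\beta)\, \Delta^{\ExcK^\delta}_\beta,
\end{align*}
which follows, as in the macroscopic case treated in \cite{KKP:Boundary_correlations_in_planar_LERW_and_UST}, from Wilson's algorithm combined with Fomin's LERW formula, and which holds verbatim on our discrete admissible $2N$-polygons since the $2N$ marked edges are all distinct (even if microscopically close). Here $\ExcK^\delta$ is the discrete random-walk excursion kernel on $\domain^\delta$. Because the combinatorial weights $\#\CItilingsof(\imath(\alpha)/\beta)$ are independent of $\delta$, the theorem reduces to establishing, for each fixed $\beta \in \LP_N$ contributing to the sum, the pointwise limit
\begin{align*}
\delta^{-2N}\prod_{i=1}^d \delta^{-s_i(s_i-1)/2}\,\Delta^{\ExcK^\delta}_\beta \;\xrightarrow[\delta\to 0]{}\; \Delta^{\mathfrak{K}}_\beta.
\end{align*}

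To obtain this limit, I perform row- and column-reductions on the $N\times N$ Fomin matrix that convert it into one built from iterated discrete tangential differences of the kernel. At each fused point $p_i$, the $s_i$ boundary edges sit one lattice step apart along a straight boundary segment; within each group of rows whose label $a_k$ is attached to $p_i$, I replace the row at local index $m_{a_k}$ by the integer linear combination of rows in that group which equals the $(m_{a_k}-1)$-th forward discrete tangential difference $(D^{\mathrm{tan}}_i)^{m_{a_k}-1}$ of the row at local index $1$, and analogously for columns. These operations are unimodular (lower triangular with unit diagonal after reordering) and hence preserve the determinant, so the new $(k,\ell)$-entry is $(D^{\mathrm{tan}}_i)^{m_{a_k}-1}(D^{\mathrm{tan}}_j)^{m_{b_\ell}-1}\ExcK^\delta$, evaluated at the reference lattice edges nearest $p_i$ and $p_j$. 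The local-straightness of the boundary near each $p_j$ permits Schwarz reflection of the discrete walk Green's function across the straight segment, thereby reducing boundary differences to interior ones, to which the classical convergence of discrete harmonic functions and all of their iterated differences on $\delta\bZ^2$ (cf.\ \cite{CFL:Uber_PDE_der_mathphys, Chelkak-Smirnov:Discrete_complex_analysis_on_isoradial_graphs}) applies, yielding for any $k, l \geq 0$ and $i \neq j$
\begin{align*}
\delta^{-2-k-l}\,(D^{\mathrm{tan}}_i)^k (D^{\mathrm{tan}}_j)^l\,\ExcK^\delta \;\longrightarrow\; (\partial^{\mathrm{tan}}_i)^k (\partial^{\mathrm{tan}}_j)^l\,\ExcKH_\domain(p_i, p_j).
\end{align*}
The $\delta$-power count comes out correctly because, at each fused $p_i$, the local indices $m \in \{1, \ldots, s_i\}$ appear exactly once across the row-labels $a_k$ and column-labels $b_\ell$, so the tangential-difference orders sum to $\sum_{m=1}^{s_i}(m-1) = s_i(s_i-1)/2$; combined with the $\delta^2$ contributed by each of the $N$ entries of the determinant, this produces exactly the renormalization in~\eqref{eq:scaling limit of pinched pertition functions}, and the limiting matrix is precisely $\mathfrak{K}$.

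The \textbf{main obstacle} is to handle those $\beta \DPgeq \imath(\alpha)$ which are not \emph{valenced-compatible}, meaning that they contain a link between two indices fused at a common $p_i$. For such a $\beta$, the corresponding entry $\ExcK^\delta$ between two microscopically close boundary edges is $O(1)$ rather than $O(\delta^2)$, and the naive scaling above would make the rescaled Fomin determinant diverge; since on the other hand the probability is bounded, these contributions must vanish in the limit. I would address this by a combinatorial argument showing that $\#\CItilingsof(\imath(\alpha)/\beta) = 0$ whenever $\beta$ is not valenced-compatible: the key point is that $\imath(\alpha)$, being the unfusing of a valenced link pattern, has by construction no link between two indices attached to a single $p_i$, and the cover-inclusive Dyck-tiling condition together with the nesting-monotonicity of the partial order $\DPgeq$ from \cite[Definitions~2.1 \& 2.8]{KKP:Boundary_correlations_in_planar_LERW_and_UST} prohibits creating such a short link while moving up in $\DPgeq$. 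Once this combinatorial vanishing is established, only valenced-compatible $\beta$'s contribute and each such term is treated by the preceding paragraph. A secondary technical task is to verify that arbitrary-order discrete tangential differences of the excursion kernel indeed converge up to the boundary — a stronger regularity than what follows from Carath\'eodory convergence alone — but this is supplied by the Schwarz reflection argument under our local-straightness hypothesis.
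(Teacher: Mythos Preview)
Your reduction to term-by-term convergence of the individual Fomin determinants is where the argument breaks. For a generic $\beta\DPgeq\imath(\alpha)$, the renormalized quantity
\[
\delta^{-2N}\prod_{i=1}^d \delta^{-s_i(s_i-1)/2}\,\Delta^{\ExcK^\delta}_\beta
\]
\emph{diverges} as $\delta\to 0$, so the pointwise limit you assert simply does not exist. Take $\multii=(1,2,1)$, so indices $2,3$ are fused, and $\imath(\alpha)=\{\{1,2\},\{3,4\}\}$; then for $\beta=\imath(\alpha)$ the Fomin matrix has entry $\ExcK^\delta(e_3,e_2)$ between two edges one lattice step apart, which is $O(1)$ rather than $O(\delta^2)$, and the rescaled $2\times 2$ determinant blows up like $\delta^{-3}$. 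Your proposed row/column reduction cannot repair this: you want to replace the row labeled $a=3$ by the discrete difference of rows labeled $3$ and $2$, but index $2$ is a \emph{right} endpoint (a column label) in this $\beta$, so no such row exists. More generally, your operations only work when every fused group lies entirely among the left endpoints or entirely among the right endpoints of $\beta$, which essentially forces $\beta\in\imath(\LP_\multii)$ and is not implied by $\beta\DPgeq\imath(\alpha)$.

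Your fallback---that $\#\CItilingsof(\imath(\alpha)/\beta)=0$ for ``bad'' $\beta$---is also false, and in any case targets the wrong obstruction. First, the bad $\beta$ above has no link inside a fused group (it is ``valenced-compatible'' by your definition) yet still contributes a divergent matrix entry; the relevant condition is whether the \emph{matrix} has a same-group entry, not whether the \emph{links} of $\beta$ do. Second, $\beta\DPgeq\imath(\alpha)$ can create short links: e.g.\ for $\multii=(1,3,1,1)$ with $\imath(\alpha)=\{\{1,2\},\{3,6\},\{4,5\}\}$ one has $\beta=\{\{1,6\},\{2,3\},\{4,5\}\}\DPgeq\imath(\alpha)$ containing $\{2,3\}$ inside the fused group, and the tiling count is nonzero. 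The divergences do cancel, but only \emph{between different $\beta$'s}; they do not vanish term-by-term.

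The paper handles this by never attempting per-$\beta$ convergence with the raw excursion kernel. Instead it exploits the antisymmetry property~\eqref{eq:simple anti-symmetry of inv Fomin} of the full inverse Fomin sum $\mathfrak{Z}^{\mathfrak{K}}_\alpha$ (valid because $\alpha\in\LP_\multii$ has no link $\{j,j{+}1\}$ within a fused group) to replace the discrete kernel by a modified one: first set all same-group entries to zero (Lemma~\ref{lem:Fomin properties}\ref{item:general zero-replacing rule b}), then replace $\ExcK^\delta(e_{\summ_{j-1}+m},\cdot)$ by $(D^{\mathrm{tan}})^{m-1}\ExcK^\delta(e_{\summ_{j-1}+1},\cdot)$ (Proposition~\ref{prop:higher-valency solution}). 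Both steps leave $\mathfrak{Z}^{\mathfrak{K}}_\alpha$ invariant while changing the individual $\Delta^{\mathfrak{K}}_\beta$'s. After this kernel modification every matrix entry scales correctly, and \emph{then} one applies entrywise convergence (Theorem~\ref{thm:conv of exc kernels and derivatives}) to each determinant. The missing idea in your attempt is precisely that the kernel replacement must be done at the level of the sum, not per determinant.
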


We give explicit examples of the limiting functions in Appendix~\ref{app:explicit determinants}.

Determinantal expressions could, of course, be \quote{zeroes in disguise.} 
However, property \textnormal{(POS)} in Theorem~\ref{thm:CFT properties} below states that the right-hand side of Equation~\eqref{eq:scaling limit of pinched pertition functions} is not only non-negative but also strictly positive. 
This is also crucial in SLE applications (Appendix~\ref{app:fused SLE}).

Let us remark that it is only for the sake of simplicity of presentation that we assume all the fused boundary edges to be exactly one lattice step apart. 
\emph{Mutatis mutandis}, for separations of, e.g., five or $\lfloor 1/\sqrt{\delta} \rfloor$ lattice steps, the correct renormalization factors in the product in~\eqref{eq:scaling limit of pinched pertition functions} would become $(5\delta)^{-(s_i-1)s_i/2}$ or $\delta^{-(s_i-1)s_i/4}$, respectively.

The proof of Theorem~\ref{thm:scaling limit of pinched pertition functions} (which will be completed in Section~\ref{subsec:proof of pinched pertition functions}) 
relies on deriving exactly analogous determinantal formulas for the discrete probabilities $\PR^\delta[\event(\alpha)]$ 
given in terms of discrete excursion kernels and discrete derivatives, 
and then controlling their scaling limits 
(see Theorems~\ref{thm:pinched pertition functions}~\&~\ref{thm:conv of exc kernels and derivatives} in Sections~\ref{sec:discrete}~\&~\ref{sec:scaling_limit}). 
The combinatorics in the proof of Theorem~\ref{thm:scaling limit of pinched pertition functions} will be also needed later to prove Theorem~\ref{thm:CFT properties} \textnormal{(FUS)} in Section~\ref{subsec:fusion of partition functions}.

\subsubsection{Degenerate boundary correlation functions}
\label{subsec:degenerate conformal blocks}
 
The linear combinations of the determinants
$\Delta^{\mathfrak{K}}_\beta = \Delta^{\mathfrak{K}}_\beta(\domain; p_1, \ldots, p_d)$ 
appearing in Equation~\eqref{eq:scaling limit of pinched pertition functions}, 
\begin{align} \label{eq:fused partition function}
\textnormal{\gls{symb:PartF}} (\domain; p_1, \ldots, p_d) 
:= \sum_{ \substack{\beta \in \LP_N \\ \beta \DPgeq \alpha}} \# \CItilingsof (\alpha / \beta) \, \Delta^{\mathfrak{K}}_\beta (\domain; p_1, \ldots, p_d) , \qquad \alpha \in \LP_\multii ,
\end{align}
with valences $\multii = (s_1,\ldots,s_d)$, could be interpreted as boundary correlation functions 
\begin{align}\label{eq:fusion_corrf}
`` \langle \Phi_{1,s_1+1}(p_1) \cdots \Phi_{1,s_d+1}(p_d) \rangle_\alpha , " 
\qquad \alpha \in \LP_\multii ,
\end{align}
in a CFT with central charge $c=-2$ of primary fields 
$(\Phi_{1,s_1+1},\ldots,\Phi_{1,s_d+1})$ with conformal weights 
$h_{1,s_j+1} = s_j (s_j+1)/2$ for all $j$, 
which in particular belong to the first row of the Kac table of degenerate weights. 
These functions $\{\PartF_\alpha \; | \; \alpha \in \LP_\multii\}$ are conventionally termed \emph{pure partition functions}.
We prove in Theorem~\ref{thm:CFT properties} that they are positive and linearly independent. 
 
Interestingly, the correlation functions of these fields are expected to satisfy a system of BPZ PDEs of orders $(s_1+1,\ldots,s_d+1)$, Equation~\eqref{eq: BPZ PDE at kappa equals 2}. 
(The explicit expressions~\eqref{eq: BPZ operator at kappa equals 2} for the BPZ partial differential operators corresponding to the first row of the Kac table were found by Beno\^it~\&~Saint-Aubin~\cite{BSA:Degenerate_CFTs_and_explicit_expressions_for_some_null_vectors}.)
We indeed verify this rigorously in Theorem~\ref{thm:CFT properties}, 
for the complete first row of the Kac table, 
for the pure partition functions $\PartF_\alpha$ of~\eqref{eq:fused partition function}. 
In the special cases where the valences equal one or two, the PDEs 
were verified in~\cite{KKP:Boundary_correlations_in_planar_LERW_and_UST}.

By virtue of conformal covariance (stated in Lemma~\ref{lem:covariance of fused partition functions}) 
it suffices to investigate the functions $\PartF_\alpha$ in the domain $\bH$ and with boundary point configurations in the chamber
\begin{align} \label{eq:def_chamber_d}
\textnormal{\gls{symb:chamber}} := \{(x_1, \dots, x_d) \in \bR^d \; | \; x_1<\cdots<x_d\}.
\end{align}

\begin{theorem} \label{thm:CFT properties} 
The functions 
$\PartF_\alpha(\cdot) := \PartF_\alpha(\bH; \cdot) \colon \chamber_{d} \to \bR$ 
of~\eqref{eq:fused partition function} satisfy the following properties. 
\begin{itemize}
\item[\textnormal{(PDE)}] 
\textnormal{\bf System of $c=-2$ BPZ PDEs:} 
With valences $\multii = (s_1,\ldots,s_d) \in \bZpos^d$, we have
\begin{align} \label{eq: BPZ PDE at kappa equals 2} 
\textnormal{\gls{symb:BPZop}} 
\; \PartF_\alpha(x_1,\ldots ,x_d) = 0 , \qquad \textnormal{for all } j=1,\ldots,d ,
\end{align}
where $\sD_{s_j+1}\super{x_j}$ are the BPZ differential operators
\begin{align} \label{eq: BPZ operator at kappa equals 2} 
\sD_{s_j+1}\super{x_j} 
:= \sum_{k=1}^{s_j+1} \sum_{\substack{m_1,\ldots ,m_k \geq 1 \\ m_1+\cdots +m_k = {s_j+1}}} \frac{ (-1/2)^{k-s_j-1} (s_j!)^2}{\prod_{l=1}^{k-1} (\sum_{i=1}^l m_i)(\sum_{i=l+1}^k m_i)} \times \sL_{-m_1}\super{x_j} \; \cdots \; \sL_{-m_k}\super{x_j} ,
\end{align} 
and where $\sL_{-m}\super{x_j}$ are the first order differential operators
\begin{align*} 
\textnormal{\gls{symb:Virop}} 
:= -\sum_{\substack{1\leq i \leq d\\i\neq j}} \Big( (x_i-x_j)^{1-m} \pder{x_i} + \frac{(1-m) (s_i + 1)s_i}{2} (x_i-x_j)^{-m}\Big) .
\end{align*}

\medskip

\item[\textnormal{(COV)}] 
\textnormal{\bf Covariance:} 
For all M\"obius maps $\Mob \colon \bH \to \bH$ such that $\Mob(x_1) < \cdots < \Mob(x_{d})$, we have
\begin{align}
\label{eq: COV general at kappa equals 2} 
& \PartF_\alpha(x_1, \ldots, x_{d}) 
= \prod_{j=1}^{d} |\Mob'(x_j)|^{(s_j+1)s_j/2} \times \PartF_\alpha(\Mob(x_1), \ldots, \Mob(x_{d})) .
\end{align}

\medskip

\item[\textnormal{(POS)}] 
\textnormal{\bf Positivity:} 
For each $\alpha \in \LP_\multii$, we have $\PartF_\alpha(x_1,\ldots ,x_d)>0$ for all $(x_1,\ldots ,x_d)\in\chamber_{d}$.

\medskip

\item[\textnormal{(LIN)}] 
\textnormal{\bf Linear independence:} 
The functions $\{\PartF_\alpha \; | \; \alpha \in \LP_\multii\}$ are linearly independent. 

\medskip

\item[\textnormal{(FUS)}] 
\textnormal{\bf Fusion:} 
For each $\alpha \in \LP_\multii$, 
we have the iterated limit expression
\begin{align} \label{eq:iterated limit expression}
\PartF_\alpha (p_1, \ldots, p_d) = \; &  \bigg( \prod_{j=1}^d ( 0! \cdot 1! \cdot \ldots \cdot (s_j-1)!) \bigg) \\
\nonumber & \times \bigg(
 \lim_{x_{\summ_d} \to p_d} |x_{\summ_d} - p_d|^{-s_d+1}
 \ldots
 \lim_{x_{\summ_{d-1} + 2} \to p_d} |x_{\summ_{d-1} + 2} - p_d|^{-1}
 \lim_{x_{\summ_{d-1} + 1} \to p_d}
 \bigg) \\
\nonumber & \; \ldots \;
\bigg( \lim_{x_{\summ_1} \to p_1} 
|x_{\summ_1} - p_1|^{-s_1+1} \ldots \lim_{x_2 \to p_1} |x_2 - p_1|^{-1} \lim_{x_1 \to p_1} \bigg)
\PartF_{\iota(\alpha)} (x_1, \ldots, x_{2N}) 
\end{align} 
(where the limits are taken one at a time), 
denoting by $\iota(\alpha)$ the $N$-link pattern obtained by \quote{opening up} the valenced nodes in $\alpha$, as illustrated in Figure~\ref{fig: UST fused} (top right).

\end{itemize}
\end{theorem}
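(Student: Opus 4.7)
My plan is to address the five properties in a logical order, using \textnormal{(COV)} and \textnormal{(FUS)} as stepping stones toward the more delicate \textnormal{(PDE)}, \textnormal{(POS)}, and \textnormal{(LIN)}.

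First, I would establish \textnormal{(COV)} by direct computation on the determinantal expression~\eqref{eq:fused partition function}. Under a Möbius map $\phi\colon\bH \to \bH$, the Brownian excursion kernel transforms via~\eqref{eq:BM exc ker covariance}, and the tangential derivatives at the boundary transform according to the chain rule $\partial^{\mathrm{tan}}_p = |\phi'(p)|\,\partial^{\mathrm{tan}}_{\phi(p)}$ (for Möbius maps, higher tangential derivatives satisfy a clean analogous rule, since the boundary is mapped to the boundary isometrically up to scaling $|\phi'|$). For each fixed $\beta$ in the sum, the entry $\mathfrak K(a,b)$ with $a = \summ_{i-1}+m_a$, $b = \summ_{j-1}+m_b$ then picks up a factor $|\phi'(p_i)|^{m_a}|\phi'(p_j)|^{m_b}$. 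Summing the total power of $|\phi'(p_j)|$ over all entries involving block $j$ (each index in the block appears exactly once in the determinant) gives $\sum_{m=1}^{s_j} m = s_j(s_j+1)/2$, matching the claimed weight $h_{1,s_j+1}$. This computation is uniform in $\beta$, so the covariance passes to the sum $\PartF_\alpha$.

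Next, I would prove \textnormal{(FUS)}. Starting from the explicit determinantal formula for $\PartF_{\iota(\alpha)}$ with unit valences, I apply the iterated limits to each summand $\Delta^{\mathfrak{K}}_\beta(x_1,\ldots,x_{2N})$ separately. The key point is that the renormalization $|x_{\summ_{j-1}+m}-p_j|^{-(m-1)}$ combined with the successive limit amounts to a Taylor extraction of the $(m-1)$-th tangential derivative at $p_j$, with the factorial $(m-1)!$ appearing because $\partial^{k}_{\mathrm{tan}} f(p) = \lim_{x\to p} k!\,(x-p)^{-k}(f(x) - \text{lower-order terms})$; these lower-order terms are killed by the column operations in the determinant once the earlier limits have been taken. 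Summands $\Delta^{\mathfrak{K}}_\beta$ in which $\beta$ contains a link between two indices belonging to the same fused block contribute at most order $|x_i-p_j|^{-2}$ from the kernel $1/\pi(x_i-x_j)^2$, which after renormalization vanishes in the iterated limit, corresponding exactly to the vanishing entries $\mathfrak K(a,b)=0$ in~\eqref{eq: valenced kernel}. Tracking which $\beta$'s survive (those that unfuse correctly under $\imath$) and matching the cover-inclusive Dyck tiling counts through the unfusing map yields precisely the claimed formula~\eqref{eq:fused partition function}. The main combinatorial obstacle here is the careful bookkeeping of the interplay between the partial order $\DPgeq$ and the limits.

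With \textnormal{(FUS)} in hand, \textnormal{(POS)} follows quickly: Theorem~\ref{thm:scaling limit of pinched pertition functions} already yields non-negativity as $\PartF_\alpha$ is a limit of positive renormalized probabilities; strict positivity comes by producing, for any fixed $(x_1,\ldots,x_d)\in\chamber_d$, an explicit discrete UST configuration realizing $\event(\alpha)$ with probability of exact order $\delta^{2N+\sum s_i(s_i-1)/2}$ (constructed by stipulating the boundary branches along short, local lattice paths and estimating by Wilson's algorithm). \textnormal{(LIN)} is obtained by the standard recursive asymptotic argument (as in \cite{Peltola:Basis_for_solutions_of_BSA_PDEs_with_particular_asymptotic_properties} for unit valences): one shows that the leading behavior of $\PartF_\alpha$ under $x_j\to x_{j+1}$ distinguishes $\alpha$ from all $\alpha'\neq \alpha$ that are compatible with pinching $j,j+1$; combined with \textnormal{(POS)} to rule out degenerate cancellations, this forces any linear dependence to have all coefficients zero.

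The main obstacle will be \textnormal{(PDE)}, which I would attack via the representation-theoretic fusion scheme announced in Section~\ref{subsec:degenerate conformal blocks}. The strategy is to view each coalesced point $p_j$ as an insertion of the Benoit--Saint-Aubin singular vector $w_{s_j+1} = \BSAOP_{s_j+1}\,v_{s_j+1}$ obtained by successive OPE of $s_j+1$ copies of $\Phi_{1,2}$. Using the formal-series Virasoro module $W\otimes V_{\alpha,h}$ (with the hatted action $\hat L_n$ introduced in the glossary entries) to implement fusion at the algebraic level, one translates the fact that each $\PartF_{\iota(\alpha)}$ satisfies a second-order BPZ PDE at every variable into the statement that $\hat{\BSAOP}_{s_j+1}$ annihilates the formal power-series expansion of $\PartF_{\iota(\alpha)}$ around the fused configuration. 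Passing to the $t\to 0$ coefficient extracts the higher-order BPZ operator $\sD_{s_j+1}^{(x_j)}$ acting on $\PartF_\alpha$, yielding~\eqref{eq: BPZ PDE at kappa equals 2}. The delicate point is justifying the interchange of the algebraic fusion of Virasoro generators with the analytic iterated limit defining $\PartF_\alpha$; I expect this to require a careful formal-series framework together with explicit control of the commutators $[\hat L_{-m}, \text{renormalization factors}]$, ultimately reducing the PDE for $\PartF_\alpha$ to a Virasoro identity within the enveloping algebra $\mathcal U(\Vir^-)$ that is already built into the construction of $\BSAOP_{s_j+1}$.
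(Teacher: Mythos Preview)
Your overall architecture (establish \textnormal{(FUS)} and \textnormal{(COV)} first, then induct over fusion for \textnormal{(PDE)}, \textnormal{(POS)}, \textnormal{(LIN)}) matches the paper's, but two of your building blocks are broken as stated. First, your direct argument for \textnormal{(COV)} claims that under a M\"obius map each matrix entry $\mathfrak K(a,b)=(\partial^{\mathrm{tan}}_i)^{m_a-1}(\partial^{\mathrm{tan}}_j)^{m_b-1}\ExcKH_\domain(p_i,p_j)$ picks up a clean factor $|\phi'(p_i)|^{m_a}|\phi'(p_j)|^{m_b}$. This is false: a M\"obius self-map of $\bH$ restricted to $\bR$ has non-constant derivative, so the chain rule for the second and higher tangential derivatives produces $\phi'',\phi''',\ldots$ terms and the entries individually do \emph{not} transform covariantly. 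The paper avoids this entirely by deducing \textnormal{(COV)} from \textnormal{(FUS)} combined with the already-known unit-valence covariance of $\PartF_{\iota(\alpha)}$ (Lemma~\ref{lem:covariance of fused partition functions}). Second, in your \textnormal{(FUS)} sketch you say that when $\beta$ has a link inside a fused block the kernel entry $1/\pi(x_i-x_j)^2$ ``after renormalization vanishes in the iterated limit.'' That entry \emph{diverges} as $x_i,x_j\to p_j$, and the renormalizing powers make the divergence worse, not better. The actual mechanism (Lemma~\ref{lem:Fomin properties}\ref{item:general zero-replacing rule b}) is that the \emph{coefficient} of any such kernel entry in the full inverse Fomin sum $\mathfrak Z^{\mathfrak K}_\alpha$ is identically zero whenever $\alpha\in\LP_\multii$; this is an algebraic cancellation across the sum over $\beta$, not an analytic vanishing of individual summands.

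For the remaining three properties, the paper's inductive engine is not the asymptotic-distinction argument you outline but rather the pair Lemma~\ref{lem:Frobenius series} (rational/Frobenius expansion in $\varepsilon=x_{j+1}-x_j$) and Lemma~\ref{lem:indical equation} (the indicial equation forces the leading exponent to be either $s_j$ or $-(s_j+1)$, with no intermediate values). These two facts are what make the inductions for \textnormal{(LIN)} and \textnormal{(POS)} go through: if a putative linear relation (resp.\ zero) fuses to zero, the indicial dichotomy forces the \emph{unfused} function to vanish identically, contradicting the inductive hypothesis. Your proposed discrete-configuration lower bound for \textnormal{(POS)} is a different route and would need a separate argument. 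Finally, for \textnormal{(PDE)} your description captures the Dub\'edat fusion philosophy, but the genuine difficulty at $c=-2$ (which you do not name) is that the relevant Verma modules are of \emph{chain} type rather than link type, so the key algebraic Lemma~\ref{lem:analoglemma1Dubedat} requires a degree-bound argument (finding a nonzero $P_k^0\in\mathcal U(\Vir^-)$ of degree $<2\ell+3$ annihilating $u_0$) that is specific to this submodule structure and is the technical heart of the proof.
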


There is an internal hierarchy to Theorem~\ref{thm:CFT properties}, where the core statement is \textnormal{(PDE)}, which is proven via a representation-theoretic argument.
\textnormal{(FUS)} provides the link to the probabilistic part, and a necessary input to \textnormal{(PDE)}. 
Another crucial input are certain Frobenius series expansions and local complex analytic extensions of the functions $\PartF_\alpha$, 
which follow from the definition of $\PartF_\alpha$ as linear combinations of determinants of rational functions (see Section~\ref{subsec:Frobenius series}) 
The rest of the above theorem follows from \textnormal{(PDE)} and \textnormal{(FUS)} with softer arguments.

We shall prove Theorem~\ref{thm:CFT properties} in Sections~\ref{subsec:fusion of partition functions},~\ref{subsec:covariance of partition functions}~\&~\ref{sec:blocks}. 
Importantly, one can derive the BPZ PDEs by an inductive argument, 
starting from the known second order PDEs in the unfused case~\cite[Theorem~4.1]{KKP:Boundary_correlations_in_planar_LERW_and_UST} 
and recursively fusing boundary points and increasing the order of the PDEs (see Figure~\ref{fig:fusion} for an illustration and Section~\ref{sec:blocks} for an outline).
Dub\'edat~\cite{Dubedat:SLE_and_Virasoro_representations_fusion} 
provided a representation-theoretic argument for this when the central charge $c < 1$ is irrational\footnote{The representation theory of the Virasoro algebra is completely understood.
In the $c \notin \bQ$ case, all of its Verma modules are either of \quote{point} or \quote{link} type, while for $c \in \bQ$, 
the submodule structure can be much more complicated (point, link, \quote{chain,} or \quote{braid}) --- see~\cite[Figure~1]{Kytola-Ridout:On_staggered_indecomposable_Virasoro_modules}, 
and~\cite{Feigin-Fuchs:Verma_modules_over_Virasoro_book, Iohara-Koga:Representation_theory_of_Virasoro} for details.},
while Karrila, Kyt{\"o}l{\"a}~\&~Peltola~\cite{KKP:Boundary_correlations_in_planar_LERW_and_UST} 
performed an elementary computation 
in the case of third-order PDEs only, but with any central charge $c \leq 1$.
In the present work, we shall carry out Dub\'edat's approach in the case where $c=-2$, 
filling in details where the differences to the irrational case arise\footnote{The key difference concerns the Virasoro module structure: we encounter Verma modules of \quote{chain} type (see~\cite[Figure~1]{Kytola-Ridout:On_staggered_indecomposable_Virasoro_modules}, 
and~\cite{Feigin-Fuchs:Verma_modules_over_Virasoro_book, Iohara-Koga:Representation_theory_of_Virasoro} for details), which complicates some of the arguments slightly.} 
(see also~\cite{LPR:Fused_Specht_polynomials_and_c_equals_1_degenerate_conformal_blocks}
for $c=1$).

\begin{figure}
\raisebox{-0.07\textwidth}{
\includegraphics[width=0.19\textwidth]{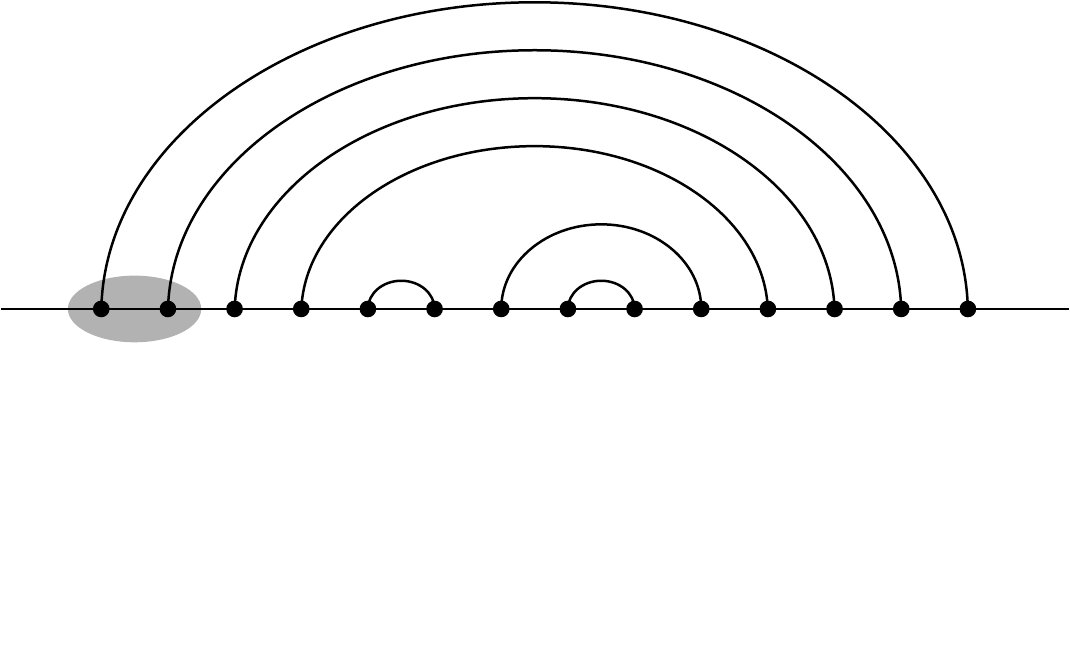}
}%
$\rightsquigarrow$
\raisebox{-0.07\textwidth}{
\includegraphics[width=0.19\textwidth]{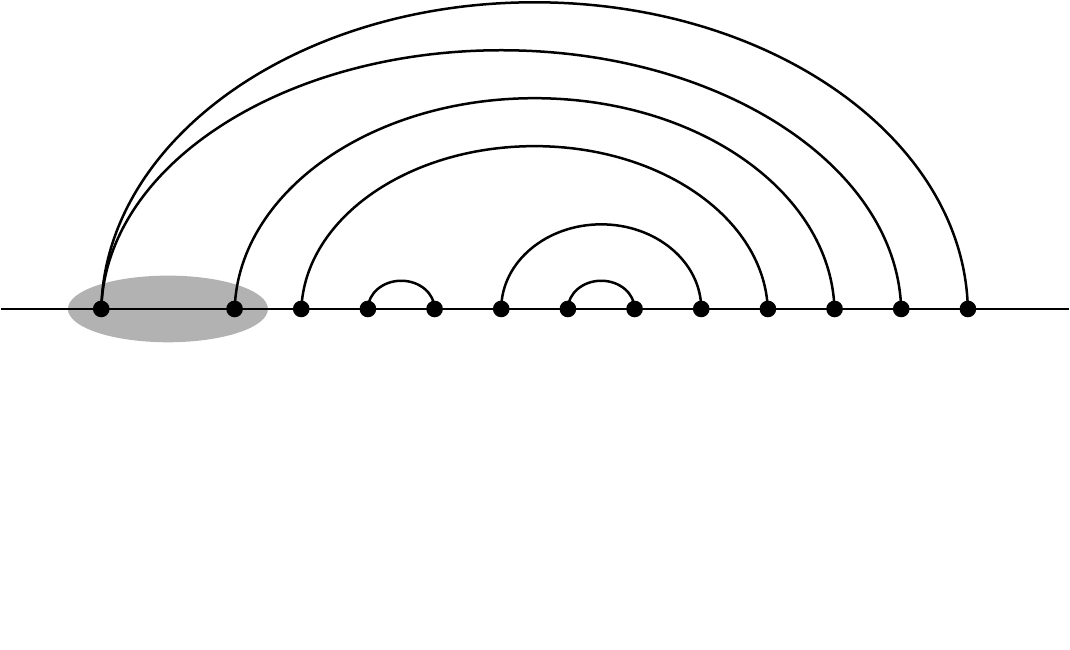}
}%
$\rightsquigarrow \ldots \rightsquigarrow$
\raisebox{-0.07\textwidth}{
\includegraphics[width=0.19\textwidth]{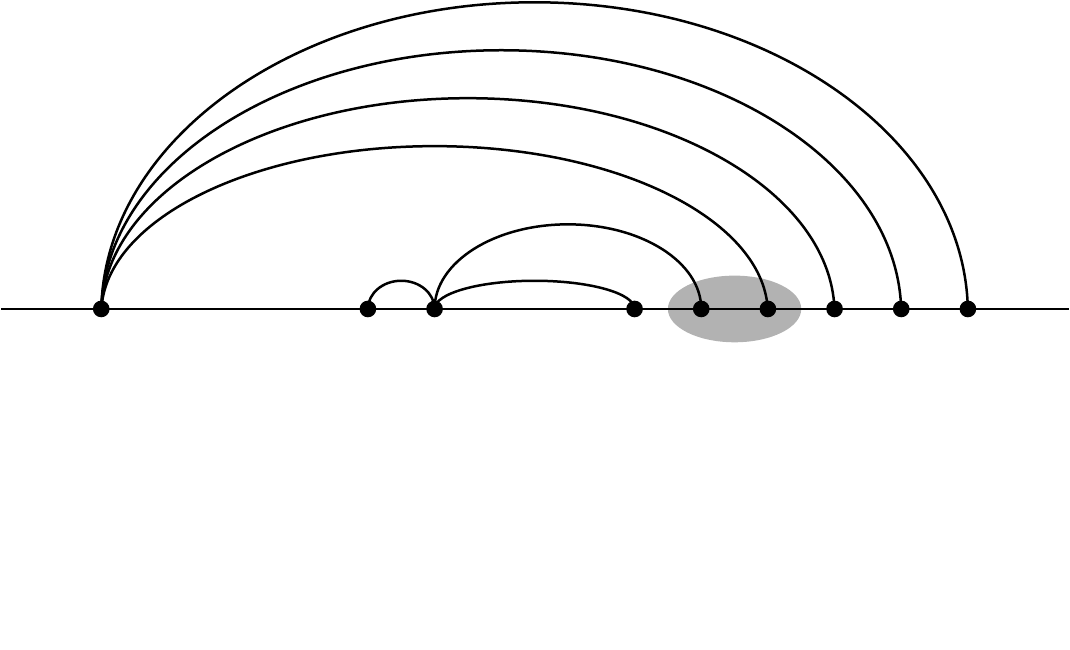}
}%
$\rightsquigarrow$
\raisebox{-0.07\textwidth}{
\includegraphics[width=0.19\textwidth]{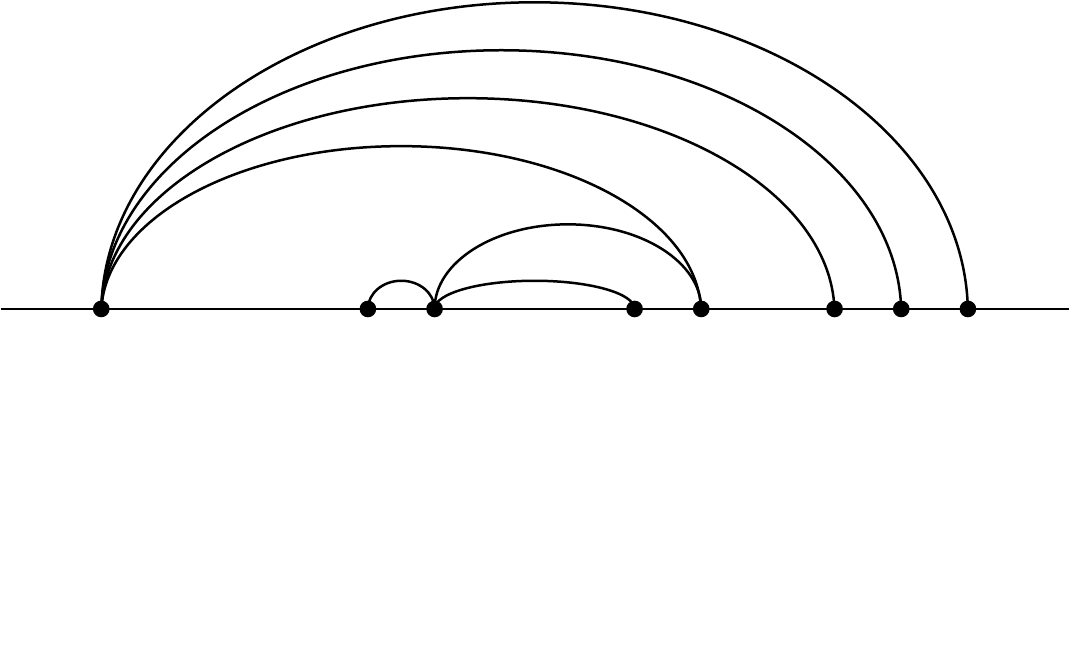}
}
\caption{
\label{fig:fusion}
An illustration of the the iterated limits that by Theorem~\ref{thm:CFT properties} \textnormal{(FUS)} produce $\PartF_\alpha$ with \quote{increasingly fused} valenced patterns $\alpha$, all corresponding to the same \quote{unfused pattern} $\imath(\alpha)$. Parts \textnormal{(PDE)}, \textnormal{(POS)} and \textnormal{(LIN)} of Theorem~\ref{thm:CFT properties} are proven inductively over such increasing fusion.
}
\end{figure}

We complement the above theorem with two further results. 
First, we derive the asymptotic behavior for the functions $\PartF_\alpha$ of~\eqref{eq:fused partition function}, 
which is very natural in light of CFT fusion rules --- see Theorem~\ref{thm:ASY} in Section~\ref{sec:ASY}. 
While thematically closely connected, the statement takes some more notation and the proof is somewhat separate. 
Theorem~\ref{thm:ASY} is an analogue of~\cite[Theorem~5.3~(ASY)]{Peltola:Basis_for_solutions_of_BSA_PDEs_with_particular_asymptotic_properties}. 
Second, in Theorem~\ref{thm:CFT properties coblo} in Appendix~\ref{app:coblo_det} 
we will give an alternative, completely explicit basis for the covariant solution space spanned by $\{ \PartF_\alpha \; | \; \alpha \in \LP_\multii \}$ of the BPZ PDEs (recall that for $\PartF_\alpha$, no closed-form expression appears to be known).

\subsection{Algebraic results: Temperley-Lieb representations}

Lastly, in Section~\ref{sec:algebra} we will investigate in detail algebraic properties of the space of correlation functions defined by
\begin{align*}
\textnormal{\gls{symb:SolSp}}
:= \textnormal{span}_\bC \{\PartF_\alpha \; | \; \alpha \in \LP_\multii \}.
\end{align*}
It turns out that the algebraic properties of the space $\SolSp_\multii$ are naturally inherited from those of the unfused space $\SolSp_{(1, \ldots, 1)} =: \SolSp\sub{1^{2N}}$ by a certain fusion procedure.

Let us first discuss the unfused case $\multii=(1, \ldots, 1)$. 
Our first result (Proposition~\ref{prop:Lascoux}) proves that the space $\SolSp_{(1, \ldots, 1)}$ carries an action of the symmetric group on $2N$ sites, 
whose action consists of simply permuting variables. 
The idea of the proof utilizes the fact that
\begin{align*}
\SolSp\sub{1^{2N}}
= \textnormal{span}_\bC \{\PartF_\alpha \; | \; \alpha \in \LP_N \} 
= \textnormal{span}_\bC \{\Delta_\alpha^\mathfrak{K} \; | \; \alpha \in \LP_N \},
\end{align*}
and that the Fomin type determinants $\{\Delta_\alpha^\mathfrak{K} \; | \; \alpha \in \LP_N\}$ 
form a basis for this space. 
Then, rewriting the determinants $\Delta_\alpha^\mathfrak{K}$ as minors of a symmetric $2N\times 2N$ matrix enables us to apply a result of Lascoux~\cite{Lascoux:Pfaffians_and_representations_of_the_symmetric_group}, leading to Proposition~\ref{prop:Lascoux}. Next, we prove that this symmetric group representation descends to a certain quotient of (the group algebra of) the symmetric group, 
called \emph{Temperley-Lieb algebra} 
\gls{symb:TL} 
with fugacity parameter $\fugacity=-2$ (see Proposition~\ref{prop:descendtoTL}).  
It is well-known that $\TL_{2N}(\fugacity)$ 
is Schur-Weyl dual to the action of the quantum group $U_q(\mathfrak{sl}_2)$ on 
the tensor product of $2N$ fundamental type-one $U_q(\mathfrak{sl}_2)$-modules~\cite{Jimbo:q_analog_of_UqglN_Hecke_algebra_and_YBE,Dipper-James:Q_Schur_algebra},
where $\fugacity=-q-q^{-1}$ and $q = e^{4 \pi \ii / \kappa}$ with $\kappa > 0$, when $q$ is not a root of unity. 
Hence, our findings are somewhat analogous to those in~\cite{Kytola-Peltola:Pure_partition_functions_of_multiple_SLEs,Kytola-Peltola:Conformally_covariant_boundary_correlation_functions_with_quantum_group} (and references therein),
where a quantum group symmetry on an analogous solution space was investigated  
(with central charge $c = (3\kappa-8)(6-\kappa)/2\kappa$ and $\kappa \in (0,8)$ irrational).

In the fused setting, that is, for general $\multii = (s_1,\ldots,s_d) \in \bZpos^d$, 
we obtain similar results except that the role of the Temperley-Lieb algebra is replaced by the \emph{valenced} Temperley-Lieb algebra. 
It is an algebra of the form $\idpt \TL_{2N}(\fugacity) \idpt$ with unit $\idpt$, 
where $\idpt$ is a certain idempotent given by a product of \emph{Jones-Wenzl idempotents}, 
one for each group of points to be fused (see \eqref{eq:def_idpt} for a precise definition with $\fugacity = -2$). 
When $q$ is not a root of unity, this algebra is Schur-Weyl dual to $U_q(\mathfrak{sl}_2)$ on a tensor product of simple type-one modules, 
where the $i$:th tensorand has dimension $s_i+1$ for each $i$ (see~\cite{Flores-Peltola:Higher_spin_QSW}). 
We shall denote by 
\gls{symb:vTL}  $:= \idpt \TL_{2N}(-2) \idpt$ 
the valenced Temperley-Lieb algebra with fugacity parameter $\fugacity = -2$, 
(which is Schur-Weyl dual to $\mathfrak{sl}_2$, with $q=1$)
--- see Definition~\ref{def:vTL}. 

The valenced Temperley-Lieb algebra is a cellular algebra in the sense of~\cite{Graham-Lehrer:Cellular_algebras}
(this has been verified, e.g., in~\cite{Flores-Peltola:Generators_projectors_and_the_JW_algebra}, for an isomorphic algebra called  \quote{{J}ones-{W}enzl algebra}).
Its finite-dimensional representations have been completely classified (see~\cite{Flores-Peltola:Standard_modules_radicals_and_the_valenced_TL_algebra} and references therein). 
The special case $\vTL_\multii$ with fugacity parameter $\fugacity = -2$ (that is the case appearing in the present article) 
is closely related to the fused Hecke algebra~\cite{Crampe-Poulain-d-Andecy:Fused_braids_and_centralisers_of_tensor_representations_of_Uq_gln}. 

\begin{theorem} \label{thm:vTLrep} 
The space $\SolSp_\multii$ is a simple $\vTL_\multii$-module, with explicit action in Corollary~\ref{cor:corollaryvTL}.
\end{theorem}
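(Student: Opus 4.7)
My plan is to construct the $\vTL_\multii$-action on $\SolSp_\multii$ by transporting the $\TL_{2N}(-2)$-action on $\SolSp_{1^{2N}}$ from Proposition~\ref{prop:descendtoTL} through the fusion procedure, and then to deduce simplicity by a classical Schur-Weyl argument. The key identification is a vector-space isomorphism $\SolSp_\multii \cong \idpt \cdot \SolSp_{1^{2N}}$, where $\idpt$ is the Jones-Wenzl idempotent product built into $\vTL_\multii = \idpt \TL_{2N}(-2) \idpt$. On the link-state side, $\idpt$ annihilates all link patterns in $\LP_N \setminus \imath(\LP_\multii)$ (those with turnbacks inside a fused block) and preserves the remaining ones up to non-zero constants; this is well-defined at $\nu = -2$ (i.e.\ $q = 1$) precisely because $\max_j s_j \leq N$ guarantees that all relevant normalizers $[k]_q = k$ are non-zero. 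On the partition-function side, the iterated fusion limits (FUS), combined with the Frobenius expansions from Section~\ref{subsec:Frobenius series}, show that $\Delta^\mathfrak{K}_\beta$ survives the fusion exactly for $\beta \in \imath(\LP_\multii)$, with an explicit non-zero leading coefficient. Matching these two pictures furnishes the desired isomorphism, through which the $\vTL_\multii$-action is inherited in the explicit form recorded in Corollary~\ref{cor:corollaryvTL}.

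For simplicity, I would invoke the Schur-Weyl duality available at $\nu = -2$ (i.e.\ $q = 1$, the classical case): by~\cite[Appendix~C]{Flores-Peltola:Higher_spin_QSW}, $\vTL_\multii$ acts as the full centralizer of $\mathfrak{sl}_2$ on $W := V_{s_1} \otimes \cdots \otimes V_{s_d}$, where $V_s$ denotes the simple $\mathfrak{sl}_2$-module of dimension $s+1$. Since $\mathfrak{sl}_2$ is semisimple on its finite-dimensional representations, $W$ decomposes fully into isotypic components, so the double centralizer theorem makes each isotypic multiplicity space a simple $\vTL_\multii$-module. The construction above identifies $\SolSp_\multii$ with the multiplicity space of the trivial $\mathfrak{sl}_2$-representation in $W$: the dimensions agree because the count $|\LP_\multii|$ of no-defect valenced link patterns equals the standard branching multiplicity of $\mathfrak{sl}_2$-singlets in a tensor product of irreducibles (both obeying the same Catalan-type recursion), and the actions match by construction. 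Therefore $\SolSp_\multii$ is simple as a $\vTL_\multii$-module.

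The principal obstacle I anticipate is the careful matching, in the first step, between the iterated fusion limits (FUS) and the Jones-Wenzl projection at the level of partition functions. Both operations kill non-admissible terms and rescale admissible ones, but they come with different explicit normalizations: the prefactors $\prod_j 0! \cdot 1! \cdots (s_j-1)!$ and powers $|x - p|^{-k}$ appearing in (FUS) on one side, and the recursive Jones-Wenzl coefficients on the other. Aligning these requires bookkeeping of leading-order coefficients in the Frobenius expansions of $\Delta^\mathfrak{K}_\beta$ near the fusion diagonal, and verifying that none of these vanish at $\nu = -2$ --- for which the hypothesis $s_j \leq N$ is precisely the correct non-degeneracy condition. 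Once this alignment is in place, both the construction of the action and the simplicity conclusion follow essentially by standard machinery from the representation theory of valenced Temperley-Lieb algebras as developed in~\cite{Flores-Peltola:Standard_modules_radicals_and_the_valenced_TL_algebra, Flores-Peltola:Higher_spin_QSW}.
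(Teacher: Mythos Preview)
Your high-level strategy matches the paper's: establish a vector-space isomorphism $\SolSp_\multii \cong \idpt(\SolSp_{1^{2N}})$, transport the $\vTL_\multii$-action through it, and deduce simplicity. But your proposed mechanism for the isomorphism differs from the paper's and is more fragile than you may realize.

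You propose to track individual determinants $\Delta^\mathfrak{K}_\beta$ through the fusion limits and argue that exactly those with $\beta \in \imath(\LP_\multii)$ ``survive.'' This is problematic: a determinant $\Delta^\mathfrak{K}_\beta$ with $\beta \notin \imath(\LP_\multii)$ has a link inside a fused block, so the corresponding Brownian excursion kernel entry \emph{diverges} as the endpoints coalesce --- it does not vanish. The fusion limits of (FUS) are calibrated for $\PartF_{\iota(\alpha)}$, not for individual determinants, and applied termwise they blow up. The paper sidesteps this entirely: it observes (from the Fomin antisymmetry~\eqref{eq:simple anti-symmetry of inv Fomin}) that each $\PartF_{\iota(\alpha)}$ with $\alpha \in \LP_\multii$ is \emph{already} fully antisymmetric within every fused block, so $\idpt.\PartF_{\iota(\alpha)} = \PartF_{\iota(\alpha)}$. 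Lemma~\ref{lem:lemma5p12} then shows that for any $F \in \SolSp_N$ the antisymmetrized function $\idpt.F$ factors as a Vandermonde product times a function regular on the fusion diagonal; dividing out and evaluating defines the map $\psi$, and a rewriting of (FUS) shows $\psi(\PartF_{\iota(\alpha)}) = \PartF_\alpha$ up to the explicit factorial prefactor. Surjectivity plus the dimension count $\dim \SolSp_\multii = |\LP_\multii| = |\RSYTof{\pi}| = \dim \idpt(\SolSp_N)$ then gives the isomorphism. No Jones--Wenzl recursive coefficients or determinant-by-determinant bookkeeping are needed.

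For simplicity, the paper does not invoke Schur--Weyl duality directly but cites the classification of simple modules of the fused Hecke algebra $\idpt\,\bC[\SymGrp_{2N}]\,\idpt$ from~\cite{Crampe-Poulain-d-Andecy:Fused_braids_and_centralisers_of_tensor_representations_of_Uq_gln}, applied via the Lascoux isomorphism $\SolSp_N \cong V^\pi$. Your Schur--Weyl/double-centralizer route is a legitimate alternative and yields the same conclusion.
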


The proof has two steps: 
we first prove that $\idpt(\SolSp\sub{1^{2N}})$ is a \emph{simple} $\vTL_\multii$-module, that is, it is non-zero and does not have any non-trivial irreducible submodules (Proposition~\ref{prop:5p12}). 
Then, we construct a linear isomorphism $\psi \colon \idpt (\SolSp\sub{1^{2N}}) \to \SolSp_\multii$ (Proposition~\ref{prop:5p15}). 
We finally use the isomorphism $\psi$ to transport the action of $\vTL_\multii$ on $\SolSp_\multii$, leading to Corollary~\ref{cor:corollaryvTL}. 
Note that this action is natural in the sense that it commutes with fusion, see Remark~\ref{rem:commute}.

As discussed above, elements of $\mathcal S_\varsigma$ may be thought of as conformal blocks of degenerate primary fields in a boundary CFT with central charge $c=-2$. 
In the spirit of the conformal boostrap approach \`a la BPZ (and assuming for simplicity that the theory is diagonalizable), we expect that correlation functions of such fields in the \emph{bulk} CFT take the form
\begin{align} \label{eq:bulkcorr}
\sum_{\alpha, \beta \in \text{LP}_\varsigma} c_{\alpha, \beta} \, \PartF_{\alpha}(z_1,\dots,z_d) \, \PartF_{\beta}(\bar{z}_1, \dots, \bar{z}_d),
\end{align}
for $z_1, \dots, z_d \in \mathbb C$, where the \emph{structure constants} $c_{\alpha, \beta}$ are to be determined. 
These constants $c_{\alpha, \beta}$ are constrained by the postulate that bulk correlation functions must be single-valued, i.e., monodromy-invariant. 
Our Theorem \ref{thm:vTLrep} is the first step towards constructing such a monodromy-invariant quantity in the following sense. 
To impose invariance under monodromy transformations, one has to compute the action the pure braid group on the $\PartF_{\alpha}$ basis. 
(Indeed, the pure braid group is the fundamental group of the $d$ times punctured sphere.)
This action, in turn, descends to an action of the valenced Temperley-Lieb algebra $\vTL_\multii$. 
In particular, it is now well-known that there exists a distinguished basis --- dubbed the $\emph{web}$ basis --- in which the pure braid group action 
can be computed diagrammatically using Kauffmann calculus~\cite{Kauffman:State_models_and_the_Jones_polynomial, Kauffman-Lins:TL_recoupling_theory_and_invariants_of_3_manifolds}.
Therefore, relating the basis $\{\PartF_\alpha \; | \; \alpha \in \LP_\multii \}$ to the web basis is an interesting open problem: 
this would make it possible to compute explicitly the monodromy of \eqref{eq:bulkcorr}, and thereby to find the (hopefully unique) set of structure constants $c_{\alpha, \beta}$ that renders it monodromy invariant
(see also~\cite{Flores-Peltola:Standard_modules_radicals_and_the_valenced_TL_algebra} and references therein).

\medskip

\textbf{Acknowledgments.}

\begin{itemize}[leftmargin=1em] 
\item A.K. is supported by the Academy of Finland grant number 339515.

\item A.L. is supported by the Academy of Finland grant number 340461 \quote{Conformal invariance in planar random geometry.}

\item J.R. is supported by the Academy of Finland Centre of Excellence Programme grant number 346315 \quote{Finnish centre of excellence in Randomness and STructures} (FiRST).

\item This material is part of a project that has received funding from the  European Research Council (ERC) under the European Union's Horizon 2020 research and innovation programme (101042460): 
ERC Starting grant \quote{Interplay of structures in conformal and universal random geometry} (ISCoURaGe) 
and from the Academy of Finland grant number 340461 \quote{Conformal invariance in planar random geometry}.
E.P.~is also supported by 
the Academy of Finland Centre of Excellence Programme grant number 346315 \quote{Finnish centre of excellence in Randomness and STructures} (FiRST) 
and by the Deutsche Forschungsgemeinschaft (DFG, German Research Foundation) under Germany's Excellence Strategy EXC-2047/1-390685813, 
as well as the DFG collaborative research centre \quote{The mathematics of emerging effects} CRC-1060/211504053.

\item We would like to thank the anonymous referees for their comments, which greatly helped to improve this article. 


\end{itemize}

\bigskip{}
\section{Discrete connection probabilities and Fomin type formulas}
\label{sec:discrete}
The purpose of this section is to derive, for the wired UST model introduced in Section~\ref{subsec:UST framework}, an explicit determinantal formula for the probabilities $\PR[\event(\alpha)]$ of the events~\eqref{eq: event of interest}. 
Part of the analysis in this section builds on the earlier work~\cite{KKP:Boundary_correlations_in_planar_LERW_and_UST}, 
and we omit some details that have been extensively explained there. 
Let us also mention that, while our proof methods are relatively specific to the UST model, the combinatorial content of these arguments 
in fact reflects a more general phenomenon, as observed in~\cite{KKP:Conformal_blocks_q_combinatorics_and_quantum_group_symmetry}.

Throughout this section, we let $\sG = (\sV, \sE)$ be a finite, connected, properly embedded planar graph,
with distinct boundary edges $e_1, \ldots, e_{2N}$ in counterclockwise order along the boundary. 
(The present section is combinatorial, and no square grid structure is needed.)

\begin{theorem} \label{thm:pinched pertition functions}
Fix valences $\multii = (s_1,\ldots,s_d) \in \bZpos^d$ such that $s_1+\cdots+s_d = 2N$. Then, we have 
\begin{align} \label{eq:pinched pertition functions}
\PR[\event(\alpha)]
= \; & \sum_{ \substack{\beta \in \LP_N \\ \beta \DPgeq \alpha}} \# \CItilingsof (\alpha / \beta) \, \Delta^{\mathfrak{K}}_\beta , \qquad \alpha \in \LP_\multii ,
\end{align} 
where the notations $\DPgeq$ and
$\# \CItilingsof (\alpha / \beta)$ are the same as in Theorem~\ref{thm:scaling limit of pinched pertition functions}, and $\Delta^{\mathfrak{K}}_\beta$
is a Fomin type determinant (defined in Section~\ref{subsec:inv Fomin}) with kernel $\mathfrak{K}$ given by
\begin{align*}
\mathfrak{K}(a,b) = (D^{\mathrm{tan}}_1)^{m_a-1} (D^{\mathrm{tan}}_2)^{m_b-1} \ExcK_{\sG}(e_{\summ_{i-1} + 1}, e_{\summ_{j-1} + 1}) , 
\quad 
a = \summ_{i-1}+m_a \; \textnormal{ and } \; b = \summ_{j-1}+m_b ,
\end{align*}
if $\summ_{i-1} <  a \leq \summ_i$  and  $\summ_{j-1} < b \leq \summ_j$ for some $i \neq j$, and
$\mathfrak{K}(a,b) = 0$ if $\summ_{j-1} < a, b \leq \summ_j$ for some $1 \leq j \leq d$. 
Here, $\ExcK_{\sG}$ is the excursion kernel of the symmetric random walk on $\sG$ (defined in Section~\ref{subsec:connection probabilities preli}),
and $D^{\mathrm{tan}}_1$ and $D^{\mathrm{tan}}_2$ are the discrete counterclockwise tangential differences 
on the marked boundary edges (defined in Equation~\eqref{eq:tan derivative} in Section~\ref{subsec:connection probabilities Fusion}) with respect to the first and second argument, respectively.
\end{theorem}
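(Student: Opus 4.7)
The plan is to reduce the identity~\eqref{eq:pinched pertition functions} to the unfused Fomin-type formula of~\cite{KKP:Boundary_correlations_in_planar_LERW_and_UST} (originally due to Kenyon and Wilson), then algebraically rewrite the resulting expression using the discrete tangential differences.

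First, I would note that under the identification of $\alpha \in \LP_\multii$ with its unfused image $\imath(\alpha) \in \LP_N$, the event $\event(\alpha)$ coincides with $\event(\imath(\alpha))$ at the discrete level, since both encode the same connectivity among the $2N$ marked boundary edges. Applying the unfused Fomin-type formula (derived via Wilson's algorithm and Fomin's determinantal formula for LERW endpoints) gives
\begin{align*}
\PR[\event(\alpha)] \;=\; \sum_{\beta \DPgeq \imath(\alpha)} \# \CItilingsof(\imath(\alpha)/\beta) \, \Delta^{\mathsf{K}}_\beta ,
\end{align*}
where $\mathsf{K}(a, b) := \ExcK_\sG(e_a, e_b)$. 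Since the partial order $\DPgeq$ and the cover-inclusive Dyck-tiling count $\# \CItilingsof$ are extended to $\LP_\multii$ via $\imath$, the summation set and combinatorial weights on the right-hand side already match those of~\eqref{eq:pinched pertition functions}; only the kernel needs to be converted from $\mathsf{K}$ to $\mathfrak{K}$.

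Next, I would exploit the fact that $\mathfrak{K}$ arises from $\mathsf{K}$ by unipotent (unit-triangular) linear transformations within each valence group. Because $D^{\mathrm{tan}}$ is the discrete difference between consecutive boundary edges along a group, the iterated difference $(D^{\mathrm{tan}})^{m-1}$ expresses $\mathfrak{K}(a, b)$ (for $a = \summ_{i-1} + m_a$, $b = \summ_{j-1} + m_b$ with $i \neq j$) as an integer linear combination of values $\mathsf{K}(a', b')$ with $a', b'$ in the same valence groups as $a, b$ respectively, with leading coefficient $1$. By multilinearity of the determinant (Cauchy-Binet), each $\Delta^{\mathfrak{K}}_\beta$ then expands into a signed integer combination of $\Delta^{\mathsf{K}}_{\beta'}$'s, where $\beta'$ ranges over link patterns obtained from $\beta$ by shifting link endpoints within valence groups.

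The crux is to reorganize the resulting double sum and match the coefficients of the $\Delta^{\mathsf{K}}_{\beta'}$'s on both sides. This reduces to a combinatorial identity on cover-inclusive Dyck tilings under endpoint shifts within valence groups, which I expect to be the main obstacle. The natural line of attack is induction on the fusion degree $\sum_{i=1}^d (s_i - 1)$: the base case $\multii = (1, \ldots, 1)$ is tautologically the KKP formula, and each inductive step corresponds to absorbing one additional boundary edge into a valence group, equivalently applying one further $D^{\mathrm{tan}}$ on the kernel side. While the algebraic/multilinear manipulations are essentially formal, verifying that the cover-inclusive Dyck tilings assemble in exactly the right way to match both kernel transformations is the nontrivial content of the theorem.
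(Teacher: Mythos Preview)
Your starting point---the unfused Fomin-type formula of~\cite{KKP:Boundary_correlations_in_planar_LERW_and_UST} with kernel $\mathsf{K}(a,b)=\ExcK_\sG(e_a,e_b)$---matches the paper exactly. The divergence is in how the kernel is converted from $\mathsf{K}$ to $\mathfrak{K}$.

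You propose expanding each $\Delta^{\mathfrak{K}}_\beta$ individually via multilinearity and then matching Dyck-tiling coefficients, which you rightly flag as the main obstacle. The paper sidesteps this entirely. Rather than working determinant-by-determinant, it works with the full inverse Fomin type sum $\mathfrak{Z}^{\mathfrak{K}}_\alpha=\sum_{\beta\DPgeq\alpha}\#\CItilingsof(\alpha/\beta)\,\Delta^{\mathfrak{K}}_\beta$ as a single object and shows that the kernel modifications $\mathsf{K}\to\mathfrak{K}$ leave this sum \emph{invariant}. The key input is the antisymmetry property~\eqref{eq:simple anti-symmetry of inv Fomin} (Proposition~2.18 of~\cite{KKP:Boundary_correlations_in_planar_LERW_and_UST}): because $\alpha\in\LP_\multii$ has no link $\{j,j+1\}$ inside a valence group, swapping kernel entries at $j$ and $j+1$ negates $\mathfrak{Z}^{\mathfrak{K}}_\alpha$. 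From this one deduces at once that the within-group entries may be zeroed out (Lemma~\ref{lem:Fomin properties}\ref{item:general zero-replacing rule b}), and that, by linearity together with the same antisymmetry, the off-diagonal entries can be replaced by iterated tangential differences without changing the sum (Proposition~\ref{prop:higher-valency solution}). Iterating over the valence groups finishes the proof. The Dyck-tiling identity you anticipate is never confronted; its combinatorial content is hidden inside the antisymmetry lemma, imported as a black box. Note also that your unipotent description is incomplete: the within-group block of $\mathfrak{K}$ is set to zero rather than obtained by a triangular change of basis from $\mathsf{K}$, and justifying that step already requires the antisymmetry argument.
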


\subsection{General inverse Fomin type sums}
\label{subsec:inv Fomin}

As hinted by the similarity --- and generality --- of both 
Theorems~\ref{thm:scaling limit of pinched pertition functions}~\&~\ref{thm:pinched pertition functions}, it will be useful to study expressions of a particular determinantal form 
(see also Appendix~\ref{app:explicit determinants} for examples). 
Let 
\gls{symb:kernel}$\colon \{1,\ldots,2N\} \times \{1,\ldots,2N \} \to \bC$
be a symmetric kernel, i.e., 
$\mathfrak{K}(a,b) = \mathfrak{K}(b,a)$ for all\footnote{Throughout, the values of the diagonal entries of $\mathfrak{K}$ actually need not to be specified.} $a \neq b$. 
Let 
$\beta \in \LP_N$ be an $N$-link pattern.
We will assume throughout that the link patterns $\beta = \{ \link{a_1\;}{\;b_1\,}, \ldots, \link{a_N\;}{\;b_N\,} \}$ are endowed with the \emph{left-to-right orientation} $a_1 < \cdots < a_N$ and $a_k < b_k$ for all $k$. 
We then define the determinant of $\beta$ with kernel $\mathfrak{K}$ as
\begin{align}
\label{eq: LPdet}
\textnormal{\gls{symb:FomDet}} 
:= \det \big( \mathfrak{K}(a_k, b_\ell) \big)_{k,\ell=1}^N 
, \qquad 
\beta = \{ \linkInEquation{a_1}{b_1}, \ldots, \linkInEquation{a_N}{b_N} \} \in \LP_N .
\end{align}
We set
\begin{align}\label{eq: inverse Fomin-type sum}
\textnormal{\gls{symb:Fomintypesum}} 
:= \sum_{ \substack{\beta \in \LP_N \\ \beta \DPgeq \alpha}} \# \CItilingsof (\alpha / \beta) \, \Delta^{\mathfrak{K}}_\beta , \qquad \alpha \in \LP_N ,
\end{align}
where 
\gls{symb:CItilings} 
is the number of cover-inclusive Dyck tilings
of the skew Young diagram~$\alpha/\beta$
(as detailed in Definition~2.8 of~\cite{KKP:Boundary_correlations_in_planar_LERW_and_UST}).
We call $\mathfrak{Z}^{\mathfrak{K}}_\alpha$ the \emph{inverse Fomin type sum} associated to $\alpha$, with the kernel $\mathfrak{K}$. 
Note that $\Delta^{\mathfrak{K}}_\beta$ and $\mathfrak{Z}^{\mathfrak{K}}_\alpha$ are polynomials in the kernel entries 
$\{ \mathfrak{K}(a,b) \;|\; 1 \leq a < b \leq 2N\}$. 
The following properties are readily checked:
\begin{enumerate}[leftmargin=2em,label=\textnormal{(\alph*):}, ref=(\alph*)]
\item Each term of these polynomials is a product of $N$ kernel entries $\mathfrak{K}(\cdot, \cdot)$ 
--- furthermore so that every index $1\leq a \leq 2N$ appears exactly once as an argument of the kernel $\mathfrak{K}$. 

\item
Let $\mathfrak{K}(b, \cdot) = \mathfrak{K}(\cdot,b)$
stand for the collection of kernel entries 
that involve the index~$b$, i.e.,
$\big( \mathfrak{K}(1,b), \mathfrak{K}(2,b), \ldots, \mathfrak{K}(b-1,b) , \mathfrak{K}(b,b+1), \ldots, \mathfrak{K}(b,2N) \big)$. 
Then, for any $b$, both $\Delta^{\mathfrak{K}}_\beta$ and $\mathfrak{Z}^{\mathfrak{K}}_\alpha$
are linear in
$ \mathfrak{K}(b, \cdot)$, e.g., of the form
\begin{align*}
\Delta^{\mathfrak{K}}_\beta = \sum_{\substack{1 \leq a \leq 2N \\ a \neq b}} \big[ \Delta^{\mathfrak{K}}_\beta \big]_{a,b} \, \mathfrak{K}(a,b) ,
\end{align*}
where $[ \Delta^{\mathfrak{K}}_\beta ]_{a,b}$ is 
a polynomial in the variables other than $\mathfrak{K}(a,\cdot)$
and $\mathfrak{K}(\cdot,b)$.

\item
Observe also that
$[ \Delta^{\mathfrak{K}}_\beta ]_{a,b} = [ \Delta^{\mathfrak{K}}_\beta ]_{b,a}$.
\end{enumerate}

For $k \neq \ell$, let 
\gls{symb:exchangker}
be the symmetric kernel obtained by exchanging
the kernel entries at $k$ and $\ell$ in the kernel $\mathfrak{K}$, i.e.,
\begin{align*}
\begin{cases}
\mathfrak{K}^{(k, \ell)}(k, a) = \mathfrak{K}^{(k, \ell)}(a,k)= \mathfrak{K} (\ell, a) & \textnormal{and} \quad \mathfrak{K}^{(k, \ell)}(\ell, a) = \mathfrak{K}^{(k, \ell)}(a, \ell) = \mathfrak{K} (k, a), \textnormal{ all } a \not \in \{k, \ell \} \\
\mathfrak{K}^{(k, \ell)}(a, b) = \mathfrak{K}(a, b), & \textnormal{for all other pairs $(a, b)$}.
\end{cases}
\end{align*}
We will use repeatedly a technical argument applying such a kernel entry exchange; 
in~particular, let us recall~\cite[Proposition~2.18]{KKP:Boundary_correlations_in_planar_LERW_and_UST}: 
\begin{enumerate}[leftmargin=2em,label=\textnormal{(d):}, ref=(d)]
\item Suppose that $\link{j}{j+1} \notin \alpha$. 
Then,
we have
\begin{align} \label{eq:simple anti-symmetry of inv Fomin}
\mathfrak{Z}^{\mathfrak{K}^{(j, j+1)}}_\alpha = - \mathfrak{Z}^{\mathfrak{K}}_\alpha.
\end{align}
\end{enumerate}
Taking this property --- and the above simple observations --- as a black box, 
we can now prove the key result (Lemma~\ref{lem:Fomin properties}) needed for the limit analysis in the present work.

\begin{lemma} \label{lem:Fomin properties}
Fix valences 
$\multii = (s_1,\ldots,s_{d})$, a valenced link pattern $\alpha \in \LP_\multii$, and an index $1 \leq j \leq d$. 
Then, the following hold true:
\begin{enumerate}[leftmargin=3em,label=\textnormal{(\alph*):}, ref=(\alph*)]
\item \label{item:general anti-symmetry of inv Fomin}
For any $k, \ell \in \{ \summ_{j-1}+1, \ldots, \summ_j \}$, we have
\begin{align*}
\mathfrak{Z}^{\mathfrak{K}^{(k, \ell)}}_\alpha = - \mathfrak{Z}^{\mathfrak{K}}_\alpha.
\end{align*}

\medskip

\item \label{item:general zero-replacing rule a}
If for some $k, \ell \in \{ \summ_{j-1}+1, \ldots, \summ_j \} $ it holds that $\mathfrak{K} (\ell, a) = \mathfrak{K} (k, a)$ for all $a \not \in \{k, \ell \}$, then we have $\mathfrak{Z}^{\mathfrak{K}}_\alpha = 0$.

\medskip

\item \label{item:general zero-replacing rule b}
For any $k, \ell \in \{ \summ_{j-1}+1, \ldots, \summ_j \}$, the coefficient of $\mathfrak{K}(k, \ell)$ in $\mathfrak{Z}^{\mathfrak{K}}_\alpha$ is zero: 
$[ \mathfrak{Z}^{\mathfrak{K}}_\alpha ]_{k, \ell} = 0$.
In~particular, if we define the symmetric kernel $\mathfrak{K}_0$ as 
\begin{align*}
\begin{cases}
{\mathfrak{K}_0}(k, \ell) = 0 , & \textnormal{if } k, \ell \in \{ \summ_{j-1}+1, \ldots, \summ_j \} \\
{\mathfrak{K}_0}(a,b) = \mathfrak{K}(a,b) , & \textnormal{for all other pairs $(a,b)$},
\end{cases}
\end{align*}
then we have 
\begin{align*}
\mathfrak{Z}^{{\mathfrak{K}_0}}_\alpha = \mathfrak{Z}^{\mathfrak{K}}_\alpha . 
\end{align*}
\end{enumerate}
\end{lemma}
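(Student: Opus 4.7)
The plan is to deduce all three statements from the single black-box property~\ref{item:general anti-symmetry of inv Fomin}'s predecessor, namely~\eqref{eq:simple anti-symmetry of inv Fomin}: adjacent swaps $\mathfrak{K} \mapsto \mathfrak{K}^{(j,j+1)}$ flip the sign of $\mathfrak{Z}^{\mathfrak{K}}_\alpha$ whenever $\link{j}{j+1} \notin \alpha$. The crucial combinatorial input is that for $\alpha \in \LP_\multii$, the unfused pattern $\imath(\alpha) \in \LP_N$ has no link joining two indices inside the same fused group $\{\summ_{j-1}+1,\ldots,\summ_j\}$, since such a link would correspond to a self-link at point $j$ in the valenced pattern $\alpha$ (which is excluded by the definition of $\LP_\multii$).

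For \ref{item:general anti-symmetry of inv Fomin}, I would observe that the operation $\mathfrak{K} \mapsto \mathfrak{K}^{(k,\ell)}$ is the natural $\SymGrp_{2N}$-action by simultaneous row-column permutation on symmetric kernels, so any decomposition of the transposition $(k,\ell)$ into adjacent transpositions yields the same $\mathfrak{K}^{(k,\ell)}$. Taking the palindromic decomposition
\begin{align*}
(k,\ell) \; = \; (k,k{+}1)(k{+}1,k{+}2)\cdots(\ell{-}1,\ell)\cdots(k{+}1,k{+}2)(k,k{+}1),
\end{align*}
which is a product of $r = 2(\ell-k)-1$ adjacent transpositions, I would apply~\eqref{eq:simple anti-symmetry of inv Fomin} to each step. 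Every intermediate transposition has both swapped indices in the interval $[k,\ell]$, hence in the same fused group, so by the combinatorial observation above the hypothesis $\link{m}{m+1} \notin \alpha$ of~\eqref{eq:simple anti-symmetry of inv Fomin} holds at every step. Composing $r$ sign flips gives $\mathfrak{Z}^{\mathfrak{K}^{(k,\ell)}}_\alpha = (-1)^r\mathfrak{Z}^{\mathfrak{K}}_\alpha = -\mathfrak{Z}^{\mathfrak{K}}_\alpha$, since $r$ is odd.

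For \ref{item:general zero-replacing rule a}, the hypothesis $\mathfrak{K}(\ell,a) = \mathfrak{K}(k,a)$ for $a \notin \{k,\ell\}$ together with $\mathfrak{K}^{(k,\ell)}(k,\ell) = \mathfrak{K}(k,\ell)$ immediately gives $\mathfrak{K}^{(k,\ell)} = \mathfrak{K}$ as symmetric kernels, and part~\ref{item:general anti-symmetry of inv Fomin} then forces $\mathfrak{Z}^{\mathfrak{K}}_\alpha = -\mathfrak{Z}^{\mathfrak{K}}_\alpha = 0$. For \ref{item:general zero-replacing rule b}, I extract the coefficient of $\mathfrak{K}(k,\ell)$ on both sides of part~\ref{item:general anti-symmetry of inv Fomin}: by property~(b) of Section~\ref{subsec:inv Fomin}, $[\mathfrak{Z}^{\mathfrak{K}}_\alpha]_{k,\ell}$ is polynomial in entries not indexed by row/column $k$ or $\ell$, so it is invariant under the swap $\mathfrak{K} \mapsto \mathfrak{K}^{(k,\ell)}$, giving $[\mathfrak{Z}^{\mathfrak{K}^{(k,\ell)}}_\alpha]_{k,\ell} = [\mathfrak{Z}^{\mathfrak{K}}_\alpha]_{k,\ell}$. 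The sign identity from~\ref{item:general anti-symmetry of inv Fomin} then yields $[\mathfrak{Z}^{\mathfrak{K}}_\alpha]_{k,\ell} = 0$. The equivalent formulation $\mathfrak{Z}^{\mathfrak{K}_0}_\alpha = \mathfrak{Z}^{\mathfrak{K}}_\alpha$ follows from linearity in $\mathfrak{K}(k,\ell)$ together with the vanishing of this coefficient for every pair $k,\ell$ within the same fused group.

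The only genuinely delicate point is the combinatorial fact that no $\imath(\alpha)$-link lies inside a single fused group, which underlies the reduction of~\ref{item:general anti-symmetry of inv Fomin} to~\eqref{eq:simple anti-symmetry of inv Fomin}; everything else is a straightforward sign-flip-counting or linearity argument. Once that fact is explicitly recorded (citing the definition of $\LP_\multii$ from~\cite{Peltola:Basis_for_solutions_of_BSA_PDEs_with_particular_asymptotic_properties}), the proof is a clean cascade \ref{item:general anti-symmetry of inv Fomin}~$\Rightarrow$~\ref{item:general zero-replacing rule a}~$\Rightarrow$~(implicitly via the same sign argument)~\ref{item:general zero-replacing rule b}.
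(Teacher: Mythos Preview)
Your proof is correct and essentially identical to the paper's for items~\ref{item:general anti-symmetry of inv Fomin} and~\ref{item:general zero-replacing rule a}. For item~\ref{item:general zero-replacing rule b}, the paper takes a slightly different route: it constructs an auxiliary kernel $\tilde{\mathfrak{K}}$ with $\tilde{\mathfrak{K}}(a,k) = \delta_{a,\ell}$, $\tilde{\mathfrak{K}}(a,\ell) = \delta_{a,k}$, and the rest unchanged, then observes that $\mathfrak{Z}^{\tilde{\mathfrak{K}}}_\alpha = [\mathfrak{Z}^{\mathfrak{K}}_\alpha]_{k,\ell}$ by the structural properties (a)--(b) of Section~\ref{subsec:inv Fomin}, and finally applies item~\ref{item:general zero-replacing rule a} to $\tilde{\mathfrak{K}}$ (whose $k$- and $\ell$-rows visibly agree off $\{k,\ell\}$) to conclude vanishing. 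Your argument instead extracts the $\mathfrak{K}(k,\ell)$-coefficient directly from the sign identity of~\ref{item:general anti-symmetry of inv Fomin}, using that this coefficient depends only on entries with neither index in $\{k,\ell\}$ and is therefore fixed by the swap. Both arguments are short and clean; yours is marginally more direct since it bypasses the auxiliary kernel, while the paper's makes the logical chain \ref{item:general anti-symmetry of inv Fomin}~$\Rightarrow$~\ref{item:general zero-replacing rule a}~$\Rightarrow$~\ref{item:general zero-replacing rule b} fully explicit.
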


\begin{proof}
It is a basic property of permutations of finite sequences 
(here, consisting of the elements $\{ \summ_{j-1}+1, \ldots, \summ_j \}$) 
that any transposition of two elements (here, $(k, \ell)$) 
can be obtained by composing an odd number of transpositions of two elements that are neighboring in the sequence right before (and right after) that transposition. 
Item~\ref{item:general anti-symmetry of inv Fomin} then follows by applying Equation~\eqref{eq:simple anti-symmetry of inv Fomin} an odd number of times.

For Item~\ref{item:general zero-replacing rule a}, note that
if $\mathfrak{K} (\ell, a) = \mathfrak{K} (k, a)$ for all $a \not \in \{k, \ell \}$, then $\mathfrak{K}^{(k,\ell ) } = \mathfrak{K}$. 
Hence, by Item~\ref{item:general anti-symmetry of inv Fomin}, we see that 
$\mathfrak{Z}^{{\mathfrak{K}}}_\alpha = \mathfrak{Z}^{{\mathfrak{K}^{(k,\ell ) }}}_\alpha = - \mathfrak{Z}^{{\mathfrak{K}}}_\alpha$, which implies that $\mathfrak{Z}^{{\mathfrak{K}}}_\alpha = 0$.

To prove Item~\ref{item:general zero-replacing rule b}, 
define the modified kernel $\tilde{\mathfrak{K}}$ by setting 
$\tilde{\mathfrak{K}}(a, k) = \tilde{\mathfrak{K}}(k, a) = \delta_{a, \ell}$ and $\tilde{\mathfrak{K}}(a, \ell ) = \tilde{\mathfrak{K}}(\ell, a) = \delta_{a, k}$ and $\tilde{\mathfrak{K}}(a,b) = \mathfrak{K}(a, b)$ whenever $a, b \not \in \{ k, \ell\}$. 
The, on the one hand, by the first two observations above, it follows that
\begin{align*}
\mathfrak{Z}^{\tilde{\mathfrak{K}}}_\alpha = [ \mathfrak{Z}^{\mathfrak{K}}_\alpha ]_{k, \ell}.
\end{align*}
On the other hand, by Item~\ref{item:general zero-replacing rule a} we have $\mathfrak{Z}^{\tilde{\mathfrak{K}}}_\alpha = 0$.
This proves the first part of Item~\ref{item:general zero-replacing rule b}. The second part follows by the first two elementary properties. 
\end{proof}

\subsection{Discrete connection probabilities without fusion}
\label{subsec:connection probabilities preli}

The probability $\PR[\event(\alpha)]$ has a well-known formula in terms of inverse Fomin type sums and random-walk Green's functions~\cite{Fomin:LERW_and_total_positivity, 
Kenyon-Wilson:Boundary_partitions_in_trees_and_dimers,
Kenyon-Wilson:Double_dimer_pairings_and_skew_Young_diagrams,
KKP:Boundary_correlations_in_planar_LERW_and_UST}, which we review in what follows.

For the formulas, we introduce some further notation and terminology. 
We define the \emph{random-walk weight} 
\gls{symb:RWweight} 
of a nearest-neighbor walk $\walk = (v_0, v_1, \ldots, v_m)$ on $\sG=(\sV, \sE)$ as
\begin{align*}
\mathsf{w} (\walk) := \frac{1}{ \prod_{k=0}^m \deg (v_k) } ,
\end{align*}
where $\deg (v)$ is the degree of the vertex $v \in \sV$ 
(also the terminal vertex is included in the product).
We define the \emph{random-walk Green's function} (excursion kernel) with Dirichlet boundary conditions as
\begin{align*}
\textnormal{\gls{symb:RWGF}}(v, w) := \sum_{\walk \in \Walks_\circ (v, w) } \mathsf{w} (\walk),
\end{align*}
where $\Walks_\circ (v, w) $ denotes the set of nearest-neighbor walks from $v$ to $w$ on $\sG$ that only contain interior vertices (allowing the one-vertex walk if $v=w$). 
Clearly, $\GreenK_{\sG} (v, w) = \GreenK_{\sG} (w, v)$, and by a standard first-step argument $\GreenK_{\sG} (v, w)$ indeed is the Green's function of 
the discrete Laplacian\footnote{\emph{Discrete Laplacian} $\Delta^\#$ sends a function 
$f \colon \sV \to \bR$ on the vertices to another function $\Delta^\# f \colon \sV \to \bR$ given by
\begin{align*}
(\Delta^\# f) (v) = 
\sum_{\langle v , u \rangle \in \sE} (f(u)-f(v)) , \qquad v \in \sV.
\end{align*}
Now, if $f(v) = \GreenK_{\sG}(v, w)$ (with fixed $w \in \sV_\circ$) then $(\Delta^\# f)(v) = - \delta_{v, w}$ for $v \in \sV_\circ$, and $f(v) = 0$ for $v \in \sV_\partial$.} 
(with Dirichlet boundary conditions).

Lastly, we define the \emph{excursion kernel} 
\gls{symb:RWEK} 
of the symmetric random walk on $\sG$ between two boundary edges $e_1$ and $e_2$ as the Green's function between the respective interior vertices $v_1$ and $v_2$ of the edges $e_1$ and $e_2$, i.e., 
$\ExcK_{\sG}(e_1,e_2) := \GreenK_{\sG}(v_1, v_2)$.

\begin{citedtheorem}[{See~\cite[Theorem~3.12 and Section~3.6]{KKP:Boundary_correlations_in_planar_LERW_and_UST} for this formulation}]
\label{thm: disc connectivity probas}
For the wired UST on any finite, connected, properly embedded planar graph $\sG$, with marked boundary edges $e_1, \ldots, e_{2N}$ in counterclockwise order, and for any $N$-link pattern $\alpha \in \LP_N$, we have
\begin{align} \label{eq: soln of disc ptt fcns}
\PR[\event(\alpha)] \; = \mathfrak{Z}^{\mathfrak{K}}_\alpha \; = \sum_{ \substack{\beta \in \LP_N \\ \beta \DPgeq \alpha}} \# \CItilingsof (\alpha / \beta) \, \Delta^{\mathfrak{K}}_\beta , 
\end{align} 
where the kernel is $\mathfrak{K}(a,b) = \ExcK_{\sG}(e_a, e_b)$.
\end{citedtheorem}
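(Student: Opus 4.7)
The plan is to combine Wilson's algorithm, Fomin's determinantal identity for loop-erased random walks, and a combinatorial Möbius-type inversion on the poset $(\LP_N, \DPgeq)$ whose inverse transition kernel is enumerated by cover-inclusive Dyck tilings. The overall structure is a triangular linear system $\{\Delta^{\mathfrak{K}}_\beta\}$ versus $\{\PR[\event(\alpha)]\}$ on $\LP_N$, and the content of the statement is that inverting this system exactly produces the combinatorial formula $\mathfrak{Z}^{\mathfrak{K}}_\alpha$.

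First, I would use Wilson's algorithm to express $\PR[\event(\alpha)]$ as a weighted enumeration of $N$-tuples of random-walk excursions from the interior vertices of the odd edges $e_1,e_3,\ldots,e_{2N-1}$ to the even edges, where the loop-erasures of these excursions give the UST boundary-to-boundary branches. Concretely, the conditional law of these branches given $\event$ can be sampled sequentially as vertex-disjoint LERWs, and each such walk-system carries a weight that is a product of factors of $1/\deg(v)$ along its steps. The resulting identities can be written in terms of the kernel $\mathfrak{K}(a,b) = \ExcK_\sG(e_a,e_b)$.

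Next, I would invoke Fomin's identity, applied $N$-fold, which states that
\[
\Delta^{\mathfrak{K}}_\beta = \det\bigl(\ExcK_\sG(e_{a_k},e_{b_\ell})\bigr)_{k,\ell=1}^N
\]
equals a signed sum over $N$-tuples of walk systems: the Karlin--McGregor/Lindström--Gessel--Viennot involution pairs any walk system containing a crossing with the involution-swapped system, and the cancellation leaves only contributions from vertex-disjoint (hence planar) LERW systems indexed by link patterns $\alpha \DPleq \beta$. This yields an upper-triangular linear system
\[
\Delta^{\mathfrak{K}}_\beta = \sum_{\alpha \DPleq \beta} M_{\beta,\alpha}\, \PR[\event(\alpha)]
\]
on the finite poset $(\LP_N, \DPgeq)$, with diagonal entries equal to $1$ (corresponding to the identity-pairing in the Fomin cancellation).

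The main step --- and the one that is essentially the content of~\cite{KKP:Boundary_correlations_in_planar_LERW_and_UST} building on~\cite{Kenyon-Wilson:Boundary_partitions_in_trees_and_dimers, Kenyon-Wilson:Double_dimer_pairings_and_skew_Young_diagrams} --- is the inversion of this triangular system. The claim is that the entries of the inverse matrix are precisely the nonnegative integers $\# \CItilingsof(\alpha/\beta)$, which gives
\[
\PR[\event(\alpha)] = \sum_{\beta \DPgeq \alpha} \# \CItilingsof(\alpha/\beta)\, \Delta^{\mathfrak{K}}_\beta \; = \; \mathfrak{Z}^{\mathfrak{K}}_\alpha.
\]
To verify this inversion, I would encode link patterns as $N$-step Dyck paths (or equivalently as Young diagrams fitting under the staircase), translate the partial order $\DPgeq$ into containment of the associated diagrams, and then invoke the Dyck tiling Möbius identity of Kenyon--Wilson (stating that the inverse of the incidence matrix of $\DPgeq$ with entries $M_{\beta,\alpha}$ is given by signed/unsigned counts of cover-inclusive Dyck tilings of the skew diagrams $\alpha/\beta$).

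The principal obstacle is this last combinatorial inversion: both the correct identification of the coefficients $M_{\beta,\alpha}$ arising from Fomin's cancellation with a Dyck-tiling generating function, and the proof that the $\# \CItilingsof(\alpha/\beta)$ assemble into the Möbius-type inverse on $(\LP_N, \DPgeq)$, require genuine bijective or inductive combinatorial work. Everything else --- Wilson's algorithm, the LGV/Fomin involution, and the reduction to a triangular system --- is essentially formal once the correct posetal setup is in place.
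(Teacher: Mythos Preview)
The paper does not prove this statement at all: it is a cited theorem taken verbatim from \cite[Theorem~3.12 and Section~3.6]{KKP:Boundary_correlations_in_planar_LERW_and_UST} and used as a black box (see its sole use in the proof of Theorem~\ref{thm:pinched pertition functions}). So there is no ``paper's own proof'' to compare against.

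That said, your outline is a faithful summary of how the result is actually established in the cited reference. The ingredients you list --- Wilson's algorithm to relate UST branches to LERWs, Fomin's LGV-type determinantal identity to express $\Delta^{\mathfrak{K}}_\beta$ as a signed sum of walk systems yielding an upper-triangular linear relation between $\{\Delta^{\mathfrak{K}}_\beta\}$ and $\{\PR[\event(\alpha)]\}$, and the Kenyon--Wilson Dyck-tiling M\"obius inversion on $(\LP_N,\DPgeq)$ to invert that system --- are exactly the three steps carried out in \cite{KKP:Boundary_correlations_in_planar_LERW_and_UST}. Your identification of the combinatorial inversion as the main technical content is also correct: the coefficients $M_{\beta,\alpha}$ in the forward direction turn out to be signed Dyck-tiling counts, and the matrix inversion identity between signed and cover-inclusive Dyck tilings (originally from \cite{Kenyon-Wilson:Double_dimer_pairings_and_skew_Young_diagrams}) is precisely what produces the nonnegative integers $\#\CItilingsof(\alpha/\beta)$ in the inverse.
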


While this purely combinatorial formula holds for any planar graph, 
in the scaling limit, if some of the marked boundary edges tend together, the formula in Theorem~\ref{thm: disc connectivity probas} is 
not directly amenable to analysis.
Indeed, due to the discrete fusion of points, if one naively tries to apply the renormalization that would give convergence of discrete excursion kernels to the Brownian ones, 
some kernels $\ExcK_{\sG}$ blow up in the scaling limit; while in some determinants, 
e.g., two rows $\ExcK_{\sG}(\cdot, v_b)$ and $\ExcK_{\sG}(\cdot , v_{b'})$, with $b \neq b'$, 
may become identical in the scaling limit. 
We will use tools from Section~\ref{subsec:inv Fomin} to introduce an algorithmic sequence of modifications in symmetric kernels $\mathfrak{K}$ which will not change the value of $\mathfrak{Z}^{\mathfrak{K}}_\alpha$ 
but which will lead to a more tractable formula, that we then use to establish the convergence to the scaling limit.

\subsection{Fusion of discrete connection probabilities}
\label{subsec:connection probabilities Fusion}

Given some counterclockwise ordered boundary edges $e_1, \ldots, e_n$ and a function $f$ defined on them, we denote
\begin{align} \label{eq:tan derivative}
\textnormal{\gls{symb:tan}} f (e_i) = f(e_{i+1}) - f(e_i), \qquad 1 \leq i \leq n-1.
\end{align}
Note that iteratively, for example $(D^{\mathrm{tan}})^2 f (e_i)$ is defined for $1 \leq i \leq n-2$. 

\begin{remark}
We will ultimately be interested in the case where $e_{\summ_{j-1}+1}, \ldots, e_{\summ_{j}}$ are a chain of neighbors on the lattice $\tfrac{1}{\delta} \bZ^2$ and $\mathfrak{K}(a, b) = \ExcK_\sG (e_a, e_b) $ coincides with a function defined on all boundary edges. Then, $\frac{1}{\delta} D^{\mathrm{tan}}$ is the natural discretization of the tangential derivative. 
\end{remark}

\begin{proposition}
\label{prop:higher-valency solution}
Fix $\multii = (s_1,\ldots,s_d) \in \bZpos^d$ and $1 \leq j \leq d$.
We interpret below $\mathfrak{K}(\cdot,\cdot)$ as a function on 
$\{ e_1, \ldots, e_{2N} \} \times \{ e_1, \ldots, e_{2N} \}$ 
(instead of $\{ 1, \ldots, 2N \} \times \{ 1, \ldots, 2N \}$).
Then, for any $\alpha \in \LP_\multii$, 
the following algorithm of modifications in $\mathfrak{K}(\cdot,\cdot)$ leaves $\mathfrak{Z}^{\mathfrak{K}}_\alpha$ defined in~\eqref{eq: inverse Fomin-type sum} invariant:
\begin{enumerate}[leftmargin=3em,label=\textnormal{(\roman*):}, ref=(\roman*)]
\item \label{step: modification 1}
set $\mathfrak{K}(a,b) = 0$ for all $a,b \in \{ \summ_{j-1}+1 , \ldots, \summ_j \}$, 
and then,

\medskip

\item \label{step: modification 2} 
for all $b \not \in \{ \summ_{j-1}+1 , \ldots, \summ_j \} $ and for all $m \in \{2,\ldots,s_j\}$, 
set both $\mathfrak{K}(\summ_{j-1}+m,b)$ and $\mathfrak{K}(b,\summ_{j-1}+m)$ to equal 
\begin{align*}
(D^{\mathrm{tan}}_1 )^{m-1} \mathfrak{K}(\summ_{j-1}+1,b)
\qquad \textnormal{instead of} \qquad
\mathfrak{K}(\summ_{j-1}+m,b) ,
\end{align*} 
where $D^{\mathrm{tan}}_1 $ means $D^{\mathrm{tan}} $ with respect to the first argument.
\end{enumerate}
\end{proposition}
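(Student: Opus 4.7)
My plan is to treat the two modifications separately, each by applying one item of Lemma~\ref{lem:Fomin properties} to the group of indices $I_j := \{\summ_{j-1}+1,\ldots,\summ_j\}$. Step~\ref{step: modification 1} is already the content of Lemma~\ref{lem:Fomin properties}\ref{item:general zero-replacing rule b}: for any $\alpha \in \LP_\multii$, zeroing out every entry $\mathfrak{K}(a,b)$ with $a,b\in I_j$ leaves $\mathfrak{Z}^{\mathfrak{K}}_\alpha$ unchanged. So the work lies in Step~\ref{step: modification 2}.

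For Step~\ref{step: modification 2}, I work with the kernel obtained after Step~\ref{step: modification 1}, so that $\mathfrak{K}(a,b)=0$ whenever $a,b\in I_j$. The finite-difference expansion
\begin{align*}
(D^{\mathrm{tan}}_1)^{m-1}\mathfrak{K}(\summ_{j-1}+1,b)
\; = \; \mathfrak{K}(\summ_{j-1}+m,b) \, + \, \sum_{r=1}^{m-1}(-1)^{m-r}\binom{m-1}{r-1}\,\mathfrak{K}(\summ_{j-1}+r,b)
\end{align*}
shows that Step~\ref{step: modification 2} replaces $\mathfrak{K}(\summ_{j-1}+m,b)$ by itself plus a $\bC$-linear combination of entries $\mathfrak{K}(\summ_{j-1}+r,b)$ with $1\le r<m$. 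The linearity property~(b) of Section~\ref{subsec:inv Fomin} expresses $\mathfrak{Z}^{\mathfrak{K}}_\alpha$ as a linear form in each $\mathfrak{K}(m,\cdot)$ whose coefficients depend only on entries outside row $m$ and column $b$; hence the invariance of $\mathfrak{Z}^{\mathfrak{K}}_\alpha$ under Step~\ref{step: modification 2} reduces to a single \emph{key claim}: for any $k\neq m$ in $I_j$, any $c\in\bC$, and any $b\notin I_j$, adding $c\,\mathfrak{K}(k,b)$ to $\mathfrak{K}(m,b)$ (and the symmetric change at $(b,m)$) does not change $\mathfrak{Z}^{\mathfrak{K}}_\alpha$.

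To prove this claim, I introduce the auxiliary symmetric kernel $\mathfrak{K}''$ which coincides with the post-Step-\ref{step: modification 1} kernel except that $\mathfrak{K}''(m,b)=\mathfrak{K}''(b,m):=\mathfrak{K}(k,b)$. For every $a\notin\{k,m\}$ one has $\mathfrak{K}''(m,a)=\mathfrak{K}''(k,a)$: this holds by construction when $a\notin I_j$, and both sides vanish by Step~\ref{step: modification 1} when $a\in I_j$. Hence Lemma~\ref{lem:Fomin properties}\ref{item:general zero-replacing rule a} gives $\mathfrak{Z}^{\mathfrak{K}''}_\alpha=0$, and expanding $\mathfrak{Z}^{\mathfrak{K}}_\alpha$ and $\mathfrak{Z}^{\mathfrak{K}''}_\alpha$ as linear forms in $\mathfrak{K}(m,\cdot)$ and subtracting yields the desired invariance. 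The only delicate point is the ordering: Step~\ref{step: modification 1} must precede Step~\ref{step: modification 2}, since without the zeros it produces inside $I_j$ the equality $\mathfrak{K}''(m,a)=\mathfrak{K}''(k,a)$ would fail for $a\in I_j\setminus\{k,m\}$, and Lemma~\ref{lem:Fomin properties}\ref{item:general zero-replacing rule a} could not be invoked.
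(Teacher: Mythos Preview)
There is a genuine gap. Your key claim is formulated for a \emph{single} index $b\notin I_j$, and your auxiliary kernel $\mathfrak{K}''$ accordingly alters only the one entry $(m,b)$. But then the assertion ``for every $a\notin\{k,m\}$ one has $\mathfrak{K}''(m,a)=\mathfrak{K}''(k,a)$'' fails whenever $a\notin I_j$ and $a\neq b$: there $\mathfrak{K}''(m,a)=\mathfrak{K}(m,a)$ while $\mathfrak{K}''(k,a)=\mathfrak{K}(k,a)$, and nothing forces these to coincide, so Lemma~\ref{lem:Fomin properties}\ref{item:general zero-replacing rule a} does not apply. In fact the single-entry claim is already false: with $N=2$, $I_j=\{1,2\}$ and $\alpha=\{\{1,4\},\{2,3\}\}$ one has $\mathfrak{Z}^{\mathfrak{K}}_\alpha=\mathfrak{K}(1,4)\mathfrak{K}(2,3)-\mathfrak{K}(1,3)\mathfrak{K}(2,4)$, and adding $c\,\mathfrak{K}(1,3)$ to $\mathfrak{K}(2,3)$ alone shifts the sum by $c\,\mathfrak{K}(1,3)\mathfrak{K}(1,4)\neq 0$.

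The repair is to operate on the entire collection $\mathfrak{K}(m,\cdot)$ at once. Step~\ref{step: modification 2} adds to $\mathfrak{K}(\summ_{j-1}+m,a)$ the \emph{same} combination $\sum_{r<m}c_{m,r}\,\mathfrak{K}(\summ_{j-1}+r,a)$ for every $a$ (your binomial coefficients do not depend on $b$). If instead you set $\mathfrak{K}''(m,a):=\mathfrak{K}(k,a)$ for \emph{all} $a\neq m$, then the hypothesis of Lemma~\ref{lem:Fomin properties}\ref{item:general zero-replacing rule a} is met, $\mathfrak{Z}^{\mathfrak{K}''}_\alpha=0$, and linearity in $\mathfrak{K}(m,\cdot)$ gives the row-operation invariance. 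You should also process $m$ in decreasing order $s_j,s_j-1,\ldots,2$, so that the rows $\summ_{j-1}+r$ with $r<m$ are still the original ones when you add them. With these two fixes your argument is essentially the paper's (the paper inverts the triangular relation and runs $m$ in increasing order instead, but the core mechanism is identical).
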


\begin{proof}
Rule~\ref{step: modification 1} leaves $\mathfrak{Z}^{\mathfrak{K}}_\alpha$ invariant
by a direct application of Lemma~\ref{lem:Fomin properties}~\ref{item:general zero-replacing rule b}, 
so we only need to prove invariance under Rule~\ref{step: modification 2}.
First, from~\eqref{eq:tan derivative}, by induction on $m$ and the base case $(D^{\mathrm{tan}})^0 f =f$, 
one readily deduces that for each $m$, there exist $a_{m,1}, \ldots, a_{m, m-1} \in \bR$ such that
\begin{align*}
(D^{\mathrm{tan}})^{m-1} f(e_{\summ_{j-1}+1}) 
= a_{m,1} \, f(e_{\summ_{j-1}+1}) + \cdots + a_{m, m-1} \, f(e_{\summ_{j-1}+m-1}) + f(e_{\summ_{j-1}+m}) .
\end{align*} 
The matrix corresponding to this linear relation is lower-triangular with ones on the diagonal, and so is its inverse. Thus, for each $m$ there exist $b_{m,1}, \ldots, b_{m, m-1} \in \bR$ so that
\begin{align*}
f(e_{\summ_{j-1}+m}) = b_{m,1} \,  f(e_{\summ_{j-1}+1}) +
\ldots + b_{m, m-1} \,  (D^{\mathrm{tan}})^{m-2} f(e_{\summ_{j-1}+1}) + (D^{\mathrm{tan}})^{m-1} f(e_{\summ_{j-1}+1}).
\end{align*} 
We will use this formula for $f(e_{\summ_{j-1}+m}) = \mathfrak{K}(\summ_{j-1}+m,b)$ for each $b$.

First, when $m=2$, since $\mathfrak{Z}^{\mathfrak{K}}_\alpha$ is linear in $\mathfrak{K}(\summ_{j-1}+2, \cdot)$, it can be split into a linear combination of 
two inverse Fomin type sums where $\mathfrak{K}(\summ_{j-1}+2, \cdot)$ is in the first case given by $\mathfrak{K}(\summ_{j-1}+2, \cdot)= \mathfrak{K}(\summ_{j-1}+1, \cdot) $, 
while the second case has $\mathfrak{K}(\summ_{j-1}+2, \cdot)= D^{\mathrm{tan}}_1 \mathfrak{K}(\summ_{j-1}+1, \cdot) $.
The first case gives a zero inverse Fomin type sum by Lemma~\ref{lem:Fomin properties}~\ref{item:general zero-replacing rule a}, 
while the second case is the claimed modification in Rule~\ref{step: modification 2} 
(and had the coefficient \quote{$1$} in the linear combination).
One then analogously splits $\mathfrak{Z}^{\mathfrak{K}}_\alpha$ (with $\mathfrak{K}(\summ_{j-1}+2, \cdot)$ now modified to $D^{\mathrm{tan}}_1 \mathfrak{K}(\summ_{j-1}+1, \cdot)$) into a linear conbination of three inverse Fomin sums where $\mathfrak{K}(\summ_{j-1}+3, \cdot) $
 is either $\mathfrak{K}(\summ_{j-1}+1, \cdot) $, $D^{\mathrm{tan}_1} \mathfrak{K}(\summ_{j-1}+1, \cdot)$, or $ (D^{\mathrm{tan}}_1)^2 \mathfrak{K}(\summ_{j-1}+1, \cdot)$. 
The first two cases again give zero contribution by
Lemma~\ref{lem:Fomin properties}~\ref{item:general zero-replacing rule a}, 
while the third case is exactly of the claimed form, with coefficient \quote{$1$.} 
Going on in this way throughout the index set $\{ \summ_{j-1}+1 , \ldots, \summ_j \}$, 
we establish Rule~\ref{step: modification 2}.
\end{proof}

\begin{proof}[Proof of Theorem~\ref{thm:pinched pertition functions}]
We begin with the expression~\eqref{eq: soln of disc ptt fcns} for $\PR[\event(\alpha)]$ given by Theorem~\ref{thm: disc connectivity probas}, which is 
an inverse Fomin type sum as in~(\ref{eq: LPdet},~\ref{eq: inverse Fomin-type sum}) 
having kernel $\mathfrak{K}(a,b) = \ExcK_{\sG}(e_a, e_b)$. 
We obtain the sought~\eqref{eq:pinched pertition functions} by applying Proposition~\ref{prop:higher-valency solution} iteratively to all $j \in \{1,2,\ldots,d\}$.
\end{proof}

\bigskip{}
\section{Scaling limit results}
\label{sec:scaling_limit}
The purpose of this short section is to use 
Theorem~\ref{thm:pinched pertition functions} and discrete complex analysis to prove our scaling limit result, 
Theorem~\ref{thm:scaling limit of pinched pertition functions},
concerning the UST connection probabilities. 
We also prove the fusion and covariance properties \textnormal{(FUS)}~\&~\textnormal{(COV)} in Theorem~\ref{thm:CFT properties},
which will follow from the explicit limit expression with the tools of the previous Section~\ref{sec:discrete}.

We define the \emph{Green's function} 
\gls{symb:GreenKH}$\colon \bH \times \bH \to \bR$, 
\emph{Poisson kernel} 
\gls{symb:PoissonKH}$\colon \bH \times \bR \to \bR$, 
and \emph{Brownian excursion kernel} 
\gls{symb:ExcKH}$\colon \bR \times \bR \to \bR$ 
in the upper half-plane $\bH := \{z \in \bC \;|\; \im(z) > 0 \}$ as
\begin{align*}
\GreenKH_\bH (z, w) 
= \; &  \GreenKH (z, w) 
:= - \frac{1}{2 \pi} \, \log \Big| \frac{ z - w }{ z - \overline{w} } \Big| ,  \qquad (z, w)  \in \bH \times \bH , \; z \neq w ,
\\
\PoissonKH_\bH (z, x) 
= \; &  \PoissonKH (z, x) 
:= - \frac{1}{\pi} \, \im \Big( \frac{1}{z-x} \Big) 
= \frac{1}{\pi} \frac{\im (z)}{| z-x |^2} ,  \qquad (z, x)  \in \bH \times \bR ,
\\
\ExcKH_\bH(x, y) 
= \; &  \ExcKH(x, y) 
:= \frac{1}{\pi} \frac{1}{(x-y)^2} ,  \qquad (x, y) \in \bR \times \bR , \; x \neq y ,
\end{align*}
respectively. 
The first two objects are the Green's function and Poisson kernel of the (positive)  
Laplacian on $\bH$ with Dirichlet boundary conditions (with positive eigenvalues). 
The normal derivatives of $\GreenKH$ and $\PoissonKH$ can be defined on $\bR$ by Schwarz reflection, so
\begin{align*}
\PoissonKH (z, x) = \left( \partial_y \GreenKH (z, x + \ii y) \right)_{y=0} 
\qquad \textnormal{and} \qquad
\ExcKH(x_1, x_2) = \left( \partial_y \PoissonKH (x_1 + \ii y, x_2) \right)_{y=0}.
\end{align*}
For any other simply connected domain $\domain$, the Green's function $\GreenKH_\domain (z, w)$ is given by
\begin{align*}
\GreenKH_\domain (z, w) := \GreenKH (\varphi(z),\varphi( w)) ,
\qquad z, w \in \domain ,
\end{align*}
where $\varphi \colon \domain \to \bH$ is any conformal bijection. 
Assuming that the marked boundary points $p_1, p_2 \in \partial \domain$ lie 
on sufficiently regular boundary segments (e.g. $C^{1+\epsilon}$ for some $\epsilon>0$),  
we define the Poisson kernel and the Brownian excursion kernel as normal derivatives, 
\begin{align*}
\PoissonKH_\domain (z, p_1) &:= |\varphi'(p_1) | \, \PoissonKH (\varphi(z), \varphi(p_1)) ,
\qquad z \in \domain ,
\\
\ExcKH_\domain (p_1, p_2) &:= |\varphi'(p_1)| \, |\varphi'(p_2)| \, \ExcKH (\varphi(p_1), \varphi(p_1)) .
\end{align*}

\subsection{Scaling limits of connection probabilities: Proof of Theorem~\ref{thm:scaling limit of pinched pertition functions}}
\label{subsec:proof of pinched pertition functions}

We now get to the first main result of this article. 
Recall the notions of admissible polygons and Carath\'{e}odory convergence from Section~\ref{subsubsec:scaling limit setup}.
Basic discrete harmonic analysis gives the following result.

\begin{theorem} \label{thm:conv of exc kernels and derivatives}
Suppose that admissible $2$-polygons $(\domain^\delta; e_1^\delta, e_2^\delta)$ converge as $\delta \to 0$ to an admissible $2$-polygon $(\domain; p_1, p_2)$ in the Carath\'{e}odory sense.
Then, 
we have
$\ExcK_{\delta}(e_1, e_2) = \delta^2 \ExcKH_\domain (p_1, p_2) + o(\delta^2)$, and more generally, for all
$i_1, \ldots,i_m \in \{ 1,2\}$ and $m \in \bZnn$, we have 
\begin{align*}
D^{\mathrm{tan}}_{i_1} D^{\mathrm{tan}}_{i_2} \cdots D^{\mathrm{tan}}_{i_m} \ExcK_{\delta}(e_1, e_2)
= \; & \delta^{2+m} \partial^{\mathrm{tan}}_{i_1} \partial^{\mathrm{tan}}_{i_2} \cdots \partial^{\mathrm{tan}}_{i_m} \ExcKH_\domain(p_1, p_2) + o(\delta^{2+m}) ,
\qquad \delta \to 0 ,
\end{align*}
where $\ExcK_{\delta} = \ExcK_{\domain^\delta}$ is the excursion kernel of the symmetric random walk on the graph $\domain^\delta$ and $D^{\mathrm{tan}}$ are differences between nearest neighbor boundary edges 
\textnormal{(}defined in Equation~\eqref{eq:tan derivative} in Section~\ref{subsec:connection probabilities Fusion}\textnormal{)}. 
\end{theorem}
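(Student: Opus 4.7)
The strategy is to reduce the claim to the classical convergence of the random-walk Green's function in the interior of $\domain^\delta$, and then to push this convergence all the way to the marked boundary points via Schwarz reflection, which is available precisely because of the straight-boundary assumption.

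\emph{Step 1 — Reduction and reflection.} The excursion kernel equals the random-walk Green's function at the interior endpoints $v_k^\delta$ of the marked edges, namely $\ExcK_\delta(e_1^\delta, e_2^\delta) = \GreenK_{\domain^\delta}(v_1^\delta, v_2^\delta)$, where $v_k^\delta$ lies at lattice distance $\delta$ from the boundary point $p_k$. By admissibility, $\partial \domain^\delta$ is a horizontal or vertical lattice segment on a fixed (macroscopic) neighborhood of each $p_k$. Reflecting $\domain^\delta$ across this straight boundary piece and extending $\GreenK_{\domain^\delta}(\cdot,w)$ antisymmetrically (in either variable) produces a function that is discrete harmonic in each variable on an open lattice neighborhood of $p_k$. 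The same Schwarz reflection applies to the continuum object $\GreenKH_\domain$.

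\emph{Step 2 — Interior convergence.} By the classical Courant--Friedrichs--Lewy theorem~\cite{CFL:Uber_PDE_der_mathphys}, combined with standard Carath\'eodory-stability arguments for the Dirichlet Green's function (cf.~\cite{Chelkak-Smirnov:Discrete_complex_analysis_on_isoradial_graphs}), the discrete Green's function converges to the continuum one uniformly on compact subsets of $\domain \times \domain$ away from the diagonal. Combined with Step~1, this uniform convergence upgrades to an open neighborhood of each pair $(p_i, p_j)$ in the reflected domain, away from $p_i = p_j$.

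\emph{Step 3 — From function convergence to derivative convergence.} The reflected discrete Green's function is discrete harmonic and uniformly bounded (in $\delta$) on an open lattice neighborhood of $p_k$ in each variable separately. Standard discrete a priori estimates for bounded discrete harmonic functions (discrete Harnack/discrete gradient bounds) then imply that \emph{all} discrete derivatives converge, uniformly, to the corresponding continuum derivatives, each discrete tangential difference producing one extra factor of $\delta$:
\begin{equation*}
D_{i_1}^{\mathrm{tan}} \cdots D_{i_m}^{\mathrm{tan}} \GreenK_{\domain^\delta}(v_1^\delta, v_2^\delta)
= \delta^m \, \partial_{i_1}^{\mathrm{tan}} \cdots \partial_{i_m}^{\mathrm{tan}} \GreenKH_\domain(v_1^\delta, v_2^\delta) + o(\delta^m).
\end{equation*}
Finally, since $v_k^\delta$ lies at distance $\delta$ from the boundary and $\GreenKH_\domain$ vanishes on $\partial \domain$ together with all its tangential derivatives, Taylor-expanding $\GreenKH_\domain$ along the inward normal at each of the two arguments extracts two further factors of $\delta$ and identifies the leading term with the double normal derivative of $\GreenKH_\domain$, i.e., with $\ExcKH_\domain(p_1, p_2)$. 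This yields the claimed expansion with prefactor $\delta^{2+m}$.

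\emph{Main obstacle.} The delicate point is Step~3: one must control \emph{all orders} of discrete derivatives simultaneously, not just the function value. Without the straight-boundary hypothesis, Schwarz reflection fails and the reflected discrete function would not be harmonic across the boundary, so one could not directly differentiate it at $p_k$. The horizontal-or-vertical segment assumption built into admissibility is precisely what lets the interior regularity theory be applied on an open set containing each marked boundary point.
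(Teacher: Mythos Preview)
Your proposal is correct and follows essentially the same approach as the paper: reduce to convergence of the discrete Green's function, use Schwarz reflection across the straight boundary segments to push interior regularity up to the marked points, and extract the $\delta^{2+m}$ scaling from the double normal derivative together with the tangential differences. The only cosmetic difference is that the paper observes directly that $\ExcK_\delta(e_1,e_2)$ \emph{is} the discrete double normal derivative of $\GreenK_{\domain^\delta}$ (since the Green's function vanishes on $\partial\domain^\delta$), so the $\delta^2$ factor comes out immediately from the convergence of \emph{all} discrete derivatives rather than via your separate Taylor expansion in Step~3.
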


\begin{proof}
This follows from well-known convergence results of discrete harmonic functions. 
The discrete Green's $\GreenK_{\domain^\delta}$ functions on $\domain^\delta$ (which are discrete harmonic in both of their variables) converge to their continuous counterparts on $\domain$~\cite{Chelkak-Smirnov:Discrete_complex_analysis_on_isoradial_graphs}.
Moreover, the convergence of discrete harmonic functions and all their discrete derivatives
is uniform on compact subsets of the domain~$\domain$
\cite{CFL:Uber_PDE_der_mathphys, Chelkak-Smirnov:Discrete_complex_analysis_on_isoradial_graphs}. 
By discrete/continuous Schwarz reflection, the convergence of such functions 
and their discrete derivatives holds up to the boundary at straight boundary segments 
(using the polygons' admissibility).
It remains to note that $\ExcK_{\delta}(e_1, e_2)$ coincides the discrete normal derivative (with respect to both arguements) of the discrete Green's function $\GreenK_{\domain^\delta}$ on $\domain^\delta$. 
\end{proof}

\begin{proof}[Proof of Theorem~\ref{thm:scaling limit of pinched pertition functions}]
This is a direct consequence of Theorems~\ref{thm:pinched pertition functions}~\&~\ref{thm:conv of exc kernels and derivatives}. 
Valence $s_i$ produces discrete derivatives of orders $0,1, \ldots, s_i - 1$, 
and they sum up to $(s_i-1)s_i/2$. 
The renormalization factor $\prod_{i=1}^d \delta^{(s_i-1)s_i/2}$ thus divides out the discrete tangential derivatives by the appropriate power of $\delta$.
The factor $\delta^{-2N}$ comes from the scaling of Poisson kernels. 
\end{proof}

\subsection{Proof of property \textnormal{(FUS)} in Theorem~\ref{thm:CFT properties}}
\label{subsec:fusion of partition functions}

For each $\alpha \in \LP_\multii$, 
we denote by $\iota(\alpha)$ the $N$-link pattern obtained by \quote{opening up} the valenced nodes in $\alpha$ (Figure~\ref{fig: UST fused}(top-right)).

\begin{lemma} \label{lem:fusion of fused partition functions}
Let $\domain$ be a simply-connected domain and let $p_1, \ldots, p_{d} \in \partial \domain$ distinct counterclockwise boundary points on sufficiently regular\footnote{E.g. $C^{1+\epsilon}$ for some $\epsilon>0$ --- we need derivatives of conformal maps defined on the boundary.} boundary segments, 
so that $\PartF_\alpha (\domain; p_1, \ldots, p_d)$ can be defined as in Equation~\eqref{eq:fused partition function}.
Then, for each $\alpha \in \LP_\multii$, 
the function $\PartF_\alpha (\domain; p_1, \ldots, p_d)$ equals
\begin{align*} 
\; & \bigg( \prod_{j=1}^d ( 0! \cdot 1! \cdot \ldots \cdot (s_j-1)!) \bigg) \\
& \times \bigg(
 \lim_{x_{\summ_d} \to p_d} |x_{\summ_d} - p_d|^{-s_d+1}
 \ldots
 \lim_{x_{\summ_{d-1} + 2} \to p_d} |x_{\summ_{d-1} + 2} - p_d|^{-1}
 \lim_{x_{\summ_{d-1} + 1} \to p_d}
 \bigg) \\
& \; \ldots \;
\bigg( \lim_{x_{\summ_1} \to p_1} 
|x_{\summ_1} - p_1|^{-s_1+1} \ldots \lim_{x_2 \to p_1} |x_2 - p_1|^{-1} \lim_{x_1 \to p_1} \bigg)
\PartF_{\iota(\alpha)} (\domain; x_1, \ldots, x_{2N}).
\end{align*} 
\end{lemma}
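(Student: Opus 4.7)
The plan is to express $\PartF_{\iota(\alpha)}$ as an inverse Fomin type sum (Section~\ref{subsec:inv Fomin}) with the plain excursion-kernel weights, and to show by repeated Taylor expansion along the boundary -- combined with the algebraic identities of Lemma~\ref{lem:Fomin properties} -- that the iterated limit converts the un-differentiated kernel rows into the tangentially differentiated ones appearing in~\eqref{eq: valenced kernel}. Write
\begin{align*}
\PartF_{\iota(\alpha)}(\domain; x_1, \ldots, x_{2N}) = \mathfrak{Z}^{\mathfrak{K}_0}_{\iota(\alpha)}, \qquad
\mathfrak{K}_0(a, b) := \ExcKH_\domain(x_a, x_b) .
\end{align*}
Because $\alpha \in \LP_\multii$ has no link inside any fused group, the unfusing identification lets us apply Lemma~\ref{lem:Fomin properties} to $\mathfrak{Z}^{\mathfrak{K}_0}_{\iota(\alpha)}$ regarded with the valenced structure $\multii$; in particular, part~\ref{item:general zero-replacing rule b} of that lemma permits us to replace $\mathfrak{K}_0(a, b)$ by zero whenever $a, b$ lie in the same fused group, so that the collision singularity of $\ExcKH_\domain$ plays no role in the analysis.

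Next we process the fused groups one at a time; by symmetric treatment it suffices to describe the first, with $(x_1, \ldots, x_{s_1}) \to (p_1, \ldots, p_1)$. The inductive claim is that after performing the first $m$ of these limits ($1 \le m \le s_1$), the expression equals
\begin{align*}
\bigg( \prod_{k=1}^m \frac{1}{(k-1)!} \bigg) \mathfrak{Z}^{\mathfrak{K}^{(m)}}_{\iota(\alpha)} ,
\end{align*}
where $\mathfrak{K}^{(m)}(k, b) = (\partial^{\mathrm{tan}}_1)^{k-1} \ExcKH_\domain(p_1, x_b)$ for $1 \le k \le m$ and $b > s_1$, $\mathfrak{K}^{(m)}(k, \ell) = 0$ for $1 \le k, \ell \le s_1$, and $\mathfrak{K}^{(m)}$ agrees with $\mathfrak{K}_0$ on all remaining entries. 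For the inductive step, Taylor expand
\begin{align*}
\ExcKH_\domain(x_m, y) = \sum_{k=0}^{M-1} \frac{(x_m - p_1)^k}{k!} (\partial^{\mathrm{tan}}_1)^k \ExcKH_\domain(p_1, y) + O(|x_m - p_1|^M) ,
\end{align*}
which is valid since admissibility makes $\partial \domain$ straight near $p_1$ and hence $\ExcKH_\domain(\cdot, y)$ real-analytic along $\partial \domain$ near $p_1$ for each fixed $y \neq p_1$. Substituting into $\mathfrak{Z}^{\mathfrak{K}^{(m-1)}}_{\iota(\alpha)}$ and using multilinearity in the row $\mathfrak{K}(m, \cdot)$ (observation~(b) in Section~\ref{subsec:inv Fomin}) expresses the quantity as a Taylor polynomial in $(x_m - p_1)$. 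For each $k \in \{ 0, 1, \ldots, m - 2 \}$, the $k$-th term sets $\mathfrak{K}(m, \cdot)$ proportional to $\mathfrak{K}^{(m-1)}(k+1, \cdot)$ with $m$ and $k+1$ in the same fused group, so Lemma~\ref{lem:Fomin properties}\ref{item:general zero-replacing rule a} (applied after factoring out the scalar by multilinearity) annihilates its contribution. The $k = m - 1$ term contributes $\tfrac{(x_m - p_1)^{m-1}}{(m-1)!} \mathfrak{Z}^{\mathfrak{K}^{(m)}}_{\iota(\alpha)}$, while the Taylor remainder is of order $o(|x_m - p_1|^{m-1})$. Multiplying by $|x_m - p_1|^{-(m-1)}$ and letting $x_m \to p_1$ (from the counterclockwise side on which the fused group lies, so that the sign of $(x_m - p_1)$ is constant) closes the induction.

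Iterating through all $d$ fused groups accumulates the prefactor $\prod_{j=1}^d \prod_{k=0}^{s_j-1} \tfrac{1}{k!}$, and by the symmetry $\mathfrak{K}_0(a, b) = \mathfrak{K}_0(b, a)$ of the kernel, the analogous argument applied to the second argument at each subsequently processed group produces tangential derivatives in the second variable as well, yielding exactly the kernel~\eqref{eq: valenced kernel} that defines $\PartF_\alpha$ via~\eqref{eq:fused partition function}. Moving the prefactor to the other side of the equation gives the claimed identity. The main obstacle is the combinatorial bookkeeping: one must carefully interleave the iterated limits at different $p_j$'s with the emergence of derivatives in both kernel arguments, and keep track of the counterclockwise approach direction so that the signs $(x_m - p_j)^{m-1}/|x_m - p_j|^{m-1}$ are consistently $+1$ at every step.
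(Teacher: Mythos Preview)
Your proposal is correct and follows essentially the same approach as the paper: both arguments express $\PartF_{\iota(\alpha)}$ as an inverse Fomin type sum, use Lemma~\ref{lem:Fomin properties}\ref{item:general zero-replacing rule b} to zero out within-group kernel entries, and then Taylor expand row by row, invoking Lemma~\ref{lem:Fomin properties}\ref{item:general zero-replacing rule a} to annihilate the lower-order Taylor terms so that exactly the $(m-1)$st tangential derivative (with its $1/(m-1)!$ coefficient) survives at each step. The paper packages the one-group computation into Proposition~\ref{prop:higher-valence solution limit} and then applies it iteratively over $j=1,\ldots,d$, whereas you carry out the same induction inline; your extra remarks about the sign of $(x_m-p_j)^{m-1}/|x_m-p_j|^{m-1}$ and the emergence of derivatives in both kernel arguments when later groups are processed are precisely the bookkeeping the paper leaves implicit.
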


Lemma~\ref{lem:fusion of fused partition functions}, and thereby 
property \textnormal{(FUS)} in Theorem~\ref{thm:CFT properties}, 
is a straightforward consequence of 
recognizing~\eqref{eq:fused partition function} in the explicit expressions in 
the following Proposition~\ref{prop:higher-valence solution limit} 
(whose proof, less surprisingly, resembles that of the expression~\eqref{eq:fused partition function} in Theorem~\ref{thm:scaling limit of pinched pertition functions}).
We prove it for the case where $\domain = \bH$; the general case is similar.
Below, we shall consider symmetric kernels $\mathfrak{K}$ of the form $\mathfrak{K}(a, b) = \mathfrak{K}(b, a)=K_{a, b}(x_a, x_b)$ where $K_{a, b}$, for $a<b$, are smooth functions defined for $x_a < x_b$, and we interpret thus $\mathfrak{Z}^{\mathfrak{K}}_\alpha$ 
also as a function on $\chamber_{2N}$.

\begin{proposition} \label{prop:higher-valence solution limit}
Fix valences $\multii = (s_1,\ldots,s_d) \in \bZpos^d$ and and an index $j \in \{1,2,\ldots,d\}$.
Denote $J = J_j = \{ \summ_{j-1} +1, \ldots, \summ_j\}$. 
Suppose that given any $b \not \in J$ and $x_b$, 
the function $K_{a,b}(x_a, x_b)$ takes the same form $f_{b}(x_a)$ for all $a \in J$.
Then, for any $\alpha \in \LP_\multii$, the limit 
\begin{align*}
\left( \lim_{x_{\summ_j} \to \xi} 
|x_{\summ_j} - \xi|^{-s_j+1} \ldots \lim_{x_{\summ_{j-1}+2} \to \xi} |x_{\summ_{j-1}+2} - \xi|^{-1} \lim_{x_{\summ_{j-1}+1} \to \xi} \right)
\mathfrak{Z}^{\mathfrak{K}}_\alpha (x_1, \ldots, x_{2N}) 
\end{align*}
is finite and given by the following algorithm of modifications in the kernel $\mathfrak{K}(\cdot,\cdot)$.
\begin{enumerate}[leftmargin=3em,label=\textnormal{(\roman*):}, ref=(\roman*)]
\item \label{step: modification 1 again}
Set $\mathfrak{K}(a,b) = 0$ for all $\summ_{j-1} < a, b \leq \summ_j$;
and then,

\medskip

\item \label{step: modification 2 again}
for all $b \not \in J$ and for all $m \in \{1,\ldots,s_j \}$, 
set $\mathfrak{K}(\summ_{j-1}+m,b) = \mathfrak{K}(b,\summ_{j-1}+m)$ to equal 
\begin{align*}
\frac{1}{(m-1)!}
(\partial^{\mathrm{tan}} )^{m-1} f_b (\xi)
\qquad \textnormal{instead of} \qquad
f_b (x_{\summ_{j-1}+m}) .
\end{align*} 
\end{enumerate}
\end{proposition}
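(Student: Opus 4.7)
The plan is to mimic the discrete fusion argument of Proposition~\ref{prop:higher-valency solution}, replacing the discrete differences $D^{\mathrm{tan}}$ with Taylor expansions around $\xi$. The two key tools are (a) the multilinearity of $\mathfrak{Z}^{\mathfrak{K}}_\alpha$ in each row of the kernel, and (b) Lemma~\ref{lem:Fomin properties}, which both zeroes out all kernel entries $\mathfrak{K}(a,b)$ with $a,b \in J$ and kills any row that is a scalar multiple of another row indexed by $J$.

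Step~\ref{step: modification 1 again} is immediate from Lemma~\ref{lem:Fomin properties}\ref{item:general zero-replacing rule b}: it lets us replace $\mathfrak{K}$ by the kernel $\mathfrak{K}_0$ that agrees with $\mathfrak{K}$ except that $\mathfrak{K}_0(a,b) = 0$ whenever $a,b \in J$, without changing $\mathfrak{Z}^{\mathfrak{K}}_\alpha$. So I would make this reduction first and then work throughout with $\mathfrak{K}_0$.

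For step~\ref{step: modification 2 again}, I would proceed by induction on $m \in \{1,\ldots,s_j\}$, taking the limits in the prescribed order. The case $m=1$ is simply continuity: $\mathfrak{K}_0(\summ_{j-1}+1, b) = f_b(x_{\summ_{j-1}+1}) \to f_b(\xi)$ for $b \notin J$. For the inductive step, assuming the previous $m-1$ limits have produced, in rows $\summ_{j-1}+k$ for $k = 1, \ldots, m-1$, the entries $\frac{1}{(k-1)!}(\partial^{\mathrm{tan}})^{k-1} f_b(\xi)$ at positions $b \notin J$ and $0$ at $b \in J$, I Taylor-expand
\begin{align*}
f_b(x_{\summ_{j-1}+m}) = \sum_{k=0}^{m-1} \frac{(x_{\summ_{j-1}+m} - \xi)^k}{k!} (\partial^{\mathrm{tan}})^k f_b(\xi) + R_{m,b}(x_{\summ_{j-1}+m}),
\end{align*}
with $R_{m,b}(x) = O(|x - \xi|^m)$ uniformly in $b$ by smoothness of the $f_b$. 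By multilinearity in the row labelled $\summ_{j-1}+m$, the iterated limit splits into $m+1$ pieces. For each $k \in \{0,1,\ldots,m-2\}$, the corresponding piece has row $\summ_{j-1}+m$ equal to $(\partial^{\mathrm{tan}})^k f_b(\xi)$ on $b \notin J$ and $0$ on $b \in J$; this is a scalar multiple of row $\summ_{j-1}+k+1$ in every entry, so factoring out that scalar and applying Lemma~\ref{lem:Fomin properties}\ref{item:general zero-replacing rule a} (with $k = \summ_{j-1}+m$ and $\ell = \summ_{j-1}+k+1$) makes this piece vanish. The piece with $k = m-1$, after dividing by $|x_{\summ_{j-1}+m} - \xi|^{m-1}$, produces exactly the announced row $\frac{1}{(m-1)!}(\partial^{\mathrm{tan}})^{m-1} f_b(\xi)$ for $b \notin J$. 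The remainder piece, built from $R_{m,b}$, becomes $O(|x_{\summ_{j-1}+m} - \xi|)$ after the renormalization and vanishes in the limit.

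The main obstacle I expect is verifying that Lemma~\ref{lem:Fomin properties}\ref{item:general zero-replacing rule a} genuinely applies to these intermediate kernels. After the $k$-th Taylor piece is inserted, rows $\summ_{j-1}+m$ and $\summ_{j-1}+k+1$ agree on $J$-columns (both zero by the reduction in step~\ref{step: modification 1 again}) and are proportional on $J^c$-columns by a common scalar $1/k!$ independent of $b$. Pulling that scalar out through multilinearity reduces to the hypothesis of the lemma. A secondary technical point is the tangential orientation: on the upper half-plane $\partial^{\mathrm{tan}} = +d/dx$ with the chosen counterclockwise convention, and on a general admissible polygon the sign is tracked by the local conformal map, so no spurious signs appear. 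Once these points are in place, the induction closes and iterating over all $m \in \{1,\ldots,s_j\}$ yields the stated algorithm.
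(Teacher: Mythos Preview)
Your proposal is correct and follows essentially the same approach as the paper: apply Lemma~\ref{lem:Fomin properties}\ref{item:general zero-replacing rule b} for step~\ref{step: modification 1 again}, then take the limits one at a time using Taylor expansion, multilinearity in the row $\summ_{j-1}+m$, and Lemma~\ref{lem:Fomin properties}\ref{item:general zero-replacing rule a} to kill the lower-order Taylor terms that duplicate previously fixed rows. The paper's proof is terser (it just sketches the $m=2$ case and refers back to the analogous discrete argument in Proposition~\ref{prop:higher-valency solution}), but your more explicit induction with the verification that the intermediate kernels satisfy the hypotheses of Lemma~\ref{lem:Fomin properties}\ref{item:general zero-replacing rule a} is exactly what is implicitly happening there.
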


\begin{proof}
The first replacement~\ref{step: modification 1 again} 
is Lemma~\ref{lem:Fomin properties}~\ref{item:general zero-replacing rule b}, 
so it leaves $\mathfrak{Z}^{\mathfrak{K}}_\alpha$ invariant\footnote{Note that, in the case of the Brownian excursion kernel, this replacement removes the kernel entries that would have otherwise diverged in the limit.}.

Taking the first limit $x_{\summ_{j-1}+1} \to \xi$ amounts, by continuity, 
to substituting the variable $x_{\summ_{j-1}+1} = \xi$ in all the kernel entries. 
For the second limit
$x_{\summ_{j-1}+2} \to \xi$, we Taylor expand
\begin{align*}
\mathfrak{K}(\summ_{j-1}+2, b) 
\, = \,  f_b (x_{\summ_{j-1}+2}) 
\, = \,  f_b (\xi) \, + \,  |x_{\summ_{j-1}+2} - \xi| \, \partial^{\mathrm{tan}} f_b (\xi) \,  + \,  |x_{\summ_{j-1}+2} - \xi|^2 \; O(1).
\end{align*}
Now, analogously to the proof of Rule~\ref{step: modification 2} in Proposition~\ref{prop:higher-valency solution}, 
the first term gives no contribution, while the third one vanishes in the renormalized limit.
For the later renormalized limits, one performs higher-order Taylor expansions 
to obtain a derivative of an order that has not appeared before.
This implies the assertions concerning the limit. 
\end{proof}

We prove a more general simultaneous fusion rule later in Proposition~\ref{prop:fused SLE} in Section~\ref{sec:ASY}.
It will be crucial in SLE applications (cf.~Appendix~\ref{app:fused SLE}), though the above simpler result suffices for the bulk of this article.

\subsection{Proof of property \textnormal{(COV)} in Theorem~\ref{thm:CFT properties}}
\label{subsec:covariance of partition functions}

Let us next record a general conformal covariance property for the functions $\PartF_\alpha$, 
with explicit conformal weights of the Kac form $h_{1,s+1} = s (s+1)/2$. 
This gives property \textnormal{(COV)} in Theorem~\ref{thm:CFT properties} as the special case $\domain = \bH$.

In the unfused case $(s_1, \ldots, s_d) = (1, \ldots, 1)$, each determinant --- and hence the entire expression --- on the right-hand side of Equation~\eqref{eq:scaling limit of pinched pertition functions} inherits 
the conformal covariance of the Brownian excursion kernel (Equation~\eqref{eq:BM exc ker covariance}), 
immediately resulting in the asserted covariance property~\eqref{eq: COV general at kappa equals 2} with conformal weights $h_{1,2} = 1$ (i.e., $s_j=1$ for all $j$). 
Although this is not equally straightforward in the general case, the pure partition functions of Equation~\eqref{eq:fused partition function}
are also conformally covariant with conformal weights $(h_{1,s_1+1},\ldots,h_{1,s_d+1})$: 

\begin{lemma}
\label{lem:covariance of fused partition functions}
Let $\domain$ be a simply-connected domain and $p_1, \ldots, p_{d} \in \partial \domain$ distinct counterclockwise boundary points 
on sufficiently regular\footnote{E.g. $C^{1+\epsilon}$ for some $\epsilon>0$ --- we need derivatives of conformal maps defined on the boundary.} 
boundary segments, so that $\PartF_\alpha (\domain; p_1, \ldots, p_d)$ can be defined as in Equation~\eqref{eq:fused partition function}. 
Then, for any conformal map
$\varphi \colon \domain \to \bH$ with $\varphi(p_1)<\cdots<\varphi(p_{d})$, 
we have
\begin{align*}
\PartF_\alpha(\domain; p_1, \ldots, p_d) = \prod_{j=1}^d |\varphi'(p_j)|^{(s_j+1)s_j/2} \times \PartF_\alpha(\bH; \varphi(p_1), \ldots, \varphi(p_{d})) , \qquad \alpha \in \LP_\multii .
\end{align*} 
\end{lemma}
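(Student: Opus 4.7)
The strategy is to reduce the general (fused) covariance to the unfused case via the fusion identity of Lemma~\ref{lem:fusion of fused partition functions}.

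\textbf{Step 1: Unfused case.} In the unfused case $\multii = (1,\ldots,1)$, the partition function $\PartF_{\iota(\alpha)}$ is, by~\eqref{eq:fused partition function}, a linear combination of determinants $\Delta^{\mathfrak{K}}_\beta$ with entries $\mathfrak{K}(a,b) = \ExcKH_\domain(x_a,x_b)$. Expanding each determinant as a signed sum over permutations, every term is a product of $N$ kernel entries in which each argument $x_a$ appears exactly once. Applying the conformal covariance~\eqref{eq:BM exc ker covariance} of the Brownian excursion kernel to each factor therefore yields
\begin{align*}
\PartF_{\iota(\alpha)}(\domain;x_1,\ldots,x_{2N}) = \prod_{i=1}^{2N} |\varphi'(x_i)| \cdot \PartF_{\iota(\alpha)}(\bH;\varphi(x_1),\ldots,\varphi(x_{2N})),
\end{align*}
i.e., the weights $h_{1,2} = 1$ at each marked point.

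\textbf{Step 2: Asymptotic analysis of the renormalization factors.} For each group $J_j := \{\summ_{j-1}+1,\ldots,\summ_j\}$, as $x_i \to p_j$ with $i \in J_j$, the boundary regularity of $\partial\domain$ near $p_j$ guarantees that $\varphi^{-1}$ is $C^1$ at $q_j := \varphi(p_j)$, so
\begin{align*}
|x_i - p_j| = \frac{|\varphi(x_i) - q_j|}{|\varphi'(p_j)|} \bigl(1 + o(1)\bigr) .
\end{align*}
Writing $y_i := \varphi(x_i)$, the renormalization factor in Lemma~\ref{lem:fusion of fused partition functions} becomes, for each $m \in \{2,\ldots,s_j\}$,
\begin{align*}
|x_{\summ_{j-1}+m} - p_j|^{-(m-1)} = |\varphi'(p_j)|^{m-1}\,|y_{\summ_{j-1}+m} - q_j|^{-(m-1)}\bigl(1 + o(1)\bigr) .
\end{align*}

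\textbf{Step 3: Combining the factors.} Insert the unfused identity of Step~1 into the iterated limit of Lemma~\ref{lem:fusion of fused partition functions}, and rewrite the renormalization factors via Step~2. Continuity of $\varphi'$ at $p_j$ turns $\prod_{i \in J_j}|\varphi'(x_i)|$ into $|\varphi'(p_j)|^{s_j}$ in the limit, and the successive renormalizations contribute
\begin{align*}
\sum_{m=2}^{s_j} (m-1) = \frac{s_j(s_j-1)}{2} ,
\end{align*}
additional copies of $|\varphi'(p_j)|$. The total exponent at $p_j$ is therefore
\begin{align*}
s_j + \frac{s_j(s_j-1)}{2} = \frac{s_j(s_j+1)}{2} = h_{1,s_j+1} ,
\end{align*}
exactly the asserted conformal weight. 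Pulling these (deterministic) prefactors outside the limits leaves, by a second application of Lemma~\ref{lem:fusion of fused partition functions} but now in $\bH$, the fused partition function $\PartF_\alpha(\bH;q_1,\ldots,q_d)$.

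\textbf{Main obstacle.} The principal technical point is that the replacement $|x_i - p_j| = |y_i - q_j|/|\varphi'(p_j)| \cdot (1+o(1))$ must be inserted \emph{inside} an iterated limit: one must verify that the multiplicative errors $(1+o(1))$ still vanish in the renormalized limit, despite the growing negative powers of $|x_{\summ_{j-1}+m} - p_j|$. This is handled in the same spirit as the proof of Proposition~\ref{prop:higher-valence solution limit}: at each stage one Taylor-expands $\varphi^{-1}$ to the precise order needed, the leading term reproducing the tangential derivative structure of the kernel, and the higher-order remainders contributing terms that vanish after renormalization because they are of too high an order in $|y_{\summ_{j-1}+m} - q_j|$. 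Once this bookkeeping is carried out for a single group $J_j$, the same argument applied successively to $j = 1,\ldots,d$ gives the claim.
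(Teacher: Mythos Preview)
Your proof is correct and follows exactly the same approach as the paper: both reduce to the unfused covariance via Lemma~\ref{lem:fusion of fused partition functions}, then use the first-order approximation $|x_i - p_j|^{-1} = |\varphi'(p_j)|\,|\varphi(x_i) - \varphi(p_j)|^{-1}(1+o(1))$ on the renormalization factors to extract the extra powers of $|\varphi'(p_j)|$. The paper's proof is in fact much terser than yours (it only names the two ingredients and leaves the exponent bookkeeping and the handling of the $(1+o(1))$ errors to the reader), so your Step~3 and ``Main obstacle'' paragraph spell out what the paper elides.
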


\begin{proof} 
This follows by injecting into the expression for $\PartF_\alpha$ in Lemma~\ref{lem:fusion of fused partition functions} 
the already known covariance property (which is easy to verify by hand, see~\cite[(COV2) in Theorem~4.1]{KKP:Boundary_correlations_in_planar_LERW_and_UST})
\begin{align*}
\PartF_{\iota(\alpha)} (\domain; x_1, \ldots, x_{2N}) 
= \prod_{j=1}^{2N} |\varphi'(x_j)| \times \PartF_{\iota(\alpha)}(\varphi(\domain); \varphi(x_1), \ldots, \varphi(x_{2N})) 
\end{align*}
and simple first-order difference approximations, e.g.,~ $\frac{1}{|x_2 - p_1|} = \frac{|\varphi' (p_1)|}{|\varphi (x_2) - \varphi (p_1)|} \, (1+o(1))$.
\end{proof}

\bigskip{}
\section{CFT properties of pure partition functions}
\label{sec:blocks}
In this section, we proceed with the properties \textnormal{(PDE)}, \textnormal{(LIN)}, and \textnormal{(POS)} in Theorem~\ref{thm:CFT properties}. 
The proof of each property is inductive in the valences, as illustrated in Figure~\ref{fig:fusion}. For the proofs, we will also need to prove the existence of a Frobenius series expansion for $\PartF_\alpha$. 

\subsection{Frobenius series expansion}
\label{subsec:Frobenius series}

We will need the existence and regularity properties of a Frobenius series of the BPZ PDE solutions. Fix valences $\multii = (s_1,\ldots,s_d) \in \bZpos^d$ and let 
\begin{align*}
\PartF \in \SolSp_\multii := \textnormal{span}_\bC \{\PartF_\alpha \; | \; \alpha \in \LP_\multii \} .
\end{align*}
Note that by Equation~\eqref{eq:fused partition function}, the function $\PartF$ is a linear combination of determinants of matrices whose entries are up to constant of the form $(x_\ell- x_k)^{-m}$; 
so $\PartF$ is a rational expression itself --- and in particular it has a meromorphic continuation to a complex neighborhood of the real line. 
Fix now an index $1 \leq j \leq d - 1$ and denote
\begin{align} \label{eq: eps and hatx}
\varepsilon := x_{j+1} - x_j 
\qquad \textnormal{and} \qquad 
\hat{x}_j := \frac{x_j + x_{j+1}}{2} 
\end{align}
(so that $x_j = \hat{x}_j - \tfrac{\varepsilon}{2}$ and $x_{j+1} = \hat{x}_j + \tfrac{\varepsilon}{2}$). 
Fixing $x_1 < \cdots < x_{j-1} < \hat{x}_j <  x_{j+2} < \cdots < x_d$ and varying $\varepsilon$, 
let us investigate the obtained a meromorphic rational function 
\begin{align} \label{eq: function g}
\varepsilon \quad \longmapsto \quad
\PartF(x_1,\ldots,x_{j-1},\hat{x}_j - \tfrac{\varepsilon}{2},\hat{x}_j + \tfrac{\varepsilon}{2},x_{j+2},\ldots,x_d) 
\end{align}
on $\bC$ with finitely many poles or zeroes 
(one of them at $\varepsilon = 0$, and the others corresponding to $x_j$ or $x_{j+1}$ colliding with the rest of the $x_i$:s). 
The explicit expressions for $\PartF_\alpha$ yield an upper bound $\Xi \in \bZ$ for the order\footnote{Here, the order of a zero is non-negative and the order of a pole is non-positive.} 
of the pole or zero at $\varepsilon = 0$ (independent of the fixed variables). 
Analyzing the Taylor coefficients of $\varepsilon^{\Xi} \PartF(x_1,\ldots,x_{j-1},\hat{x}_j - \tfrac{\varepsilon}{2},\hat{x}_j + \tfrac{\varepsilon}{2},x_{j+2},\ldots,x_d)$ and similarly for the derivatives of $\PartF$, one readily concludes the following expansion:

\begin{lemma} \label{lem:Frobenius series}
\textnormal{(Frobenius series).}
Any $\PartF \in \SolSp_\multii$ has the Frobenius (Laurent) series expansion 
\begin{align}
\label{eq:Frobenius series of partition function}
\PartF(\ldots,x_{j-1},\hat{x}_j - \tfrac{\varepsilon}{2},\hat{x}_j + \tfrac{\varepsilon}{2},x_{j+2},\ldots)
= \varepsilon^{\Xi} \sum_{k=0}^\infty \varepsilon^{k} \PartF^{(k)}(\ldots, x_{j-1}, \hat{x}_j, x_{j+2}, \ldots) ,
\end{align} 
which, given any fixed points $x_1 < \cdots < x_{j-1} < \hat{x}_j <  x_{j+2} < \cdots < x_d$, 
converges for all $\varepsilon \in \bC$ with $0 < |\varepsilon| < \max\{ 2 (x_{j+2}- \hat{x}_j), 2 (\hat{x}_j - x_{j-1})\}$. 
The coefficients $\PartF^{(k)}$ of this series are smooth functions of $(x_1, \ldots, x_{j-1}, \hat{x}_j, x_{j+2}, \ldots, x_d)$, 
and the partial derivatives $\partial_\cdot \PartF$ of $\PartF$ 
with respect to these variables also have a series expression with the same punctured disc of convergence:
\begin{align}
\label{eq:Frob derivatives}
\varepsilon^{\Xi} \sum_{k=0}^\infty \varepsilon^{k} \partial_\cdot \PartF^{(k)}(\ldots, x_{j-1}, \hat{x}_j, x_{j+2}, \ldots) .
\end{align}
\end{lemma}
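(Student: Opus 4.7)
The plan is to exploit the fact that $\PartF$ is manifestly a rational function of $(x_1,\ldots,x_d)$, and then all four claims in the lemma reduce to routine complex-analytic bookkeeping. Indeed, by Equation~\eqref{eq:fused partition function} any element of $\SolSp_\multii$ is a finite linear combination of determinants $\Delta^{\mathfrak{K}}_\beta$ whose kernel entries, by Equation~\eqref{eq: valenced kernel}, are tangential derivatives of $\ExcKH_\bH(x,y) = \tfrac{1}{\pi}(x-y)^{-2}$, hence rational functions in $(x_\ell-x_k)^{-1}$. It follows that $\PartF$ is a rational function on $\bC^d$ whose polar locus is contained in $\bigcup_{i\neq k}\{x_i = x_k\}$.

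First, I would fix $x_1 < \cdots < x_{j-1} < \hat{x}_j < x_{j+2} < \cdots < x_d$ and view the function in Equation~\eqref{eq: function g} as a univariate rational function of $\varepsilon\in\bC$. Its only possible finite poles occur at $\varepsilon=0$ (from $x_j=x_{j+1}$) or at the complex values making $\hat{x}_j\pm\varepsilon/2$ coincide with one of the frozen $x_i$. Among these other singularities, the ones closest to the origin sit at distance $\min\{2(\hat{x}_j - x_{j-1}),\,2(x_{j+2}-\hat{x}_j)\}$ from $0$, so the standard Laurent-expansion theorem for rational functions yields the convergent expansion~\eqref{eq:Frobenius series of partition function} on the stated punctured disc (which is at least as large as the one claimed in the lemma with the $\max$). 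The leading exponent $\Xi$ is simply the $\varepsilon$-valuation of $\PartF$ at the origin; to see that it is bounded \emph{uniformly} in the fixed variables, I would read off an upper bound for the pole order at $\varepsilon=0$ term by term from the explicit rational form of each $\Delta^{\mathfrak{K}}_\beta$: each factor $(x_\ell-x_k)^{-m}$ with $\{k,\ell\}=\{j,j+1\}$ contributes at most $m$ to the pole order, independently of the other variables.

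Next, I would prove smoothness of the coefficients $\PartF^{(k)}$ and the formula~\eqref{eq:Frob derivatives} for derivatives by means of Cauchy's integral representation
\[
\PartF^{(k)}(\ldots,x_{j-1},\hat{x}_j,x_{j+2},\ldots)
= \frac{1}{2\pi i}\oint_{|\varepsilon|=\rho}
\varepsilon^{-\Xi-k-1}\,\PartF\bigl(\ldots,\hat{x}_j - \tfrac{\varepsilon}{2},\hat{x}_j + \tfrac{\varepsilon}{2},\ldots\bigr)\,\ud\varepsilon,
\]
for any $\rho$ strictly smaller than the convergence radius identified above. Because $\PartF$ is rational (hence holomorphic in all its arguments on the complement of its pole divisor), the integrand depends holomorphically on the frozen variables $(x_1,\ldots,x_{j-1},\hat{x}_j,x_{j+2},\ldots,x_d)$ in an open complex neighborhood of the real chamber, uniformly on the contour. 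Standard differentiation-under-the-integral yields smoothness (in fact real analyticity) of each $\PartF^{(k)}$, and commuting a partial derivative $\partial_\cdot$ past the contour integral shows that the Frobenius coefficients of $\partial_\cdot \PartF$ are exactly $\partial_\cdot \PartF^{(k)}$, with the same punctured disc of convergence.

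I expect no serious obstacle. The only point requiring some care is verifying that $\Xi$ can be chosen independently of the frozen variables (so that the series~\eqref{eq:Frobenius series of partition function} is a genuine expansion with a fixed leading exponent rather than a piecewise one); this is handled by the a priori bound from the determinantal formula described above. All the convergence and differentiation claims then follow from Cauchy's formula and the rationality of $\PartF$, and the rest of the proof is mechanical.
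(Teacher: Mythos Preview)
Your approach is essentially the same as the paper's: both exploit that $\PartF$ is a rational function with poles only on the diagonals $\{x_i=x_k\}$, fix the remaining variables, read off a uniform bound $\Xi$ on the pole order at $\varepsilon=0$ from the explicit determinantal expressions, and then Laurent-expand. Your use of the Cauchy integral representation to establish smoothness of the $\PartF^{(k)}$ and to justify differentiating term by term is a clean way to make explicit what the paper simply summarizes as ``analyzing the Taylor coefficients of $\varepsilon^{\Xi}\PartF$ and similarly for the derivatives.''

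One small slip: you correctly identify the nearest nonzero singularities of the $\varepsilon$-function at distance $\min\{2(\hat{x}_j-x_{j-1}),\,2(x_{j+2}-\hat{x}_j)\}$, but then write that this disc ``is at least as large as the one claimed in the lemma with the $\max$.'' That inequality goes the wrong way, since $\min\le\max$. In fact the natural radius from your analysis is the $\min$; the $\max$ in the lemma's statement appears to be a typo in the paper. This does not affect the substance of your argument.
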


This series expansion leads to a conclusion that is typical in ODE theory.

\begin{lemma}
\label{lem:indical equation}
\textnormal{(Indicial equation).}
Fix valences $\multii = (s_1,\ldots,s_d) \in \bZpos^d$ and suppose that some index $j \in \{1,\ldots, d-1\}$ satisfies $s_{j+1} = 1$.
Let $\PartF \in \SolSp_\multii$ as above. 
Assume that the function $\PartF$ also satisfies the second order PDE
\begin{align} \label{eq:BPZ}
\Bigg[ \pdder{x_{j+1}}
+  \sum_{i \neq j+1} \bigg( \frac{2}{x_i - x_{j+1}} \pder{x_i} - \frac{s_i(s_i+1)}{(x_i - x_{j+1})^2} \bigg)
\Bigg] \PartF(x_1,\ldots, x_d) = 0 .
\end{align}
Then,~\eqref{eq: function g} is either an identically zero function, or the lowest power $\varepsilon^\Xi$ with a non-zero coefficient in the series expansion~\eqref{eq:Frobenius series of partition function} satisfies
\begin{align*}
\Xi =
\begin{cases}
\Xi_+ := s_j , \\
\Xi_- := -(s_j+1) 
\end{cases}
=
\begin{cases}
h_{1,s_j+2} - h_{1,s_j+1} - h_{1,2} , \\
h_{1,s_j} - h_{1,s_j+1} - h_{1,2} .
\end{cases}
\end{align*}
\end{lemma}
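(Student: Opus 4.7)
The plan is to derive the indicial equation by substituting the Frobenius expansion~\eqref{eq:Frobenius series of partition function} into the second-order BPZ PDE~\eqref{eq:BPZ} and balancing the most singular power of $\varepsilon$. Throughout I assume that $\PartF$ is not identically zero in the strip of~\eqref{eq: function g}, so Lemma~\ref{lem:Frobenius series} furnishes a genuine leading non-zero coefficient $\PartF^{(0)}$ at the power $\varepsilon^{\Xi}$.

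First I would pass to the coordinates $(\hat{x}_j,\varepsilon)$ of~\eqref{eq: eps and hatx}. The chain rule yields
\begin{align*}
\pder{x_{j+1}} = \tfrac12 \pder{\hat{x}_j} + \pder{\varepsilon}, \qquad
\pder{x_j} = \tfrac12 \pder{\hat{x}_j} - \pder{\varepsilon}, \qquad
\pdder{x_{j+1}} = \tfrac14 \pdder{\hat{x}_j} + \pder{\hat{x}_j}\pder{\varepsilon} + \pdder{\varepsilon}.
\end{align*}
Only the $i=j$ summand in~\eqref{eq:BPZ} is singular at $\varepsilon=0$: a direct computation gives
\begin{align*}
\frac{2}{x_j-x_{j+1}}\pder{x_j}-\frac{s_j(s_j+1)}{(x_j-x_{j+1})^2}
=\frac{2}{\varepsilon}\pder{\varepsilon}-\frac{1}{\varepsilon}\pder{\hat{x}_j}-\frac{s_j(s_j+1)}{\varepsilon^2}.
\end{align*}
The remaining summands with $i\notin\{j,j+1\}$ have coefficients $1/(x_i-\hat{x}_j-\varepsilon/2)$ and $1/(x_i-\hat{x}_j-\varepsilon/2)^2$, both analytic at $\varepsilon=0$, and hence contribute to orders $\varepsilon^{\Xi}$ and higher only. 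Substituting the series $\PartF=\sum_{k\geq 0}\varepsilon^{\Xi+k}\PartF^{(k)}(\ldots,\hat{x}_j,\ldots)$, differentiating termwise as legitimized by~\eqref{eq:Frob derivatives}, and collecting the coefficient of $\varepsilon^{\Xi-2}$ gives
\begin{align*}
\Big[\Xi(\Xi-1)+2\Xi-s_j(s_j+1)\Big]\PartF^{(0)} = 0.
\end{align*}
Since $\PartF^{(0)}$ is not identically zero, the bracket must vanish, which yields the indicial quadratic $\Xi^2+\Xi-s_j(s_j+1)=0$ with discriminant $(2s_j+1)^2$ and roots exactly $\Xi_+=s_j$ and $\Xi_-=-(s_j+1)$. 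A short arithmetic check using $h_{1,r}=(r-1)r/2$ identifies these with $h_{1,s_j+2}-h_{1,s_j+1}-h_{1,2}$ and $h_{1,s_j}-h_{1,s_j+1}-h_{1,2}$, as stated.

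The main subtlety is purely a bookkeeping one: I must verify that no other piece of~\eqref{eq:BPZ} can interfere at the singular order $\varepsilon^{\Xi-2}$. This is what the analyticity of the regular coefficients provides --- every term with $i\notin\{j,j+1\}$ applied to the leading monomial $\varepsilon^{\Xi}\PartF^{(0)}$ produces $O(\varepsilon^{\Xi})$, while the mixed piece $\pder{\hat{x}_j}\pder{\varepsilon}$ coming from $\pdder{x_{j+1}}$ only reaches order $\varepsilon^{\Xi-1}$. The termwise differentiability asserted in~\eqref{eq:Frob derivatives} is what makes the formal manipulation rigorous on the punctured disc of convergence. Note that the hypothesis $s_{j+1}=1$ is exactly what justifies invoking~\eqref{eq:BPZ} in the first place: it is the $s=1$ specialization of the BPZ PDE~\eqref{eq: BPZ PDE at kappa equals 2} with conformal weight $h_{1,2}=1$.
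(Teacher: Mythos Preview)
Your proof is correct and follows essentially the same approach as the paper: change to the variables $(\hat{x}_j,\varepsilon)$, isolate the $i=j$ summand as the only source of $\varepsilon^{-2}$ and $\varepsilon^{-1}$ singularities, substitute the Frobenius series, and read off the indicial equation $\Xi(\Xi-1)+2\Xi-s_j(s_j+1)=0$. Your write-up is in fact slightly more careful than the paper's in explicitly noting that the $i\notin\{j,j+1\}$ terms and the mixed derivative $\partial_{\hat{x}_j}\partial_\varepsilon$ only contribute at orders $\varepsilon^{\Xi}$ and $\varepsilon^{\Xi-1}$ respectively, and in invoking~\eqref{eq:Frob derivatives} to justify termwise differentiation.
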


Note that the two exponents are obtained from the two possible fusion rules of primary fields in a CFT with central charge $c=-2$ and Kac weights $h_{1,s_j+1} = s_j (s_j+1)/2$.

\begin{proof}
The chain rule of derivatives gives
\begin{align*}
\pder{x_j} = - \pder{\varepsilon} + \frac{1}{2} \pder{\hat{x}_j} 
\qquad \textnormal{and} \qquad 
\pder{x_{j+1}} = \pder{\varepsilon} + \frac{1}{2} \pder{\hat{x}_j} .
\end{align*}
Fixing $x_1 < \cdots < x_{j-1} < \hat{x}_j <  x_{j+2} < \cdots < x_d$ and only varying $\varepsilon$, Equation~\eqref{eq:BPZ} 
can be studied via the expansions~\eqref{eq:Frobenius series of partition function}--\eqref{eq:Frob derivatives}. The lowest powers of $\varepsilon$ originate from the terms
\begin{align*}
\; & \Bigg[ \pdder{x_{j+1}} 
+ \frac{2}{x_j - x_{j+1}} \pder{x_j} - \frac{s_j(s_j+1)}{(x_j - x_{j+1})^2} 
\Bigg] \PartF (x_1,\ldots, x_d) 
\\
= \; &
\left[ \pdder{\varepsilon} + \frac{1}{4} \pdder{\hat{x}_j} + \pder{\varepsilon} \pder{\hat{x}_j} 
+ \frac{2}{\varepsilon} \pder{\varepsilon} - \frac{1}{\varepsilon} \pder{\hat{x}_j} - \frac{s_j(s_j+1)}{\varepsilon^2} \right] \PartF (x_1,\ldots, x_d)
\end{align*}
and equal
\begin{align*}
\Bigg[ \pdder{\varepsilon} + \frac{2}{\varepsilon} \pder{\varepsilon} - \frac{s_j(s_j+1)}{\varepsilon^2} \Bigg] \PartF (x_1,\ldots, x_d).
\end{align*}
The lowest-order terms must sum up to zero, which yields the \emph{indicial equation}
\begin{align*}
\Xi (\Xi-1) + 2\Xi - s_j(s_j+1) = 0 ,
\end{align*}
with solutions $\Xi_+ = s_j$ and $\Xi_- = -(s_j +1 )$.
If there is no lowest-order term, then~\eqref{eq: function g} is identically zero in the region of convergence of the Laurent series, thus everywhere. 
\end{proof}

\subsection{Proof of property \textnormal{(PDE)} in Theorem~\ref{thm:CFT properties}}
\label{subsec:PDEs}

We follow an approach developed by Dub\'edat in~\cite{Dubedat:SLE_and_Virasoro_representations_localization, Dubedat:SLE_and_Virasoro_representations_fusion}, 
which relies on the framework of Virasoro uniformization developed in particular by Kontsevich and Friedrich~\cite{Kontsevich:Virasoro_and_Teichmuller_spaces, 
Kontsevich:CFT_SLE_and_phase_boundaries, Friedrich-Kalkkinen:On_CFT_and_SLE, Friedrich:On_connections_of_CFT_and_SLE}. 
(We have tried to keep the presentation rather short in the present work, 
and refer to~\cite[Section~4]{LPR:Fused_Specht_polynomials_and_c_equals_1_degenerate_conformal_blocks} and references therein for a more detailed exposition.)

\begin{proposition} \label{prop:higher-valency solution PDE}
For each $\alpha \in \LP_\multii$, the function $\PartF_\alpha$ of~\eqref{eq:fused partition function} satisfies the system of BPZ PDEs~\eqref{eq: BPZ PDE at kappa equals 2}. 
\end{proposition}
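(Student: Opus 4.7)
The plan is to establish (PDE) by induction on the total fusion $\sum_j (s_j - 1)$, increasing one valence at a time, using (FUS) from Theorem~\ref{thm:CFT properties} together with the Virasoro uniformization approach of~\cite{Dubedat:SLE_and_Virasoro_representations_fusion}. The base case $\multii = (1, \ldots, 1)$ is the second-order BPZ system proved in~\cite[Theorem~4.1]{KKP:Boundary_correlations_in_planar_LERW_and_UST}. For the inductive step I would pass from $\multii = (s_1, \ldots, s_d)$ to $\multii' = (s_1, \ldots, s_{j-1}, s_j+1, s_{j+1}, \ldots, s_d)$ by splitting off an auxiliary variable near $x_j$ with valence one, applying the induction hypothesis at $x_j$ together with the already-known second-order BPZ PDE at the auxiliary point, and then extracting the leading Frobenius coefficient as that auxiliary variable collides with $x_j$.

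Algebraically, the first-order operators $\sL_{-m}^{(x_j)}$ satisfy the Virasoro relations at central charge $c=-2$, and the BPZ operator $\sD_{s_j+1}^{(x_j)}$ is the image of the Beno\^it--Saint-Aubin polynomial $\BSAOP_{s_j+1}$ acting on a highest-weight vector of weight $h_{1,s_j+1}$; thus (PDE) is equivalent to the assertion that $\PartF_\alpha$ is annihilated, for each $j$, by the singular vector $w_{s_j+1} = \BSAOP_{s_j+1} v_{s_j+1} \in M_{h_{1,s_j+1}}$. To implement the inductive fusion step, I would work in a space of formal series $W \otimes V_{\alpha,h}$ with Virasoro action $\hat L_n$, in which the induction hypothesis provides $\hat \BSAOP_{s_j+1}$-annihilation at $x_j$ while the valence-one second-order PDE gives $\hat \BSAOP_{2}$-annihilation at the auxiliary point. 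The fusion limit should then compose these into a singular vector at level $s_j+2$ in $M_{h_{1,s_j+2}}$, producing precisely the BPZ operator $\sD_{s_j+2}^{(x_j)}$ that annihilates $\PartF_{\alpha'}$.

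The Frobenius expansion of Lemma~\ref{lem:Frobenius series} makes this limit rigorous: setting $\varepsilon = x_{j+1} - x_j$, any PDE satisfied term by term in $\varepsilon$ descends to the leading coefficient, which by (FUS) is (up to normalization) $\PartF_{\alpha'}$. Branch selection is controlled by Lemma~\ref{lem:indical equation}: the two admissible leading exponents $\Xi_+ = s_j$ and $\Xi_- = -(s_j+1)$ correspond exactly to the differences $h_{1,s_j+2} - h_{1,s_j+1} - h_{1,2}$ and $h_{1,s_j} - h_{1,s_j+1} - h_{1,2}$, which matches the two fusion channels $\Phi_{1,s_j+1} \boxtimes \Phi_{1,2} = \Phi_{1,s_j} \boxplus \Phi_{1,s_j+2}$. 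The limit prescribed by (FUS) picks out the $\Xi_+$ branch, thereby fixing the conformal weight $h_{1,s_j+2}$ of the fused field and hence the order $s_j+2$ of the new PDE.

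The main obstacle is that at $c = -2$ the Verma modules $M_{h_{1,s+1}}$ are of \emph{chain} rather than \emph{link} type: each contains an infinite descending sequence of singular vectors, with no analogue in Dub\'edat's irrational-$c$ setting. Consequently the fusion of degenerate Verma modules must be analysed in an appropriate quotient isolating the irreducible constituent of weight $h_{1,s_j+2}$, and the algebraic identity expressing $\BSAOP_{s_j+2}$ as a composition involving $\BSAOP_{s_j+1}$ and $\BSAOP_2$ --- which in the irrational case is automatic from submodule uniqueness --- must be re-verified at $c=-2$. The bulk of the technical work will therefore be to establish this factorization in the relevant quotient and to check that the subleading Frobenius terms, which reflect the chain-type descent, do not obstruct the annihilation statement for $\PartF_{\alpha'}$ in the fused limit.
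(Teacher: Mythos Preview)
Your proposal is correct and follows essentially the same route as the paper: induction over fusion steps from the unfused base case of~\cite{KKP:Boundary_correlations_in_planar_LERW_and_UST}, using (FUS), the Frobenius series (Lemma~\ref{lem:Frobenius series}), and the indicial equation (Lemma~\ref{lem:indical equation}) to select the $\Xi_+$ branch, and then adapting Dub\'edat's algebraic fusion lemma to the chain-type Verma module structure at $c=-2$. One small refinement: the core algebraic step (Lemma~\ref{lem:analoglemma1Dubedat}) is not literally a factorization of $\BSAOP_{\ell+1}$ in terms of $\BSAOP_\ell$ and $\BSAOP_2$, but rather a degree-bound argument showing that the leading Frobenius coefficient $u_0$ is annihilated by some nonzero $P_k^0 \in \mathcal U(\Vir^-)$ of degree $<2\ell+3$, which by the chain $M_{h_{\ell+1}} \hookleftarrow M_{h_{\ell+2}} \hookleftarrow \cdots$ forces $\BSAOP_{\ell+1} u_0 = 0$.
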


\begin{proof}
The proof of property \textnormal{(PDE)} in Theorem~\ref{thm:CFT properties}
proceeds by applying Lemma~\ref{lem:lemmafusion}, stated below, inductively: 
starting from the known second order PDEs in the unfused case where $\multii = (1,\ldots,1)$~\cite[Theorem~4.1]{KKP:Boundary_correlations_in_planar_LERW_and_UST} 
and recursively fusing boundary points and increasing the order of the PDEs (see Figure~\ref{fig:fusion} for an illustration). 
We will not explicitly perform the induction here, since the analogous case of $c=1$ 
was detailed in~\cite[Sections~3~\&~4]{LPR:Fused_Specht_polynomials_and_c_equals_1_degenerate_conformal_blocks}.  
\end{proof}

Since the strategy of the proof is similar to~\cite[Sections~3~\&~4]{LPR:Fused_Specht_polynomials_and_c_equals_1_degenerate_conformal_blocks}, which in turn invokes results from~\cite{Dubedat:SLE_and_Virasoro_representations_localization, Dubedat:SLE_and_Virasoro_representations_fusion},
we only outline the strategy here, including the necessary auxiliary results in order to deduce the asserted PDEs~\eqref{eq: BPZ PDE at kappa equals 2}.  
The following result is the gist of the proof: 
starting from a solution of two BPZ PDEs of orders $s_j+1$ and $2$ at $x_j$ and $x_j+1$, respectively, 
having a specific Frobenius series expansion as $|x_{j+1}-x_j| \to 0$ (cf.~Equation~\eqref{eq:asyf}), 
we can construct a solution of a BPZ PDE of order $s_j+2$ at $\hat{x}_j := \frac{x_j+ x_{j+1}}{2}$. 

\begin{lemma} \label{lem:lemmafusion}
Fix $d \geq 2$. 
Fix $\multii=(s_1,\ldots,s_d) \in \bZpos^d$ such that $s_{j}=\ell-1$ and $s_{j+1}=1$ for some index $j \in \{1,\ldots,d - 1\}$. 
Also, let $\PartF \colon \chamber_d \to \bR$ be 
a smooth function satisfying the BPZ PDEs 
\begin{align} \label{eq:saintaubineq}
\sD\super{x_i}_{s_i+1} \PartF(x_1,\ldots,x_d) = 0, 
\qquad \textnormal{for all } i \in \{1,\ldots,d \} , 
\end{align}
with $\sD\super{x_i}_{s_i+1}$ as in~\eqref{eq: BPZ operator at kappa equals 2}. 
Assume that $\PartF$ has the following Frobenius series expansion at $\hat{x}_j := \frac{x_j + x_{j+1}}{2}$\textnormal{:}
\begin{align} \label{eq:asyf}
\PartF(x_1,\ldots,x_d) = (x_{j+1}-x_j)^{\ell-1} 
\sum_{k=0}^\infty (x_{j+1}-x_j)^{k} \; 
\PartF^{(k)}(\ldots, x_{j-1}, \hat{x}_j, x_{j+2}, \ldots) ,
\end{align}
where $\PartF^{(k)}$ are smooth functions on $\chamber_{d-1}$. 
Then, the coefficient $\PartF^{(0)}$ satisfies the BPZ~PDEs 
\begin{align*} 
\sD\super{x_i}_{s_i+1} \PartF^{(0)}(x_1,\ldots,x_{j-1}, \hat{x}_j, x_{j+2},\ldots, x_d) = \; & 0, 
\qquad i \in \{1,\ldots,d \} \setminus \{ j, j+1 \} , \\
\sD\super{x_j}_{\ell+1} \PartF^{(0)}(x_1,\ldots,x_{j-1}, \hat{x}_j, x_{j+2},\ldots, x_d) = \; & 0 ,
\end{align*}
with differential operators 
as in~\eqref{eq: BPZ operator at kappa equals 2} but with variables 
$(x_1,\ldots,x_{j-1}, \hat{x}_j, x_{j+2},\ldots, x_d)$.
\end{lemma}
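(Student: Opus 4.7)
My plan is to follow the representation-theoretic fusion strategy of Dub\'edat~\cite{Dubedat:SLE_and_Virasoro_representations_fusion}, adapting it to $c=-2$ as the paper advertises, with the auxiliary modules $V_{\alpha,h}$ and $W\otimes V_{\alpha,h}$ and singular-vector polynomials $\BSAOP_\ell, \hat{\BSAOP}_\ell$ playing the role of Virasoro uniformization. The proof naturally splits into two pieces: the \emph{spectator} PDEs at indices $i\neq j,j+1$, which should follow by a direct Frobenius bookkeeping, and the genuinely new BPZ PDE of order $\ell+1$ at the fused point $\hat{x}_j$, which will require the algebraic fusion argument.

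\medskip

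For the spectator PDEs I would write everything in the variables $(\varepsilon,\hat{x}_j)=(x_{j+1}-x_j,\tfrac{1}{2}(x_j+x_{j+1}))$ using $\partial_{x_j}=-\partial_\varepsilon+\tfrac{1}{2}\partial_{\hat{x}_j}$ and $\partial_{x_{j+1}}=\partial_\varepsilon+\tfrac{1}{2}\partial_{\hat{x}_j}$, plug the ansatz~\eqref{eq:asyf} into $\sD^{(x_i)}_{s_i+1}\PartF=0$, and extract the $\varepsilon^{\ell-1}$ coefficient. Expanding $(x_j-x_i)^{1-m}$ and $(x_{j+1}-x_i)^{1-m}$ around $\hat{x}_j$, the differentiation terms collapse to $(\hat{x}_j-x_i)^{1-m}\partial_{\hat{x}_j}$ at leading order, while the $\partial_\varepsilon$ part acting on $\varepsilon^{\ell-1}$ combines with the weight terms to produce the shifted weight
\begin{align*}
h_j+h_{j+1}+(\ell-1) \; = \; \tfrac{(\ell-1)\ell}{2}+1+(\ell-1) \; = \; \tfrac{\ell(\ell+1)}{2} \; = \; h_{1,\ell+1} ,
\end{align*}
which is exactly the Kac weight needed for the spectator operator $\sD^{(x_i)}_{s_i+1}$ to be the correct operator in the reduced $(d-1)$-variable system. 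One verifies by the chain rule that higher-order $\varepsilon$-corrections enter only the subleading Frobenius coefficients $\PartF^{(k)}$ with $k\geq 1$, so the coefficient of $\varepsilon^{\ell-1}$ gives precisely $\sD^{(x_i)}_{s_i+1}\PartF^{(0)}=0$.

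\medskip

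For the new PDE at $\hat{x}_j$, the strategy is to realize the Frobenius expansion~\eqref{eq:asyf} as the matrix coefficient of a fusion pairing between the Verma modules $M_{h_{1,\ell}}$ at $x_j$ and $M_{h_{1,2}}$ at $x_{j+1}$ and a \quote{spectator} module carrying the remaining points. The leading exponent $\Xi_+=\ell-1=h_{1,\ell+1}-h_{1,\ell}-h_{1,2}$ identified in Lemma~\ref{lem:indical equation} pins down the fusion channel $M_{h_{1,\ell}}\otimes M_{h_{1,2}}\to M_{h_{1,\ell+1}}$; accordingly, via the $V_{\alpha,h}$ formalism, $\PartF^{(0)}$ is the image of the highest-weight vector $v_{\ell+1}$ under the pairing. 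I would then show that the fused polynomial $\hat{\BSAOP}_{\ell+1}$ in $\hat{L}_{-m}=L_{-m}^{(x_j)}+L_{-m}^{(x_{j+1})}+(\text{$\varepsilon$-corrections})$ applied to $\PartF$ is, modulo the known BPZ relations $\sD^{(x_j)}_{\ell}\PartF=\sD^{(x_{j+1})}_2\PartF=0$ at the two fusing points, equal to $\varepsilon^{\ell-1}\,\sD^{(\hat{x}_j)}_{\ell+1}\PartF^{(0)}$ at leading order. This reduces to the algebraic identity that the singular vector $\BSAOP_{\ell+1}v_{\ell+1}\in M_{h_{1,\ell+1}}$ lifts, under the Clebsch--Gordan-type fusion embedding, to an element of $M_{h_{1,\ell}}\otimes M_{h_{1,2}}$ lying in the submodule generated by $\BSAOP_\ell v_\ell$ and $\BSAOP_2 v_2$.

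\medskip

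The main obstacle is that at $c=-2$ the Verma modules $M_{h_{1,\ell}}$ are of \emph{chain type} rather than link/point type, so they carry additional singular vectors beyond $\BSAOP_\ell v_\ell$ and the fusion pairing need not be injective on the whole Verma module. This is precisely the subtlety that Dub\'edat's irrational-$c$ argument sidesteps. I would handle it in two ways: first, perform the combinatorial Virasoro computation with $c$ as a formal parameter so that the identity $\hat{\BSAOP}_{\ell+1}\equiv\alpha\,\BSAOP_\ell\otimes 1+1\otimes\beta\,\BSAOP_2\pmod{\varepsilon^\ell}$ (for some operator coefficients $\alpha,\beta$) holds as a polynomial identity in $c$ and specializes to $c=-2$; second, check that the specialization is well-defined by verifying that the denominators appearing in the Clebsch--Gordan coefficients of the fusion pairing do not vanish at $c=-2$ for the Kac weights $h_{1,\ell+1}$ in the first row. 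Granting this, the lemma follows by reading off the $\varepsilon^{\ell-1}$ coefficient. Finally, Proposition~\ref{prop:higher-valency solution PDE} itself is then obtained by iterating Lemma~\ref{lem:lemmafusion} starting from the unfused base case $\multii=(1,\ldots,1)$ in~\cite[Theorem~4.1]{KKP:Boundary_correlations_in_planar_LERW_and_UST}, using property \textnormal{(FUS)} already proven in Section~\ref{subsec:fusion of partition functions} to guarantee that the Frobenius hypothesis~\eqref{eq:asyf} holds with exponent $\Xi_+$ at each inductive step.
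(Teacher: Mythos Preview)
Your overall framework is correct: the paper does follow Dub\'edat's fusion strategy, and the spectator PDEs at $i\neq j,j+1$ are handled by direct Frobenius bookkeeping exactly as you outline (the paper defers this to~\cite{LPR:Fused_Specht_polynomials_and_c_equals_1_degenerate_conformal_blocks} and~\cite{Dubedat:SLE_and_Virasoro_representations_fusion}). The substantive content of the lemma is indeed the algebraic statement you isolate, corresponding to~\cite[Lemma~1]{Dubedat:SLE_and_Virasoro_representations_fusion}.

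However, your proposed treatment of the $c=-2$ difficulty is not what the paper does, and as written it has a gap. You propose to establish an identity of the form $\hat{\BSAOP}_{\ell+1}\equiv\alpha\,\BSAOP_\ell\otimes 1+1\otimes\beta\,\BSAOP_2$ with $c$ formal and then specialize, checking that ``denominators do not vanish.'' But this identity is not well-posed in the $W\otimes V_{\alpha,h}$ framework (there is no operator $\hat{\BSAOP}_{\ell+1}$ in play; the data are $\hat{\BSAOP}_\ell w=0$ and $\Tilde{\BSAOP}_2 w=0$), and you give no mechanism for actually verifying the denominator claim---the relevant Clebsch--Gordan type coefficients for Virasoro fusion are not easily controlled at rational $c$. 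The paper explicitly notes that Dub\'edat's argument breaks precisely here.

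The paper's fix is genuinely different and exploits the chain structure \emph{positively} rather than working around it. One does not try to show $\BSAOP_{\ell+1}u_0=0$ directly. Instead: (i) verify $u_0$ is a highest-weight vector of weight $h_{\ell+1}$; (ii) use $\Tilde{\BSAOP}_2 w=0$ to write $u_k=R_k u_0$ with $R_k\in\mathcal{U}(\Vir^-)$, then expand $\hat{\BSAOP}_\ell w=0$ degree by degree to obtain polynomials $P_k\in\mathcal{U}(\Vir^-)$ of degree $k$ with $P_k u_0=0$; (iii) show that \emph{some} $P_k$ with $k\le 2\ell+2$ has a nonzero $L_{-1}^k$ coefficient, hence is nonzero in $\mathcal{U}(\Vir^-)$. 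Step~(iii) is the crux: assuming all such coefficients vanish, one extracts a system of linear relations among explicit polynomials $Q_d$ (coming from the $L_{-1}$-only part of $\hat{\BSAOP}_\ell$), which combine into a single polynomial $O(m)$ of degree at most $2\ell$ having $2\ell+3$ roots $m\in\{-\ell,\ldots,\ell+2\}$---a contradiction. Finally, the chain structure $M_{h_{\ell+1}}\hookleftarrow M_{h_{\ell+2}}\hookleftarrow M_{h_{\ell+3}}\hookleftarrow\cdots$ shows that any nonzero element of degree $<2\ell+3$ in $\textnormal{Ker}(\phi)$ forces $\textnormal{Ker}(\phi)=M_{h_{\ell+2}}$, whence $\BSAOP_{\ell+1}u_0=\phi(\BSAOP_{\ell+1}v_{\ell+1})=0$. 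This degree-bound-plus-submodule-structure argument replaces your specialization step entirely.
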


To finish the proof of Proposition~\ref{prop:higher-valency solution PDE}, 
we may apply Lemma~\ref{lem:lemmafusion} inductively. 
To this end, we first note that the needed Frobenius series follows from 
Lemmas~\ref{lem:Frobenius series} and \ref{lem:indical equation}, 
and the assumed PDEs~\eqref{eq:saintaubineq}.
With $\multii$ as above, from Lemma~\ref{lem:indical equation}  
we see that each solution $\PartF \in \SolSp_\multii$ has either the \emph{leading} 
$\Xi_- = -\ell$ or \emph{subleading asymptotics} 
$\Xi_+ = \ell-1$ of the indicial equation. 
By property \textnormal{(FUS)} (Lemma~\ref{lem:fusion of fused partition functions}), 
$\PartF_\alpha$ has the subleading asymptotics, required by Lemma~\ref{lem:lemmafusion}.

\smallskip

Lemma~\ref{lem:lemmafusion}, which holds for $c=-2$, is analogous to~\cite[Theorem~4]{LPR:Fused_Specht_polynomials_and_c_equals_1_degenerate_conformal_blocks} for $c=1$ 
and to~\cite[Theorem~15]{Dubedat:SLE_and_Virasoro_representations_fusion} for $c$ irrational. 
In fact, almost the entire proof of~\cite[Theorem~15]{Dubedat:SLE_and_Virasoro_representations_fusion} applies to all values of the central charge. 
The only part of the proof which requires $c$ to be irrational is an algebraic result,~\cite[Lemma~1]{Dubedat:SLE_and_Virasoro_representations_fusion}:
the submodule structure of highest-weight representations of 
the Virasoro algebra can be very intricate when $c$ is rational 
(while for $c$ irrational, all Verma modules are irreducible). 
For this reason, the proof of our Lemma~\ref{lem:lemmafusion} consists of 
proving the algebraic Lemma~\ref{lem:analoglemma1Dubedat} stated below, 
which is the analog of~\cite[Lemma~1]{Dubedat:SLE_and_Virasoro_representations_fusion} 
for the present case of $\kappa=2$ and $c=-2$.

In order to state the key algebraic Lemma~\ref{lem:analoglemma1Dubedat}, 
we first need to recall some standard notions about Virasoro-modules. 
The \emph{Virasoro algebra} 
\gls{symb:Vir} 
is the infinite-dimensional Lie algebra 
generated by the Virasoro modes 
$\{\textnormal{\gls{symb:VirLn}} \;|\; n \in \bZ\}$ 
and the central element 
\gls{symb:VirC}, 
\begin{align*}
\Vir = \bC C \oplus \bigoplus_{n \in \bZ} \bC L_n ,
\end{align*}
with the following commutation relations:
\begin{align} \label{eq:comm rel}
\begin{split}
[L_m,L_n] = \; & (m-n) L_{m+n} + \delta_{m,-n} \frac{m^2(m-1)}{12} C, \qquad m,n \in \bZ, \\
[C, \Vir] = \; & 0, \qquad n \in \bZ
\end{split}
\end{align}
(where $\delta_{i,j}$ stands for the Kronecker delta function, equaling zero unless $i=j$). 
It has the triangular decomposition $\Vir = \Vir^- \oplus \mathfrak h \oplus \Vir^+$, 
where $\mathfrak h = \bC C \oplus L_0$ and $\smash{\Vir^\pm = \underset{\pm n>0}{\oplus} \bC L_n}$. 
The universal enveloping algebra of the subalgebra $\Vir^-$ is
\begin{align*}
\textnormal{\gls{symb:VirEnv}}
= \bigoplus_{\substack{0 < i_1 \leq \cdots  \leq i_k \\ k \geq 0}} \bC L_{-i_k} \cdots  L_{-i_1} ,
\end{align*}
and it has the \quote{standard basis} 
$\{L_{-i_k} \cdots  L_{-i_1} \;|\; 0< i_1 \leq \cdots  \leq i_k, \; k \geq 0\}$ by the Poincar\'e-Birkhoff-Witt theorem. 
Let us also note that $\mathcal U(\Vir^-)$ is a $\bZ$-graded associative algebra with \emph{degree} $\textnormal{deg}(L_n) := -n$ and $\textnormal{deg}(C) := 0$.
(See~\cite{Iohara-Koga:Representation_theory_of_Virasoro} for more background on $\Vir$.)

Let $V$ be a $\Vir$-module. 
For $(c,h) \in \bC^2$, a $(c,h)$-\emph{highest-weight vector} 
\gls{symb:Virhwv} $\in V$ 
is an element satisfying $C v_{h}^{c} = c v_{h}^{c}$, $L_0 v_{h}^{c} = h v_{h}^{c}$, 
and $L_n v_{h}^{c} = 0$ for all $n>0$. 
In this context, $c \in \bC$ is called the \emph{central charge} and 
$h \in \bC$ is called the \emph{weight} of $v_{h}^{c}$.  
The \emph{Verma module} 
\gls{symb:VirVerma} 
is 
\begin{align*}
M_{h}^{c} = \bigoplus_{\ell \geq 0} (M_{h}^{c})_\ell ,
\qquad \textnormal{where} \qquad
(M_{h}^{c})_\ell := \bigoplus_{\substack{0 < i_1 \leq \cdots  \leq i_k \\ i_1+\cdots +i_k = \ell \\ k \geq 0}} \bC L_{-i_k}\cdots L_{-i_1} v_{h}^{c} .
\end{align*}
Note that the dimension of $(M_{h}^{c})_\ell$ is the number of partitions of $\ell$.
Moreover, it follows from the commutation relations~\eqref{eq:comm rel} 
that each element $v \in (M_{h}^{c})_\ell$ satisfies $L_0 v = (h+\ell) v$. 
Hence, we say that each $v \in (M_{h}^{c})_\ell$ is a vector in $M_{h}^{c}$ at \emph{level} (or degree) $\ell$.

A highest-weight vector $w_\ell \in M_{h}^{c}$ of level $\ell>0$ is called a \emph{singular vector}.  
If a non-zero singular vector 
can be found, 
then $M_{h}^{c}$ is said to be \emph{degenerate at level $\ell>0$}, 
and in this case, $w_\ell$ generates a proper submodule of $M_{h}^{c}$ isomorphic to $M_{h+\ell}^{c}$. 
Submodules of Verma modules were classified by B.~Fe{\u\i}gin and D.~Fuchs
\cite{Feigin-Fuchs:Invariant_skew-symmetric_differential_operators_on_the_line_and_Verma_modules_over_Virasoro,
Feigin-Fuchs:Verma_modules_over_Virasoro_book,
Feigin-Fuchs:Representations_of_Virasoro}.
In particular, every submodule of $M_{h}^{c}$ is generated by singular vectors. 
There is an exceptional set of parameters $(c,h)$ for which $M_{h}^{c}$ is not irreducible --- the Kac table~\cite{Kac:Contravariant_form_for_infinite-dimensional_Lie_algebras_and_superalgebras,
Kac:Highest_weight_representations_of_infinite_dimensional_Lie_algebras}.

Fix $c = -2$. 
From now on, we only consider Verma modules of type 
\gls{symb:VirVerma2} $ := M_{h}^{-2}$. 
If 
\begin{align} \label{eq:conf_weights again} 
\textnormal{\gls{symb:Kac}}
= h_{1,\ell} := \frac{\ell(\ell-1)}{2} ,
\end{align}
then $M_{h_{\ell}}$ possesses a singular vector at level $\ell>0$ (weights~\eqref{eq:conf_weights again} belong to the Kac table).
Let 
\gls{symb:Virhwv2} $:= \smash{v_{h_{\ell}}^{-2}}$ 
denote the highest-weight vector of $M_{h_{\ell}}$. 
Then, the singular vector at level $\ell$ has the form $w_\ell = \BSAOP_\ell v_{\ell}$, 
where $\BSAOP_\ell \in \mathcal U(\Vir^-)$ is some polynomial in the negative Virasoro generators. 
As the coefficient of $L_{-1}^\ell$ in $\BSAOP_\ell$ cannot vanish~\cite[Section~5.2.1]{Iohara-Koga:Representation_theory_of_Virasoro}, we may normalize it to one.
An explicit formula for the polynomial 
\gls{symb:BSAOP} 
was found in~\cite{BSA:Degenerate_CFTs_and_explicit_expressions_for_some_null_vectors}\footnote{Note that in~\eqref{eq:defBSAoperator}, the Virasoro generators $L_{-i_j}$ are not ordered.}:
\begin{align} \label{eq:defBSAoperator}
\BSAOP_\ell = \sum_{k=1}^\ell \sum_{\substack{i_1,\ldots ,i_k \geq 1 \\ i_1+\cdots +i_k = \ell}} \frac{(-2)^{\ell-k} (\ell-1)!^2}{\prod_{l=1}^{k-1} (\sum_{j=1}^l i_j)(\sum_{j=l+1}^k i_j)} 
\; L_{-i_1} \cdots  L_{-i_k} .
\end{align}
Observe that $L_0 (\BSAOP_\ell v_{\ell}) = (h_{\ell}+\ell) (\BSAOP_\ell v_{\ell}) = h_{\ell+1} (\BSAOP_\ell v_{\ell})$. 
In fact, the singular vector $\BSAOP_\ell v_{\ell}$ generates a submodule of $M_{h_{\ell}}$ isomorphic to $M_{h_{\ell+1}}$, which is the maximal proper submodule. 
Generally, when $c=-2$, there exists a one-dimensional infinite chain of submodules, 
where each arrow denotes the embedding of $M_{h_{j+1}}$ into $M_{h_{j}}$ giving its maximal proper submodule:
\begin{align} \label{eq:submodules}
M_{h_{\ell}} \hookleftarrow M_{h_{\ell+1}} \hookleftarrow M_{h_{\ell+2}} \hookleftarrow \cdots .
\end{align} 
This structure of the Verma module $M_{h_{\ell}}$ is referred to as \quote{chain} type (see~\cite[Figure~1]{Kytola-Ridout:On_staggered_indecomposable_Virasoro_modules}, 
and~\cite{Feigin-Fuchs:Verma_modules_over_Virasoro_book, Iohara-Koga:Representation_theory_of_Virasoro} for details). 
Let us also remark that the submodule structure of Verma modules can be more intricate 
for other rational values of the central charge~\cite{Feigin-Fuchs:Verma_modules_over_Virasoro, 
Iohara-Koga:Representation_theory_of_Virasoro}.

We will consider particular types of $\Vir$-modules. 
Let $t$ be a formal variable. 
For $\alpha, h\in \bR$, consider the space 
\gls{symb:VOAF} $:= \bC [[t]][t^{-1}]t^\alpha$  
of formal series with finitely many negative terms:
\begin{align*}
t^\alpha \sum_{k\in \bZ} a_k t^k , \qquad 
\textnormal{with} \quad \inf \{k \colon a_k\neq 0\} > -\infty .
\end{align*}
$V_{\alpha,h}$ is a $\Vir$-module with zero central charge $c=0$, 
where each generator $L_n$ acts by 
\begin{align*}
L_n \mapsto 
\textnormal{\gls{symb:VOAFAct}}
:= -t^{n+1}\partial_t-(n+1)ht^n .
\end{align*}
(The role of the parameter $\alpha$ will become clear in the fusion procedure later, see Lemma~\ref{lem:analoglemma1Dubedat}, and also~\cite[Section~8.A]{DMS:CFT}.)
The operators $\{L_n^0 \;|\; n \in \bZ\}$ satisfy the commutation relations 
$[L_m^0,L_n^0] = (m-n) L_{m+n}^0$ of the \emph{Witt algebra} 
(so $V_{\alpha,h}$ is also a Witt-module\footnote{Recall that the Virasoro algebra is a one-dimensional central extension of the Witt algebra.}).

Next, let $W$ be a $\Vir$-module with central charge $c=-2$, 
whose $\Vir$-action is simply denoted by $L_n$. 
Consider the space 
\gls{symb:VOAFW} 
of formal series with coefficients in $W$:
\begin{align*}
t^\alpha \sum_{k\in \bZ} u_k t^k,\quad u_k\in W , \qquad 
\textnormal{with} \quad \inf \{k \colon u_k\neq 0\} > -\infty .
\end{align*}
$W\otimes V_{\alpha,h}$ is a $\Vir$-module with central charge $c=-2$,
where each generator $L_n$ acts by 
\begin{align*}
L_n \mapsto 
\textnormal{\gls{symb:VOAFWAct}}(vt^{\alpha+k}) := (L_nv)t^{\alpha+k}-(\alpha+k+(n+1)h)vt^{\alpha+k+n} ,
\qquad n,k \in \bZ, \; v \in W.
\end{align*}
Let 
\gls{symb:BSAOPhat} 
be the BPZ operator in~\eqref{eq:defBSAoperator} 
with the substitutions $L_n \mapsto \hat L_n$ for all $n \in \bZ$.

We are now ready to state the key algebraic result (analogous to~\cite[Lemma~1]{Dubedat:SLE_and_Virasoro_representations_fusion}).
\begin{lemma} \label{lem:analoglemma1Dubedat}
Fix $\ell \geq 2$. 
Using the notation from~\eqref{eq:conf_weights again}, let us 
denote $\smash{\Tilde{h}} := h_{2}$, $\smash{\hat{h}} := h_{\ell}$, and $\alpha := h_{\ell+1} - \smash{\hat{h}} - \smash{\Tilde{h}} = \ell - 1$. 
Suppose $w = t^\alpha \sum_{k\geq 0} u_k t^k$ is a highest-weight vector of weight $\smash{\hat{h}}$ such that
\begin{align*}
\hat{\BSAOP}_\ell w = 0 
\qquad \textnormal{and} \qquad
\Tilde{\BSAOP}_2 w=0 ,
\end{align*}
where 
$\Tilde{\BSAOP}_2 := \Tilde{L}_{-1}^2 - \tfrac{1}{2}\Tilde{L}_{-2}$ is defined in terms of
\begin{align}
\label{eq:actiontildeL1} 
\Tilde{L}_{-1}:= \; & \partial_t \\
\label{eq:actiontildeL2} 
\Tilde{L}_{-2}:= \; & -t^{-1}\partial_t + t^{-1}L_{-1} + \smash{\hat{h}} t^{-2} + \sum_{k\geq 0} t^k L_{-2-k} .
\end{align}
Then, the coefficient $u_0$ 
is a highest-weight vector in $W$ of weight $h_{\ell+1}$ which satisfies $\BSAOP_{\ell+1} u_0=0$.
\end{lemma}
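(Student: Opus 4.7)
The plan is to analyze the two constraints $\hat{\mathcal{P}}_\ell w = 0$ and $\tilde{\mathcal{P}}_2 w = 0$ coefficient-by-coefficient in the Laurent expansion of $w$ in the formal variable $t$. Throughout, the algorithmic extraction of coefficients relies on the action rule $\hat L_n(v t^{\alpha+k}) = (L_n v) t^{\alpha+k} - (\alpha+k+(n+1)\tilde h) v t^{\alpha+k+n}$ together with the formulas~\eqref{eq:actiontildeL1}--\eqref{eq:actiontildeL2} defining $\tilde L_{-1}$ and $\tilde L_{-2}$.

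First, I would obtain the highest-weight property of $u_0$ in $W$ by extracting the leading $t^\alpha$ coefficient in the relations $\hat L_n w = 0$ for $n \geq 1$ and $\hat L_0 w = \hat h w$. These yield, respectively, $L_n u_0 = 0$ for $n \geq 1$ and $L_0 u_0 = (\hat h + \alpha + \tilde h)\, u_0 = h_{\ell+1} u_0$, so $u_0$ is a highest-weight vector of weight $h_{\ell+1}$. Second, expanding $\tilde{\mathcal{P}}_2 w = 0$ in powers of $t$, the $t^{\alpha+m-2}$ coefficient for each $m \geq 1$ takes the form $c_m u_m = \Lambda_m(u_0, \ldots, u_{m-1})$, where $c_m$ is a non-zero scalar and $\Lambda_m$ is linear in the negative Virasoro generators applied to the lower $u_{m'}$. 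The $m=0$ identity reduces to $c_0 u_0 = 0$ with $c_0 = 0$, which is precisely the indicial condition selecting the subleading Frobenius channel $\Xi_+ = \ell - 1$ of Lemma~\ref{lem:indical equation} and is automatic at the given values $\alpha = \ell - 1$, $\hat h = h_\ell$, $\tilde h = h_2$. Solving this recursion inductively gives $u_k = Q_k u_0$ for explicit polynomials $Q_k \in \mathcal U(\Vir^-)$ of degree $k$, with $Q_0 = 1$. Substituting these expressions into $\hat{\mathcal{P}}_\ell w = 0$ and extracting the coefficient of an appropriate power of $t$ then produces, in the module $W$, a relation of the form $R\, u_0 = 0$ for some $R \in \mathcal U(\Vir^-)$ of degree $\ell+1$.

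The main obstacle is the final identification of $R$ with a non-zero scalar multiple of $\mathcal{P}_{\ell+1}$. In the parallel argument of~\cite{Dubedat:SLE_and_Virasoro_representations_fusion} for irrational $c$, generic Verma modules are irreducible and this identification follows at once from uniqueness of singular vectors. At $c = -2$, however, the Verma module $M_{h_{\ell+1}}$ supports the chain~\eqref{eq:submodules} of embedded proper submodules $M_{h_{\ell+1}} \hookleftarrow M_{h_{\ell+2}} \hookleftarrow \cdots$, and a priori $R$ could differ from $\lambda \mathcal{P}_{\ell+1}$ by an element whose action on the highest-weight generator of $M_{h_{\ell+1}}$ lands in a deeper submodule. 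The resolution is that the subspace of $X \in \mathcal U(\Vir^-)$ of degree exactly $\ell+1$ for which $X v_{h_{\ell+1}}$ is again highest-weight in $M_{h_{\ell+1}}$ is one-dimensional, spanned by $\mathcal{P}_{\ell+1}$; tracking carefully through Steps~1--2 confirms that $R\, v_{h_{\ell+1}}$ is indeed a highest-weight vector of weight $h_{\ell+2}$ in $M_{h_{\ell+1}}$, so $R$ coincides with $\lambda \mathcal{P}_{\ell+1}$ modulo elements annihilating $v_{h_{\ell+1}}$. Finally, the scalar $\lambda$ is pinned down, and shown to be non-zero, by matching the coefficient of the monomial $L_{-1}^{\ell+1}$ on both sides, which is normalized to $1$ in $\mathcal{P}_{\ell+1}$ by the BSA formula~\eqref{eq:defBSAoperator}.
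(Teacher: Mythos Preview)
Your outline follows the paper's strategy through the first two stages: the highest-weight property of $u_0$ and the recursion $u_k=Q_k u_0$ from $\tilde{\mathcal P}_2 w=0$ are exactly as in the paper. The gap is in the final identification step, and it is precisely the place where the rational central charge $c=-2$ makes the argument harder than Dub\'edat's irrational case.

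Two specific problems. First, the assertion that ``tracking carefully through Steps~1--2 confirms that $R\,v_{h_{\ell+1}}$ is a highest-weight vector'' is not justified. What Steps~1--2 give you is $R\,u_0=0$ in $W$, hence $R\,v_{h_{\ell+1}}\in\mathrm{Ker}(\phi)$ for the module map $\phi\colon M_{h_{\ell+1}}\to W$, $v_{h_{\ell+1}}\mapsto u_0$. Membership in a submodule is not the same as being singular; you only get singularity \emph{a posteriori}, after you know that $R\neq 0$ and hence $\mathrm{Ker}(\phi)\supseteq M_{h_{\ell+2}}$ (so at level $\ell+1$ the kernel is the line $\mathbb C\,\mathcal P_{\ell+1}v_{h_{\ell+1}}$). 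Second, and this is the real crux, you assert that the $L_{-1}^{\ell+1}$ coefficient of your $R$ is nonzero ``by matching coefficients'', but you give no computation. The paper does \emph{not} attempt to show that the specific degree-$(\ell+1)$ coefficient $P_{\ell+1}$ is nonzero. Instead it expands $\hat{\mathcal P}_\ell\big(t^\alpha\sum_k t^k R_k\big)=t^{\alpha-\ell}\sum_{k\ge 0}t^kP_k$, tracks the $L_{-1}^k$ part of each $P_k$, and proves by contradiction that not all of these can vanish for $k\le 2\ell+2$: assuming they do yields a polynomial $O$ of degree exactly $2\ell$ with $2\ell+3$ distinct roots. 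Thus some $P_{k_0}\neq 0$ with $k_0<2\ell+3$, and since $M_{h_{\ell+3}}$ sits at level $2\ell+3$ in $M_{h_{\ell+1}}$, this forces $\mathrm{Ker}(\phi)\supseteq M_{h_{\ell+2}}$ and hence $\mathcal P_{\ell+1}u_0=0$.

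In short: you cannot fix a single degree $\ell+1$ in advance and hope to check $R\neq 0$ directly; the paper's polynomial counting argument over the whole range $k\le 2\ell+2$ is what replaces the easy irreducibility step from the irrational-$c$ case.
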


Note that $L_{-1}$ and $L_{-2-k}$ act on the module $W$, while $\Tilde{L}_{-1}$ and $\Tilde{L}_{-2}$ act on $W\otimes V_{\alpha,\smash{\Tilde{h}}}$. 

As mentioned above, Lemma~\ref{lem:analoglemma1Dubedat} implies Lemma~\ref{lem:lemmafusion}.

\begin{proof}
The proof consists of two steps. 
The first step is to check that $u_0 \in W$ is a highest-weight vector of weight $h_{\ell+1}$.  
Indeed, we have $\hat{L}_0 w = \smash{\hat{h}} w$ at degree $\alpha  = \ell - 1$, 
which yields $L_0u_0=(\smash{\hat{h}}+\smash{\Tilde{h}}+\alpha)u_0$. 
Moreover, we have $\hat{L}_n w=0$ at degree $\alpha$, for all $n>0$, 
which yields $L_nu_0=0$, for all $n>0$. 
This shows that $u_0$ is a highest-weight vector of weight $h_{\ell+1}$.

The second and last step of the proof is to find an element 
$P_k^0 \in \mathcal U(\Vir^-) \setminus \{0\}$ of degree $k < 2\ell+3$ such that $P_k^0 u_0=0$. 
To see why this is useful, consider the homomorphism $\phi \colon M_{h_{\ell+1}} \to W$ 
of $\Vir$-modules 
which maps the highest-weight vector $v_{\ell+1} \in M_{h_{\ell+1}}$ to $u_0 \in W$. 
The first isomorphism theorem of modules implies that $\textnormal{Ker}(\phi)$ is a proper submodule of $M_{h_{\ell+1}}$. 
Using the chain~\eqref{eq:submodules} of Verma modules, we obtain
\begin{align*}
M_{h_{\ell+1}} \hookleftarrow M_{h_{\ell+2}} \hookleftarrow M_{h_{\ell+3}} \hookleftarrow \cdots ,
\end{align*} 
where the image of $M_{h_{\ell+2}}$ is generated by $\BSAOP_{\ell+1} v_{\ell+1}$, 
the image of $M_{h_{\ell+3}}$ is generated by $\BSAOP_{\ell+2} (\BSAOP_{\ell+1} v_{\ell+1})$, etc. 
Now, if there exists $P_k^0 \in \mathcal U(\Vir^-) \setminus \{0\}$ of degree $k$ such that $P_k^0 u_0=0$, then it follows that $P_k^0 v_{\ell+1} \in \textnormal{Ker}(\phi)$. 
In particular, we have $\textnormal{Ker}(\phi) = M_{h_{\ell+2}}$ 
if $k<2\ell+3$, 
in which case we may conclude that $0 = \phi(\BSAOP_{\ell+1} v_{\ell+1}) = \BSAOP_{\ell+1} u_0$, as desired.

It now remains to construct such a $P_k^0$, which we will find from the expansion of $\hat{\BSAOP}_\ell$. 
To this end, consider first the assumption $\Tilde{\BSAOP}_2 w = 0$. Expanding by degree, we obtain
\begin{align*}
0  = \; & \rho(\alpha +1)u_1-L_{-1}u_{0},  \\
0  = \; & \rho(\alpha +k)u_k-\sum_{j=1}^{k}L_{-j}u_{k-j} ,
\end{align*}
where $\rho(a)=(a+\ell)(a-\ell+1)$ has roots $\alpha$ and $h_{\ell-1} - \smash{\hat{h}} - \smash{\Tilde{h}} = -\ell < \alpha$.
Thus, we have $\rho(\alpha+k)\neq 0$ for all $k>0$, 
and there are $R_0,R_1,\ldots \in \mathcal{U}(\Vir^-)$ such that $u_k = R_ku_0$ for all~$k$.

Next, consider the assumption $\hat{\BSAOP}_\ell w=0$:
\begin{align*}
\hat{\BSAOP}_\ell \Big( t^\alpha \sum_{k\geq 0} t^kR_ku_0 \Big)=0.
\end{align*}
Write
\begin{align*}
\hat{\BSAOP}_\ell \Big( t^\alpha \sum_{k\geq 0} t^kR_k \Big) = t^{\alpha-\ell} \sum_{k\geq 0} t^kP_k ,
\end{align*}
for some polynomials $P_k\in \mathcal{U}(\Vir^-)$ of degree $k$ such that $P_ku_0=0$, for all $k\geq 0$. 
We first focus on the coefficients of $L_{-1}^k$ of $P_k$ decomposed in the standard basis.  If $P,Q\in \mathcal{U}(\Vir^-)$ are homogeneous and such that $P=aL_{-1}^k+\cdots $ and $Q=bL_{-1}^{k'}+\cdots $ in the standard basis, then $PQ=abL_{-1}^{k+k'}+\cdots $ in the standard basis. 
(This holds because the commutation relations of $\textnormal{Vir}^-$ do not produce any monomial in $L_{-1}$.) 
We then see inductively that 
\begin{align*}
R_k=\frac{1}{\varrho(1)\cdots \varrho(k)}L_{-1}^k+\cdots , \qquad k \geq 1 ,
\end{align*}
with $\varrho(k)=\rho(\alpha+k)$.
Next, we write
\begin{align*}
\hat{\BSAOP}_\ell = \sum_{\substack{i+j+k=\ell \\ i,j,k\geq 0}} b_{i,j,k}t^{-i}\partial_t^jL_{-1}^k+\cdots , 
\end{align*}
where the remainder does not contain any monomial in $L_{-1}$. Note that $b_{0,0,\ell}=1$.

We now finally show that there exists an element $P_k^0 \in \mathcal U(\Vir^-) \setminus \{0\}$ of degree $d < 2\ell+3$ such that $P_k^0 u_0=0$. 
To this end, we assume towards a contradiction that no $P_k$ has a nonzero monomial 
in $L_{-1}$ for $k\leq 2\ell+2$. Then, we have
\begin{align*}
t^{\alpha-\ell} \sum_{k\geq 0} t^kP_k 
= \; & \hat{\BSAOP}_\ell \Big( t^\alpha \sum_{k\geq 0} t^kR_k \Big) \\
= \; & 
\bigg( \sum_{\substack{i+j+k=\ell \\ i,j,k\geq 0}} b_{i,j,k}t^{-i}\partial_t^jL_{-1}^k\bigg)\bigg( t^\alpha \sum_{l\geq 0} \frac{t^lL_{-1}^l}{\varrho(1)\cdots  \varrho(l)}  \bigg)+\cdots 
\end{align*}
For each $d=0,\ldots ,\ell$, let $Q_d$ be the polynomial of degree at most $d$ (determined by explicit differentiation) such that
\begin{align*}
\bigg( \sum_{\substack{i+j=d \\ i,j\geq 0}} b_{i,j,\ell-d}t^{-i}\partial_t^j\bigg)t^{\alpha+m+d} = Q_d(m)t^{\alpha+m} , \qquad m \geq -d .
\end{align*}
Note that $Q_0 = b_{0,0,\ell}=1$. 
Now, we have 
\begin{align*}
t^{\alpha-\ell} \sum_{k\geq 0} t^kP_k 
= \sum_{d=0}^\ell \sum_{j\geq-d}\frac{Q_d(j)}{\varrho(1)\cdots  \varrho(j+d)}t^{\alpha+j}L_{-1}^{\ell+j}+\cdots .
\end{align*}
By assumption, we know that the coefficients of monomials in $L_{-1}$ of degree $\ell+j$ for $j \in 
\{-\ell,-\ell+1,\ldots , \ell+2\}$ are vanishing.
Thus, we obtain
\begin{align*}
0 = \; & Q_\ell(-\ell) , \\
0 = \; & \frac{Q_\ell(-\ell+1)}{\varrho(1)}+Q_{\ell-1}(-\ell+1) , \\
0 = \; & \frac{Q_\ell(-\ell+2)}{\varrho(1)\varrho(2)}+\frac{Q_{\ell-1}(-\ell+2)}{\varrho(1)}+Q_{\ell-2}(-\ell+2) , \\
\vdots \; & \\
0 = \; & \frac{Q_\ell(0)}{\varrho(1)\cdots  \varrho(\ell)}+\frac{Q_{\ell-1}(0)}{\varrho(1)\cdots  \varrho(\ell-1)}+\cdots  +Q_0(0) , \\
\vdots\; & \\
0 = \; & \frac{Q_\ell(\ell+2)}{\varrho(1)\cdots  \varrho(2\ell+2)}+\frac{Q_{\ell-1}(\ell+2)}{\varrho(1)\cdots  \varrho(2\ell+1)}+\cdots  +\frac{Q_0(\ell+2)}{\varrho(1)\cdots  \varrho(\ell+2)} .
\end{align*}
Multiplying the $i$-th equation by $\varrho(1)\cdots  \varrho(i-1)$, we obtain 
\begin{align*}
0 = \; & Q_\ell(-\ell) , \\
0 = \; & Q_\ell(-\ell+1)+Q_{\ell-1}(-\ell+1)\varrho(1) , \\
0 = \; & Q_\ell(-\ell+2)+Q_{\ell-1}(-\ell+2)\varrho(2)+Q_{\ell-2}(-\ell+2)\varrho(1)\varrho(2) , \\
\vdots \; & \\
0 = \; & Q_\ell(0)+Q_{\ell-1}(0)\varrho(\ell)+\cdots  Q_0(0)\varrho(1)\cdots  \varrho(\ell) , \\
\vdots \; & \\
0 = \; & Q_\ell(\ell+2)+Q_{\ell-1}(\ell+2) \varrho(2\ell+2)+\cdots  +Q_0(\ell+3)\varrho(\ell+4)\cdots  \varrho(2\ell+2) .
\end{align*}
Since $\varrho(0)=0$, we find that for all $m \in \{-\ell,-\ell+1,\ldots , \ell+2\}$
\begin{align*}
0 = Q_\ell(m)+Q_{\ell-1}(m)\varrho(\ell+m)+\cdots  +Q_0(m)\varrho(m+1)\cdots  \varrho(\ell+m) 
=: \; & O(m) .
\end{align*}

On the one hand, since $Q_0=1$, the last term is non-vanishing and of degree $2\ell$, 
while all the other terms are of degree at most $2\ell-1$. 
Thus, $O$ is not the zero polynomial. 
On the other hand, since $O$ is a polynomial of degree at most $2\ell$ with $2\ell+3$
zeroes, we infer that $O \equiv 0$. 
This is a contradiction. 
Hence, we conclude that there exists an element $P_k^0 \in \mathcal U(\Vir^-) \setminus \{0\}$ of degree $d < 2\ell+3$ such that $P_k^0 u_0=0$. 
This concludes the proof.
\end{proof}

\subsection{Proof of properties \textnormal{(LIN)} and \textnormal{(POS)} in Theorem~\ref{thm:CFT properties}}
\label{subsec:linear independence  of partition functions}

\begin{proposition} \label{prop:higher-valency solution LIN}
The functions $\{\PartF_\alpha \; | \; \alpha \in \LP_\multii\}$ of~\eqref{eq:fused partition function} are linearly independent. 
\end{proposition}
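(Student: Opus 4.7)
The plan is to proceed by induction on the number $d$ of fused clusters (equivalently, on $2N - d$ decreasing), using Theorem~\ref{thm:ASY} of Section~\ref{sec:ASY} as the key input. The base cases are trivial: whenever $d$ is so small that $|\LP_\multii| \leq 1$ (e.g.\ $d=2$ with $s_1 = s_2 = N$ forces a unique pattern), there is nothing to prove.

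For the inductive step, suppose linear independence has been established for all valence tuples of length strictly less than $d$, and assume $\sum_{\alpha \in \LP_\multii} c_\alpha \PartF_\alpha \equiv 0$ on $\chamber_d$. Pick an adjacent pair of indices $(j, j+1)$ and analyze the limit $x_{j+1} \to x_j$. By Theorem~\ref{thm:ASY}, each $\PartF_\alpha$ admits a Frobenius series whose leading contributions are $(x_{j+1} - x_j)^{\Delta^{s_j, s_{j+1}}_m}$ with $m = \mult_{j,j+1}(\alpha) \in \{0, 1, \ldots, \min(s_j, s_{j+1})\}$; the coefficient at this power equals $C(s_j, s_{j+1}, m) \, \PartF_{\alpha'}$, where $\alpha' \in \LP_{\multii'_m}$ is the reduced valenced pattern obtained by merging groups $j$ and $j+1$, with valences $\multii'_m := (s_1, \ldots, s_{j-1}, s_j + s_{j+1} - 2m, s_{j+2}, \ldots, s_d)$ of length $d - 1$. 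The structure constants $C(s_j, s_{j+1}, m)$ are non-zero; the powers $\Delta^{s_j, s_{j+1}}_m = h_{1, s_j + s_{j+1} + 1 - 2m} - h_{1, s_j + 1} - h_{1, s_{j+1} + 1}$ are pairwise distinct across the allowed range of $m$; and the restricted map $\alpha \mapsto \alpha'$ is a bijection from $\{\alpha \in \LP_\multii : \mult_{j,j+1}(\alpha) = m\}$ onto $\LP_{\multii'_m}$.

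Extracting, via the regularity of the Frobenius expansion (Lemma~\ref{lem:Frobenius series}), the coefficient of each distinct power $(x_{j+1} - x_j)^{\Delta^{s_j, s_{j+1}}_m}$ from the vanishing identity yields, for each allowed $m$, a linear relation $\sum_{\alpha' \in \LP_{\multii'_m}} c_{\alpha} \, \PartF_{\alpha'} \equiv 0$ on $\chamber_{d-1}$. By the inductive hypothesis, each such relation forces $c_\alpha = 0$ for all $\alpha$ with $\mult_{j,j+1}(\alpha) = m$. Ranging $m$ over all allowed values produces $c_\alpha = 0$ for every $\alpha \in \LP_\multii$, completing the induction.

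The main obstacle is avoiding circularity: the argument hinges on Theorem~\ref{thm:ASY}, so the proof of \textnormal{(ASY)} must be independent of \textnormal{(LIN)}. This is indeed the case, since \textnormal{(ASY)} will be derived from the explicit determinantal expression~\eqref{eq:fused partition function}, combined with Lemmas~\ref{lem:Frobenius series}~\&~\ref{lem:indical equation} and property \textnormal{(PDE)}, none of which rely on \textnormal{(LIN)}. A secondary subtlety is verifying the bijection $\alpha \leftrightarrow \alpha'$ for each fixed $m$, which amounts to a direct combinatorial check: pairing up $m$ endpoints in cluster $j$ with $m$ endpoints in cluster $j+1$ (in the unique planar way imposed by the left-to-right orientation on valenced link patterns) produces the reduced $\alpha'$, and the reverse operation is plainly well-defined.
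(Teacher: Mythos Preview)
Your approach has a genuine gap: you assert that ``the structure constants $C(s_j, s_{j+1}, m)$ are non-zero,'' but Theorem~\ref{thm:ASY} only asserts $C(s_j, s_{j+1}, m) \geq 0$, and strict positivity is established in the paper only for the two extremal channels $m = 0$ and $m = \min(s_j, s_{j+1})$; the general nonvanishing is explicitly stated below Theorem~\ref{thm:ASY} as a belief, not a theorem. Without it, your extraction argument stalls at any intermediate $m$: the coefficient you isolate at power $\Delta^{s_j,s_{j+1}}_m$ is $C(s_j,s_{j+1},m)\sum c_\alpha \PartF_{\hat\alpha}$, which is vacuously zero if $C = 0$, and you have no control over the next Frobenius coefficient. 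A secondary imprecision is that at power $\Delta^{s_j,s_{j+1}}_m$ you also pick up subleading terms from patterns with $\mult_{j,j+1} > m$; this is harmless \emph{provided} you process the powers from most negative upward so those coefficients are already zeroed out, but your write-up does not make this ordering explicit.

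The paper avoids the structure-constant issue by inducting in the opposite direction. Rather than merging two arbitrary clusters and invoking Theorem~\ref{thm:ASY}, it starts from the totally unfused case $\multii = (1,\ldots,1)$ (where linear independence is known from~\cite{KKP:Boundary_correlations_in_planar_LERW_and_UST}) and increases one valence by one at each step, passing from $\multii_1 = (\ldots,r,1,\ldots)$ to $\multii_2 = (\ldots,r+1,\ldots)$. A putative nontrivial relation among $\{\PartF_{\hat\alpha} : \hat\alpha \in \LP_{\multii_2}\}$ is \emph{lifted} to a combination $\PartF$ of the corresponding $\PartF_\alpha$ with $\alpha \in \LP_{\multii_1}$; property \textnormal{(FUS)} (Lemma~\ref{lem:fusion of fused partition functions}) then forces $\PartF = o(\varepsilon^{\Xi_+})$, and since one of the two valences being merged equals $1$, the indicial equation (Lemma~\ref{lem:indical equation}) applies and forces $\PartF \equiv 0$, contradicting the inductive hypothesis. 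The point is that with $s_{j+1}=1$ there are only two fusion channels, handled entirely by \textnormal{(FUS)} and Lemma~\ref{lem:indical equation}, so the unproven positivity of intermediate structure constants is never needed.
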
 

\begin{proof}
Fix a target valence $\multii = (s_1, \ldots, s_d ) \in \bZpos^d$. 
The proof proceeds by induction over the fusion from the left, as illustrated in Figure~\ref{fig:fusion}, that is, 
over the chain of valences recursively obtained as
\begin{align*}
(1, 1, \ldots, 1); \; (2, 1, \ldots, 1); \; \ldots; \;(s_1, 1, 1, \ldots, 1); \; (s_1, 2, 1, \ldots, 1); \; \ldots (s_1, s_2, \ldots, s_d) .
\end{align*}

The base case, i.e., the \quote{unfused} partition functions $\{\PartF_\alpha \; | \; \alpha \in \LP_N= \LP_{(1, \ldots, 1)}\}$ being linearly independent, was handled in~\cite[Theorem~4.1]{KKP:Boundary_correlations_in_planar_LERW_and_UST}. 
Hence, we can directly proceed to the induction step. 
To this end, let us take two subsequent valence vectors 
\begin{align*} 
\multii_1 := (s_1,\ldots,s_{j-1}, r, 1, \ldots, 1) \in \bZpos^t 
\qquad \textnormal{and} \qquad
\multii_2 := (s_1,\ldots, s_{j-1}, r+1, 1, \ldots, 1)  \in \bZpos^{t-1} ,
\end{align*}
in the chain above, 
and assume that the functions $\{\PartF_\alpha \; | \; \alpha \in \LP_{\multii_1}\}$ do not have any non-trivial linear relations. Moreover, assume, towards a contradiction, that 
$\{\PartF_{\hat{\alpha}} \colon \hat{\alpha} \in \LP_{\multii_2}\}$ do: for some coefficients $c_{\hat{\alpha}} \in \bC$ not all zero, we have
\begin{align}
\label{eq:assuming an impossible linear relation}
\sum_{\hat{\alpha} \in \LP_{\multii_2}} c_{\hat{\alpha}} \, \PartF_{\hat{\alpha}} (x_1, \ldots, x_{t-1}) = 0 , \qquad \textnormal{for all } x_1 < \cdots < x_{t-1}. 
\end{align}
For each $\hat{\alpha} \in \LP_{\multii_2}$, 
let $\alpha \in \LP_{\multii_1}$ be the unique $\multii_1$-valenced link pattern with $\imath(\alpha) = \imath (\hat{\alpha})$ (informally, 
$\alpha$ is obtained from $\hat{\alpha}$ by \quote{opening up} the $j$:th index with valence $r+1$ into two indices of valences $r$ and $1$). 
Set 
\begin{align}
\label{eq:create linear relation}
\PartF (x_1, \ldots, x_{t}) := \sum_{\hat{\alpha} \in \LP_{\multii_2}} c_{\hat{\alpha}} \, \PartF_{\alpha}(x_1, \ldots, x_{t}).
\end{align}
Now, using the notation from~\eqref{eq: eps and hatx}, 
fix for a moment $x_1, \ldots, x_{j-1}, \hat{x}_j, x_{j+2}, \ldots, x_t$ and study the limit of $\PartF$ when $\varepsilon \to 0$. 
By property \textnormal{(FUS)} (Lemma~\ref{lem:fusion of fused partition functions}), we know that in this limit, 
\begin{align*}
\PartF (x_1,\ldots, x_t) 
 = \frac{\varepsilon^{\Xi_+}}{r !} \sum_{\hat{\alpha} \in \LP_{\multii_2}} c_{\hat{\alpha}} \, \PartF_{\hat{\alpha}} (x_1,\ldots, x_{j-1}, \hat{x}_j, x_{j+2}, \ldots, x_t) + o(\varepsilon^{\Xi_+ }) = o(\varepsilon^{\Xi_+ }),
\end{align*}
where we also used~\eqref{eq:assuming an impossible linear relation}. 
By Lemma~\ref{lem:indical equation}, for these particular $x_1, \ldots, x_{j-1}, \hat{x}_j, x_{j+2}, \ldots, x_t$ and any $\varepsilon$, we have $\PartF = 0$. 
As this argument holds for any fixed $x_1, \ldots, x_{j-1}, \hat{x}_j, x_{j+2}, \ldots, x_t$, 
$\PartF (x_1,\ldots, x_t) $ given in~\eqref{eq:create linear relation} is simply the zero function. 
But now, we have deduced a non-trivial linear relation
between the partition functions $\{ \PartF_{{\beta}}: \beta \in \LP_{\multii_1} \}$, a contradiction.
\end{proof}

\begin{proposition} \label{prop:higher-valency solution POS}
For each $\alpha \in \LP_\multii$, the function $\PartF_\alpha$ of~\eqref{eq:fused partition function} is positive: $\PartF_\alpha(\chamber_{d}) \subset (0,\infty)$.
\end{proposition}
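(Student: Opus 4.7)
The strategy is to argue by induction on the fusion level, paralleling the proof of (LIN) in Proposition~\ref{prop:higher-valency solution LIN}. Specifically, the plan is to iterate through the chain
\begin{align*}
(1,1,\ldots,1) \,\to\, (2,1,\ldots,1) \,\to\, \cdots \,\to\, (s_1,\ldots,s_d) = \multii,
\end{align*}
where each arrow fuses a neighboring $(r,1)$ pair into a single valence-$(r+1)$ index. The base case $\multii = (1,\ldots,1)$ is established in~\cite[Theorem~4.1]{KKP:Boundary_correlations_in_planar_LERW_and_UST}, which already delivers strict positivity of the unfused pure partition functions.

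For the inductive step, I would assume (POS) holds for some $\multii_1 = (\ldots, s_{j-1}, r, 1, 1, \ldots, 1)$ and aim to prove it for $\multii_2 = (\ldots, s_{j-1}, r+1, 1, \ldots, 1)$. Given $\hat{\alpha} \in \LP_{\multii_2}$, let $\alpha \in \LP_{\multii_1}$ be the unique valenced link pattern with $\iota(\alpha) = \iota(\hat{\alpha})$. Arguing by contradiction, suppose that $\PartF_{\hat{\alpha}}(x_1^0, \ldots, \hat{x}_j^0, x_{j+2}^0, \ldots) = 0$ at some point of the corresponding chamber. Fixing these coordinates, study the one-variable function
\begin{align*}
g(\varepsilon) := \PartF_\alpha\big(x_1^0, \ldots, \hat{x}_j^0 - \tfrac{\varepsilon}{2}, \hat{x}_j^0 + \tfrac{\varepsilon}{2}, x_{j+2}^0, \ldots\big).
\end{align*}
The very same one-step fusion identity used in the proof of (LIN) (following from Lemma~\ref{lem:fusion of fused partition functions}) yields
\begin{align*}
g(\varepsilon) \;=\; \frac{\varepsilon^{r}}{r!}\,\PartF_{\hat{\alpha}}\big(x_1^0, \ldots, \hat{x}_j^0, x_{j+2}^0, \ldots\big) + o(\varepsilon^{r}) \;=\; o(\varepsilon^{r}), \qquad \varepsilon \to 0^+.
\end{align*}

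On the other hand, (PDE) (proved in Section~\ref{subsec:PDEs}) implies that $\PartF_\alpha$ satisfies the second-order BPZ PDE at $x_{j+1}$, since the valence there in $\multii_1$ equals $s_{j+1} = 1$. Lemma~\ref{lem:indical equation} therefore constrains the Frobenius expansion of $g$ about $\varepsilon = 0$ to have leading exponent either $\Xi_+ = r$ or $\Xi_- = -(r+1)$, unless $g \equiv 0$. Since $g(\varepsilon) = o(\varepsilon^{r})$ rules out both possibilities, we must have $g \equiv 0$, which contradicts the inductive hypothesis (POS) for $\multii_1$ (giving $g(\varepsilon) > 0$ for all small $\varepsilon > 0$). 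Hence $\PartF_{\hat{\alpha}}$ is strictly positive throughout its chamber, closing the induction. The only delicate point to verify is the one-step fusion identity with exponent $r$ matching the subleading Frobenius root $\Xi_+$ together with a strictly positive prefactor; but this is already implicit in the (LIN) argument and amounts to bookkeeping within the iterated limits of (FUS), so I do not anticipate a substantial obstacle.
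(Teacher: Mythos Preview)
Your proposal is correct and follows essentially the same approach as the paper's proof: both argue by induction along the fusion chain, invoke the base case from \cite[Theorem~4.1]{KKP:Boundary_correlations_in_planar_LERW_and_UST}, assume a zero at the fused level, use the one-step fusion asymptotics (from \textnormal{(FUS)}) to get $o(\varepsilon^{\Xi_+})$, and then apply the indicial-equation dichotomy (Lemma~\ref{lem:indical equation}) to force the one-variable function to vanish identically, contradicting positivity at the previous step.
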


\begin{proof}
Fix $\alpha$ and let $\alpha_1 = \imath(\alpha), \ldots, \alpha_M = \alpha$ be link patterns, 
with valences as in the chain in the previous proof, and all with the same corresponding \quote{unfused} pattern $\imath (\alpha)$.
We proceed by induction over this chain of link patterns.
Again, the base case was proven in~\cite[Theorem~4.1]{KKP:Boundary_correlations_in_planar_LERW_and_UST}. 
At the induction step, let $\alpha_i$ and $\alpha_{i+1}$ be two subsequent link patterns; the induction hypothesis is $\PartF_{\alpha_i} > 0$, and assume towards a contradiction that for some fixed $x_1, \ldots, x_{j-1}, \hat{x}_j, x_{j+2}, \ldots, x_t$ using the notation from~\eqref{eq: eps and hatx}, we have
\begin{align*}
\PartF_{\alpha_{i+1}} (x_1, \ldots, x_{j-1}, \hat{x}_j, x_{j+2}, \ldots, x_t) = 0.
\end{align*}
Let us study $\PartF_{\alpha_i }(x_1, \ldots, x_t)$ only varying $\varepsilon$, where
$x_j = \hat{x}_j - \tfrac{\varepsilon}{2}$ and 
$x_{j+1} = \hat{x}_j + \tfrac{\varepsilon}{2}$.
Now, due to an identical argument as above, 
$\varepsilon \mapsto \PartF_{\alpha_i}$ is the zero function, and fixing any small enough value for $\varepsilon$,
we thus have found variables $x_1 < \cdots < x_t$
for which $\PartF_{{\alpha_i}}(x_1, \ldots, x_t) = 0$ equals zero, yielding a contradiction.
Hence, $\PartF_{\alpha_{i+1}}(\chamber_{d-1}) \subset (0,\infty)$, as desired.
\end{proof}

Collecting Lemmas~\ref{lem:fusion of fused partition functions}~\&~\ref{lem:covariance of fused partition functions} and Propositions~\ref{prop:higher-valency solution PDE},~\ref{prop:higher-valency solution LIN}~\&~\ref{prop:higher-valency solution POS},  
we obtain Theorem~\ref{thm:CFT properties}.

\bigskip{}
\section{Algebraic properties of pure partition functions}
\label{sec:algebra}
The objective of this section is to investigate in detail algebraic properties of the space 
\begin{align} \label{eq: sol space sec 5}
\SolSp_\multii := \textnormal{span}_\bC \{\PartF_\alpha \; | \; \alpha \in \LP_\multii \} 
\end{align}
of correlation functions. 
We will begin with the unfused case of $\multii = (1^{2N}) =  (1, \ldots, 1)$.
In fact, it turns out that the algebraic properties of $\SolSp_\multii$ for any $\multii$ are are naturally inherited from those of $\SolSp\sub{1^{2N}}$ by fusion. 
Notice that for the latter space, we also know from~\cite{KKP:Boundary_correlations_in_planar_LERW_and_UST}
that 
\begin{align} \label{eq:det basis solsp}
\textnormal{\gls{symb:SolSpN}} 
= \SolSp\sub{1^{2N}} = \textnormal{span}_\bC\{\Delta_\beta^{\mathfrak K} \; | \; \beta \in \LP_N \} 
\end{align}
where $\LP_N = \LP\sub{1^{2N}}$ contain ordinary link patterns. 
In this case, the results of this section hold for any kernel $\mathfrak K$. 
The starting point is the key observation that for any $\beta \in \LP_N$, the functions $\Delta_\beta^\mathfrak{K}$ can be written as minors of a certain $(2N \times 2N)$-matrix. Then, we can invoke a result of Lascoux in \cite{Lascoux:Pfaffians_and_representations_of_the_symmetric_group} to prove that $\SolSp_N$ is an irreducible (simple) module over the group algebra 
$\bC[\SymGrp_{2N}]$ of the symmetric group, where $\SymGrp_{2N}$ acts by permuting the variables.

We then prove that as a representation of $\bC[\SymGrp_{2N}]$, the space $\SolSp_N$ actually descends to a representation of 
a quotient of $\bC[\SymGrp_{2N}]$ 
known as the \emph{Temperley-Lieb} (TL) \emph{algebra} 
$\TL_{2N} = \TL_{2N}(-2)$ with fugacity parameter $\fugacity = -2$ (and deformation parameter $q=1$), 
see Proposition~\ref{prop:descendtoTL}. 
We then proceed with the case of arbitrary valences $\multii$, 
proving that each $\SolSp_\multii$ is a simple module for the \emph{valenced-TL algebra} 
$\vTL_\multii = \vTL_\multii(-2)$~\cite{Flores-Peltola:Generators_projectors_and_the_JW_algebra, Flores-Peltola:Standard_modules_radicals_and_the_valenced_TL_algebra}
(i.e., it is non-zero and does not have any non-trivial irreducible submodules), 
see Proposition~\ref{prop:5p12}. 

\begin{remark}
The results of this section for arbitrary valences $\multii$ could well hold for more general kernels $\mathfrak{K}(x,y)$. 
Our proofs of certain key auxiliary results would, however, become more intricate. 
It would be nevertheless interesting to investigate such a generalization. 
\end{remark}

\begin{remark}
A space of correlation functions in a CFT with central charge $c=1$ and 
with algebraic properties similar to those of $\SolSp_\multii$ was considered recently by three of the authors in~\cite{LPR:Fused_Specht_polynomials_and_c_equals_1_degenerate_conformal_blocks}. 
Algebraically, the difference in these two setups is reflected in the deformation parameter of the TL algebra, equaling $q=+1$ in the present work and $q=-1$ in~\cite{LPR:Fused_Specht_polynomials_and_c_equals_1_degenerate_conformal_blocks}.
Note that in both cases, there is really no \quote{quantum deformation} present, 
and the associated quantum group $U_q(\mathfrak{sl}_2)$ 
just reduces to the classical Lie algebra $\mathfrak{sl}_2$. 
\end{remark}

\subsection{Symmetric group representations}

Throughout this section, 
we let $\Summed \in \bZpos$ be an integer and 
\gls{symb:partition} $\vdash \Summed$  
a partition of $\Summed$, that is, 
$\lambda=(\lambda_1,\lambda_2,\ldots ,\lambda_l)$ such that $\lambda_1 \geq \lambda_2 \geq\cdots \geq\lambda_l \geq 0$ and $\lambda_1+\lambda_2+\cdots +\lambda_l = \Summed$. 
Consider the symmetric group algebra 
\gls{symb:CSymGrp} 
generated by the transpositions 
\gls{symb:transposition} $= (i,i+1) \in $ \gls{symb:SymGrp} 
for $i\in\{1,\ldots ,\Summed-1\}$ with relations
\begin{equation*}
\begin{aligned}
\tau_i^2 = \; & 1 , && \textnormal{for } i\in\{1,\ldots ,\Summed-1\} , \\
\tau_i\tau_{i+1}\tau_i = \; & \tau_{i+1}\tau_i\tau_{i+1} , && \textnormal{for } i\in\{1,\ldots ,\Summed-2\} , \\
\tau_i\tau_j = \; & \tau_j\tau_i , && \textnormal{for } |j-i|>1.
\end{aligned}
\end{equation*}

\subsubsection{Irreducible modules for the symmetric group}
\label{subsubsec:Specht}

A \emph{Young diagram} of shape $\lambda$ is a finite collection of boxes arranged in $l$ left-justified rows with row lengths being, from top to bottom, $\lambda_1,\ldots ,\lambda_l$. 
A \emph{numbering} of a Young diagram is obtained by placing the numbers $1,\ldots ,\Summed$ in the $\Summed$ boxes of the Young diagram. 
A \emph{standard Young tableau} is a numbering which is strictly increasing across each row and down each column. 
The sets of numberings and of standard Young tableaux of shape $\lambda$ will be denoted 
\gls{symb:NB} 
and 
\gls{symb:SYT}, 
respectively.

\begin{example}
For clarity, we illustrate a Young diagram $D$ of shape $\lambda =(4,3,2)$, 
as well as a numbering $A \in \NB$ and a standard Young tableau $T \in \SYT$:
\begin{align*}
D \; = \; \ydiagram{4,3,2}
, \qquad\quad
A \; = \; 
{\ytableausetup{nosmalltableaux}\begin{ytableau} 
2 & 1 & 4 & 7 \\
8 & 5 & 3 \\
9 & 6 
\end{ytableau}} 
, \qquad\quad
T \; = \; 
{\ytableausetup{nosmalltableaux}\begin{ytableau} 
1 & 3 & 4 & 7 \\
2 & 6 & 8 \\
5 & 9
\end{ytableau}} .
\end{align*}
\end{example}

The group $\SymGrp_\Summed$ acts on $\NB$ by letter permutations; 
the action of $\sigma \in \SymGrp_\Summed$ on a numbering $N \in \NB$ is denoted $\sigma.N$. 
For $A \in \NB$, let 
\gls{symb:Rows} 
(resp.~\gls{symb:Columns}) 
be the subgroup of $\SymGrp_\Summed$ which preserves the set of entries of each of its rows (resp.~columns). 
A \emph{tabloid} 
\gls{symb:tabloid} 
is an equivalence class of numberings defined by $\{A'\} = \{A\}$ 
if and only if $A'=\sigma.A$ for some $\sigma \in \Rows(A)$. 
The $\bC$-vector space spanned by tabloids of shape $\lambda$,
\begin{align*}
\textnormal{\gls{symb:Mlambda}}
:= \textnormal{span}_\bC \{ \{A\} \; | \; A \in \NB\} ,
\end{align*}
carries a natural $\SymGrp_\Summed$-action denoted by $\sigma.\{A\} := \{\sigma.A\}$. 
\emph{Simple modules} of $\SymGrp_\Summed$ are
modules that are nonzero and have no nontrivial submodules. 
They can be realized in various ways as subspaces of $M^\lambda$, e.g., in terms of \quote{polytabloids} discussed below.

For each numbering $A \in \NB$, the \emph{column antisymmetrizer}
\begin{align*} 
\textnormal{\gls{symb:antisym}}
:= \sum_{\sigma \in \Columns(A)} \sign(\sigma) \, \sigma \; \in \; \bC[\SymGrp_\Summed]
\end{align*}
defines the associated \emph{polytabloid} 
\gls{symb:polytabloid} $:= \epsilon_A.\{A\} \in M^\lambda$. 
Note that $\{A\} = \{A'\}$ does not necessarily imply that $v_A$ and $v_{A'}$ would be equal, 
since the actions of row and column permutations (the subgroups $\Rows(A)$ and $\Columns(A)$) do not commute in general.
Consider the $\bC$-vector space spanned by the polytabloids:
\begin{align*} 
\textnormal{\gls{symb:Vlambda}}
:= \textnormal{span}_\bC \{ v_A \; | \; A \in \NB \} \; \subset \; M^\lambda .
\end{align*}
Note that $\sigma.v_A = v_{\sigma.A}$, 
for $\sigma \in \SymGrp_\Summed$ and $A \in \NB$, 
which implies that $V^\lambda$ has a $\SymGrp_\Summed$-module structure.
In fact, $V^\lambda$ is simple
--- and these modules form a complete set $\{V^\lambda\}_{\lambda \vdash n}$ 
of pairwise non-isomorphic simple modules of the algebra $\bC[\SymGrp_\Summed]$~\cite{Specht:Die_irreduziblen_Darstellungen_der_symmetrischen_Gruppe}.
Moreover, $V^\lambda$ has an important \emph{polytabloid basis} $\{ v_T \; | \; T \in \SYT \}$ indexed by the standard Young tableaux.

\subsubsection{Action on $\SolSp_N$} 

From now on, let 
\gls{symb:pin} 
denote the rectangular Young diagram with $2N$ boxes and $N$ rows. Let $\mathfrak{K} \colon \{1,...,2N\} \times \{1,...,2N\} \to \bC$ be a symmetric function (kernel), 
that is, $\mathfrak{K}(i,j) = \mathfrak{K}(j,i)$ for all $i$ and $j$. 
Let $M^\mathfrak{K}$ be a $(2N \times 2N)$-matrix with entries
\begin{align*}
M_{i,j}^\mathfrak{K} := 
\begin{cases} 0, & i=j, \\
\mathfrak{K}(x_i,x_j), & i \neq j. 
\end{cases}
\end{align*}
By a \emph{minor}, we mean the determinant of the $N\times N$ matrix whose $(i,j)$th element is $\smash{M^{\mathfrak K}_{a_i,b_j}}$ for $i=1,\ldots, N$, $j=1,\ldots, N$. 
For a numbering $A \in \NBof{\pi}$, denote by $[M^\mathfrak{K}]_A$ the minor of $M^\mathfrak{K}$ consisting of rows $a_1,\ldots,a_N$ and columns $b_1,\ldots,b_N$ where $a_1,\ldots,a_N$ (resp.~$b_1,\ldots,b_N$) are the numbers in the first (resp.~second) column of $A$ read from top to bottom.

As before, we consider link patterns $\beta = \{ \link{a_1\;}{\;b_1\,}, \ldots, \link{a_N\;}{\;b_N\,} \} \in \LP_N$ with the left-to-right orientation $a_1 < \cdots < a_N$ and $a_k < b_k$ for all $k$. 
One obtains a numbering from $\beta$ by filling the first (resp.~second) column with $(a_k)_{k=1}^N$ (resp.~$(b_k)_{k=1}^N$). 
By abuse of notation, we write
\begin{align*}
\Delta^\mathfrak{K}_\beta = [M^\mathfrak{K}]_\beta.
\end{align*}
where $\Delta^\mathfrak{K}_\beta$ is the Fomin type determinant defined in Equation~\eqref{eq: LPdet}.
Conversely, for a numbering $A$, because one can permute entries in each column of $A$ in a way as to obtain a link pattern $\beta$, 
the minor $[M^\mathfrak{K}]_A$ is, up to a sign, 
equal to $\Delta^\mathfrak{K}_\beta$. 
With~\eqref{eq:det basis solsp}, this implies that 
\begin{align}
\SolSp_N=\textnormal{span}_\bC\{ [M^\mathfrak{K}]_A \; | \; A\in \NBof{\pi} \}.
\end{align}

Now, the symmetric group $\SymGrp_{2N}$ acts on $\SolSp_N$ by permuting the entries of a numbering, 
or, equivalently, by permuting the arguments of $f \in \SolSp_N$. 
We now recall the following result.

\begin{proposition}[\cite{Lascoux:Pfaffians_and_representations_of_the_symmetric_group} Proposition~2.4] 
\label{prop:Lascoux}
The map that, for each numbering $A$, 
sends $v_A\in V^\pi$ to $[M^\mathfrak{K}]_A\in \SolSp_N$, is an isomorphism of $\bC[\SymGrp_{2N}]$-modules.
\end{proposition}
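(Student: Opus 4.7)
The plan is to construct the map $\Psi \colon V^\pi \to \SolSp_N$ directly on the polytabloid generating set by $\Psi(v_A) := [M^{\mathfrak{K}}]_A$, prove it extends to a well-defined $\bC[\SymGrp_{2N}]$-equivariant linear map, and then conclude via the simplicity of the Specht module $V^\pi$. Equivariance is transparent: the $\SymGrp_{2N}$-action on $\SolSp_N$ permutes the variables $x_i$, which simultaneously permutes the rows and columns of the $(2N \times 2N)$-matrix $M^{\mathfrak{K}}$; this matches the action $\sigma.v_A = v_{\sigma.A}$ on polytabloids and directly yields $\sigma \cdot [M^{\mathfrak{K}}]_A = [M^{\mathfrak{K}}]_{\sigma.A}$.

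The main obstacle is to establish well-definedness, i.e., that every linear relation satisfied by the polytabloids $v_A$ is also satisfied by the corresponding minors $[M^{\mathfrak{K}}]_A$. For the two-column rectangular shape $\pi=(2^N)$, the Specht module $V^\pi$ is defined by two families of relations: column antisymmetry and the \emph{Garnir relations}. Column antisymmetry is immediate: permuting entries within a column of $A$ amounts to permuting two rows (or columns) of the $N\times N$ submatrix whose determinant defines $[M^{\mathfrak{K}}]_A$. The Garnir relations are the delicate step: for appropriate subsets of adjacent-column entries of combined size $>N$, the signed sum of polytabloids over shuffles vanishes, and one would translate each such relation into a combinatorial identity among $N\times N$ minors of $M^{\mathfrak{K}}$. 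Following Lascoux's strategy, these identities are obtained from the symmetry $M^{\mathfrak{K}}_{i,j}=M^{\mathfrak{K}}_{j,i}$ together with the vanishing diagonal, and can be deduced via Laplace-type expansions in which the resulting enlarged determinant acquires a repeated row or column and therefore vanishes. This combinatorial reduction is the technical heart of the proof.

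Once well-definedness is secured, injectivity follows immediately from the simplicity of $V^\pi$ as a $\bC[\SymGrp_{2N}]$-module: any non-zero equivariant map out of a simple module is injective, and $\Psi$ is non-zero, since $\Psi(v_{A_\beta}) = \Delta^{\mathfrak{K}}_\beta$ for the numbering $A_\beta$ naturally associated to a link pattern $\beta \in \LP_N$, and these Fomin-type determinants are linearly independent by~\eqref{eq:det basis solsp}. Surjectivity is then immediate: the spanning set $\{\Delta^{\mathfrak{K}}_\beta\}_{\beta \in \LP_N}$ of $\SolSp_N$ lies in the image of $\Psi$. As a sanity check, both spaces have the same dimension, namely the Catalan number $C_N = \frac{1}{N+1}\binom{2N}{N}$, which counts both standard Young tableaux of shape $(2^N)$ and planar link patterns in $\LP_N$; an alternative route avoiding the explicit Garnir computation would be to verify equivariance at a single non-zero value and then invoke dimension matching together with simplicity of $V^\pi$ to conclude.
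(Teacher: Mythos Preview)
The paper does not supply its own proof; the proposition is quoted directly from Lascoux and used as a black box. Your main outline is a sound reconstruction of such an argument: equivariance of $A \mapsto [M^{\mathfrak{K}}]_A$ under variable permutation is manifest, column antisymmetry of the minor is immediate, and well-definedness reduces to the Garnir relations for the two-column shape, which become linear identities among complementary $N\times N$ minors of $M^{\mathfrak{K}}$ and do hinge on the symmetry $M^{\mathfrak{K}}_{i,j}=M^{\mathfrak{K}}_{j,i}$. One small correction: the vanishing diagonal plays no role here, since for any numbering $A$ of $\pi=(2^N)$ the row set and column set of the minor are complementary subsets of $\{1,\ldots,2N\}$, so no diagonal entry ever appears.

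Your proposed ``alternative route'' does not genuinely sidestep the Garnir check. Simplicity of $V^\pi$ and dimension matching only guarantee that any \emph{well-defined} nonzero $\bC[\SymGrp_{2N}]$-equivariant map $V^\pi\to\SolSp_N$ is an isomorphism; producing such a map is precisely what the well-definedness argument accomplishes, and one cannot ``verify equivariance at a single non-zero value'' without first knowing the assignment $v_A \mapsto [M^{\mathfrak{K}}]_A$ respects the relations among the $v_A$.
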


Using this knowledge, we will next show that this representation descends to a representation of a certain quotient of $\bC[\SymGrp_{2N}]$: namely, the Temperley-Lieb algebra.

\begin{definition}
The \emph{Temperley-Lieb algebra} $\TL_{\Summed}(\fugacity)$ with 
fugacity $\fugacity := -q - q^{-1} \in \bC$ parameterized by 
$q \in \bC \setminus \{0\}$
is generated by $e_i \in \TL_{\Summed}(\fugacity)$ for $i\in\{1,\ldots ,\Summed-1\}$ with relations
\begin{equation} \label{eq:TLpres}
\begin{aligned}
e_i^2 = \; & \fugacity \, e_i , && \textnormal{for } i\in\{1,\ldots ,\Summed-1\} , \\
e_ie_{i+1}e_i = \; &  e_i , && \textnormal{for }  i\in\{1,\ldots ,\Summed-2\} , \\
e_ie_{i-1}e_i = \; &  e_i , && \textnormal{for } i\in\{2,\ldots ,\Summed-1\}, \\
e_ie_j = \; & e_je_i, && \textnormal{for } |j-i|>1.
\end{aligned}
\end{equation}
Setting $q=1$ (i.e., $\fugacity = -2$) and $\tau_k=1+e_k$ for all $k$,  
the defining relations~\eqref{eq:TLpres} of $\TL_\Summed = \TL_{\Summed}(-2)$ can be written in the form
\begin{equation*}
\begin{aligned}
\tau_i^2 = \; & 1 , && \textnormal{for } i\in\{1,\ldots ,\Summed-1\} , \\
\tau_i\tau_{i+1}\tau_i = \; & \tau_{i+1}\tau_i\tau_{i+1} , && \textnormal{for } i\in\{1,\ldots ,\Summed-2\}, \\
\tau_i\tau_j = \; & \tau_j\tau_i , && \textnormal{for } |j-i|>1, \\
1+\tau_i+\tau_{i+1}+\tau_i\tau_{i+1}+\tau_{i+1}\tau_i+\tau_i\tau_{i+1}\tau_i = \; &  0 , && \textnormal{for } 
i\in \{1,\ldots ,\Summed-2\} ,
\end{aligned}
\end{equation*}
which implies that $\TL_\Summed$ is a quotient of the group algebra $\bC[\SymGrp_\Summed]$. 
The first three relations are the defining relations of $\bC[\SymGrp_\Summed]$, and
the fourth corresponds to the vanishing of the symmetrizers on three consecutive sites.
This identification also corresponds to the skein relation for the Kauffman bracket polynomial~\cite{Kauffman:State_models_and_the_Jones_polynomial}  with $q=1$.
\end{definition} 

\begin{remark}
The Temperley-Lieb algebra is usually presented as a quotient of $\bC[\SymGrp_\Summed]$, where the extra relation is 
the vanishing of the antisymmetrizers (instead of symmetrizers) on three consecutive sites. The two presentations give rise to isomorphic algebras.
\end{remark}

\begin{proposition} \label{prop:descendtoTL}
As a representation of $\bC[\SymGrp_{2N}]$, $\SolSp_N$ descends to a representation of $\TL_{2N}$. 
\end{proposition}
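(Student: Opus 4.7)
The plan is as follows. By Proposition~\ref{prop:Lascoux} (Lascoux), we have an isomorphism $V^\pi \cong \SolSp_N$ of $\bC[\SymGrp_{2N}]$-modules, where $\pi = \pi_N = (2^N)$ is the rectangular Young diagram with $N$ rows of length $2$. Thus, the question reduces to showing that the $\SymGrp_{2N}$-action on $V^\pi$ descends to $\TL_{2N}$. By the presentation of $\TL_{2N}$ recalled in the excerpt, writing $\tau_k = 1+e_k$, the kernel of the quotient $\bC[\SymGrp_{2N}] \twoheadrightarrow \TL_{2N}$ contains, for each $i \in \{1,\ldots,2N-2\}$, the three-site symmetrizer
\begin{align*}
S_i := 1+\tau_i+\tau_{i+1}+\tau_i\tau_{i+1}+\tau_{i+1}\tau_i+\tau_i\tau_{i+1}\tau_i \; = \sum_{\sigma \in \SymGrp(\{i,i+1,i+2\})} \sigma ,
\end{align*}
so it suffices to verify that $S_i$ annihilates every polytabloid $v_T \in V^\pi$ (since these span $V^\pi$).

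Fix $T \in \NBof{\pi}$. Because the shape $\pi$ has only two columns, the pigeonhole principle ensures that at least two of the three labels $\{i,i+1,i+2\}$ lie in the same column of $T$; call them $a,b$. Then the transposition $(a,b)$ belongs to the column stabilizer $\Columns(T)$. A standard property of polytabloids, which follows directly from $v_T = \epsilon_T \{T\}$ and the definition of $\epsilon_T$, is that $\sigma \cdot v_T = \sign(\sigma)\, v_T$ for every $\sigma \in \Columns(T)$. In particular,
\begin{align*}
(a,b)\cdot v_T = -v_T , \qquad\text{i.e.,}\qquad \bigl(1 + (a,b)\bigr)\cdot v_T = 0.
\end{align*}

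To conclude, we factor $S_i$ through the subgroup $K := \{e,(a,b)\} \subset \SymGrp(\{i,i+1,i+2\})$. Choosing a set $R$ of representatives of the left cosets of $K$, we have
\begin{align*}
S_i \; = \; \sum_{r \in R} r \cdot (1 + (a,b)),
\end{align*}
and hence $S_i \cdot v_T = \bigl(\sum_{r\in R} r\bigr) \cdot \bigl((1+(a,b))\cdot v_T\bigr) = 0$. Since this holds for every polytabloid spanning $V^\pi$, we conclude that $S_i$ acts as zero on $V^\pi \cong \SolSp_N$, and therefore the representation descends to $\TL_{2N}$.

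The main subtlety, and the only place where one must be a bit careful, is the choice of the pair $(a,b)$: it depends on the numbering $T$, so we use it \emph{after} fixing $T$ and applying $S_i$ to the specific vector $v_T$. The coset factorization of $S_i$ that we employ uses only that $(a,b) \in \SymGrp(\{i,i+1,i+2\})$, so no compatibility with a uniform choice over all $T$ is needed. Everything else is the standard machinery of Specht modules.
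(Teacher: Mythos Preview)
Your proof is correct and follows essentially the same approach as the paper's own proof: both reduce to $V^\pi$ via Lascoux's isomorphism, use the pigeonhole principle on the two-column shape $\pi$ to find two of $\{i,i+1,i+2\}$ in the same column, and then invoke column antisymmetry of polytabloids to kill the three-site symmetrizer. Your version is more explicit (in particular, the coset factorization of $S_i$ and the remark that the choice of pair $(a,b)$ depends on $T$), but the idea is identical.
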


\begin{proof}
The key argument is that $\SolSp_N\cong V^{\pi}$, which is provided to us by Proposition~\ref{prop:Lascoux}. 
Indeed, since $\pi$ is a Young diagram with two columns, at least two of the three sites must lie in the same column of $\pi$. Therefore, any $v_A \in V^\pi$ is antisymmetric in at least two variables among $\{x_k,x_{k+1},x_{k+2}\}$. 
Thus, symmetrizing on three consecutive sites equals zero.
\end{proof}

\begin{remark}
The Temperley-Lieb algebra also admits a diagrammatic representation~\cite{Kauffman:State_models_and_the_Jones_polynomial}. 
From this viewpoint, the space $\SolSp_N$ is isomorphic to a module whose basis is given by link patterns in $\LP_N$ (cf.~\cite{Flores-Peltola:Standard_modules_radicals_and_the_valenced_TL_algebra}). 
We will not discuss this fact further in the present article.
\end{remark}

\subsection{Valenced Temperley-Lieb action}

\begin{definition}
\emph{Jones-Wenzl idempotents}~\cite{Wenzl:On_sequences_of_projections} 
in the Temperley-Lieb algebra $\TL_\Summed(\fugacity)$ are
nonzero elements 
\gls{symb:JW} $\in \TL_\Summed(\fugacity) \setminus \{ 0 \}$ 
for $i,j \in \{ 1,\ldots ,\Summed \}$ with $i<j$, defined recursively: 
\begin{align*}
\textnormal{JW}_{i,j}\textnormal{JW}_{i,j} = \; & \textnormal{JW}_{i,j}, \\
e_k\;\textnormal{JW}_{i,j} = \; & \textnormal{JW}_{i,j}\;e_k=0 , \qquad\textnormal{for all } k\in \{ i,i+1,\ldots ,j-1\}. 
\end{align*}
In the case $q=1$ (i.e., $\fugacity=-2$), the Jones-Wenzl idempotents are given by the \emph{antisymmetrizers}
(or rather, their images under the quotient map):
\begin{align*}
\textnormal{JW}_{i,j} = \frac{1}{(j-i+1)!} 
\sum_{\sigma \in \langle \tau_i,\tau_{i+1},\ldots ,\tau_{j-1}\rangle} \textnormal{sgn}(\sigma) \, \sigma 
\end{align*}
(where $\langle \tau_i,\tau_{i+1},\ldots ,\tau_{j-1}\rangle$ denotes the subgroup of $\SymGrp_{2N}$ generated by $\tau_i,\tau_{i+1},\ldots ,\tau_{j-1}$).
\end{definition}

The \quote{colored symmetric group} 
\gls{symb:clSYM},  
where $\SymGrp_{s_i} = \langle\tau_{\summ_{i-1}+1}, \dots, \tau_{\summ_i-1}\rangle$, 
is a subgroup of $\SymGrp_{\Summed}$ giving rise to the $\multii$-\emph{antisymmetrizer} idempotent 
\begin{align}\label{eq:idempotent}
\textnormal{\gls{symb:idpt}}
:= \frac{1}{s_1!\cdots s_d!} \prod_{k=1}^d \sum_{\sigma \in \SymGrp_{s_k}} \sign(\sigma) \, \sigma 
\; \in \; \bC[\SymGrp_{s_1} \times \cdots \times \SymGrp_{s_d}] 
\; \subset \; \bC[\SymGrp_\Summed] ,
\end{align}
Abusing notation, we denote also by $\idpt$ the corresponding image in the $\TL_{2N}$ quotient:
\begin{align} \label{eq:def_idpt}
\idpt = \prod_{k=1}^d \textnormal{JW}_{\summ_{k-1}+1,\summ_k}.
\end{align}

\begin{definition} \label{def:vTL}
(\cite{Flores-Peltola:Standard_modules_radicals_and_the_valenced_TL_algebra, Flores-Peltola:Generators_projectors_and_the_JW_algebra}).
The \emph{valenced Temperley-Lieb algebra} $\vTL_\multii = \vTL_\multii(-2)$ with fugacity $-2$  
is defined as the associative algebra 
$\vTL_\multii := \idpt \TL_\Summed(-2) \idpt$ with unit\footnote{Notice in particular that $\vTL_\multii$ is not a unital subalgebra of $\TL_\Summed(-2)$, since it has unit $\idpt$.} $\idpt$.
\end{definition}

We next investigate the $\vTL_\multii$-module $\idpt (\SolSp_N)$.
For this purpose, let 
\gls{symb:RSYT} 
denote the set of row-strict Young tableaux of shape $\pi$ and content $\multii$, 
that is, Young tableaux whose entry $i$ appears $s_i$ times, 
and whose entries increase strictly along rows and weakly down columns.

\begin{proposition} \label{prop:5p12} 
If $\underset{1 \leq i \leq d}{\max} s_i \leq N$, 
then $\idpt(\SolSp_N)$ is a simple $\vTL_\multii$-module with 
\begin{align*}
\dim \idpt( \SolSp_N) = |\RSYTof{\pi}| .
\end{align*}
\end{proposition}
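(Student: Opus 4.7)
The plan is to transport the problem to the Specht module $V^{\pi}$ via Lascoux's isomorphism (Proposition~\ref{prop:Lascoux}), and then invoke two standard tools from the representation theory of symmetric groups and associative algebras: the sign-version of Young's rule for the dimension count, and the idempotent-truncation principle for simplicity.

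First, Proposition~\ref{prop:Lascoux} yields a $\bC[\SymGrp_{2N}]$-module isomorphism $V^{\pi} \cong \SolSp_N$. By Proposition~\ref{prop:descendtoTL} this action descends to the Temperley--Lieb quotient $\TL_{2N}$, and the explicit $q=1$ formula for the Jones--Wenzl projectors identifies the image in $\TL_{2N}$ of the Young antisymmetrizer $\idpt$ of~\eqref{eq:idempotent} with the TL idempotent $\idpt$ of~\eqref{eq:def_idpt}. Lascoux's map therefore restricts to an isomorphism $\idpt(V^{\pi}) \cong \idpt(\SolSp_N)$ of $\vTL_\multii$-modules, where $\vTL_\multii = \idpt\,\TL_{2N}\,\idpt$ acts from within.

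For the dimension, I will use the classical identity $\dim \idpt\,V^{\lambda} = K_{\lambda'\mu}$ valid for $\idpt$ the antisymmetrizer of a Young subgroup $\SymGrp_\mu$. This follows from Frobenius reciprocity combined with the sign-version of Young's rule, $\mathrm{Ind}_{\SymGrp_\mu}^{\SymGrp_{2N}}\mathrm{sgn} = \bigoplus_\lambda K_{\lambda'\mu}\,V^{\lambda}$. Applied with $\lambda = \pi = (2^N)$ and $\mu = \multii$, this gives $\dim \idpt(V^{\pi}) = K_{(N,N),\multii}$, the number of semistandard Young tableaux of shape $(N,N)$ and content $\multii$. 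Transposing such tableaux exchanges the strict/weak inequalities along rows and columns, yielding a bijection onto row-strict Young tableaux of shape $\pi = (2^N)$ with content $\multii$; hence $\dim \idpt(\SolSp_N) = |\RSYTof{\pi}|$. The hypothesis $\max_i s_i \leq N$ is precisely the condition that $(N,N)$ dominates the decreasing rearrangement of $\multii$ as partitions of $2N$, equivalent to $K_{(N,N),\multii} > 0$; in particular $\idpt(V^{\pi}) \neq 0$.

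For simplicity, I invoke the idempotent-truncation principle: if $M$ is a simple module over an associative algebra $A$ and $e \in A$ is an idempotent with $eM \neq 0$, then $eM$ is a simple $eAe$-module. Here $V^{\pi}$ is a simple $\bC[\SymGrp_{2N}]$-module (being a Specht module), and since the $\TL_{2N}$-action factors through the quotient map, the $\TL_{2N}$-submodules of $V^{\pi}$ coincide with its $\bC[\SymGrp_{2N}]$-submodules, so $V^{\pi}$ is also simple as a $\TL_{2N}$-module. Taking $A = \TL_{2N}$, $e = \idpt$, $M = V^{\pi}$, and using $\idpt V^{\pi} \neq 0$ from above, we conclude that $\idpt(\SolSp_N) \cong \idpt(V^{\pi})$ is simple as a $\vTL_\multii$-module. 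The main technical point is the identification carried out in the first paragraph --- that the combinatorial antisymmetrizer~\eqref{eq:idempotent} really maps to the product of Jones--Wenzl projectors~\eqref{eq:def_idpt} under the TL quotient --- which at the special value $q=1$ is immediate from the explicit Jones--Wenzl formulas, but would be genuinely subtle at other values.
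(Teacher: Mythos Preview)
Your proof is correct and takes a more elementary, self-contained route than the paper's. The paper simply cites the classification of simple modules of the fused Hecke algebra $\idpt\,\bC[\SymGrp_{2N}]\,\idpt$ from~\cite{Crampe-Poulain-d-Andecy:Fused_braids_and_centralisers_of_tensor_representations_of_Uq_gln}, which gives both simplicity and the dimension formula in one stroke, and then implicitly uses that this descends to the quotient $\vTL_\multii$. You instead bypass that external classification entirely: simplicity comes from the standard idempotent-truncation lemma applied to the (Specht-simple, hence $\TL_{2N}$-simple) module $V^\pi$, and the dimension from Frobenius reciprocity combined with the sign-twisted Young rule, together with the transposition bijection $\mathrm{SSYT}((N,N),\multii)\leftrightarrow\RSYTof{\pi}$. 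Your approach has the advantage of being fully accessible from first principles in symmetric-group representation theory, while the paper's citation buys brevity and situates the result within a broader classification. Both routes ultimately rest on the same identification of the group-algebra antisymmetrizer with the product of Jones--Wenzl projectors at $q=1$, which the paper records explicitly just before Definition~\ref{def:vTL}.
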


\begin{proof}
This follows from classification of simple modules of the \quote{fused Hecke algebra} 
$\idpt \bC[\SymGrp_{2N}] \idpt$ given in~\cite[Theorem~6.5]{Crampe-Poulain-d-Andecy:Fused_braids_and_centralisers_of_tensor_representations_of_Uq_gln}. (See also~\cite[Section~2]{LPR:Fused_Specht_polynomials_and_c_equals_1_degenerate_conformal_blocks} for $q=-1$.) 
\end{proof}

From now on, we restrict ourselves to the 
kernels $\mathfrak{K}(x,y)=\ExcKH(x,y)$ and 
investigate algebraic properties of the space $\SolSp_\multii$ that we will derive from the case $\multii = (1^{2N})$.

\begin{lemma} \label{lem:lemma5p12}
For each $F \in \SolSp_N$, we have 
\begin{align}
(\idpt.F)(x_1,\ldots,x_d) = \left( \prod_{k=1}^d \prod_{\summ_{k-1}< i<j\leq \summ_{k}} (x_j-x_i)\right) g(x_1,\ldots,x_d) ,
\end{align} \label{eq:5p12}
where $g$ is a rational function with no pole at $x_i=x_j$, for $i,j\in \{\summ_{k-1},\ldots,\summ_{k}\}$, $k \in \{1, \ldots, d\}$.
\end{lemma}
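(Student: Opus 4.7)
The plan is to reduce to single Fomin-type determinants by linearity, exploit the Laplace expansion of such determinants to pinpoint the singular structure, and then pair up permutations in the colored antisymmetrizer to show the singular parts cancel exactly.

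First, by~\eqref{eq:det basis solsp} we have $\SolSp_N = \textnormal{span}_\bC\{\Delta_\beta^{\mathfrak K} \;|\; \beta \in \LP_N\}$, so linearity reduces the lemma to the case $F = \Delta_\beta^{\mathfrak K}$. Recalling~\eqref{eq:def_idpt}, the idempotent factors as $\idpt = \prod_{k=1}^d A_k$, where $A_k = \frac{1}{s_k!}\sum_{\sigma \in \SymGrp_{s_k}} \textnormal{sgn}(\sigma)\,\sigma$ antisymmetrizes over the $k$-th block of variables $\{x_{\summ_{k-1}+1},\ldots,x_{\summ_k}\}$. Consequently $\idpt.F$ is antisymmetric in each block, and already this gives divisibility (as a rational function) by each linear factor $(x_j - x_i)$ with $\summ_{k-1}<i<j\leq\summ_k$, and hence by the full product $V$ appearing in~\eqref{eq:5p12}. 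The substantive content of the lemma is therefore that the quotient $g = \idpt.F/V$ has no pole at same-group coincidences, and I will establish the stronger statement that $\idpt.\Delta_\beta^{\mathfrak K}$ itself has no pole at $x_i = x_j$ for any $i,j$ in a common block.

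Next, I would analyze the singular structure of $\Delta_\beta^{\mathfrak K}$ at $x_i = x_j$ via Laplace expansion of $\det(\mathfrak{K}(x_{a_r}, x_{b_c}))$. A pole at $x_i = x_j$ can arise only when $\{i,j\}$ contains one row and one column index of $\beta$, say $i = a_{m_0}$ and $j = b_{\ell_0}$; in that case the singular part equals
\begin{align*}
\frac{(-1)^{m_0+\ell_0}}{\pi(x_i - x_j)^2}\,\det M^{(m_0,\ell_0)},
\end{align*}
where the $(N-1)\times(N-1)$ minor obtained by deleting row $m_0$ and column $\ell_0$ crucially depends on neither $x_i$ nor $x_j$ (since $x_i$ only appears in row $m_0$ and $x_j$ only in column $\ell_0$). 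This makes the coefficient of the double pole a function of \textit{neither} $x_i$ nor $x_j$, which is the key structural input for what follows.

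The cancellation argument then pairs each $\sigma \in \SymGrp_{s_k}$ with $\sigma\tau_{ij}$, where $\tau_{ij}$ transposes $i$ and $j$. Using the substitution formula $\sigma.F(x) = F(x_{\sigma^{-1}(\cdot)})$, a direct check shows that the singular parts of $\sigma.\Delta_\beta^{\mathfrak K}$ and of $\sigma\tau_{ij}.\Delta_\beta^{\mathfrak K}$ at $x_i = x_j$ coincide: the two substitutions differ only at positions $\sigma(i)$ and $\sigma(j)$, and these are precisely the indices excluded from the relevant Laplace minor, while $\mathfrak{K}(x_i, x_j) = \mathfrak{K}(x_j, x_i)$ is symmetric. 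Since $\textnormal{sgn}(\sigma\tau_{ij}) = -\textnormal{sgn}(\sigma)$, the two contributions to $A_k.\Delta_\beta^{\mathfrak K}$ cancel in pairs, so $A_k.\Delta_\beta^{\mathfrak K}$ is regular at $x_i = x_j$. The remaining antisymmetrizers $A_\ell$ for $\ell \neq k$ act on variables disjoint from $\{x_i, x_j\}$ and therefore preserve this regularity, yielding that $\idpt.\Delta_\beta^{\mathfrak K}$—and by linearity $\idpt.F$—is regular at $x_i = x_j$. Combined with the antisymmetry (which gives divisibility by $V$), this produces the claimed factorization with $g$ regular at all same-group coincidences. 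The main obstacle will be the clean verification of the equality of singular parts for the paired permutations, which relies on the structural fact that the Laplace minor is independent of both $x_i$ and $x_j$.
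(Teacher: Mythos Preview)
Your proof is correct and follows essentially the same approach as the paper. The paper's argument is more compressed: it simply notes that $\Delta_\beta^{\ExcKH}$ has no singular terms of \emph{odd} order at $x_i=x_j$ (since the only source of a pole is a single factor $\frac{1}{\pi(x_i-x_j)^2}$, whose cofactor in the monomial expansion is free of $x_i$ and $x_j$), and that an even-order singular term $\frac{c}{(x_i-x_j)^{2}}$ with $c$ independent of $x_i,x_j$ is symmetric under $x_i\leftrightarrow x_j$ and hence killed by the antisymmetrizer. Your Laplace-expansion identification of the singular coefficient as the minor $\det M^{(m_0,\ell_0)}$ makes this independence of $x_i,x_j$ explicit, and your pairing $\sigma\leftrightarrow\sigma\tau_{ij}$ is exactly the mechanism behind ``even singular terms vanish upon antisymmetrization'' (indeed, $A_k\tau_{ij}=-A_k$). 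So the two arguments are the same idea at different levels of detail.
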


\begin{proof}
Any basis element $F=\Delta^\ExcKH_\beta$, with $\beta \in \LP_N$, is a rational function which does not possess any singular terms of odd order at $x_i=x_j$, for $i,j\in \{\summ_{k-1},\ldots,\summ_{k}\}$, $k \in \{1, \ldots, d\}$. 
Therefore, $\idpt.F$ is a rational function which is analytic at those loci, since any singular term of even order of $F$ vanishes upon antisymmetrization. Finally, by antisymmetry, $\idpt.F$ has zeros at these points, resulting in the prefactor in~\eqref{eq:5p12}.
The result follows by linearity. 
\end{proof}

For valences $\multii=(s_1,\ldots,s_d) \in \bZpos^d$ such that $s_1+\cdots+s_d = 2N$, denote 
\begin{align*}
\textnormal{\gls{symb:Dmultii}} 
:= \big\{ (x_1,\ldots, x_{2N}) \in \bC^{2N} \; | \;  
x_{\summ_{k-1}+1}=x_{\summ_{k-1}+2}=\cdots =x_{\summ_{k}}\textnormal{ for all } k \in \{ 1, \ldots, d\} \big\}
\; \subset \; \bC^{2N} ,
\end{align*}
and for a function $f \colon U \to \bC$ defined on a domain $U\subset \bC^{2N}$ which can be continuously extended 
to a subset of $\mathfrak{D}_\multii$, we shall write
\begin{align} \label{eq: eval notation}
\textnormal{\gls{symb:feval}} 
\colon \bC^d \to \bC
\end{align}
for the function obtained from 
$f(x_1,\ldots,x_{2N})$ by the evaluations of variables (projection) 
$x_{\summ_{k-1}+1}=x_{\summ_{k-1}+2}=\cdots =x_{\summ_{k}}$ for all $k \in \{ 1, \ldots,d\}$. 
We denote the variables of $f$ and $[f]_\textnormal{eval}$ respectively by $(x_1,\ldots, x_{2N}) \in \bC^\Summed$ and $(\xi_1,\ldots ,\xi_d) \in \bC^d$. 

\begin{proposition} \label{prop:5p15}
The solution space $\SolSp_\multii$ of~\eqref{eq: sol space sec 5} equals
\begin{align} \label{eq:5p15}
\SolSp_\multii = 
\left\{ \left[ \frac{\idpt.F}{\prod_{k=1}^d\prod_{\summ_{k-1}< i<j\leq \summ_{k}} (x_j-x_i)}\right]_{\textnormal{eval}} , \; F \in \SolSp_N \right\},
\end{align}
where the right-hand side is well-defined by Lemma~\ref{lem:lemma5p12}. 
Moreover, the map 
$\psi \colon \idpt(\SolSp_N) \to \SolSp_\multii$,
\begin{align} \label{eq:psimap}
\idpt.F \longmapsto \left[ \frac{\idpt.F}{\prod_{k=1}^d\prod_{\summ_{k-1}< i<j\leq \summ_{k}} (x_j-x_i)}\right]_{\textnormal{eval}} ,
\end{align}
is an isomorphism of vector spaces.
\end{proposition}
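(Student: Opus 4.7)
The plan is to deduce Proposition~\ref{prop:5p15} from property \textnormal{(FUS)} of Theorem~\ref{thm:CFT properties} via an algebraic identification of the antisymmetrization-plus-Vandermonde-division procedure with the iterated fusion limits. I will proceed in three steps: well-definedness of $\psi$, surjectivity onto $\SolSp_\multii$, and a dimension count for injectivity.

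First, the map $\psi$ is well-defined and linear. By Lemma~\ref{lem:lemma5p12}, for every $F\in\SolSp_N$ the numerator $\idpt.F$ is antisymmetric in each fused group $\{x_{\summ_{k-1}+1},\ldots,x_{\summ_k}\}$ and hence divisible by the corresponding Vandermonde $\prod_{\summ_{k-1}<i<j\leq\summ_k}(x_j-x_i)$; the quotient is a rational function whose restriction to $\mathfrak{D}_\multii$ is well-defined, so the evaluation $[\cdot]_{\textnormal{eval}}$ is unambiguous. Linearity is immediate, and $\idpt.F=0$ forces $\psi(\idpt.F)=0$, so $\psi$ descends to a linear map out of $\idpt(\SolSp_N)$.

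The core step is to establish that, for each $\alpha\in\LP_\multii$,
\begin{align*}
\psi(\idpt.\PartF_{\iota(\alpha)}) \;=\; c_\multii\,\PartF_\alpha ,
\end{align*}
for an explicit non-zero constant $c_\multii$ depending only on the valences. To see this, one expands $F=\PartF_{\iota(\alpha)}$ in a Taylor series around each fused diagonal $x_{\summ_{k-1}+1}=\cdots=x_{\summ_k}=\xi_k$. Using the determinantal identity $\sum_{\sigma\in\SymGrp_m}\sign(\sigma)\prod_{k=1}^{m}(x_{\sigma(k)}-\xi)^{k-1} = \prod_{i<j}(x_j-x_i)$, the antisymmetrization $\idpt$ kills every Taylor monomial whose multi-indices in the $k$th group are not a permutation of $(0,1,\ldots,s_k-1)$, and the leading surviving contribution factors as the Vandermonde times an alternating sum of leading Taylor coefficients. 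After dividing by the Vandermonde and evaluating on the diagonal, this alternating coefficient reproduces (up to $c_\multii$) the iterated fusion limits of $F$ appearing in property \textnormal{(FUS)}. Invoking \textnormal{(FUS)} of Theorem~\ref{thm:CFT properties} then gives the asserted identity; non-vanishing of $c_\multii$ together with the positivity of $\PartF_\alpha$ (property \textnormal{(POS)}) imply $\PartF_\alpha\in\psi(\idpt(\SolSp_N))$ for every $\alpha\in\LP_\multii$, hence $\SolSp_\multii\subseteq\psi(\idpt(\SolSp_N))$.

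Finally, the reverse inclusion and injectivity follow from a dimension count. By Proposition~\ref{prop:5p12}, $\dim\idpt(\SolSp_N)=|\RSYTof{\pi}|$, while by property \textnormal{(LIN)} of Theorem~\ref{thm:CFT properties}, $\dim\SolSp_\multii=|\LP_\multii|$; these two cardinalities coincide by a standard combinatorial bijection between $\multii$-valenced link patterns and row-strict Young tableaux of shape $\pi$ with content $\multii$. Hence $\psi\colon\idpt(\SolSp_N)\to\SolSp_\multii$ is a surjection between finite-dimensional spaces of equal dimension, and therefore a linear isomorphism; in particular, the set on the right-hand side of~\eqref{eq:5p15} equals $\SolSp_\multii$. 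The main obstacle is the core identity of the third paragraph: although its underlying algebraic principle is classical, tracking the correct factorial normalizations and confirming compatibility with the Frobenius exponents $\Xi_\pm$ of $\PartF_{\iota(\alpha)}$ from Lemmas~\ref{lem:Frobenius series}--\ref{lem:indical equation} requires careful bookkeeping.
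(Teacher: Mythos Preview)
Your overall structure is correct and matches the paper's: establish $\psi(\idpt.\PartF_{\iota(\alpha)})=c_\multii\,\PartF_\alpha$ via \textnormal{(FUS)}, then conclude by the dimension count $|\LP_\multii|=|\RSYTof{\pi}|$ combined with \textnormal{(LIN)} and Proposition~\ref{prop:5p12}. However, you miss the key simplification that makes the core identity nearly trivial in the paper. Because $\iota(\alpha)$ contains no link $\link{j}{j+1}$ inside any fused block, the antisymmetry property~\eqref{eq:simple anti-symmetry of inv Fomin} gives $\idpt.\PartF_{\iota(\alpha)}=\PartF_{\iota(\alpha)}$ outright. With this in hand there is no need for a Taylor-expansion-plus-Vandermonde argument: one simply rewrites the iterated limit~\eqref{eq:iterated limit expression} as
\[
\frac{\PartF_{\iota(\alpha)}}{\prod_k\prod_{\summ_{k-1}<i<j\le\summ_k}(x_j-x_i)}\times\frac{\prod_k\prod_{\summ_{k-1}<i<j\le\summ_k}(x_j-x_i)}{\prod_k\prod_{j=1}^{s_k}|x_{\summ_{k-1}+j}-\xi_k|^{j-1}},
\]
observes that the second factor tends to $1$ under the iterated limit, and identifies the first factor with $\psi(\idpt.\PartF_{\iota(\alpha)})$. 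This bypasses the bookkeeping you flag as ``the main obstacle'' and makes the Frobenius-exponent compatibility a non-issue. Two minor remarks: your invocation of \textnormal{(POS)} is unnecessary (only $c_\multii\neq0$ matters for $\PartF_\alpha\in\operatorname{im}\psi$), and your statement that antisymmetrization ``kills every Taylor monomial whose multi-indices are not a permutation of $(0,\ldots,s_k-1)$'' is not literally correct---it kills repeated multi-indices, while other non-repeated sets survive but contribute only at higher order.
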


\begin{proof}
Denote the right-hand side of the asserted equality~\eqref{eq:5p15} by $\mathcal T_\multii$. 
Fix $\alpha \in \LP_\multii$. 
Note that $\PartF_{\iota(\alpha)} \in \SolSp_N$. 
Moreover, for any $j_k \in [\summ_{k-1}+1,\summ_k-1]$ and for any $k \in \{1,\ldots,d\}$, 
we have $\link{j_k}{j_k+1} \notin \iota(\alpha)$ (because otherwise, $\alpha \notin \LP_\multii$). 
Therefore, it follows from~\eqref{eq:simple anti-symmetry of inv Fomin} that $\PartF_{\iota(\alpha)}$ is completely antisymmetric with respect to $(x_{\summ_{k-1}+1}, \ldots, x_{\summ_{k}})$, or, equivalently, 
\begin{align} \label{eq:pZ=Z}
\idpt .\PartF_{\iota(\alpha)} = \PartF_{\iota(\alpha)} .
\end{align}
Next, we rewrite the iterated limit~\eqref{eq:iterated limit expression} 
(proven in Lemma~\ref{lem:fusion of fused partition functions})
in the following form:
\begin{align*} 
\PartF_\alpha (\Lambda; \xi_1, \ldots, \xi_d) = \; & \bigg( \prod_{j=1}^d ( 0! \cdot 1! \cdot \ldots \cdot (s_j-1)!) \bigg) \left(
 \lim_{x_{\summ_d} \to \xi_d} 
 \ldots
 \lim_{x_{\summ_{d-1} + 2} \to \xi_d} 
 \lim_{x_{\summ_{d-1} + 1} \to \xi_d}
 \right) \\
& \; \ldots \;
\left( \lim_{x_{\summ_1} \to \xi_1} 
 \ldots \lim_{x_2 \to \xi_1} \lim_{x_1 \to \xi_1} \right)
\frac{\PartF_{\iota(\alpha)} (\Lambda; x_1, \ldots, x_{2N})}{\prod_{k=1}^d \prod_{j=1}^{s_k} |x_{\summ_{k-1}+j} - \xi_k|^{j-1}}.
\end{align*}
We now rewrite the ratio as follows:
\begin{align*}
\frac{\PartF_{\iota(\alpha)} (\Lambda; x_1, \ldots, x_{2N})}{\prod_{k=1}^d \prod_{j=1}^{s_k} |x_{\summ_{k-1}+j} - \xi_k|^{j-1}} = \frac{\PartF_{\iota(\alpha)} (\Lambda; x_1, \ldots, x_{2N})}{\prod_{k=1}^d \prod_{\summ_{k-1}< i<j\leq \summ_{k}} (x_j-x_i)} \frac{\prod_{k=1}^d \prod_{\summ_{k-1}< i<j\leq \summ_{k}} (x_j-x_i)}{\prod_{k=1}^d \prod_{j=1}^{s_k} |x_{\summ_{k-1}+j} - \xi_k|^{j-1}}.
\end{align*}
A careful inspection shows that the iterated limit of the second ratio equals one, 
while due to~\eqref{eq:pZ=Z} the iterated limit of the first ratio equals 
\begin{equation*}
\left[ \frac{\idpt.\PartF_{\iota(\alpha)}}{\prod_{k=1}^d\prod_{\summ_{k-1}< i<j\leq \summ_{k}} (x_j-x_i)}\right]_{\textnormal{eval}}.
\end{equation*} By the product rule of limits, we conclude that $\PartF_\alpha \in \mathcal T_\multii$ and that
\begin{align*}
\SolSp_\multii \subset \mathcal T_\multii.
\end{align*} 

Next, note that the sets $\LP_\multii$ and $\RSYTof{\pi}$ are in bijection: 
indeed, to each $\beta \in \LP_\multii$ we can uniquely associate $T \in \RSYTof{\pi}$ by listing the beginnings (resp.~endings) of links in the first (respectively second) column of $\pi$. 
Thus, from Proposition~\ref{prop:higher-valency solution LIN} we infer that $\dim \SolSp_\multii = |\RSYTof{\pi}|$. 
Thus, to finish the proof, 
it suffices to show that $\dim \mathcal T_\multii \leq |\RSYTof{\pi}|$. To this end, note that the map
\begin{align*}
\idpt.F \mapsto \left[ \frac{\idpt.F}{\prod_{k=1}^d\prod_{\summ_{k-1}< i<j\leq \summ_{k}} (x_j-x_i)}\right]_{\textnormal{eval}}
\end{align*}
is surjective onto $\mathcal T_\multii$ by definition. 
Therefore, we conclude that $\dim \mathcal T_\multii \leq |\RSYTof{\pi}|$.
\end{proof}

We now conclude with the statement which implies Theorem~\ref{thm:vTLrep}.

\begin{corollary} \label{cor:corollaryvTL}
The map $\psi$ in~\eqref{eq:psimap} induces an action of $\vTL_\multii := \idpt \TL_\Summed \idpt$
on $\SolSp_\multii$ as
\begin{align*}
(\idpt a\idpt) .\left[ \frac{\idpt.F}{\prod_{k=1}^d\prod_{\summ_{k-1}< i<j\leq \summ_{k}} (x_j-x_i)}\right]_{\textnormal{eval}}=\left[ \frac{\idpt a\idpt.F}{\prod_{k=1}^d\prod_{\summ_{k-1}< i<j\leq \summ_{k}} (x_j-x_i)}\right]_{\textnormal{eval}} ,
\end{align*}
for all $a\in \TL_{2N}$. 
Moreover, $\SolSp_\multii$ is a simple $\vTL_\multii$-module. 
\end{corollary}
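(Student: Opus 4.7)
The proof is essentially a transport-of-structure argument through the vector-space isomorphism $\psi \colon \idpt(\SolSp_N) \to \SolSp_\multii$ from Proposition~\ref{prop:5p15}, combined with the simplicity result of Proposition~\ref{prop:5p12}. I outline the three steps.

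First, I would observe that $\idpt(\SolSp_N)$ carries a natural left $\vTL_\multii$-action. By Proposition~\ref{prop:descendtoTL}, $\SolSp_N$ is a $\TL_{2N}$-module, and $\idpt \in \TL_{2N}$ is an idempotent. Hence for any $\idpt a \idpt \in \vTL_\multii = \idpt \TL_{2N} \idpt$ and any $\idpt.F \in \idpt(\SolSp_N)$, the element
\begin{equation*}
(\idpt a \idpt).(\idpt.F) \; := \; \idpt a \idpt.F \; = \; \idpt.(a\idpt.F)
\end{equation*}
lies in $\idpt(\SolSp_N)$. The module axioms $(\idpt a \idpt \cdot \idpt b \idpt).(\idpt.F) = (\idpt a \idpt).\big((\idpt b \idpt).(\idpt.F)\big)$ and $\idpt.(\idpt.F) = \idpt.F$ (recall that $\idpt$ is the unit of $\vTL_\multii$) reduce to $\idpt^2 = \idpt$ and associativity of the $\TL_{2N}$-action. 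Well-definedness is immediate: if $\idpt.F_1 = \idpt.F_2$, then $\idpt a \idpt.F_1 = (\idpt a)(\idpt.F_1) = (\idpt a)(\idpt.F_2) = \idpt a \idpt.F_2$.

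Second, I would transport this action to $\SolSp_\multii$ through $\psi$. Since $\psi$ is a linear bijection (Proposition~\ref{prop:5p15}), the prescription
\begin{equation*}
(\idpt a \idpt).\psi(\idpt.F) \; := \; \psi\big((\idpt a \idpt).(\idpt.F)\big)
\end{equation*}
is the unique $\vTL_\multii$-action on $\SolSp_\multii$ for which $\psi$ is an isomorphism of $\vTL_\multii$-modules. Unraveling the explicit formula~\eqref{eq:psimap} for $\psi$ and using that $\idpt a \idpt.F = \idpt.(a\idpt.F) \in \idpt(\SolSp_N)$, so that $\psi(\idpt a \idpt.F) = \big[(\idpt a \idpt.F)/\prod_k \prod_{\summ_{k-1}<i<j\leq \summ_k}(x_j - x_i)\big]_{\textnormal{eval}}$, yields exactly the formula in the statement.

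Finally, simplicity transfers along the isomorphism. Under the standing hypothesis $\max_i s_i \leq N$ (built into the setup of Section~\ref{subsec:UST framework}), Proposition~\ref{prop:5p12} asserts that $\idpt(\SolSp_N)$ is a simple (in particular nonzero) $\vTL_\multii$-module of dimension $|\RSYTof{\pi}|$. Since $\psi$ is an isomorphism of $\vTL_\multii$-modules by the preceding step, $\SolSp_\multii$ is also a simple $\vTL_\multii$-module. There is no real obstacle beyond this; the substantive work has already been carried out in Propositions~\ref{prop:descendtoTL}, \ref{prop:5p12}, and~\ref{prop:5p15}, and the corollary is a direct synthesis. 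The only point to articulate carefully is the identification of the abstract transported action with the concrete formula, which amounts to observing that the Vandermonde prefactor and the evaluation map $[\,\cdot\,]_{\textnormal{eval}}$ are compatible with multiplication by $\idpt a \idpt$ on the left, since that multiplication is absorbed by the outer $\idpt$ already present in the numerator.
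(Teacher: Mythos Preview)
Your proposal is correct and follows exactly the same approach as the paper's proof, which simply notes that $\psi$ is a vector-space isomorphism (Proposition~\ref{prop:5p15}) and that $\idpt(\SolSp_N)$ is a simple $\vTL_\multii$-module (Proposition~\ref{prop:5p12}). You have spelled out more carefully the standard transport-of-structure argument that the paper leaves implicit, but the underlying reasoning is identical.
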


\begin{proof}
The map in Proposition~\ref{prop:5p15} is an isomorphism of vector spaces from 
$\idpt(\SolSp_N)$ to $\SolSp_\multii$, and the former space is a simple $\vTL_\multii$-module
by Proposition~\ref{prop:5p12}.
\end{proof}

\begin{remark} 
The $\vTL_\multii$-action on $\SolSp_\multii$ defined in Corollary~\ref{cor:corollaryvTL} is natural in the sense that it commutes with fusion. 
More precisely, as a $\TL_{2N}$-module, we have
\begin{align*}
\SolSp_N = \idpt (\SolSp_N ) \oplus \textnormal{ker}(\idpt) ,
\end{align*}
since $\idpt$ is a projector that respects the $\TL_{2N}$-action.  
The fusion map $\psi\super{\textnormal{fus}}$ defined by
\begin{align*}
\psi\super{\textnormal{fus}} \colon & \; \SolSp_N \to \SolSp_\multii , \\
F \mapsto  & \;  \left[ \frac{\idpt.F}{\prod_{k=1}^d\prod_{\summ_{k-1}< i<j\leq \summ_{k}} (x_j-x_i)}\right]_{\textnormal{eval}} ,
\end{align*}
acts as $\psi$ on $\idpt (\SolSp_N )$ and as $\psi\super{\textnormal{fus}} \equiv 0$ on $\textnormal{ker}(\idpt)$. 
Thus, it is a morphism of representations.
\end{remark} 

\begin{remark} \label{rem:commute}
Just like for the Temperley-Lieb algebra, the valenced Temperley-Lieb algebra can be defined in terms of diagrams. From this viewpoint, $\SolSp_\multii$ is isomorphic to a $\vTL_\multii$-module whose basis elements are valenced link patterns in $\LP_\multii$ --- see~\cite{Flores-Peltola:Generators_projectors_and_the_JW_algebra, Flores-Peltola:Standard_modules_radicals_and_the_valenced_TL_algebra}.
\end{remark}

\bigskip{}
\section{Asymptotic properties of pure partition functions}
\label{sec:ASY}
The purpose of this section is to derive asymptotic properties of the pure partition functions as neighboring endpoints are fused together. 
The asymptotics can be interpreted as specific \emph{fusion rules} for the corresponding CFT fields $\Phi_{1,s_j+1}$. 
In the unfused case, where all the fields are $\Phi_{1,2}$, 
these asymptotics properties are very important for singling out the pure partition functions as \emph{unique} solutions to a certain boundary value problem 
(see~\cite{BBK:Multiple_SLEs_and_statistical_mechanics_martingales,FSKZ:A_formula_for_crossing_probabilities_of_critical_systems_inside_polygons,Peltola-Wu:Global_and_local_multiple_SLEs_and_connection_probabilities_for_level_lines_of_GFF,Peltola:Towards_CFT_for_SLEs}
and references therein.) 
Analogous properties for CFT correlation functions with irrational central charges have been proven in~\cite{Peltola:Basis_for_solutions_of_BSA_PDEs_with_particular_asymptotic_properties}. 

In the last Section~\ref{subsec:ASY simul}, we furthermore derive a simultaneous fusion property for the simplest, unfused partition functions, generalizing Proposition~\ref{prop:higher-valence solution limit}.

For notation, fix $\multii = (s_1, \ldots, s_d) \in \bZpos^d$ with $s_1 + \cdots + s_d = 2N$, $j \in \{1,2,\ldots,d-1 \}$, and $\alpha \in \LP_\multii$; 
we will be analyzing $\PartF_\alpha(x_1,\ldots, x_d)$ as $ x_j, x_{j+1} \to \xi \in (x_{j-1},x_{j+2})$. 
Write $\bs{x} = (x_1,\ldots, x_d) \in \chamber_d$.
Denote the fused valences and coordinates by $\hat{\multii}$ and $\hat{\bs{x}}$, respectively. 
These can be expressed in terms of 
the multiplicity\footnote{Here, we crucially treat the valenced link pattern $\alpha$ as a multiset, not via its \quote{unfused} link pattern $\imath (\alpha)$.} 
of the link $\link{j}{j+1}$ in $\alpha$: 
\begin{itemize}[leftmargin=1.5em]
\item 
if $\mult_{j, j+1} =  s_{j} = s_{j+1} $, i.e., 
if all of the links from $x_j$ and $x_{j+1}$ connect these two points 
(this is the case of the \emph{leading asymptotics} for the function $\PartF_\alpha$), 
then 
\begin{align*}
\begin{cases}
\hat{\multii} = (s_1, \ldots, s_{j-1}, s_{j+2}, \ldots, s_d) \in \bZpos^{d-2} , \\
\hat{\bs{x}} = (x_1, \ldots, x_{j-1}, x_{j+2}, \ldots, x_d)  \in \chamber_{d-2} ,
\end{cases}
\end{align*}
\item 
and in all other (\emph{subleading}) cases, we have
\begin{align*}
\begin{cases}
\hat{\multii} = (\ldots, s_{j-1}, s_j + s_{j+1} - 2 \mult_{j, j+1}, s_{j+2}, \ldots)\in \bZpos^{d-1} , \\
\hat{\bs{x}} = (\ldots, x_{j-1}, \xi, x_{j+2}, \ldots)   \in \chamber_{d-1} . 
\end{cases}
\end{align*}
\end{itemize}
Denote by $\hat{\alpha} \in \LP_{\hat{\multii}}$ the valenced link pattern obtained from $\alpha$ by removing $\mult_{j, j+1}$ links $\link{j}{j+1}$ and re-labeling the remaining indices appropriately (see Figure~\ref{fig:ASY patterns}).

\begin{figure}
\raisebox{-0.12\textwidth}{
\includegraphics[width=0.35\textwidth]{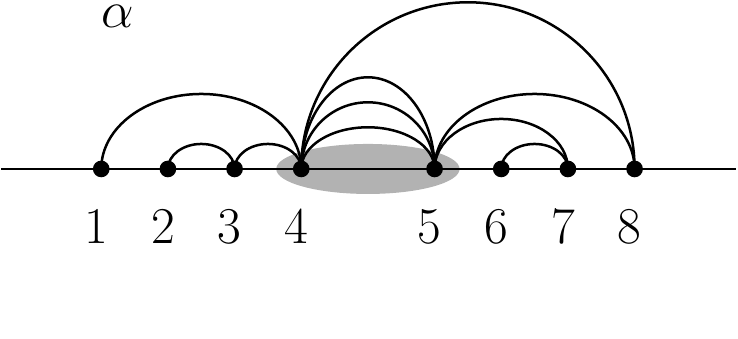}
}
$\rightsquigarrow$
\raisebox{-0.12\textwidth}{
\includegraphics[width=0.35\textwidth]{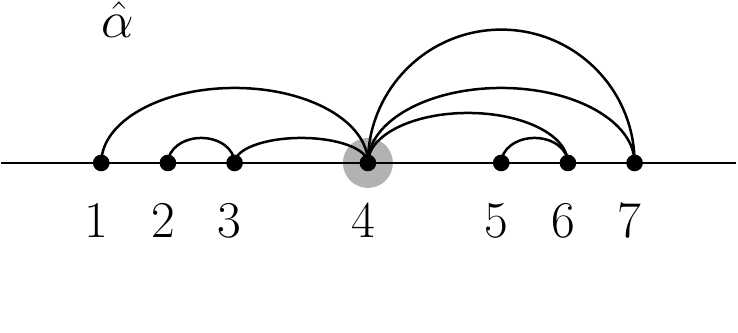}
}
\caption{\label{fig:ASY patterns}
Link pattern $\hat{\alpha}$ obtained from $\alpha$ by fusing two endpoints. 
}
\end{figure}

\begin{figure}
\includegraphics[width=0.35\textwidth]{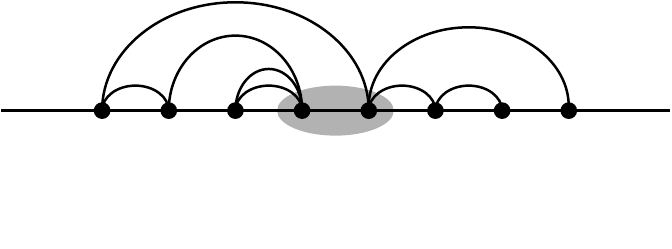} \qquad
\includegraphics[width=0.35\textwidth]{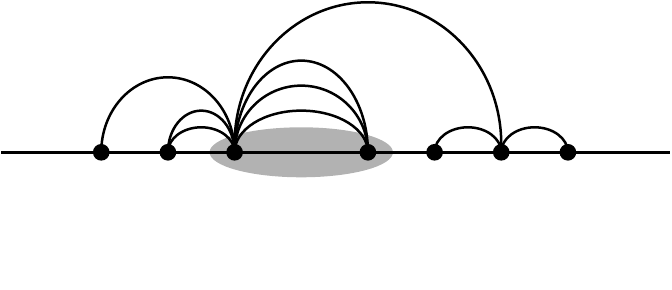}
\caption{\label{fig:ASY patterns special cases}
Example link patterns $\alpha$ in the two special cases $\mult_{j,j+1} =0$ (left) and $\mult_{j,j+1} =\min(s_{j},s_{j+1})$ (right), 
which lay the basis for the proof of Theorem~\ref{thm:ASY}, and for which $C(s_j, s_{j+1}, \mult_{j,j+1}) > 0$ is obtained explicitly.
}
\end{figure}

\begin{theorem}
\label{thm:ASY}
\textnormal{\bf \textnormal{(ASY)} Asymptotics:} 
There exist constants 
\gls{symb:const} $ \geq 0$ 
such that
\begin{align}\label{eq: asymptotic properties}
\lim_{x_j,x_{j+1}\to\xi}
\frac{\PartF_\alpha(\bs x)}{|x_{j+1}-x_j|^{\Delta^{s_{j},s_{j+1}}_{\mult_{j,j+1}}}} 
= \; & C(s_j, s_{j+1}, \mult_{j,j+1}) \; 
\PartF_{\hat{\alpha}}(\hat{\bs{x}})
\end{align}
where we set $\PartF_\emptyset \equiv 1$ for the empty link pattern $\emptyset \in \LP_0$ and (using the Kac conformal weights)
\begin{align*}
\textnormal{\gls{symb:pow}}
= \; & h_{1,s_j+s_{j+1}+1-2m} - h_{1,s_j+1} - h_{1,s_{j+1}+1} \\
= \; & 2 m^2 - m (2 s_j + 2 s_{j+1} + 1 ) + s_j s_{j+1}.
\end{align*}
\end{theorem}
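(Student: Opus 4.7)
My plan is to prove (ASY) by using the iterated fusion limit representation from property (FUS) of Theorem~\ref{thm:CFT properties} to reduce the asymptotic statement for $\PartF_\alpha$ to one for the fully unfused pure partition function $\PartF_{\iota(\alpha)}$, and then to combine this with the Frobenius series of Lemma~\ref{lem:Frobenius series} together with an indicial analysis of the coupled BPZ PDEs of orders $s_j+1$ and $s_{j+1}+1$ at $x_j$ and $x_{j+1}$.

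Concretely, I would first apply Lemma~\ref{lem:fusion of fused partition functions} to write both sides of~\eqref{eq: asymptotic properties} as iterated fusion limits,
\begin{align*}
\PartF_\alpha(\bs x) \, = \, c_\multii \, \mathcal{L}_\multii \PartF_{\iota(\alpha)}
\qquad \textnormal{and} \qquad
\PartF_{\hat\alpha}(\hat{\bs x}) \, = \, c_{\hat\multii} \, \mathcal{L}_{\hat\multii} \PartF_{\iota(\hat\alpha)} ,
\end{align*}
where $\iota(\hat\alpha)$ is obtained from $\iota(\alpha)$ by removing the $\mult_{j,j+1}$ links that join the $s_j$ unfused indices originally at $x_j$ to the $s_{j+1}$ indices originally at $x_{j+1}$, and relabeling (Figure~\ref{fig:ASY patterns}). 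Taking the asymptotic $x_j, x_{j+1} \to \xi$ on the fused side amounts to having $s_j + s_{j+1}$ of the coordinates of $\PartF_{\iota(\alpha)}$ merge simultaneously to $\xi$. For this cluster limit I would use a simultaneous Frobenius expansion (generalizing Proposition~\ref{prop:higher-valence solution limit}, to be developed in Section~\ref{subsec:ASY simul}), and combine it with the derivative orders and renormalization factors supplied by $\mathcal{L}_\multii$. The leading exponent is then identified via a higher-order generalization of the indicial equation in Lemma~\ref{lem:indical equation}, derived from the coupled BPZ PDEs at $x_j, x_{j+1}$; its roots are precisely $\{\Delta^{s_j, s_{j+1}}_m \colon 0 \leq m \leq \min(s_j, s_{j+1})\}$, the Kac-weight exponents of the OPE $\Phi_{1, s_j+1} \times \Phi_{1, s_{j+1}+1}$, and the combinatorial data of $\alpha$ selects the root $m = \mult_{j,j+1}$.

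The hard part is identifying the Frobenius coefficient as precisely $C(s_j, s_{j+1}, \mult_{j,j+1}) \PartF_{\hat\alpha}$ rather than some other element of the typically multi-dimensional space $\SolSp_{\hat\multii}$. To this end I would exploit the explicit determinantal expansion~\eqref{eq:fused partition function} of $\PartF_\alpha$ together with the Fomin-type reductions from Lemma~\ref{lem:Fomin properties} (applied within the fused groups of indices): these should force that only those $\beta \DPgeq \alpha$ whose restriction, after contracting the $\mult_{j,j+1}$ internal links between the two groups, coincides with $\hat\alpha$ in $\LP_{\hat\multii}$ contribute at the leading order $|x_{j+1} - x_j|^{\Delta^{s_j, s_{j+1}}_{\mult_{j,j+1}}}$; all other contributions would be of strictly higher order. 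Positivity $C \geq 0$ then follows from property (POS) of Theorem~\ref{thm:CFT properties}, and in the two extreme cases $\mult_{j,j+1} = 0$ and $\mult_{j,j+1} = \min(s_j, s_{j+1})$ illustrated in Figure~\ref{fig:ASY patterns special cases}, a direct Laplace expansion of~\eqref{eq: LPdet} yields an explicit, strictly positive value of $C(s_j, s_{j+1}, \mult_{j,j+1})$.
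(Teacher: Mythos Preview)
Your plan diverges substantially from the paper's proof, and it contains a real gap at the indicial-equation step.

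The paper does \emph{not} pass back to the unfused partition function via (FUS), nor does it use any Frobenius/indicial analysis of the BPZ PDEs to pin down the exponent. Instead it works directly with the expression of $\PartF_\alpha$ as an inverse Fomin type sum $\mathfrak{Z}^{\mathfrak{K}}_\alpha$ with the \emph{valenced} kernel~\eqref{eq: valenced kernel}, and analyzes the limit $x_j,x_{j+1}\to\xi$ by manipulating kernel entries. The two extreme cases are handled first and separately. For $\mult_{j,j+1}=0$, one Taylor-expands each $\mathfrak{K}(k_q,b)$ at $x_j$ and uses the antisymmetry property (Lemma~\ref{lem:Fomin properties}\ref{item:general zero-replacing rule a}) to kill every term in which a derivative order repeats among the $s_j+s_{j+1}$ indices; summing over the surviving permutations produces an explicit determinant times $\PartF_{\hat\alpha}$. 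For $\mult_{j,j+1}=\min(s_j,s_{j+1})$, the paper proves a new combinatorial factorization (Lemma~\ref{lem:diverging asymptotics}) showing that the terms of $\mathfrak{Z}^{\mathfrak{K}}_\alpha$ containing the diverging kernel entries split as $\mathfrak{Z}^{\mathfrak{K}}_{\rainbow{L}}\cdot \mathfrak{Z}^{\hat{\mathfrak{K}}}_{\alpha\setminus\rainbow{L}}$, a rainbow determinant times exactly $\PartF_{\hat\alpha}$. The general case is then obtained by choosing a permutation $\tau$ that moves the worst-diverging entries into a rainbow block, applying the factorization to a ``reference term'', and comparing all other index choices to it inductively (each unit decrease in an index raises the power of $\delta$ in one factor and lowers it by the same amount in the other, so every term contributes at the same or higher order).

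The gap in your plan is the ``higher-order generalization of the indicial equation in Lemma~\ref{lem:indical equation}''. That lemma is proved only for $s_{j+1}=1$, where the second-order PDE at $x_{j+1}$ gives a quadratic indicial polynomial. For general $s_j,s_{j+1}$ you would need to extract, from the coupled BPZ operators $\sD^{(x_j)}_{s_j+1}$ and $\sD^{(x_{j+1})}_{s_{j+1}+1}$ of~\eqref{eq: BPZ operator at kappa equals 2}, an indicial polynomial whose roots are exactly $\{\Delta^{s_j,s_{j+1}}_m : 0\le m\le \min(s_j,s_{j+1})\}$; this is not established anywhere in the paper and is nontrivial. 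Even granting it, an indicial equation only lists the \emph{possible} leading exponents. Selecting $m=\mult_{j,j+1}$ and identifying the leading coefficient as a scalar multiple of $\PartF_{\hat\alpha}$ still requires the direct kernel analysis --- which is precisely what the paper does from the start, bypassing the PDE route entirely.
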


If either $\mult_{j,j+1} =0$ or $\mult_{j,j+1} =\min(s_{j},s_{j+1})$ (see Figure~\ref{fig:ASY patterns special cases}), then we also obtain an explicit, strictly positive expression for $C(s_j, s_{j+1}, \mult_{j,j+1})$\footnote{In principle, all $C(s_j, s_{j+1}, \mult_{j,j+1})$ could be found by computing the limit~\eqref{eq: asymptotic properties} for some example pattern $\alpha$ with the correct values of $s_j, s_{j+1}, \mult_{j,j+1}$. However, because $\PartF_\alpha$ is defined via combinatorial numbers with no known closed-form expression, we do not expect to obtain one for $C(s_j, s_{j+1}, \mult_{j,j+1})$, either.}. 
We believe that the constant is nonzero in all cases, 
just as in the generic situation in~\cite[Theorem~5.3]{Peltola:Basis_for_solutions_of_BSA_PDEs_with_particular_asymptotic_properties}. 
The constants $C(s_j, s_{j+1}, \mult_{j,j+1})$ can be thought of as \emph{structure constants} 
for chiral conformal blocks with central charge $c=-2$ and degenerate conformal weights 
$h_{1,s_j+1}, h_{1,s_{j+1}+1}, h_{1,s_j+s_{j+1}+1-2m}$.

The next Sections~\ref{subsec:ASY no links case}--\ref{subsec: asy gen} constitute the proof of this result. 
We will first explicate the easy cases $\mult_{j,j+1} =0$ or $\mult_{j,j+1} =\min(s_{j},s_{j+1})$. The general case is a combination of these two. 
The last Section~\ref{subsec:ASY simul} concerns the generalization of Proposition~\ref{prop:higher-valence solution limit}.

\subsection{Asymptotics, case $\mult_{j, j+1} = 0$}
\label{subsec:ASY no links case}

By symmetry, we may assume that $s_{j+1} \leq s_j$.

\begin{proposition} \label{prop:ASY no links case}
In the setup of Theorem~\ref{thm:ASY}, assuming $\mult_{j, j+1} = 0$ and  $s_{j+1} \leq s_j$, we have
\begin{align} \label{eq:ASY no links case}
\lim_{x_{j+1}, x_j \to \xi} 
\frac{\PartF_{\alpha} (\bs x)}{|x_{j+1} - x_j|^{s_j s_{j+1}}} = \; & C(s_j, s_{j+1}) \, \PartF_{\hat{\alpha}} (\hat{\bs{x}}) ,
\\ \nonumber
\textnormal{where} 
\qquad 
C(s_j, s_{j+1}) = C(s_j, s_{j+1}, 0) = \; & \frac{\prod_{k = 1}^{s_{j+1}} (k-1)!}{\prod_{\ell = 1}^{s_{j+1}} (s_j + \ell -1)!} .
\end{align}
\end{proposition}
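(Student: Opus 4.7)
The plan is to prove the asymptotic via property (FUS) from Theorem~\ref{thm:CFT properties}, which expresses both $\PartF_\alpha(\bs x)$ and $\PartF_{\hat\alpha}(\hat{\bs x})$ as iterated limits of the same unfused partition function on $\{1,\ldots,2N\}$. Since $\mult_{j,j+1}(\alpha) = 0$, the underlying $N$-link pattern on $\{1,\ldots,2N\}$ is preserved under merging the groups $j$ and $j+1$ (the physical links are unaffected by the merging), so both quantities arise from the same rational function $\PartF_{\imath(\alpha)}$ of $2N$ variables, via different fusion schemes at groups $j, j+1$: for $\PartF_\alpha$, the $s_j$ variables $y_{\summ_{j-1}+1},\ldots,y_{\summ_j}$ fuse to $p_j$ and the $s_{j+1}$ variables $y_{\summ_j+1},\ldots,y_{\summ_{j+1}}$ fuse to $p_{j+1}$ in two separate nested subsequences (the ``split'' scheme); while for $\PartF_{\hat\alpha}$ all $s_j+s_{j+1}$ of these variables fuse jointly to $\xi$ in a single nested sequence (the ``merged'' scheme).

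A direct bookkeeping of the (FUS) renormalization factors $|y_k-p|^{-(k-1)}$ and factorial prefactors then yields two crisp facts. First, the total exponent of distance-to-fusion factors differs between the two schemes by exactly
\begin{align*}
\tfrac{(s_j+s_{j+1})(s_j+s_{j+1}-1)}{2} - \tfrac{s_j(s_j-1)}{2} - \tfrac{s_{j+1}(s_{j+1}-1)}{2} = s_j s_{j+1},
\end{align*}
accounting precisely for the normalization $|x_{j+1}-x_j|^{s_j s_{j+1}}$ in~\eqref{eq:ASY no links case}. Second, the ratio of factorial prefactors is
\begin{align*}
C(s_j,s_{j+1}) = \frac{\prod_{k=0}^{s_j-1}k! \cdot \prod_{k=0}^{s_{j+1}-1}k!}{\prod_{k=0}^{s_j+s_{j+1}-1}k!} = \frac{\prod_{k=1}^{s_{j+1}}(k-1)!}{\prod_{\ell=1}^{s_{j+1}}(s_j+\ell-1)!},
\end{align*}
manifestly strictly positive, matching the asserted constant.

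The hard part will be to justify that, after extracting these two bookkeeping factors, the split and merged iterated limits of $\PartF_{\imath(\alpha)}$ coincide; equivalently, that the outer limit $x_j, x_{j+1} \to \xi$ commutes with the inner (FUS) group-wise fusions up to the stated normalization. The plan is to study the Frobenius series expansion (Lemma~\ref{lem:Frobenius series}) of $\PartF_\alpha(p_1,\ldots,p_d)$ in $\varepsilon = x_{j+1}-x_j$ around $\xi$, verifying both that the lowest Frobenius exponent is precisely $s_j s_{j+1}$ and that the corresponding leading coefficient matches the merged-limit expression. The crucial combinatorial input is that since no link of $\imath(\alpha)$ joins groups $j$ and $j+1$, the simple kernel-swap antisymmetry~\eqref{eq:simple anti-symmetry of inv Fomin} applied iteratively to neighboring unlinked cross-group index pairs in the determinantal formula~\eqref{eq:fused partition function} forces cancellation of all expansion coefficients of orders strictly less than $\varepsilon^{s_j s_{j+1}}$. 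Identifying the leading coefficient then reduces to a direct Taylor expansion of the Brownian excursion kernel entries under the two renormalization schemes, producing $C(s_j,s_{j+1})\,\PartF_{\hat\alpha}(\hat{\bs x})$ on the nose.
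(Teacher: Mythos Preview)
Your (FUS)-based framing is sound and the bookkeeping of exponents and factorial constants is correct: the merged and split (FUS) schemes differ by exactly $s_j s_{j+1}$ in total renormalization exponent and by the factorial ratio $C(s_j,s_{j+1})$ in prefactor (and indeed $\imath(\alpha) = \imath(\hat\alpha)$ when $\mult_{j,j+1}=0$, by planarity, though you do not argue this). The paper takes a more direct route: it works with the explicit inverse Fomin type sum $\PartF_\alpha = \mathfrak{Z}^{\mathfrak{K}}_\alpha$ (kernel as in Theorem~\ref{thm:scaling limit of pinched pertition functions}), Taylor-expands the group-$(j{+}1)$ kernel entries about $x_j$, uses Lemma~\ref{lem:Fomin properties}\ref{item:general zero-replacing rule a} to kill every term with a repeated derivative order, and identifies the surviving leading-order contribution as a signed sum over permutations, i.e.\ the determinant $\det\big(1/(s_j-\ell+k)!\big)_{\ell,k=1}^{s_{j+1}}$, which is then evaluated via Lemma~\ref{lem:first determinant evaluation}.

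The two approaches coincide at the ``hard part'' you flag. First, the Frobenius/indicial machinery you invoke does not by itself give the exponent $s_j s_{j+1}$: Lemma~\ref{lem:indical equation} requires $s_{j+1}=1$, so for general valences the exponent must come from the pigeonhole in the Taylor-plus-antisymmetry argument --- exactly the paper's computation. Second, to make your split-then-merge commutation carry factor exactly $1$ (so that $C$ arises purely from the (FUS) factorials), one must carry out that same expansion and show the resulting permutation-sum equals $1$; this identity is \emph{equivalent} to Lemma~\ref{lem:first determinant evaluation}. Your closing phrase ``on the nose'' undersells this step: the leading Frobenius coefficient is not manifestly $C\,\PartF_{\hat\alpha}$ from Taylor expansion alone, but $(\textnormal{determinant})\times\PartF_{\hat\alpha}$. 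So your approach is a valid and illuminating reframing --- the (FUS) bookkeeping explains conceptually \emph{why} $C$ is that particular factorial ratio --- but it does not bypass the technical core of the paper's proof.
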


\begin{proof}
For notational simplicity, we only consider the limit as $x_{j+1} \downarrow x_j$, so $\xi = x_j$. 
Denote the sets of indices to be fused by $I:= \{ \summ_{j-1} + 1, \ldots, \summ_j \}$ and $K := \{ \summ_{j} + 1, \ldots, \summ_{j+1} \} $ 
and their elements, in order, by $i_1, \ldots, i_{s_j}$ and by $k_1, \ldots, k_{s_{j+1}}$, respectively. 
The partition function $\PartF_{\alpha} (\bs x)$ is, by its definition in Theorem~\ref{thm:scaling limit of pinched pertition functions}, 
and inverse Fomin type sum $\mathfrak{Z}^\mathfrak{K}_\alpha(\bs x)$. Furthermore, being currently in the case $\mult_{j, j+1} = 0$,
where we may additionally assume, by~Lemma~\ref{lem:Fomin properties}\ref{item:general zero-replacing rule b}, 
that the kernel $\mathfrak{K}$ is modified to be
\begin{align*}
\mathfrak{K}(i_p, k_q) = 0 , \qquad \textnormal{for all } 1 \leq p \leq s_j \textnormal{ and } 1 \leq q \leq s_{j+1},
\end{align*}
and the rest of the kernel entries are as in Equation~\eqref{eq: valenced kernel}.
In particular, for any $1 \leq b \leq 2N$, there exist functions $f_b$ 
(either zero, or suitable derivatives of the Brownian excursion kernel), 
locally smooth in some open interval containing $[x_j, x_{j+1}]$, and such that
\begin{align*}
& \mathfrak{K}(i_1, b) = f_b(x_j),  \quad
\mathfrak{K}(i_2, b) = f_b'(x_j),  \quad \ldots,  \quad
\mathfrak{K}(i_{s_j}, b) = f_b^{(s_j - 1)}(x_j) \qquad \textnormal{and} \\
& \mathfrak{K}(k_1, b) = f_b(x_{j+1}),  \quad
\mathfrak{K}(k_2, b) = f_b'(x_{j+1}),  \quad \ldots,  \quad
\mathfrak{K}(k_{s_{j+1}}, b) = f_b^{(s_{j+1} - 1)}(x_{j+1}).
\end{align*}
Likewise, $\PartF_{\hat{\alpha}} (\hat{\bs{x}})$ is another inverse Fomin type sum $\mathfrak{Z}^{\hat{\mathfrak{K}}}_\alpha(\hat{\bs{x}})$ with 
\begin{align}
\label{eq:fused kernels}
& \hat{\mathfrak{K}} (k_1, b) = f_b^{(s_{j} )}(x_{j}),  \quad
\hat{\mathfrak{K}} (k_2, b) = f_b^{(s_{j} + 1 )}(x_{j}),  \quad \ldots,  \quad
\hat{\mathfrak{K}} (k_{s_{j+1}}, b) = f_b^{(s_{j} + s_{j+1} -1 )}(x_{j}),
\end{align}
and $\hat{\mathfrak{K}} (a, b) = \mathfrak{K} (a, b)$ if $a, b \not \in K$.

Set $\delta := x_{j+1} - x_j$,  
and Taylor expand until the derivative $f_b^{(s_{j} + s_{j+1} - 1)}(x_{j})$ appears:
\begin{align} \label{eq:Taylor series for kernels}
\begin{split}
\mathfrak{K}(k_1, b) = \; & f_b(x_{j}) + \delta f_b'(x_{j}) + \cdots 
+ \frac{\delta^{s_j + s_{j+1} -1}}{(s_j + s_{j+1} -1) !}  \, f_b^{(s_{j} + s_{j+1} - 1)}(x_{j}) + O(\delta^{s_j + s_{j+1}}) \\
\mathfrak{K}(k_2, b) = \; & f_b'(x_{j}) + \delta f_b''(x_{j}) + \cdots 
+ \frac{\delta^{s_j + s_{j+1} -2}}{(s_j + s_{j+1} -2) !} \, f_b^{(s_j + s_{j+1} -1)}(x_{j}) + O(\delta^{s_j + s_{j+1} -1})  \\
\; & \vdots \\
\mathfrak{K}(k_{s_{j+1}}, b) = \; & f_b^{(s_{j+1} - 1)}(x_{j}) + \delta f_b^{(s_{j+1} )} (x_{j}) + \cdots 
+ \frac{\delta^{s_j}}{s_j !} \, f_b^{(s_{j} + s_{j+1} - 1)}(x_{j}) + O(\delta^{s_j + 1}).
\end{split}
\end{align}
By linearity in the kernels $\mathfrak{K}(k_q, \cdot)$, we can think of $\mathfrak{Z}^\mathfrak{K}_\alpha$ as 
a sum of different inverse Fomin type sums $\mathfrak{Z}^{\widetilde{\mathfrak{K}}}_\alpha$, 
obtained by choosing each $\widetilde{\mathfrak{K}}(k_q, \cdot)$ to be one of the terms in 
the Taylor expansion of $\mathfrak{K}(k_q, \cdot)$ above (and $\widetilde{\mathfrak{K}} (a, b) = \mathfrak{K} (a, b)$ if $a, b \not \in K$). 
Now, by Lemma~\ref{lem:Fomin properties}\ref{item:general zero-replacing rule a} (which applies as $\mult_{j, j+1} = 0$), $\mathfrak{Z}^{\widetilde{\mathfrak{K}}}_\alpha$ 
equals zero if some factor $f_b(x_j), \ldots, f_b^{(s_{j} + s_{j+1} - 1)}(x_{j})$ appears 
twice as $\widetilde{\mathfrak{K}}(i_p, \cdot)$ or $\widetilde{\mathfrak{K}}(k_q, \cdot)$, 
for some $1 \leq p \leq s_j$ or $1 \leq q \leq s_{j+1}$.  
Hence, the only terms $\mathfrak{Z}^{\widetilde{\mathfrak{K}}}_\alpha$ that contribute to the leading order in $\delta$ are of the form
\begin{align*}
\begin{array}{l l l l l}
\widetilde{\mathfrak{K}}(k_1, b) & \in & \Big\{ \dfrac{ \delta^{s_j}}{s_j !} \, f_b^{(s_j)}(x_{j}), & \ldots, & \dfrac{\delta^{s_j + s_{j+1} -1 }}{(s_j + s_{j+1} -1) !} \, f_b^{(s_{j} + s_{j+1} - 1)}(x_{j}) \Big\}\\
\widetilde{\mathfrak{K}}(k_2, b)  & \in & \Big\{ \dfrac{\delta^{s_j-1}}{(s_j -1)!} \, f_b^{(s_j)}(x_{j}) , & \ldots , & \dfrac{\delta^{s_j + s_{j+1} -2}}{(s_j + s_{j+1} -2) !} \, f_b^{(s_j + s_{j+1} -1)}(x_{j})  \Big\} \\
& \vdots  &&&\\
\widetilde{\mathfrak{K}}(k_{s_{j+1}}, b)  & \in  & \Big\{  \dfrac{\delta^{s_j - s_{j+1} +1}}{(s_j - s_{j+1} +1 )!} \, f_b^{(s_{j} )}(x_{j}), &  \ldots, & \dfrac{ \delta^{s_j}}{s_j !} \, f_b^{(s_{j} + s_{j+1} - 1)}(x_{j}) \Big\} ,
\end{array}
\end{align*}
where we have also removed 
the subleading error terms $O(\delta^{s_j + s_{j+1}}), \ldots, O(\delta^{s_j + 1})$ 
(and the first term on the last line is present since $s_{j} \geq s_{j+1}$).
Note also that in the table above, each $\widetilde{\mathfrak{K}} (k_{q}, b)$, with $1 \leq q \leq s_{j+1}$, 
has $s_{j+1}$ alternative forms to take, 
the $r$:th one being 
\begin{align*}
\frac{\delta^{s_j - q + r}}{(s_j - q + r)!} \, f_b^{(s_j - 1 + r)}(x_{j}), \qquad 1 \leq r \leq s_{j+1}.
\end{align*}
If the higher-order derivatives $f_b^{(s_{j} )}(x_{j}), \ldots , f_b^{(s_{j} + s_{j+1} - 1)}(x_{j})$ appear twice as kernels $\widetilde{\mathfrak{K}} (k_q, b)$, then $\mathfrak{Z}^{\widetilde{\mathfrak{K}}}_\alpha = 0$. 
Therefore, the kernels $\widetilde{\mathfrak{K}}$ in the table above that potentially yield a non-zero inverse Fomin sum $\mathfrak{Z}^{\widetilde{\mathfrak{K}}}_\alpha$ are indexed by permutations $\tau$ of $ s_{j+1} $ elements, and given by
\begin{align*}
\widetilde{\mathfrak{K}}_{\tau} (k_q, b) 
= \frac{\delta^{s_j - q + \tau(q)}}{(s_j - q + \tau(q))!} \, f_b^{(s_j - 1 + \tau(q))}(x_{j}), \qquad \textnormal{for all } 1 \leq q \leq s_{j+1}.
\end{align*}
For instance, when $\tau = \mathrm{id}$ is an identity permutation, we obtain
\begin{align*}
\widetilde{\mathfrak{K}}_{\mathrm{id}} (k_1, b) 
\; = \; \frac{\delta^{s_j }}{s_j!} \, f_b^{(s_j )}(x_{j}), \, \ldots, \, \widetilde{\mathfrak{K}}_{\mathrm{id}} (k_{s_{j+1}}, b) 
\; = \; \frac{\delta^{s_j }}{s_j!} \, f_b^{(s_j + s_{j+1} -1 )}(x_{j}).
\end{align*}
Note that, up to the factors $\delta^{s_j }/s_j!$, this coincides with the kernel $\hat{\mathfrak{K}}$ for $\PartF_{\hat{\alpha}} (\hat{\bs{x}})$ given in~\eqref{eq:fused kernels}. 
In particular, by linearity of inverse Fomin type sums, we obtain
\begin{align*}
\mathfrak{Z}^{\widetilde{\mathfrak{K}}_{\mathrm{id}} }_\alpha (\hat{\bs{x}})
= \Big( \frac{\delta^{s_j }}{s_j!} \Big)^{s_{j+1}} \mathfrak{Z}^{\hat{\mathfrak{K}} }_\alpha (\hat{\bs{x}})
= \Big( \frac{\delta^{s_j }}{s_j!} \Big)^{s_{j+1}} \PartF_{\hat{\alpha}} (\hat{\bs{x}}).
\end{align*}
In general, via a similar linearity argument and the antisymmetry property~\eqref{eq:simple anti-symmetry of inv Fomin}, we obtain
\begin{align*}
\mathfrak{Z}^{\widetilde{\mathfrak{K}}_\tau }_\alpha (\hat{\bs{x}})
= \; & \sign(\tau) \, \prod_{q=1}^{s_{j+1}}
\frac{\delta^{s_j - q + \tau(q)}}{(s_j - q + \tau(q))!} \times \mathfrak{Z}^{\hat{\mathfrak{K}} }_\alpha (\hat{\bs{x}})
\\
= \; & \delta^{s_j s_{j+1}} \,  \sign(\tau) \, \prod_{q=1}^{s_{j+1}}
\frac{1}{(s_j - q + \tau(q))!} \times \PartF_{\hat{\alpha}} (\hat{\bs{x}}).
\end{align*}
Summing the contributions of the different permutations $\tau$, we obtain the expansion 
\begin{align*}
\PartF_{\alpha} (\bs x) 
\; = \;\; & \delta^{s_j s_{j+1}} \, \PartF_{\hat{\alpha}} (\hat{\bs{x}}) 
\; \sum_{\tau} \sign(\tau) \prod_{q=1}^{s_{j+1}}
\frac{1}{(s_j - q + \tau(q))!} \; + \; O(\delta^{s_j s_{j+1} + 1})\\
\; = \;\; & 
 \delta^{s_j s_{j+1}} \, \PartF_{\hat{\alpha}} (\hat{\bs{x}}) 
 \det
 \bigg(
 \frac{1}{(s_j - \ell + k)!}
 \bigg)_{\ell, k = 1}^{s_{j+1}} \; + \; O(\delta^{s_j s_{j+1} + 1}) .
\end{align*}
The asserted asymptotics~\eqref{eq:ASY no links case} 
then follows by inserting to the above the value of the determinant,
which will be evaluated explicitly in Lemma~\ref{lem:first determinant evaluation} in Appendix~\ref{app:det}.
\end{proof}

\subsection{Asymptotics, case $\mult_{j, j+1} = \min \{ s_j, s_{j+1}\}$}

To treat this case, we will need some further combinatorial facts, discussed next.

\subsubsection{More on inverse Fomin type sums}

We begin with yet another property of general inverse Fomin type sums.
For background, let us recall Fomin's original observation~\cite{Fomin:LERW_and_total_positivity}: 
for the $N$-link \quote{rainbow pattern} 
\gls{symb:rainbow} $\{ \{ 1, 2N\}, \{ 2, 2N -1 \}, \ldots, \{ N, N+1\} \}$  
(see Figure~\ref{fig:watermelon UST}), 
the inverse Fomin type sum becomes just a single determinant, with rows labelled by the left endpoints of the links, 
$\{1, 2, \ldots, N\}$, and columns by the right ones, $\{2N, 2N -1, \ldots, N+1\}$:
\begin{align*}
\mathfrak{Z}_\rainbow{N}^{\mathfrak{K}} = \det \big(\mathfrak{K}(i, 2N + 1 - j) \big)_{i, j = 1}^N.
\end{align*}
More generally, for a set $L =( \ell_0 + 1, \ldots, \ell_0 + 2m ) = (\ell_1, \ldots, \ell_{2m}) \subset \{ 1, \ldots, 2N\}$
of $2m \leq 2N$ consecutive integers, let us denote
\begin{align*}
\mathfrak{Z}_\rainbow{L}^{\mathfrak{K}} = \det \big(\mathfrak{K}(\ell_i, \ell_{2m + 1 - j}) \big)_{i, j = 1}^m.
\end{align*}

For another piece of notation, 
given two disjoint subsets $I,K \subset \{ 1, \ldots, 2N\}$ of size $m$, let us denote by $[\mathfrak{Z}_\alpha^{\mathfrak{K}}]_{I,K}$ the terms in the polynomial $\mathfrak{Z}_\alpha^{\mathfrak{K}}$ that contain a product $\prod_{r=1}^m \mathfrak{K}(i_r, k_{\tau(r)})$, for some permutation $\tau$ of $m$ elements. 
(One may think of these kernel entries as diverging in the considered asymptotics.) 
We can now state the key combinatorial lemma.

\begin{lemma}
\label{lem:diverging asymptotics}
Let $L = (\ell_1, \ldots, \ell_{2m}) \subset \{1, \ldots, 2N \}$ be a set of $2m$ subsequent integers. 
Suppose that $\alpha \in \LP_N$ contains the rainbow links $\{ \{ \ell_1, \ell_{2m}\},  \{ \ell_2, \ell_{2m-1}\},\ldots, \{ \ell_m, \ell_{m+1} \} \}$ on $L$, and denote $I = \{ \ell_1, \ldots, \ell_m\}$ and $K = \{ \ell_{m+1}, \ldots, \ell_{2m} \}$. 
Then, we have
\begin{align*}
[\mathfrak{Z}_\alpha^{\mathfrak{K}}]_{I,K} = \mathfrak{Z}_\rainbow{L}^{\mathfrak{K}} \; \mathfrak{Z}_{\alpha \setminus \alpha \setminus \rainbow{L}}^{\hat{\mathfrak{K} }},
\end{align*}
where $\alpha \setminus \rainbow{L}$ is the link pattern of $N-m$ links 
obtained by removing the links $ \rainbow{L}$ from $\alpha$ and re-labeling the remaining boundary points in order, 
and $\hat{\mathfrak{K} }$ is the kernel on $\{1, \ldots, 2(N-m) \}^2$ obtained by removing the entries $L$ from $\mathfrak{K}$ and re-labeling the remaining indices in order.
\end{lemma}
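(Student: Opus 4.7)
The plan is to analyze both sides of the claimed identity as polynomials in the kernel entries and match their contributions. The starting point is the expansion
\[
[\mathfrak{Z}_\alpha^{\mathfrak{K}}]_{I,K} = \sum_{\beta \DPgeq \alpha} \# \CItilingsof(\alpha/\beta) \; [\Delta_\beta^{\mathfrak{K}}]_{I,K}
\]
coming from the definition~\eqref{eq: inverse Fomin-type sum}. From here the strategy is to show that each of the three ingredients on the right---the selection of contributing $\beta$'s, the minor $[\Delta_\beta^{\mathfrak{K}}]_{I,K}$, and the tiling count $\# \CItilingsof(\alpha/\beta)$---factors through the $L$/$L^c$-split, so that the sum collapses into the claimed product.

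The first step is a structural lemma: for $\beta \DPgeq \alpha$, the extracted minor $[\Delta_\beta^{\mathfrak{K}}]_{I,K}$ is non-vanishing only when $\beta$'s links restricted to $L$ coincide with the rainbow $\rainbow{L}$, that is, $\beta = \rainbow{L} \sqcup \beta'$ for a link pattern $\beta'$ on the complement $L^c$. The Leibniz analysis of $\Delta_\beta^{\mathfrak{K}}$ first forces the \emph{balance condition} $|a(\beta) \cap L| = m$ on the numbers of $a$- and $b$-endpoints of $\beta$ in $L$. Combined with planarity (which rules out the simultaneous presence of $I$-to-$L^c_{\mathrm{right}}$ and $K$-to-$L^c_{\mathrm{left}}$ links of $\beta$, as such a pair of links would necessarily cross) and with $\beta \DPgeq \alpha$ (which, via the Young-diagram characterization, forbids $I$-$I$ or $K$-$K$ links of $\beta$ within $L$ since these would break containment of $\alpha$'s diagram on the $L$-columns), this pins the $L$-links of $\beta$ down to the rainbow.

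Second, for such $\beta = \rainbow{L} \sqcup \beta'$ the Laplace expansion of $\Delta_\beta^{\mathfrak{K}}$ along the $I$-rows yields $[\Delta_\beta^{\mathfrak{K}}]_{I,K} = \mathfrak{Z}_\rainbow{L}^{\mathfrak{K}} \cdot \Delta_{\beta'}^{\hat{\mathfrak{K}}}$: the $I$-indices occupy a contiguous block in $\beta$'s ordered $a$-list, and by the rainbow structure the $K$-indices occupy the same positions in $\beta$'s $b$-list with $b_{q+i}=\ell_{2m+1-i}$. Hence only one term in the Laplace expansion survives; its sign is $+1$ (as the row and column index sets within the block coincide), the matched sub-minor $\det(\mathfrak{K}(\ell_i,\ell_{2m+1-j}))_{i,j=1}^m$ matches the convention of $\mathfrak{Z}_\rainbow{L}^{\mathfrak{K}}$ on the nose, and the complementary sub-minor is $\Delta_{\beta'}^{\hat{\mathfrak{K}}}$. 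Finally, one checks that $\beta \DPgeq \alpha$ is equivalent to $\beta' \DPgeq \alpha \setminus \rainbow{L}$, and $\# \CItilingsof(\alpha/\beta) = \# \CItilingsof((\alpha \setminus \rainbow{L})/\beta')$, because the Young diagrams of $\alpha$ and of $\beta$ both carry the rainbow on $L$ and thus agree on the $L$-columns, so the skew shape $\alpha/\beta$ is supported entirely on $L^c$. Assembling the three ingredients, the factor $\mathfrak{Z}_\rainbow{L}^{\mathfrak{K}}$ pulls out of the sum over $\beta'$ and the remainder reproduces $\mathfrak{Z}_{\alpha \setminus \rainbow{L}}^{\hat{\mathfrak{K}}}$.

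The main obstacle is the structural lemma in the first step: while planarity together with the balance condition rules out most of the non-rainbow configurations on $L$, carefully excluding the remaining ones---such as $\beta$'s containing balanced $I$-$I$ and $K$-$K$ links within $L$, which nominally satisfy the balance condition---requires using the $\DPgeq$ constraint in an essential way, most cleanly through the Young-diagram characterization of the partial order. Rigorously verifying the parallel claim that the cover-inclusive Dyck-tiling count factors over the $L$/$L^c$-split will be the remaining technical core of the argument.
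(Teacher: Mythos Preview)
Your structural lemma is incorrect: it is not true that $[\Delta_\beta^{\mathfrak{K}}]_{I,K}=0$ for every $\beta \DPgeq \alpha$ whose links on $L$ fail to form the rainbow. The cases you enumerate miss the configuration where $\beta$ has a link from $I$ into $L^c$ on the left and a link from $K$ into $L^c$ on the right; such links neither cross each other nor violate $\beta \DPgeq \alpha$, and they are perfectly compatible with the balance condition. Concretely, take $N=4$, $L=\{3,4,5,6\}$, $\alpha=\{\{1,2\},\{3,6\},\{4,5\},\{7,8\}\}$, and $\beta=\{\{1,8\},\{2,3\},\{4,5\},\{6,7\}\}$. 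The Dyck paths of $\alpha$ and $\beta$ coincide at positions $3,4,5$ and satisfy $\beta\geq\alpha$ elsewhere, so $\beta \DPgeq \alpha$; the only $L$-internal link of $\beta$ is $\{4,5\}$; yet in $\Delta_\beta^{\mathfrak{K}}$ the rows indexed by $a_3=4,\,a_4=6$ and columns by $b_2=3,\,b_3=5$ produce Leibniz terms containing $\mathfrak{K}(4,5)\mathfrak{K}(6,3)=\mathfrak{K}(3,6)\mathfrak{K}(4,5)$, which \emph{is} an $I$--$K$ product. Hence $[\Delta_\beta^{\mathfrak{K}}]_{I,K}\neq 0$. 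The correct identity therefore relies on cancellations among several such $\beta$'s weighted by their Dyck-tiling counts, and your outline provides no mechanism for that; in particular the subsequent factorization of $\#\CItilingsof(\alpha/\beta)$ over the $L/L^c$-split cannot be applied, since this $\beta$ does not split.

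The paper sidesteps any $\beta$-by-$\beta$ analysis. It imports the $m=1$ identity $[\mathfrak{Z}_\alpha^{\mathfrak{K}}]_{\{j\},\{j+1\}}=\mathfrak{K}(j,j+1)\,\mathfrak{Z}_{\alpha\setminus\{j,j+1\}}^{\mathfrak{K}\setminus\{j,j+1\}}$ from~\cite[Proposition~2.21]{KKP:Boundary_correlations_in_planar_LERW_and_UST} as a black box and iterates it, peeling off innermost rainbow links one at a time, to identify the terms of $\mathfrak{Z}_\alpha^{\mathfrak{K}}$ containing the particular product $\prod_r\mathfrak{K}(\ell_r,\ell_{2m+1-r})$ as $\mathfrak{Z}_{\alpha\setminus\rainbow{L}}^{\hat{\mathfrak{K}}}\prod_r\mathfrak{K}(\ell_r,\ell_{2m+1-r})$. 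A general $I$--$K$ matching $\prod_r\mathfrak{K}(\ell_r,\ell_{2m+1-\tau'(r)})$ is then handled by applying the antisymmetry~\eqref{eq:simple anti-symmetry of inv Fomin} to a permutation of the $K$-labels, which contributes only a factor $\mathrm{sgn}(\tau')$ and leaves $\widehat{\mathfrak{K}^\tau}=\hat{\mathfrak{K}}$ unchanged; summing over $\tau'$ recombines the products into the determinant $\mathfrak{Z}_{\rainbow{L}}^{\mathfrak{K}}$.
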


\begin{proof}
The case $m=1$ can be found in~\cite[Proposition~2.21]{KKP:Boundary_correlations_in_planar_LERW_and_UST}:
\begin{align*}
[\mathfrak{Z}_\alpha^{\mathfrak{K}}]_{\{ j\}, \{ j+1\} } 
= \mathfrak{K}(j, j+1) \, \mathfrak{Z}_{\alpha \setminus \{j, j+1 \}}^{\mathfrak{K}\setminus \{j, j+1 \} }.
\end{align*}
Using this case iteratively, one finds the terms in the polynomial $\mathfrak{Z}_\alpha^{\mathfrak{K}}$ containing the product 
\begin{align*}
\mathfrak{K}(\ell_m, \ell_{m+1}) \, \mathfrak{K}(\ell_{m-1}, \ell_{m+2}) \, \cdots \, \mathfrak{K}(\ell_1, \ell_{2m}) = \prod_{r=1}^m \mathfrak{K}(\ell_r, \ell_{2m+1-r}) 
\end{align*} 
as
\begin{align*}
 \mathfrak{Z}_{\alpha \setminus \rainbow{L}}^{ \hat{\mathfrak{K} } }  \prod_{r=1}^m \mathfrak{K}(\ell_r, \ell_{2m+1-r}) , \qquad m \geq 1 .
\end{align*}
More generally, let us identify permutations $\tau$ that only permute $K=\{ \ell_{m +1}, \ldots, \ell_{2m} \}$ with permutations $\tau'$ of $r \in \{ 1, 2, \ldots, m\}$ so that $\ell_{2m+1-r}$ is mapped by $\tau$ to $\ell_{2m+1-\tau'(r)}$. 
We denote by $\mathfrak{K}^\tau$ the symmetric kernel obtained as $\mathfrak{K}^\tau (i, j) = \mathfrak{K}(\tau(i), \tau(j))$. 
Then, we have 
\begin{align}
\label{eq:kernel entry product 2}
\prod_{r=1}^m \mathfrak{K}(\ell_r, \ell_{2m+1-\tau'(r)}) = \prod_{r=1}^m \mathfrak{K}^\tau(\ell_r, \ell_{2m+1-r}).
\end{align}
It now follows from the antisymmetry property~\eqref{eq:simple anti-symmetry of inv Fomin} that
\begin{align*}
\mathfrak{Z}_\alpha^{\mathfrak{K}^\tau} = \sign(\tau) \, \mathfrak{Z}_\alpha^{\mathfrak{K}} = \sign(\tau') \, \mathfrak{Z}_\alpha^{\mathfrak{K}}.
\end{align*}
In particular, the terms in the polynomial $\mathfrak{Z}_\alpha^{\mathfrak{K}}$ 
containing the product~\eqref{eq:kernel entry product 2} are $\sign(\tau')$ times the terms of $\mathfrak{Z}_\alpha^{\mathfrak{K}^\tau}$ 
containing $\prod_{r=1}^m \mathfrak{K}^\tau(\ell_r, \ell_{2m+1-r})$, i.e.,
\begin{align*}
\sign(\tau') \, \mathfrak{Z}_{\alpha \setminus \rainbow{L}}^{ \widehat{\mathfrak{K}^\tau } }  \, \prod_{r=1}^m \mathfrak{K}^\tau (\ell_r, \ell_{2m+1-r}) 
= \sign(\tau') \, \mathfrak{Z}_{\alpha \setminus \rainbow{L}}^{ \hat{\mathfrak{K} } }  \, \prod_{r=1}^m \mathfrak{K} (\ell_r, \ell_{2m+1-\tau'(r)}).
\end{align*}
Finally, $[\mathfrak{Z}_\alpha^{\mathfrak{K}}]_{I,K}$ is by definition obtained by summing over $\tau'$, i.e.,
\begin{align*}
[\mathfrak{Z}_\alpha^{\mathfrak{K}}]_{I,K} 
= \; & \mathfrak{Z}_{\alpha \setminus \rainbow{L}}^{ \hat{\mathfrak{K} } }  \sum_{\tau'} \sign(\tau') \prod_{r=1}^m \mathfrak{K} (\ell_r, \ell_{2m+1-\tau'(r)}) \\
= \; &
\mathfrak{Z}_{\alpha \setminus \rainbow{L}}^{ \hat{\mathfrak{K} } } \det \big(\mathfrak{K} (\ell_i, \ell_{2m+1-j}) \big)_{i,j=1}^m \\
= \; &
\mathfrak{Z}_{\alpha \setminus \rainbow{L}}^{ \hat{\mathfrak{K} } } \mathfrak{Z}_\rainbow{L}^{\mathfrak{K}}.
\end{align*}
This finishes the proof.
\end{proof}

\subsubsection{Application to Theorem~\ref{thm:ASY}}

By symmetry, we may assume that $s_{j+1} \leq s_j$.

\begin{proposition}
In the setup of Theorem~\ref{thm:ASY}, assuming $\mult_{j, j+1} = s_{j+1} \leq s_j$, we have
\begin{align*}
\lim_{x_{j+1}, x_j \to \xi} 
\frac{\PartF_{\alpha} (\bs{x}) }{|x_{j+1} - x_j |^{ -s_j s_{j+1} - s_{j+1} } }
= \; & C(s_j, s_{j+1}) \, \PartF_{\hat{\alpha}} (\hat{\bs{x}}) ,
\\ \nonumber
\textnormal{where} 
\qquad 
C(s_j, s_{j+1}) = C(s_j, s_{j+1},s_{j+1}) 
= \; &  \frac{1}{\pi^{s_{j+1}}} \prod_{\ell=1}^{s_{j+1}} (s_{j} - s_{j+1} + \ell)! \times \prod_{r=1}^{s_{j+1}} (r-1)! .
\end{align*}
\end{proposition}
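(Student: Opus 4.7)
The proof will adapt the template of Proposition~\ref{prop:ASY no links case}: express $\PartF_\alpha$ as an inverse Fomin type sum $\mathfrak{Z}^{\mathfrak K}_\alpha$ via Theorem~\ref{thm:scaling limit of pinched pertition functions}, and extract its leading behavior as $\delta := x_{j+1}-x_j \to 0$. The situation is now reversed from the $\mult_{j,j+1}=0$ case: the kernel entries between positions $j$ and $j+1$ \emph{diverge}, and we must identify the precise combination of them producing the prescribed pole of order $s_{j+1}(s_j+1)$.

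The first step is to isolate the rainbow structure. Since $\mult_{j,j+1}=s_{j+1}\leq s_j$, the valenced pattern $\alpha$ contains $s_{j+1}$ links $\link{j}{j+1}$, which under the unfusing map $\iota$ lift to the rainbow $\rainbow{L}\subset\iota(\alpha)$ on the consecutive index set $L=\{\summ_j-s_{j+1}+1,\ldots,\summ_j+s_{j+1}\}$. Writing $I$ for the left half of $L$ (the outermost $s_{j+1}$ indices at position $j$) and $K$ for the right half (all $s_{j+1}$ indices at position $j+1$), Lemma~\ref{lem:diverging asymptotics} yields the factorisation
\begin{align*}
[\mathfrak{Z}^{\mathfrak K}_\alpha]_{I,K} \;=\; \mathfrak{Z}^{\mathfrak K}_{\rainbow{L}}\cdot\mathfrak{Z}^{\hat{\mathfrak K}}_{\alpha\setminus\rainbow{L}}.
\end{align*}
A degree count --- a divergent kernel $\mathfrak K(\summ_{j-1}+p,\summ_j+q)$ scales as $\delta^{-(p+q)}$, and over any perfect matching of $K$ with a subset of position-$j$ indices the total $\sum(p_r+q_r)$ is maximised by the outermost choice, giving exactly $s_{j+1}(s_j+1)$ --- then shows $\mathfrak{Z}^{\mathfrak K}_\alpha=[\mathfrak{Z}^{\mathfrak K}_\alpha]_{I,K}+o(\delta^{-s_{j+1}(s_j+1)})$.

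The second step is to evaluate the leading behavior of $\mathfrak{Z}^{\mathfrak K}_{\rainbow{L}}$ explicitly. Using $\partial_u^p\partial_v^q\frac{1}{\pi(u-v)^2}=\frac{(-1)^p(p+q+1)!}{\pi(u-v)^{p+q+2}}$, the $(a,b)$-entry of the rainbow matrix is $\frac{(-1)^{s_{j+1}+b}(s_j+a-b)!}{\pi\,\delta^{s_j+a-b+1}}$. After pulling out the column factor $\frac{(-1)^{s_{j+1}+b}}{\pi\delta^{s_j+1}}$ (observing that the residual $\delta^{a-b}$ factors cancel across the determinant by row/column multilinearity) one is reduced to computing $\det[(s_j+a-b)!]_{a,b=1}^{s_{j+1}}$. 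This closed-form evaluation will be recorded as a separate lemma in Appendix~\ref{app:det}, in the spirit of Lemma~\ref{lem:first determinant evaluation}, and obtained by elementary row/column manipulations to equal $\pm\prod_{\ell=1}^{s_{j+1}}(s_j-s_{j+1}+\ell)!\,\prod_{r=1}^{s_{j+1}}(r-1)!$.

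Finally, $\mathfrak{Z}^{\hat{\mathfrak K}}_{\alpha\setminus\rainbow{L}}$ converges as $x_j,x_{j+1}\to\xi$ directly to $\PartF_{\hat\alpha}(\hat{\bs x})$: with the indices of $L$ removed, the surviving kernel $\hat{\mathfrak K}$ does not depend on $x_{j+1}$ at all, and its remaining position-$j$ entries involve the innermost $s_j-s_{j+1}$ labels, matching exactly the kernel defining $\PartF_{\hat\alpha}$ at a point of valence $s_j-s_{j+1}$ at $\xi$. Combining the three factors produces the claimed limit with the stated constant, once the overall sign is verified (by tracking the sign contributions $(-1)^{s_{j+1}^2}$ from the common factor, $(-1)^{s_{j+1}(s_{j+1}+1)/2}$ from the column $(-1)^b$, and the determinant sign) to be $+1$. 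The main obstacle is precisely this sign and degree bookkeeping in the second step together with the closed-form determinant evaluation; the subleading estimate in the first step and the identification of the residual factor are comparatively routine.
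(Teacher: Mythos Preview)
Your proposal is correct and follows essentially the same route as the paper: factor via Lemma~\ref{lem:diverging asymptotics}, identify the residual factor $\mathfrak{Z}^{\hat{\mathfrak K}}_{\alpha\setminus\rainbow{L}}$ with $\PartF_{\hat\alpha}(\hat{\bs x})$, and evaluate the rainbow determinant $\mathfrak{Z}^{\mathfrak K}_{\rainbow{L}}$ explicitly using a closed-form factorial determinant (indeed recorded as Lemma~\ref{lem:second determinant evaluation}). Your determinant $\det[(s_j+a-b)!]$ is the paper's $\det[(r+s+t-1)!]$ after a column reversal, so the only remaining bookkeeping is exactly the sign tracking you flagged, which the paper handles by pulling out $(-1)^{s-1}$ column-wise and verifying $(-1)^{\lfloor m/2\rfloor+m(m-1)/2}=+1$.
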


\begin{proof}
By the inverse Fomin type expression for $\PartF_{\alpha}(\bs{x})$ in~\eqref{eq: valenced kernel}, 
it is exactly the kernels $\mathfrak{K}(\iota, k)$, $\iota \in \{ \summ_{j-1} + 1 \, \ldots, \summ_j \}$ and $k \in K$, 
that diverge in the limit $x_{j+1}, x_j \to \xi$. Furthermore, the higher the values of $\iota$ and $k$, the worse the divergence. Hence, with $s_{j+1} \leq s_j$, the leading asymptotics of $\PartF_{\alpha} (\bs{x})$ in this limit 
is determined by $[\mathfrak{Z}_\alpha^{\mathfrak{K}}]_{I,K}$.
Now, Lemma~\ref{lem:diverging asymptotics} yields
\begin{align*}
[\mathfrak{Z}_\alpha^{\mathfrak{K}}]_{I,K} 
= \mathfrak{Z}_\rainbow{L}^{\mathfrak{K}} \, \mathfrak{Z}_{\alpha \setminus \rainbow{L}}^{\hat{\mathfrak{K} }},
\end{align*}
where one readily observes that 
$\mathfrak{Z}_{\alpha \setminus \rainbow{L}}^{\hat{\mathfrak{K} }} (\hat{\bs{x}}) = \PartF_{\hat{\alpha}} (\hat{\bs{x}})$. Also $\mathfrak{Z}_\rainbow{L}^{\mathfrak{K}}$ can be made fully explicit.

First, note that
\begin{align*}
\ExcKH (x_{j+1}, x_j) = \frac{1}{\pi} \frac{1}{(x_{j+1} - x_j)^2}, \qquad
\partial_{j+1}^{m_{j+1}} \partial_{j}^{m_{j}} \ExcKH (x_{j+1}, x_j) 
= \frac{1}{\pi} \frac{(-1)^{m_{j+1}} (m_j + m_{j+1} +1)!}{(x_{j+1} - x_j)^{m_j + m_{j+1} + 2 }} .
\end{align*}
Denoting the elements of $I$ and $K$ in order by $i_p$ and $k_q$, for $1 \leq p, q \leq s_{j+1}$, 
and denoting $\delta := x_{j+1} - x_j$ and $s_j - s_{j+1} = t$, we have
\begin{align*}
\mathfrak{K}(i_p, k_q) = \partial_{j+1}^{q-1} \partial_{j}^{t + p -1 } \ExcKH (x_{j+1}, x_j) =
\frac{1}{\pi} \frac{(-1)^{q-1} (p + q +t - 1)!}{\delta^{p+q + t}} .
\end{align*}
Elementary operations on determinants now yield 
(denoting $s_{j+1} =: m$ for lighter notation)
\begin{align*}
\mathfrak{Z}_\rainbow{L}^{\mathfrak{K}}
= \; & \det (\mathfrak{K}(i_r, k_{m-s}))_{r, s = 1}^m 
\\
= \; & (-1)^{\lfloor m/2 \rfloor} \det (\mathfrak{K}(i_r, k_{s}))_{r, s = 1}^m 
\\
= \; & (-1)^{\lfloor m/2 \rfloor} \det \Big( \frac{1}{\pi} \frac{(-1)^{s-1} (r+s+t - 1)!}{\delta^{r+s+ t}} \Big)_{r, s = 1}^m \\
= \; & \frac{1}{\pi^m} \, \delta^{-tm - m(m+1)} \det ((r+s+t - 1)!)_{r, s = 1}^m
\end{align*}
The determinant above has been evaluated in Lemma~\ref{lem:second determinant evaluation}, which completes the proof.
\end{proof}

\subsection{Asymptotics, general case}
\label{subsec: asy gen}

The idea of the proof in the general case is to first analyze a reference term that we expect to be one of those that exhibit the worst divergence as $x_j, x_{j+1} \to \xi$, and then relate it to other diverging terms.

\subsubsection{The reference term}

The idea is to introduce a permutation $\tau$ that collects the worst divergence to the entries $\mathfrak{K}^\tau(i, k)$, with 
\begin{align*}
i \in \{ \summ_j -\mult_{j, j+1} + 1, \ldots, \summ_j \} =:I
\qquad \textnormal{and} \qquad 
k \in \{ \summ_j + 1, \ldots, \summ_j + \mult_{j, j+1} \} =: K .
\end{align*}
Note that by the antisymmetry property~\eqref{eq:simple anti-symmetry of inv Fomin}, 
we have $\mathfrak{Z}^{\mathfrak{K}^\tau}_\alpha = \sign(\tau) \mathfrak{Z}^{\mathfrak{K}}_\alpha$, 
so we may analyze $\mathfrak{Z}^{\mathfrak{K}^\tau}_\alpha$ equally well. 
Now, one such permutation $\tau $ only acts on the indices $ \summ_j + 1, \ldots, \summ_{j+1} $, permuting them to $\summ_{j+1} -\mult_{j, j+1} + 1, \summ_{j+1} -\mult_{j, j+1} + 2, \ldots, \summ_{j+1} , \summ_j + 1, \summ_j + 2, \ldots, \summ_{j+1} -\mult_{j, j+1}$. 
By Lemma~\ref{lem:diverging asymptotics}, we have
\begin{align*}
[\mathfrak{Z}_\alpha^{\mathfrak{K}^\tau}]_{I,K} 
= \mathfrak{Z}_\rainbow{L}^{\mathfrak{K}^\tau} \, \mathfrak{Z}_{\alpha \setminus \rainbow{L}}^{\widehat{\mathfrak{K}^\tau }},
\end{align*}
where by the choice of $\tau$ and then by Proposition~\ref{prop:ASY no links case}, we have
\begin{align}
\begin{split}
\mathfrak{Z}_{\alpha \setminus \rainbow{L}}^{\widehat{\mathfrak{K}^\tau }}
= \; & \PartF_{\alpha \setminus \rainbow{L}} (\bs{x}) \\
\label{eq:ASY general case easy factor}
= \; & (1+ O (\delta)) \, C(s_j - \mult_{j, j+1}, s_{j+1} - \mult_{j, j+1}) \,  \delta^{(s_j - \mult_{j, j+1})(s_{j+1} - \mult_{j, j+1})} \,  \PartF_{\hat{\alpha}} (\hat{\bs{x}}) ,
\end{split}
\end{align}
where $\delta := x_{j+1} - x_j$. 
On the other hand, denoting 
\begin{align*}
i_p =: \summ_j -\mult_{j, j+1} + p
\qquad \textnormal{and} \qquad 
k_p = \summ_{j+1} -\mult_{j, j+1} + p, \quad 1 \leq p \leq \mult_{j, j+1} , 
\end{align*}
we have
\begin{align*}
\mathfrak{K}(i_p, k_q) 
= \; & \partial_{j+1}^{s_j -\mult_{j, j+1} + q - 1} \partial_{j}^{s_{j+1} -\mult_{j, j+1} + p - 1} \ExcKH (x_{j+1}, x_j) \\
= \; &
\frac{1}{\pi} \frac{(-1)^{s_j -\mult_{j, j+1} + q-1} ( s_j -\mult_{j, j+1} + q + s_{j+1} -\mult_{j, j+1} + p - 1)!}{\delta^{s_j + s_{j+1} -2\mult_{j, j+1} + p + q }} ,
\end{align*}
and by definition, we have
\begin{align*}
\mathfrak{Z}_\rainbow{L}^{\mathfrak{K}^\tau} 
= \; & (-1)^{\lfloor \mult_{j, j+1} /2 \rfloor} \det(\mathfrak{K}(i_p, k_q))_{p, q =1}^{\mult_{j, j+1}} \\
= \; & (-1)^{s_j -\mult_{j, j+1}} \delta^{-\mult_{j, j+1}(s_j + s_{j+1} -2\mult_{j, j+1}) - \mult_{j, j+1} (\mult_{j, j+1} + 1 )} \\
\; & \times \det (( s_j -\mult_{j, j+1} + q + s_{j+1} -\mult_{j, j+1} + p - 1)!)_{p, q =1}^{\mult_{j, j+1}}.
\end{align*}
Without analyzing the determinant here, we can already conclude that
\begin{align} \label{eq:ASY general case reference term conclusion}
[\mathfrak{Z}_\alpha^{\mathfrak{K}^\tau}]_{I,K}
= C(s_j, s_{j+1}, \mult_{j, j+1}) \, \delta^{s_j s_{j+1}-2\mult_{j, j+1}(s_j + s_{j+1} -\mult_{j, j+1}) - \mult_{j, j+1} } \, \PartF_{\hat{\alpha}} (\hat{\bs{x}}).
\end{align}
Note that this is the power of $\delta$ is as claimed in Theorem~\ref{thm:ASY}.

\subsubsection{A general term}

The treatment of a general term is inductive. 
Suppose that $[\mathfrak{Z}_\alpha^{\mathfrak{K}}]_{I,K}$ has been analyzed for some subsets $I \subset \{ \summ_{j-1} + 1, \ldots, \summ_j \}$ and $ K \subset \{ \summ_{j} + 1, \ldots, \summ_{j+1} \}$ 
(above, we considered $I= \{ \summ_j -\mult_{j, j+1} + 1, \ldots, \summ_j \} $ and $K= \{ \summ_j + 1, \ldots, \summ_j + \mult_{j, j+1} \} $) yielding~\eqref{eq:ASY general case reference term conclusion} (possibly with the constant zero).
Next, consider the change of lowering one index in one of the sets by one unit; denote the new index sets by $I'$ and $K'$. 
Now, similarly to the reference term, there is a permutation $\tau$ such that $\tau(I') = \{ \summ_j -\mult_{j, j+1} + 1, \ldots, \summ_j \}$ and $\tau(K') =  \{ \summ_j + 1, \ldots, \summ_j + \mult_{j, j+1} \}$ and 
\begin{align*}
[\mathfrak{Z}_\alpha^{\mathfrak{K}^\tau}]_{I',K'} 
= \mathfrak{Z}_\rainbow{L}^{\mathfrak{K}^\tau}\, \mathfrak{Z}_{\alpha \setminus \rainbow{L}}^{\widehat{\mathfrak{K}^\tau }}.
\end{align*}
One then readily verifies that, 
when changing from $I,K$ to $I', K'$, 
the expression $\mathfrak{Z}_{\alpha \setminus \rainbow{L} }$ 
will still be a negative-power monomial in $\delta$, the power of $\delta$ increasing by one, i.e., 
the monomial exhibits a milder divergence upon $\delta \downarrow 0$ than the earlier one, 
and the monomial is multiplied by a constant that depends on $s_j$, $s_{j+1}$, $\mult_{j, j+1}$, $I'$, and $K'$.

Let us then consider $\mathfrak{Z}_{\alpha \setminus \rainbow{L}}^{\widehat{\mathfrak{K}^\tau }}$. 
Looking into how~\eqref{eq:ASY general case reference term conclusion} was derived in~\eqref{eq:Taylor series for kernels} 
and below it, this means that one less order of differentiation will be needed in the Taylor polynomials, i.e., 
the result will be the same (cf.~\eqref{eq:ASY general case easy factor}) but converging to zero with 
a polynomial rate in $\delta$ that is one power lower\footnote{The order of the Taylor polynomial, of course, cannot decrease below zero --- in that case, the rate of convergence is not decreased, and $I', K'$ will not contribute to the renormalized limit. The induction hypothesis~\eqref{eq:ASY general case reference term conclusion} is still true, but with the constant being zero.} 
(viz.~converges to zero slower). 
Combining the above analyses concludes the induction step, and the entire proof of Theorem~\ref{thm:ASY}.

\subsection{Simultaneous fusion}
\label{subsec:ASY simul}

Recall that in Proposition~\ref{prop:higher-valence solution limit} we obtained, for $\alpha \in \LP_\multii$, expressions for iterated limits of the unfused partitions functions $\PartF_{\imath(\alpha)}$ when fused to the valence $\multii$. 
(Here, $\imath$ is the unfusing map, see Figures~\ref{fig: UST fused} and~\ref{fig:fusion}.)
The expression was an inverse Fomin type sum, with a kernel denoted $\mathfrak{K}'$ 
below, and up to a constant factor it also coincided with the fused partition function $\PartF_{\alpha}$ (see Lemma~\ref{lem:fusion of fused partition functions}). 
Next, we derive the same expression as a suitably renormalized limit of $\PartF_{\imath(\alpha)}$ under a simultaneous fusion. 
This provides an SLE interpretation of the fused partition functions $\PartF_{\alpha}$ in Appendix~\ref{subsec:Fusing endpoints}.

\begin{proposition}
\label{prop:fused SLE}
In the setup of Proposition~\ref{prop:higher-valence solution limit}, assume furthermore that for any $b \not \in J =J_j$,
the kernel $\mathfrak{K}(a, b) = C_b(x_b - x_a)^{-p_b}$
is given by the same expression for all $a \in J$, where $C_b \neq 0$  and $p_b \geq 0$ are constants depending only on $b$. 
Denote by $\mathfrak{K}'$ the modified kernel described in the end of Proposition~\ref{prop:higher-valence solution limit} \textnormal{(}with $\xi =  x_{\summ_{j-1}+1}$\textnormal{)}, 
and denote $\delta_k := x_{\summ_{j-1}+k}-x_{\summ_{j-1}+1}$. 
Then, we have
\begin{align*}
\mathfrak{Z}^{\mathfrak{K}}_\alpha (x_1, \ldots, x_{2N}) = 
\prod_{p=2}^{s_j} \delta_p \prod_{2 \leq q < r \leq s_j} (\delta_r-\delta_q) \times \mathfrak{Z}^{\mathfrak{K}' }_\alpha ( \hat{\bs{x}} )  \; (1+o(1)) , \qquad \textnormal{as } \delta_{s_j} \downarrow 0 ,
\end{align*}
where $\hat{\bs{x}} := (x_1, \ldots, x_{\summ_{j-1}}, x_{\summ_{j-1}+1}, x_{\summ_{j}+1}, \ldots, x_{2N}) $ and the Landau term $o(1)$ converges to zero as $\delta_{s_j} \downarrow 0$ uniformly over $\hat{\bs{x}} \in K$ for any fixed compact set $ K \subset \chamber_{2N-s_j+1} $.
\end{proposition}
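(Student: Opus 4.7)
The plan is to extend the Taylor-expansion strategy of Proposition~\ref{prop:higher-valence solution limit} so that all the fusions $x_{\summ_{j-1}+2}, \ldots, x_{\summ_j} \to \xi = x_{\summ_{j-1}+1}$ are performed simultaneously. Since $f_b(y) := C_b(x_b - y)^{-p_b}$ is real-analytic in $y$ on a disk around $\xi$ (of radius bounded below, uniformly for $\hat{\bs{x}} \in K$), expanding each kernel entry in the valence block,
\[
\mathfrak{K}(\summ_{j-1}+m, b) \,=\, f_b(\xi + \delta_m) \,=\, \sum_{\ell \geq 0} \frac{\delta_m^\ell}{\ell!}\, f_b^{(\ell)}(\xi), \qquad m=1,\ldots,s_j,
\]
and invoking multilinearity of $\mathfrak{Z}^\mathfrak{K}_\alpha$ in each row $\mathfrak{K}(\summ_{j-1}+m,\cdot)$, one obtains
\[
\mathfrak{Z}^{\mathfrak{K}}_\alpha \,=\, \sum_{(\ell_2,\ldots,\ell_{s_j}) \in \bZnn^{s_j-1}} \bigg( \prod_{m=2}^{s_j} \frac{\delta_m^{\ell_m}}{\ell_m!} \bigg) \mathfrak{Z}^{\mathfrak{K}^{(\ell)}}_\alpha,
\]
where $\mathfrak{K}^{(\ell)}$ has $\mathfrak{K}^{(\ell)}(\summ_{j-1}+m,b) = f_b^{(\ell_m)}(\xi)$ for $m \geq 1$ (with $\ell_1 := 0$), has entries within $J$ set to zero (allowed by Lemma~\ref{lem:Fomin properties}\ref{item:general zero-replacing rule b}), and all other entries unchanged. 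Convergence of the rearrangement is ensured by Cauchy's estimate, which gives $|f_b^{(\ell)}(\xi)/\ell!| \leq C r^{-\ell}$ uniformly on $K$ for some $r > 0$.

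Now Lemma~\ref{lem:Fomin properties}\ref{item:general zero-replacing rule a} kills $\mathfrak{Z}^{\mathfrak{K}^{(\ell)}}_\alpha$ as soon as any two of $\ell_1 = 0, \ell_2, \ldots, \ell_{s_j}$ coincide; so only multi-indices whose entries form a set $\Lambda = \{\lambda_1 < \cdots < \lambda_{s_j-1}\}$ of distinct \emph{positive} integers contribute. Writing $\ell_m = \lambda_{\sigma(m)}$ for a unique permutation $\sigma$ of $\{2,\ldots,s_j\}$ and using the anti-symmetry Lemma~\ref{lem:Fomin properties}\ref{item:general anti-symmetry of inv Fomin} one gets $\mathfrak{Z}^{\mathfrak{K}^{(\ell)}}_\alpha = \textnormal{sgn}(\sigma)\, \mathfrak{Z}^{\mathfrak{K}^\Lambda}_\alpha$, where $\mathfrak{K}^\Lambda$ places $f_b^{(\lambda_{m-1})}(\xi)$ in row $\summ_{j-1}+m$ for $m \geq 2$. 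Summing over $\sigma$ reassembles a determinant: the total $\Lambda$-contribution equals
\[
\bigg( \prod_{\lambda \in \Lambda} \frac{1}{\lambda!} \bigg) \det\big(\delta_m^{\lambda_k}\big)_{m=2,\ldots,s_j;\, k=1,\ldots,s_j-1} \mathfrak{Z}^{\mathfrak{K}^\Lambda}_\alpha.
\]

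The unique minimal set $\Lambda_0 = \{1,\ldots,s_j-1\}$ will yield the leading term: factoring $\delta_m$ out of each row and recognizing the Vandermonde identifies the determinant with $\prod_{m=2}^{s_j} \delta_m \prod_{2\leq q < r \leq s_j}(\delta_r - \delta_q)$, while $\mathfrak{Z}^{\mathfrak{K}^{\Lambda_0}}_\alpha = \big( \prod_{\lambda=1}^{s_j-1} \lambda! \big) \cdot \mathfrak{Z}^{\mathfrak{K}'}_\alpha$, since $\mathfrak{K}'$ is obtained from $\mathfrak{K}^{\Lambda_0}$ by dividing row $\summ_{j-1}+m$ by $(m-1)!$ and each index in $J$ appears exactly once in every nonzero term of a Fomin type sum. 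The factorials cancel, recovering exactly the claimed leading factor $\prod_m \delta_m \prod_{q<r}(\delta_r - \delta_q) \mathfrak{Z}^{\mathfrak{K}'}_\alpha$.

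For $\Lambda \neq \Lambda_0$, the Jacobi bi-alternant identity
\[
\det\big(\delta_m^{\lambda_k-1}\big)_{m,k} \,=\, s_\mu(\delta_2,\ldots,\delta_{s_j}) \prod_{2\leq q<r\leq s_j}(\delta_r - \delta_q),
\]
with $\mu_k := \lambda_{s_j-k} - (s_j-k)$ a nonzero partition, introduces an extra Schur factor $s_\mu$ homogeneous of degree $|\mu| \geq 1$; since $0 < \delta_m \leq \delta_{s_j}$ this gives $|s_\mu(\delta)| = O(\delta_{s_j}^{|\mu|})$, hence $o(1)$ relative to the leading factor. The tail summation over $\Lambda$ is controlled using the Cauchy estimate above: the entries $\mathfrak{K}^\Lambda(\summ_{j-1}+m,b)/\lambda_{m-1}!$ decay like $r^{-\lambda_{m-1}}$, turning the series into a convergent geometric-type sum uniformly on $K$. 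The main obstacle is precisely this uniform bookkeeping, in both the subleading Schur-polynomial scaling and the infinite sum over $\Lambda$; assembling these ingredients gives a bound of the form $O(\delta_{s_j}) \cdot (\text{leading factor})$, which establishes the asserted $(1+o(1))$ error.
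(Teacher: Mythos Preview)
Your approach is essentially correct and takes a genuinely different route from the paper. The paper does \emph{not} expand $f_b(y)=C_b(x_b-y)^{-p_b}$ as a power series. Instead, it first multiplies each row $a\in J$ of the kernel by the common factor $\prod_{c\in J^c}(x_c-x_a)^{p_c}/C_c$, which (after observing that this scales $\mathfrak{Z}^{\mathfrak{K}}_\alpha$ by a known product) replaces $f_b(x_a)$ by the \emph{polynomial} $\prod_{c\in J^c\setminus\{b\}}(x_c-x_a)^{p_c}/C_c$. The Taylor expansion in $\delta_p$ then terminates at a finite order $M$, so there are only finitely many ``$\Lambda$-sectors'', no tail bound is needed, and the remainder is visibly a polynomial in the $\delta_p$ with no constant term. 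The leading coefficient is then identified with $\mathfrak{Z}^{\mathfrak{K}'}_\alpha$ \emph{indirectly}, by computing the iterated limit of Proposition~\ref{prop:higher-valence solution limit} both ways. Your route---direct analytic expansion, Jacobi bi-alternant to extract the Vandermonde factor $V=\prod_p\delta_p\prod_{q<r}(\delta_r-\delta_q)$ from every $\Lambda$-term, Cauchy estimates for the tail---trades the polynomialization trick for more bookkeeping, but buys a direct identification of the leading coefficient without invoking the iterated-limit proposition.

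There is one genuine gap. Your tail estimate yields an \emph{additive} error of size $V\cdot O(\delta_{s_j})$ uniformly on $K$, not $V\cdot\mathfrak{Z}^{\mathfrak{K}'}_\alpha(\hat{\bs{x}})\cdot O(\delta_{s_j})$. Converting this to the multiplicative form $V\cdot\mathfrak{Z}^{\mathfrak{K}'}_\alpha(\hat{\bs{x}})\,(1+o(1))$ with the $o(1)$ uniform on $K$ requires that $\mathfrak{Z}^{\mathfrak{K}'}_\alpha(\hat{\bs{x}})$ be bounded away from zero on $K$. The paper invokes continuity together with the positivity property \textnormal{(POS)} of Theorem~\ref{thm:CFT properties} for this (note that in the intended iterative application, $\mathfrak{Z}^{\mathfrak{K}'}_\alpha$ is, up to a constant, a fused pure partition function). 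Your last sentence (``$O(\delta_{s_j})\cdot(\text{leading factor})$'') glosses over this step; without it the argument does not close.
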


Starting from the \quote{unfused kernel} $\mathfrak{K}(a, b) = \tfrac{1}{\pi}(x_b - x_a)^{-2}$ and looking at the definition of $\mathfrak{K}'$ in Proposition~\ref{prop:higher-valence solution limit}, it is not hard to see that 
this result can be used iteratively to handle multiple sets of points to be fused, simultaneously or iteratively.

\begin{proof}[Proof of Proposition~\ref{prop:fused SLE}]
Denote by $i_1 := \summ_{j-1} + 1, \, \ldots, \, i_{s_j} := \summ_{j-1} + s_j = \summ_{j}$  the boundary indices to be fused. Denote also $J^c := \{ 1,2, \ldots, 2N\} \setminus J$. 
Since $\alpha \in \LP_\multii$, Lemma~\ref{lem:Fomin properties}\ref{item:general zero-replacing rule b} guarantees that  
we may assume that $\mathfrak{K}(a, b) = 0$ for all $a, b \in J$.
Let us then define 
\begin{align}\label{eq:Kprimeprime}
\mathfrak{K}''(a, b) :=  \prod_{c \in J^c} \bigg( \frac{(x_c-x_a)^{p_c}}{C_c}\bigg)^{\onesmall\{ a \in J \}} \bigg(  \frac{(x_c-x_b)^{p_c}}{C_c}\bigg)^{\onesmall\{ b \in J \}}  \mathfrak{K}(a, b) ,
\end{align}
where $\one\{a \in J\} := 1$ when $a \in J$ and $\one\{a \in J\} := 0$ otherwise
--- in other words, we have
\begin{align}
\label{eq:Kprimeprime equiv def}
\mathfrak{K}''(a, b) = \prod_{\substack{c \in J^c \setminus \{ b \} }}  \frac{(x_c-x_a)^{p_c} }{C_c} , \qquad \textnormal{for } a \in J, \, b \in J^c ,
\end{align}
and $\mathfrak{K}''(a, b) = 0$, for $a, b \in J$; and $\mathfrak{K}''(a, b) = \mathfrak{K}(a, b)$, for $a, b \in J^c$.

On the one hand, the first definition~\eqref{eq:Kprimeprime} of $\mathfrak{K}''$ above can be seen as scaling $\mathfrak{K}(a, b)$ for one factor for $a$ and another similar factor for $b$.
From the linearity of determinants with respect to each row and column, one then readily concludes that
\begin{align*}
\mathfrak{Z}_\alpha^{\mathfrak{K}''} = \prod_{b \in J} \prod_{c \in J^c}  \frac{(x_c-x_b)^{p_c}}{C_c}  \, \mathfrak{Z}_\alpha^{\mathfrak{K}}.
\end{align*} 
We will thus analyze the fusion asymptotics via $\mathfrak{Z}^{\mathfrak{K}''}_\alpha$.
On the other hand, we observe from~\eqref{eq:Kprimeprime equiv def} that regarding $(x_c)_{c \in J^c}$ as constants in the fusion limit, for each fixed $b$ and all $a \in J$, we have
\begin{align*}
\mathfrak{K}''(a,b) = \mathfrak{K}''(b,a) 
=f_b (x_{a}), \qquad b \in \{1,2,\ldots,2N\} , \, a \in J ,
\end{align*}
where $f_b$ is some polynomial that depends on $b$ and on $(x_c)_{c \in J^c}$ but not on $(x_{a'})_{a' \in J, a' \neq a}$.
Since the function $f_b$ (although not its argument $x_a$) is the same for all $a \in J$, we Taylor expand for all $i_p$, $1 \leq p \leq s_j$:  
\begin{align} \label{eq:Taylor series for kernels 2}
\begin{split} 
\mathfrak{K}''(i_p, b) = \; & f_b(x_{i_1})  + \delta_p f_b'(x_{i_1}) + \tfrac{\delta_p^2}{2} f_b''(x_{i_1}) + \cdots 
+
\tfrac{\delta_p^{M}}{M!} f_b^{(M)}(x_{i_1}) , \qquad \delta_p = x_{i_p}-x_{i_1} .
\end{split}
\end{align}
(It will be crucial in what follows that $f_b$ is a polynomial, i.e., this expansion terminates at some finite order $M$.)
Similarly as in the proof of Proposition~\ref{prop:ASY no links case}, we can split $\mathfrak{Z}^{\mathfrak{K}''}_\alpha$ as 
a sum of different inverse Fomin type sums $\mathfrak{Z}^{\widetilde{\mathfrak{K}}}_\alpha$, 
obtained by choosing $\widetilde{\mathfrak{K}}(i_p, \cdot)$ for each $p$ to be one of the terms in 
this Taylor expansion (and $\widetilde{\mathfrak{K}} (a, b) = \mathfrak{K}'' (a, b)$ if $a, b \in J^c$). 
Again, similarly to Proposition~\ref{prop:ASY no links case}, the sums where the same order of derivative appears as $\widetilde{\mathfrak{K}}(i_p, \cdot)$ for two different values of $p$ are zero by Lemma~\ref{lem:Fomin properties}\ref{item:general zero-replacing rule b}. In other words, the inverse Fomin type sums $\mathfrak{Z}^{\widetilde{\mathfrak{K}}}_\alpha$ with (at least potentially) nonzero contribution can be indexed by increasing sets $0 \leq \ell_1 < \cdots < \ell_{s_j} \leq M$, describing the appearing derivative orders, and permutations $\tau $ of $s_j$ elements, describing which $i_p$ corresponds to derivative of order $\ell_{\tau(p)}$. 
In other words:
\begin{align*}
\widetilde{\mathfrak{K}}_\tau (i_p, b) = \frac{\delta_p^{\ell_{\tau(p)}}}{\ell_{\tau(p)} !} \, f_b^{(\ell_{\tau(p)})}(x_{i_1}),
\end{align*}
where only the dependence on $\tau$ is now explicated (and $\delta_1^0=1$).

Let us now fix a sequence $\boldsymbol{\ell} = (\ell_1 < \cdots < \ell_{s_j})$ and define yet another kernel depending on these indices as 
$\mathfrak{K}_{\boldsymbol{\ell}} (i_p, b) = \tfrac{1}{\ell_{p} !} \, f_b^{(\ell_{p})}(x_{i_1})$ if $i_p \in J$, and $\mathfrak{K}_{\boldsymbol{\ell}} (a, b) = \widetilde{\mathfrak{K}} (a, b)$ 
 if $a, b \in J^c$. Hence, if $\tau=\mathrm{id}$ is the identity permutation, we observe that
\begin{align*}
\mathfrak{Z}^{ \widetilde{\mathfrak{K}}_\mathrm{id} }_\alpha = \prod_{p=1}^{s_j} \delta_p^{\ell_p} \times \mathfrak{Z}^{\mathfrak{K}_{\boldsymbol{\ell}}}_\alpha.
\end{align*}
Arguing with the antisymmetry property~\eqref{eq:simple anti-symmetry of inv Fomin} as in the proof of Proposition~\ref{prop:ASY no links case}, we see that for a general permutation $\tau \in \SymGrp_{s_j}$, where 
$\SymGrp_{s_j} := \langle \tau_{i_1}, \dots, \tau_{i_{s_j-1}} \rangle \subset \SymGrp_{2N}$ is the subgroup generated by the transpositions associated to the indices $(i_1, \dots, i_{s_j} )$,  
we have
\begin{align}
\label{eq:Ztau}
\mathfrak{Z}^{ \widetilde{\mathfrak{K}}_\tau }_\alpha
= \; & \sign(\tau) \, \prod_{p=1}^{s_{j}} \delta_p^{\ell_{\tau(p)}} \times \mathfrak{Z}^{\mathfrak{K}_{\boldsymbol{\ell}}}_\alpha .
\end{align}
The contribution of a fixed index sequence $\boldsymbol{\ell}$ to $\mathfrak{Z}^{\mathfrak{K}''}_\alpha$ is thus given by
\begin{align*}
\sum_{\tau \in \SymGrp_{s_j}}  \sign(\tau)  \prod_{p=1}^{s_{j}}
\delta_p^{\ell_{\tau(p)}} \times \mathfrak{Z}^{\mathfrak{K}_{\boldsymbol{\ell}} }_\alpha 
= \det \big( \delta_{k}^{\ell_m} \big)_{k, m = 1}^{s_j} \times \mathfrak{Z}^{\mathfrak{K}_{\boldsymbol{\ell}} }_\alpha  .
\end{align*}
From the linearity of the determinant, it is now clear that the index sequence $\boldsymbol{\ell}^*$ where $\ell^*_m = m-1$ dominates all other ones in the limit 
(provided that $\mathfrak{Z}^{\mathfrak{K}_{\boldsymbol{\ell}^*} }_\alpha \neq 0$, which will become clear in the end of the proof). 
As there are only finitely many index sequences, 
\begin{align*}
\mathfrak{Z}^{\mathfrak{K}'' }_\alpha 
= \; & \det \big( \delta_{k}^{m-1} \big)_{k, m = 1}^{s_j}  \times  \mathfrak{Z}^{\mathfrak{K}_{\boldsymbol{\ell}^*} }_\alpha \;  (1+O(\delta_{s_j}))
\\
= \; & \det \big( \delta_{k}^{m-1} \big)_{k, m = 2}^{s_j} \times  \mathfrak{Z}^{\mathfrak{K}_{\boldsymbol{\ell}^*} }_\alpha  \; (1+O(\delta_{s_j})) \\
= \; &  \prod_{p=2}^{s_j} \delta_p \prod_{2 \leq q < r \leq s_j} (\delta_r-\delta_q) \times  \mathfrak{Z}^{\mathfrak{K}_{\boldsymbol{\ell}^*} }_\alpha \;  (1+O(\delta_{s_j})) , \qquad \textnormal{as } \delta_{s_j} \downarrow 0 ,
\end{align*}
where we use Vandermonde's determinant formula in the last step, and the Landau term $O(\delta_{s_j})$ 
can be bounded as $O(\delta_{s_j}) \leq C \delta_{s_j}$, where $C$ is a uniform constant over compact sets 
(actually, the Landau term is a polynomial in $\delta_1, \ldots, \delta_{s_j}$ with no constant term).

Summarizing everything so far, we have obtained the approximation
\begin{align}
\nonumber
\mathfrak{Z}^{\mathfrak{K}}_\alpha 
= \; & \prod_{b \in J} \prod_{c \in J^c}  \frac{C_c}{(x_c-x_b)^{p_c}}  \mathfrak{Z}^{\mathfrak{K}''}_\alpha \\
\label{eq:blabla}
= \; &  \prod_{b \in J} \prod_{c \in J^c} \frac{C_c}{(x_c-x_b)^{p_c}} \prod_{p=2}^{s_j} \delta_p \prod_{2 \leq q < r \leq s_j} (\delta_r-\delta_q) \times \; \mathfrak{Z}^{\mathfrak{K}_{\boldsymbol{\ell}^*} }_\alpha \;  (1+O(\delta_{s_j})) .
\end{align}
Now, the entries of the matrices in $\mathfrak{Z}^{\mathfrak{K}_{\boldsymbol{\ell}^*} }_\alpha$ are rational functions and do not depend at all on $x_{i_2}, \ldots, x_{i_{s_j}}$ 
--- in particular, they do not diverge in the fusion limit.
Hence, because Landau term is a polynomial, the right-hand side above is well defined and continuous up to and including the fusion point. 
A direct computation then gives the iterated limit
\begin{align*}
& \bigg( \lim_{x_{i_{\summ_j}} \to x_{i_1 }} 
|x_{i_{\summ_j} } - x_{i_1}|^{-s_j+1} \ldots \lim_{ x_{i_{\summ_{j-1}+2}} \to x_{i_1 } } |  x_{i_{\summ_{j-1}+2}} -  x_{i_1 } |^{-1} \lim_{ x_{i_{\summ_{j-1}+1}} \to  x_{i_1 } } \bigg)
\mathfrak{Z}^{\mathfrak{K}}_\alpha (x_1, \ldots, x_{2N}) \\
&=
\bigg( \prod_{b \in J} \prod_{c \in J^c} \frac{C_c}{(x_c-x_b)^{p_c} } \, \mathfrak{Z}^{\mathfrak{K}_{\boldsymbol{\ell}^*} }_\alpha(x_1, \ldots, x_{2N})
\bigg)\bigg|_{x_{i_1} \, = \, \cdots \, = \, x_{i_{s_j} }}
\end{align*}
On the other hand, by Proposition~\ref{prop:higher-valence solution limit}, this limit is also given by $\mathfrak{Z}^{\mathfrak{K}'}_\alpha (\hat{\bs{x}})$, where $\mathfrak{K}'$ is the kernel described in the end of that proposition. 
We thus conclude that 
\begin{align*}
\bigg( \prod_{b \in J} \prod_{c \in J^c} \frac{C_c}{(x_c-x_b)^{p_c} } \, \mathfrak{Z}^{\mathfrak{K}_{\boldsymbol{\ell}^*} }_\alpha
\bigg)\bigg|_{x_{i_1} \, = \, \cdots \, = \, x_{i_{s_j} }} = \mathfrak{Z}^{\mathfrak{K}'}_\alpha (\hat{\bs{x}}).
\end{align*}

Finally, because the left-hand side above is continuous around the evaluation point, 
perturbation of its variables within a compact set $K$, as in the statement, makes a uniformly small error. 
Furthermore, the right-hand side is continuous and positive by Theorem~\ref{thm:CFT properties}\textnormal{(POS)}, so this is also yields a uniformly small \textit{relative} error. Thus, in such compact sets, we can approximate uniformly
\begin{align*}
\prod_{b \in J} \prod_{c \in J^c} \frac{C_c} {(x_c-x_b)^{p_c} } \, \mathfrak{Z}^{\mathfrak{K}_{\boldsymbol{\ell}^*} }_\alpha
=\mathfrak{Z}^{\mathfrak{K}'}_\alpha (\hat{\bs{x}}) \;  (1+o(1)) , \qquad \textnormal{as } \delta_{s_j} \downarrow 0 .
\end{align*}
Plugging this into~\eqref{eq:blabla}, we obtain
\begin{align*}
\mathfrak{Z}^{\mathfrak{K}}_\alpha =  
 \prod_{p=2}^{s_j} \delta_p \prod_{2 \leq q < r \leq s_j} (\delta_r-\delta_q)  \mathfrak{Z}^{\mathfrak{K}'}_\alpha (\hat{\bs{x}}) \; (1+o(1)) , \qquad \textnormal{as } \delta_{s_j} \downarrow 0 .
\end{align*}
This concludes the proof.
\end{proof}


\appendix

\bigskip{}
\section{Evaluation of two determinants involving factorials}
\label{app:det}
\begin{lemma}
\label{lem:first determinant evaluation}
Fix two natural numbers $s_{j+1} \leq s_j$, and consider the matrix $A \in \bR^{s_{j+1} \times s_{j+1}}$ 
with entries $A_{\ell, k} = \frac{1}{(s_j - \ell + k)!}$. Then, we have
\begin{align*}
\det A  =  \frac{ \prod_{k = 1}^{s_{j+1}} (s_{j+1} - k)! }{\prod_{\ell = 1}^{s_{j+1}} (s_j + s_{j+1} - \ell )!}.
\end{align*}
\end{lemma}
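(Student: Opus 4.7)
My approach is to pull factorials out of each row so that the remaining entries are polynomial values of a shifted index, and then recognize the resulting determinant as a Vandermonde. Write $m := s_{j+1}$ and $n := s_j$ for brevity, so $A \in \bR^{m \times m}$ has entries $A_{\ell,k} = 1/(n-\ell+k)!$ with $1 \leq \ell, k \leq m$.

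\textbf{Step 1 (factor rows).}
Set $x_\ell := n + m - \ell$, so $x_\ell \geq n \geq m$ is a nonnegative integer and
\begin{align*}
A_{\ell, k} \; = \; \frac{1}{(x_\ell - (m-k))!} \; = \; \frac{1}{x_\ell!} \cdot \frac{x_\ell!}{(x_\ell - (m-k))!} \; = \; \frac{1}{x_\ell!} \; P_{m-k}(x_\ell) ,
\end{align*}
where $P_j(x) := x(x-1)(x-2) \cdots (x-j+1)$ is the falling factorial, a monic polynomial of degree $j$ in $x$ (with $P_0 \equiv 1$). Pulling the factor $1/x_\ell!$ out of the $\ell$-th row gives
\begin{align*}
\det A \; = \; \Big( \prod_{\ell = 1}^{m} \frac{1}{(n + m - \ell)!} \Big) \; \det \bigl(P_{m-k}(x_\ell)\bigr)_{\ell,k = 1}^{m} .
\end{align*}

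\textbf{Step 2 (reduce to Vandermonde).}
In the matrix $\bigl(P_{m-k}(x_\ell)\bigr)_{\ell,k}$, the $k$-th column is a monic polynomial of degree $m-k$ evaluated at the $x_\ell$. Reversing the column order produces columns $k = m, m-1, \ldots, 1$, i.e., monic polynomials of degrees $0, 1, \ldots, m-1$, which costs a sign $(-1)^{\binom{m}{2}}$. Since each $P_j$ is monic of degree $j$, column operations replacing $P_j(x_\ell)$ by the monomial $x_\ell^{j}$ leave the determinant unchanged. Hence
\begin{align*}
\det \bigl(P_{m-k}(x_\ell)\bigr)_{\ell,k = 1}^{m} \; = \; (-1)^{\binom{m}{2}} \; \det \bigl(x_\ell^{\, j}\bigr)_{\ell = 1, \, j = 0}^{m, \, m-1} \; = \; (-1)^{\binom{m}{2}} \prod_{1 \leq \ell < \ell' \leq m} (x_{\ell'} - x_\ell) .
\end{align*}

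\textbf{Step 3 (compute the Vandermonde).}
Since $x_{\ell'} - x_\ell = \ell - \ell'$, the Vandermonde product equals $(-1)^{\binom{m}{2}} \prod_{1 \leq \ell < \ell' \leq m}(\ell' - \ell)$. The two signs cancel, and $\prod_{1 \leq \ell < \ell' \leq m}(\ell' - \ell) = \prod_{\ell'=2}^{m} (\ell'-1)! = \prod_{k=0}^{m-1} k! = \prod_{k=1}^{m}(m-k)!$.

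\textbf{Step 4 (combine).}
Putting the pieces together,
\begin{align*}
\det A \; = \; \frac{\prod_{k = 1}^{m} (m-k)!}{\prod_{\ell=1}^{m} (n + m - \ell)!} \; = \; \frac{\prod_{k = 1}^{s_{j+1}} (s_{j+1} - k)!}{\prod_{\ell = 1}^{s_{j+1}} (s_j + s_{j+1} - \ell)!} ,
\end{align*}
as claimed. The assumption $s_{j+1} \leq s_j$ is only used to ensure that all factorials $(n-\ell+k)! = (x_\ell - (m-k))!$ appearing in $A$ have nonnegative arguments; otherwise the identity takes the same form but requires interpretation via $1/(\textnormal{negative integer})! = 0$. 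There is no genuine obstacle here --- the entire argument is a bookkeeping exercise, and the only subtle point is keeping track of the two sign contributions $(-1)^{\binom{m}{2}}$ (one from reversing columns, one from the orientation of the Vandermonde) which happily cancel.
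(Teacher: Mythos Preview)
Your proof is correct and takes a genuinely different route from the paper's. The paper multiplies rows and columns by factorials to obtain a matrix of binomial coefficients $B_{\ell,k}=\binom{s_j+s_{j+1}-\ell}{s_{j+1}-k}$, then (after reversing rows and columns) uses Vandermonde's convolution identity $\binom{s_j+\ell-1}{k-1}=\sum_r \binom{\ell-1}{r-1}\binom{s_j}{k-r}$ to exhibit an explicit LU factorization with unitriangular factors, concluding that the binomial determinant equals~$1$. You instead factor only the rows, recognize each column as a monic falling-factorial polynomial $P_{m-k}$ evaluated at the nodes $x_\ell=n+m-\ell$, and reduce by column operations to a genuine Vandermonde determinant in the $x_\ell$. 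Your approach is arguably more elementary---it uses only the Vandermonde \emph{determinant} formula rather than the Vandermonde \emph{convolution} identity, and it bypasses the LU step entirely---while the paper's method has the virtue of displaying an explicit bilinear factorization of the binomial matrix. Both arguments are standard bookkeeping; yours is somewhat shorter.
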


\begin{proof}
Row-wise and column-wise multiplication of $A$ gives
\begin{align*}
B_{\ell, k} := \; & \frac{ (s_j + s_{j+1} - \ell )! }{(s_{j+1} - k)!} \, A_{\ell, k} = \binom{s_j + s_{j+1} - \ell }{s_{j+1} - k},
\\ \qquad \det B = \; & (\det A ) \,\frac{\prod_{\ell = 1}^{s_{j+1}} (s_j + s_{j+1} - \ell )!}{\prod_{k = 1}^{s_{j+1}}(s_{j+1} - k)!} 
\end{align*}
after which a permutation (reversal) of both rows and columns gives $\det C = \det B$, where
\begin{align*}
C_{\ell, k} := B_{s_{j+1} + 1 - \ell, s_{j+1} + 1 - k} = \binom{s_j + \ell - 1}{k-1} .
\end{align*}
It thus suffices to compute $\det C$.
To this end, we find its LU-decomposition by Vandermonde's convolution formula\footnote{In combinatorial terms, the formula states that to choose $k-1$ colored balls from a pool of $s_j$ red balls and $\ell - 1$ blue balls, one chooses $r'$ blue balls, $0 \leq r' \leq \min\{k-1, \ell -1 \}$, and then $k-1-r'$ red balls.}:
\begin{align*}
C_{\ell, k} 
\; = \;  \binom{s_j + \ell - 1}{k-1}
\; = \;  \sum_{r'=0}^{\min \{k-1, \ell -1 \}} \binom{\ell - 1}{r'} \binom{s_j}{k-1-r'}
\; = \;  \sum_{r=1}^{s_{j+1}} \binom{\ell - 1}{r-1} \binom{s_j}{k-r},
\end{align*}
where in the second step we adopted for lighter notation the convention that $\binom{a}{b} =0$ if $b < 0 $ or $b>a$. Defining (in this convention) the matrices $D, E \in \bR^{s_{j+1} \times s_{j+1}}$ via $D_{\ell, r}  = \binom{\ell - 1}{r-1}$ and $E_{r,k} = \binom{s_j}{k-r}$, we observe that
\begin{align*}
C_{\ell, k} = \sum_{r=1}^{s_{j+1}}  D_{\ell, r} E_{r,k} \qquad \textnormal{and} \qquad \det C = (\det D)( \det E).
\end{align*}
Now, note that 
$D$ (resp.~$E$) is lower-unitriangular (resp.~upper-unitriangular), so $\det D =  \det E = 1$. 
The claim now follows from the determinant identities displayed above.
\end{proof}

\begin{lemma}
\label{lem:second determinant evaluation}
Fix $t \in \mathbb{Z}_{\geq -1}$, and consider the matrix $A \in \bR^{m \times m}$ 
with entries $A_{r, s} = (r+s+t - 1)!$. Then, we have
\begin{align*}
\det A  =  \prod_{s=1}^m (s+t)! \times \prod_{r=1}^m (r-1)! .
\end{align*}
\end{lemma}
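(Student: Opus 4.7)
My plan is to follow the same strategy used in the proof of Lemma~\ref{lem:first determinant evaluation}: first scale the rows and columns to reduce the factorial determinant to a binomial determinant, and then exhibit an LU-decomposition via a Vandermonde convolution to show that the resulting binomial determinant equals $1$.

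Concretely, the first step is to write
\begin{align*}
A_{r,s} = (r+s+t-1)! = (r+t)! \, (s-1)! \, \binom{r+s+t-1}{s-1} ,
\end{align*}
factor $(r+t)!$ out of row $r$ and $(s-1)!$ out of column $s$, and obtain
\begin{align*}
\det A = \bigg(\prod_{r=1}^m (r+t)!\bigg) \bigg(\prod_{s=1}^m (s-1)!\bigg) \det B,
\qquad \textnormal{where } B_{r,s} := \binom{r+s+t-1}{s-1} .
\end{align*}
Since $\prod_{r=1}^m (r+t)! = \prod_{s=1}^m (s+t)!$ by relabeling, it suffices to prove that $\det B = 1$.

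For the second step, I apply Vandermonde's convolution $\binom{(r-1)+(s+t)}{s-1} = \sum_{k\geq 0} \binom{r-1}{k}\binom{s+t}{s-1-k}$ (note that $s+t \geq 0$ since $t \geq -1$), which rewrites $B$ as the product $B = L U$ of the two $m \times m$ matrices
\begin{align*}
L_{r,k} := \binom{r-1}{k-1},
\qquad
U_{k,s} := \binom{s+t}{s-k} , \qquad 1 \leq r, s, k \leq m .
\end{align*}
The matrix $L$ is the standard Pascal matrix: it is lower-triangular (since $L_{r,k}=0$ for $k > r$) with unit diagonal $L_{r,r} = \binom{r-1}{r-1} = 1$. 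The matrix $U$ is upper-triangular, since $U_{k,s} = \binom{s+t}{s-k}= 0$ whenever $k > s$, and has unit diagonal $U_{s,s} = \binom{s+t}{0} = 1$. Hence $\det L = \det U = 1$, giving $\det B = 1$ as required.

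I do not anticipate any substantial obstacle here: the only subtlety is to keep the index conventions straight and to check that the Vandermonde sum terminates in the right range (in particular that $s+t \geq 0$, which follows from the hypothesis $t \geq -1$ and $s \geq 1$). Putting both steps together yields the claimed formula $\det A = \prod_{s=1}^m (s+t)! \times \prod_{r=1}^m (r-1)!$.
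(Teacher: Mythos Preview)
Your proof is correct and follows essentially the same strategy as the paper's: scale rows and columns to reduce to a binomial determinant, then exhibit a unitriangular factorization via Vandermonde's convolution. The only difference is that by writing the binomial as $\binom{r+s+t-1}{s-1}$ rather than $\binom{r+s+t-1}{r-1}$, you obtain the LU factorization with a single application of Vandermonde, whereas the paper applies it twice to get a three-factor decomposition $B = CDE$; your version is slightly more direct.
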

\begin{proof}
Row-wise and column-wise multiplication yields
\begin{align*}
B_{r,s} := \; & \frac{A_{r, s}}{(s+t)!}\frac{1}{(r-1)!} = \binom{r+s+t-1}{r-1} ,
\\ \qquad \det B = \; & \frac{\det A}{ \left( \prod_{s=1}^m (s+t)! \right) \left( \prod_{r=1}^m (r-1)! \right) } .
\end{align*}
Next, we again apply Vandermonde's convolution formula:
\begin{align*}
\binom{r+s+t-1}{r-1} 
\; = \; \sum_{\ell'=0}^{\min\{r-1, s+t \} } \binom{s+t}{\ell'} \binom{r-1}{r-1-\ell'} 
\; = \; \sum_{\ell=1}^{m } \binom{s+t}{\ell - 1} \binom{r-1}{r-\ell},
\end{align*}
where in the second step we again used the convention that $\binom{a}{b} =0$ if $b < 0 $ or $b>a$. Applying the same formula and notation again, we obtain
\begin{align*}
\binom{s+t}{\ell - 1} 
\; = \; \sum_{\iota' = 0}^{\min\{\ell - 1, s-1 \} } \binom{s - 1}{\iota'} \binom{t + 1}{\ell - 1 - \iota'} 
\;  = \; \sum_{\iota = 1}^{m } \binom{s - 1}{\iota - 1} \binom{t + 1}{\ell - \iota},
\end{align*}
Combining these and defining suitable matrices $C, D, E \in \bR^{m \times m}$, we obtain
\begin{align*}
B_{r,s} \; = \; \binom{r+s+t-1}{r-1} 
\; = \; \sum_{\ell=1}^{m} \sum_{\iota = 1}^{m } \binom{r-1}{r-\ell} \binom{t + 1}{\ell - \iota} \binom{s - 1}{\iota - 1} 
\; = \; \sum_{\ell=1}^{m} \sum_{\iota = 1}^{m } C_{r, \ell} D_{\ell, \iota} E_{\iota, s}.
\end{align*}
Once again, we see that $C, D, E$ are all triangular matrices with all ones on the diagonal.
\end{proof}

\bigskip{}
\section{A completely explicit solution basis}
\label{app:coblo_det}
In this short combinatorial appendix, we shall find an alternative, completely explicit basis for the covariant solution space 
$\SolSp_\multii$ of the BPZ PDEs.

Fixing notation, given a link pattern $\alpha \in \LP_N$, we identify it with a non-negative integer walk (i.e., a Dyck path), abusively also denoted as $\alpha \colon \{ 0, 1, \ldots, 2N\} \to \bZnn$ 
and determined by setting $\alpha(0) = 0$, and for $1 \leq j \leq 2N$,
\begin{align*}
\alpha(j)-\alpha(j-1) 
= 
\begin{cases}
1 , & j \textnormal{ is a left link endpoint in the corresponding link pattern,}\\
-1, &  j \textnormal{ is a right link endpoint.}
\end{cases}
\end{align*}
(See, e.g.,~\cite{KKP:Boundary_correlations_in_planar_LERW_and_UST} 
and~\cite[Section~2.4]{Peltola-Wu:Global_and_local_multiple_SLEs_and_connection_probabilities_for_level_lines_of_GFF} 
for illustrations and a thorough discussion.) 
Then, given valences $\multii = (s_1, \ldots, s_d) \in \bZpos^d$ with $s_1 + \cdots + s_d = 2N$, 
for each pair $\alpha, \beta \in \LP_N$ of link patterns, 
we say that $\beta \geq_\multii \alpha$ if (interpreting both $\alpha$ and $\beta$ as Dyck paths) 
\begin{align} \label{eq: partial order}
\alpha \leq \beta  
\qquad \textnormal{and in addition,} \quad \alpha(\summ_j)=\beta(\summ_j) ,
\quad \textnormal{for all } j\in \{ 0, 1, \ldots, d \} .
\end{align}
Note that this is a partial order on $\LP_N$.

Lastly, for each $\multii$-valenced link pattern $\alpha \in \LP_\multii$, 
we define a function 
\gls{symb:CobloF} $\colon \chamber_d \to \bR$ 
by
\begin{align} \label{eq:fused conformal block function}
\CobloF_\alpha( x_1, \ldots, x_d) 
:= \; & \sum_{ \substack{\beta \in \LP_N \\ \beta \geq_\multii \alpha}}
\Delta^{\mathfrak{K}}_\beta (x_1, \ldots, x_{d}) , 
\end{align}
where $\mathfrak{K}$ is the same kernel as in the definition~\eqref{eq:fused partition function} of $\PartF_\alpha$. 
Note that for the totally unfused case $\multii = (1, \ldots, 1)$, we have $\CobloF_\alpha =  \Delta_\alpha$. 
The main result of this appendix now reads:

\begin{theorem} \label{thm:CFT properties coblo} 
The functions $\{ \PartF_\alpha \; | \; \alpha \in \LP_\multii \}$ and
$\{ \CobloF_\beta \; | \; \beta \in \LP_\multii \}$ are related via 
an invertible linear system of equations:
\begin{align*}
\PartF_\alpha = \sum_{ \substack{\beta \in \LP_\multii \\ \beta \succeq \alpha}} \# \mathcal{C} (\alpha/\beta) \, \CobloF_\beta, \qquad \textnormal{for all } \alpha \in \LP_\multii .
\end{align*}
Moreover, the functions 
$\CobloF_\alpha$ 
satisfy \textnormal{(PDE)}, \textnormal{(COV)}, \textnormal{(LIN)} 
in Theorem~\ref{thm:CFT properties}, and the fusion property
\begin{align*} 
\CobloF_\alpha 
(\boldsymbol{p})
= \; &  \bigg( \prod_{j=1}^d ( 0! \cdot 1! \cdot \cdots \cdot (s_j-1)!) \bigg) \\
& \times \bigg(
 \lim_{x_{\summ_d} \to p_d} |x_{\summ_d} - p_d|^{-s_d+1}
 \ldots
 \lim_{x_{\summ_{d-1} + 2} \to p_d} |x_{\summ_{d-1} + 2} - p_d|^{-1}
 \lim_{x_{\summ_{d-1} + 1} \to p_d}
 \bigg) \\
& \; \ldots \;
\bigg( \lim_{x_{\summ_1} \to p_1} 
|x_{\summ_1} - p_1|^{-s_1+1} \ldots \lim_{x_2 \to p_1} |x_2 - p_1|^{-1} \lim_{x_1 \to p_1} \bigg)
\sum_{ \substack{\beta \in \LP_N \\ \beta \geq_\multii \alpha}}
\Delta^{\mathfrak{K}'}_\beta 
(\boldsymbol{x}) ,
\end{align*} 
with $\boldsymbol{p} = (p_1, \ldots, p_d)$, $\boldsymbol{x} = ( x_1, \ldots, x_{2N})$, 
and $\mathfrak{K}'(a, b) = \frac{1}{\pi(x_a-x_b)^2}$ the \quote{totally unfused kernel.}
\end{theorem}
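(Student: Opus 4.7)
The plan is to prove the theorem in three steps, in the order: fusion formula, invertible linear relation, CFT properties. The key combinatorial device is a \emph{V-projection} map: given $\alpha \in \LP_\multii$ and any $\gamma \in \LP_N$ with $\gamma \DPgeq \imath(\alpha)$, let $\pi(\gamma) \in \LP_\multii$ denote the unique valenced link pattern whose unfusing $\imath(\pi(\gamma))$ shares the profile of $\gamma$ at the positions $\summ_j$ and is \emph{V-shaped} within each cluster (all down-steps followed by all up-steps, yielding the minimum Dyck path consistent with the given profile). One verifies that $\pi(\gamma) \DPgeq \alpha$, that $\pi^{-1}(\beta)$ coincides with the $\geq_\multii$-upper set of $\imath(\beta)$ in $\LP_N$, and that $\gamma \in \imath(\LP_\multii)$ if and only if $\gamma = \imath(\pi(\gamma))$.

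The fusion formula for $\CobloF_\alpha$ is then established by adapting Proposition~\ref{prop:higher-valence solution limit} to the sum $\sum_{\gamma \geq_\multii \imath(\alpha)} \Delta^{\mathfrak{K}'}_\gamma(\boldsymbol{x})$: although individual determinants with intra-cluster links in $\gamma$ diverge in the iterated fusion limit, the divergent kernel entries $\mathfrak{K}'(x_a,x_b)$ for intra-cluster $(a,b)$ appear in different $\Delta^{\mathfrak{K}'}_\gamma$'s at positions related by signed cofactor duality, so that the antisymmetry identities of Lemma~\ref{lem:Fomin properties} force cancellations when one sums within each $\pi$-fiber. The remaining finite terms, after factorial renormalization, equal $\sum_{\gamma \geq_\multii \imath(\alpha)} \Delta^{\mathfrak{K}}_\gamma = \CobloF_\alpha$ by Taylor expansion and the same kernel-derivative identification as in the proof of Proposition~\ref{prop:higher-valence solution limit}.

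For the invertible linear relation, apply the iterated fusion limit to the unfused identity $\PartF_{\imath(\alpha)} = \sum_{\gamma \DPgeq \imath(\alpha)} \# \CItilingsof(\imath(\alpha)/\gamma) \Delta^{\mathfrak{K}'}_\gamma$. The LHS becomes $\PartF_\alpha$ by Lemma~\ref{lem:fusion of fused partition functions}. Regrouping the RHS by $\beta = \pi(\gamma) \in \LP_\multii$ and invoking the combinatorial identity $\# \CItilingsof(\imath(\alpha)/\gamma) = \# \CItilingsof(\imath(\alpha)/\imath(\beta))$ for all $\gamma \in \pi^{-1}(\beta)$, together with the fusion formula just established, yields $\sum_{\beta \in \LP_\multii,\, \beta \succeq \alpha} \# \CItilingsof(\imath(\alpha)/\imath(\beta)) \CobloF_\beta$. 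Invertibility follows from the upper-triangular structure with unit diagonal of this coefficient matrix along the partial order $\DPgeq$; then \textnormal{(PDE)}, \textnormal{(COV)}, and \textnormal{(LIN)} for $\CobloF_\beta$ are immediate from this relation and Theorem~\ref{thm:CFT properties}, since both $\{\PartF_\alpha\}$ and $\{\CobloF_\beta\}$ span the same space $\SolSp_\multii$.

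The hardest step will be the combinatorial identity $\# \CItilingsof(\imath(\alpha)/\gamma) = \# \CItilingsof(\imath(\alpha)/\imath(\pi(\gamma)))$ for $\gamma$ in the same $\pi$-fiber, which asserts that the count of cover-inclusive Dyck tilings depends only on the V-projection of the upper Dyck path. I expect this to admit a bijective proof that collapses the intra-cluster non-V bumps of $\gamma$, which should tile uniquely within each cluster. As a more robust fallback that bypasses this identity altogether, one could apply the kernel-substitution scheme of Proposition~\ref{prop:higher-valency solution} directly to $\sum_\gamma \# \CItilingsof(\imath(\alpha)/\gamma) \Delta^{\mathfrak{K}'}_\gamma$, converting the unfused kernel into the fused one term-by-term, and then regroup by $\pi$-fibers to recognize the $\CobloF_\beta$'s.
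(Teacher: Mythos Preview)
Your combinatorial core is right and matches the paper: the V-projection $\pi$ is exactly the construction the paper uses (Lemma~\ref{lem: app sums}), and the ``hardest step'' you flag --- constancy of $\#\CItilingsof(\imath(\alpha)/\gamma)$ over each $\pi$-fiber --- is precisely what the paper invokes, citing \cite[Lemma~2.15]{KKP:Boundary_correlations_in_planar_LERW_and_UST} on wedge-lifting. So no new bijective proof is needed there.

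However, your organization creates a genuine gap. You propose to prove the fusion formula for $\CobloF_\alpha$ \emph{first} (Step~1), by arguing that divergent intra-cluster kernel entries cancel across a $\pi$-fiber via ``the antisymmetry identities of Lemma~\ref{lem:Fomin properties}.'' But that lemma is about $\mathfrak{Z}^{\mathfrak{K}}_\alpha$, the Dyck-tiling-weighted inverse Fomin sum; it says nothing about unweighted sums $\sum_{\gamma \geq_\multii \imath(\alpha)} \Delta^{\mathfrak{K}'}_\gamma$ or about sums within a single $\pi$-fiber. The antisymmetry of $\mathfrak{U}^{\mathfrak{K}}_\alpha$ under intra-cluster swaps is true, but only \emph{because} it is a linear combination of the $\mathfrak{Z}^{\mathfrak{K}}_\gamma$ with $\gamma \in \LP_\multii$ --- which is exactly the linear relation you are trying to derive in Step~2. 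So Step~1 as written either has no justification or is circular.

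The paper sidesteps this entirely by reversing your order: it proves the linear relation $\mathfrak{Z}^{\mathfrak{K}}_\gamma = \sum_{\alpha \DPgeq \gamma} \#\CItilingsof(\gamma/\alpha)\, \mathfrak{U}^{\mathfrak{K}}_\alpha$ \emph{abstractly for an arbitrary kernel} $\mathfrak{K}$ (Lemma~\ref{lem: app sums}), using only your V-projection and the wedge-lifting constancy --- no limits, no divergences. Specializing to the fused kernel gives the theorem's linear relation; specializing to $\mathfrak{K}'$, inverting, and applying the already-known (FUS) for $\PartF_\beta$ gives the fusion formula for $\CobloF_\alpha$ in one line. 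This is both shorter and avoids the cancellation argument you would otherwise have to supply. Your fallback idea at the end (kernel substitution \`a la Proposition~\ref{prop:higher-valency solution}) points in this direction but is still phrased as an analytic manoeuvre; the cleaner realization is that the identity is purely combinatorial in the kernel entries.
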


The rest of the appendix constitutes the proof of Theorem~\ref{thm:CFT properties coblo}, which is entirely combinatorial.
We use the notation from Section~\ref{subsec:inv Fomin}. 
For any symmetric kernel $\mathfrak{K}$, set 
\begin{align} \label{eq: define Ufrak}
\textnormal{\gls{symb:CobloFdisc}}
:= \sum_{ \substack{\beta \in \LP_N \\ \beta \geq_\multii \alpha}} \Delta^{\mathfrak{K}}_\beta , \qquad \alpha \in \LP_\multii .
\end{align}
Let us also recall that the partial order $\DPgeq$ on $\LP_N$ is defined in~\cite[Definition~2.1]{KKP:Boundary_correlations_in_planar_LERW_and_UST}.

\begin{lemma} \label{lem: app sums}
For any $\gamma \in \LP_\multii$, we have
\begin{align*}
\mathfrak{Z}^{\mathfrak{K}}_\gamma = \sum_{\substack{\alpha \in \LP_\multii \\ \alpha \DPgeq \gamma}} \# \mathcal{C} (\gamma/\alpha) \, \mathfrak{U}^{\mathfrak{K}}_\alpha.
\end{align*}
Furthermore, this system of linear relations is invertible and upper-triangular in the partial order $\DPgeq$, i.e., 
there exists a matrix $(M_{\alpha, \gamma})_{\alpha, \gamma \in \LP_\multii}$ such that 
\begin{align*}
\mathfrak{U}^{\mathfrak{K}}_\alpha = \sum_{\substack{\gamma \in \LP_\multii \\ \gamma \DPgeq \alpha}} M_{\alpha, \gamma} \, \mathfrak{Z}^{\mathfrak{K}}_\gamma , \qquad \alpha \in \LP_\multii.
\end{align*}
\end{lemma}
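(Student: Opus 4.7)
The plan is to substitute the definition~\eqref{eq: define Ufrak} of $\mathfrak{U}^{\mathfrak{K}}_\alpha$ into the right-hand side of the first identity, swap the order of summation, and then, by the linear independence of the Fomin type determinants $\{\Delta^{\mathfrak{K}}_\beta\}_{\beta \in \LP_N}$ as polynomials in the abstract entries of $\mathfrak{K}$, compare the coefficient of each $\Delta^{\mathfrak{K}}_\beta$ with that in the definition of $\mathfrak{Z}^{\mathfrak{K}}_\gamma$. This reduces the first identity to the combinatorial claim
\begin{equation*}
\# \CItilingsof (\gamma/\beta) \, \mathbf{1}[\beta \DPgeq \gamma]
\;=\; \sum_{\substack{\alpha \in \LP_\multii \\ \alpha \DPgeq \gamma \\ \beta \geq_\multii \alpha}} \# \CItilingsof (\gamma/\alpha)
\end{equation*}
for every $\gamma \in \LP_\multii$ and every $\beta \in \LP_N$.

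Next, I would observe that the joint constraints $\alpha \in \LP_\multii$ and $\beta \geq_\multii \alpha$ essentially pin down a unique valenced link pattern $\alpha = \alpha(\beta)$: the equalities $\iota(\alpha)(\summ_j) = \beta(\summ_j)$ at the partial-sum positions, combined with the ``minimal'' up-then-down profile forced on each interval $[\summ_{j-1}, \summ_j]$ by membership in $\LP_\multii$ (any residual freedom from internal arcs being disposed of by a direct case analysis), determine $\alpha(\beta)$. A direct comparison of up-then-down profiles with prescribed endpoint heights yields that $\alpha(\beta) \DPgeq \gamma$ if and only if $\beta \DPgeq \gamma$, so for $\beta \DPgeq \gamma$ the sum on the right-hand side collapses to the single term $\# \CItilingsof(\gamma/\iota(\alpha(\beta)))$, and the combinatorial identity boils down to
\begin{equation*}
\# \CItilingsof(\gamma/\beta) \;=\; \# \CItilingsof \bigl(\gamma/\iota(\alpha(\beta))\bigr), \qquad \textnormal{for all } \beta \DPgeq \gamma \textnormal{ with } \gamma \in \LP_\multii.
\end{equation*}

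Proving this last identity will be the main obstacle. My plan is to construct an explicit bijection between the cover-inclusive Dyck tilings of the two skew shapes $\gamma/\beta$ and $\gamma/\iota(\alpha(\beta))$. The difference region between $\iota(\alpha(\beta))$ and $\beta$ decomposes, interval by interval across the partial-sum positions, into local ``bump'' chambers, and the cover-inclusive rule forces Dyck tiles to interact with these chambers in a rigid, deterministic fashion; the bijection would canonically fill or collapse these bumps tile by tile. The delicate point is to verify that the bijection is well-defined and preserves cover-inclusivity at every step --- this will constitute the bulk of the combinatorial work. An inductive alternative is to reduce, one chamber at a time, the number of positions where $\beta$ strictly exceeds $\iota(\alpha(\beta))$.

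Finally, the invertibility and upper-triangular structure of the transition matrix are then immediate: writing $N_{\gamma, \alpha} := \# \CItilingsof(\gamma/\alpha) \, \mathbf{1}[\alpha \DPgeq \gamma]$, the matrix $(N_{\gamma, \alpha})_{\gamma, \alpha \in \LP_\multii}$ vanishes unless $\alpha \DPgeq \gamma$ and has diagonal entries $\# \CItilingsof(\gamma/\gamma) = 1$ (the only cover-inclusive Dyck tiling of the empty skew shape is the empty tiling). In any linear extension of $\DPgeq$, the matrix $N$ is therefore upper-unitriangular and invertible, and its inverse $M := N^{-1}$ has the same triangular shape, giving the claimed expression $\mathfrak{U}^{\mathfrak{K}}_\alpha = \sum_{\gamma \DPgeq \alpha} M_{\alpha, \gamma} \, \mathfrak{Z}^{\mathfrak{K}}_\gamma$.
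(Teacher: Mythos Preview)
Your approach is essentially identical to the paper's: the same substitution and interchange of summation, the same uniqueness of $\alpha(\beta)$, the same reduction to the identity $\# \CItilingsof(\gamma/\beta) = \# \CItilingsof(\gamma/\alpha)$ for all $\beta \geq_\multii \alpha$, and the same upper-unitriangularity argument for invertibility. The one difference is that for the ``main obstacle'' you flag, the paper does not build a bijection but simply observes that every $\beta \geq_\multii \alpha$ is obtained from $\alpha$ by iterated \emph{wedge-lifting} of local minima at positions not among the $\summ_j$, and then cites~\cite[Lemma~2.15]{KKP:Boundary_correlations_in_planar_LERW_and_UST}, which asserts precisely that $\# \CItilingsof(\gamma/\cdot)$ is invariant under such wedge-lifts --- so your proposed bijection would amount to reproving that lemma.
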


\begin{proof}
We begin from the definition~\eqref{eq: inverse Fomin-type sum} of the inverse Fomin type sum (see Section~\ref{subsec:inv Fomin}): 
\begin{align} \label{eq: app sum}
\mathfrak{Z}^{\mathfrak{K}}_\gamma = \sum_{\substack{\beta \in \LP_N \\ \beta \DPgeq \gamma}} \# \mathcal{C} (\beta/\gamma) \, \Delta^{\mathfrak{K}}_\beta.
\end{align}
Note that, given $\beta$ and $\multii$, 
there is a at most one $\alpha \in \LP_\multii$ such that $\beta \geq_\multii \alpha$.
Indeed, fix the values $\alpha(\summ_j)=\beta(\summ_j)$, and between the points $(\summ_j, \beta(\summ_j))$, 
fill the curve $\alpha$ with the lowest possible integer walk, i.e., 
a \quote{slope} or a \quote{V-shape}. 
This gives $\alpha$.
Note that such an $\alpha$ does not exist if the integer walk is a V-shape that does not remain non-negative. 
However, as we have here restricted ourselves to $\beta \DPgeq \gamma$, 
we may note that $\gamma$ also interpolates with the lowest possible integer walk between the points $(\summ_j, \gamma(\summ_j))$, 
and we have $\gamma(\summ_j) \leq \beta (\summ_j) = \alpha (\summ_j) $. 
Hence, such an $\alpha$ exists --- and furthermore $\alpha \DPgeq \gamma$. 
Conversely, given that $\alpha \DPgeq \gamma$ and $\beta \geq_\multii \alpha$, we have $\beta \DPgeq \gamma$. 
In conclusion, we can split the sum in~\eqref{eq: app sum} as
\begin{align*}
\mathfrak{Z}^{\mathfrak{K}}_\gamma = \sum_{\substack{\alpha \in \LP_\multii \\ \alpha \DPgeq \gamma}} 
\sum_{\substack{\beta \in \LP_N \\ \beta \geq_\multii \alpha}} \#  \mathcal{C} (\gamma/\beta) \, \Delta^{\mathfrak{K}}_\beta.
\end{align*}

Now, to obtain the claimed formula, note that given $\multii$ and $\alpha \in \LP_\multii$, 
the collection of all $\beta \geq_\multii \alpha$ can be obtained iteratively by \quote{wedge-lifting} local minima that do not occur at time instants $\summ_j$ (defined just above Lemma~2.11 in \cite{KKP:Boundary_correlations_in_planar_LERW_and_UST}). 
In particular, by~\cite[Lemma~2.15]{KKP:Boundary_correlations_in_planar_LERW_and_UST}, 
we see that $\# \mathcal{C}(\gamma/\beta)$ is constant over $\beta \geq_\multii \alpha$. 
This implies that 
\begin{align*}
\# \mathcal{C} (\gamma/\beta) = \# \mathcal{C} (\gamma/\alpha), 
\end{align*}
and the first claim follows. 
For the second claim, we just observe that the first linear relation is upper-triangular, 
and that the coefficients $\# \mathcal{C} (\gamma/\alpha)$, for $\gamma \DPgeq \alpha$, are all positive.
\end{proof}

\begin{remark}
The matrix $M$ can be actually found relatively explicitly.
Namely, from~\cite{KKP:Boundary_correlations_in_planar_LERW_and_UST}, using the combinatorial notations $\KWleq$ and $T_0$ of~\cite[Definition~2.6]{KKP:Boundary_correlations_in_planar_LERW_and_UST}, we obtain
\begin{align*}
\Delta^{\mathfrak{K}}_\beta = \sum_{\substack{\gamma \in \LP_\multii \\ \beta \KWleq \gamma}} (-1)^{|T_0 (\beta / \gamma)|} \, \mathfrak{Z}^{\mathfrak{K}}_\gamma , \qquad \beta  \in \LP_\multii ,
\end{align*}
and thus, from the definition~\eqref{eq: define Ufrak}, we obtain the formula 
\begin{align*}
\mathfrak{U}^{\mathfrak{K}}_\alpha 
= \; & \sum_{ \substack{\beta \in \LP_N \\ \beta \geq_\multii \alpha}} \sum_{\substack{\gamma \in \LP_N \\ \beta \KWleq \gamma}} (-1)^{|T_0 (\beta / \gamma)|} \mathfrak{Z}^{\mathfrak{K}}_\gamma \\
= \; & \sum_{ \substack{\beta \in \LP_N \\ \beta \geq_\multii \alpha}} \sum_{\substack{\gamma \in \LP_\multii \\ \beta \KWleq \gamma}} (-1)^{|T_0 (\beta / \gamma)|} \mathfrak{Z}^{\mathfrak{K}}_\gamma, \qquad \textnormal{for all } \alpha \in \LP_\multii,
\end{align*}
using also the fact that 
$\{ \mathfrak{Z}^{\mathfrak{K}}_\gamma \; | \; \gamma \in \LP_N \}$
are linearly independent and $\mathfrak{U}^{\mathfrak{K}}_\alpha$ 
lies in the subspace spanned by $\{ \mathfrak{Z}^{\mathfrak{K}}_\gamma \; | \; \gamma \in \LP_\multii \}$, 
so all of the other terms must cancel out in the double sum.
\end{remark}

\begin{proof}[Proof of Theorem~\ref{thm:CFT properties coblo}]
We apply Lemma~\ref{lem: app sums} with fixed valences $\multii$, points $(x_1, \ldots, x_d) \in \chamber_d$, 
and the kernel $\mathfrak{K}$ defined in Theorem~\ref{thm:scaling limit of pinched pertition functions},
to obtain 
\begin{align*}
\mathfrak{Z}^{\mathfrak{K}}_\alpha(x_1, \ldots, x_d)  = \PartF_\alpha (x_1, \ldots, x_d) 
\qquad \textnormal{and} \qquad
\mathfrak{U}^{\mathfrak{K}}_\alpha(x_1, \ldots, x_d)  = \CobloF_\alpha(x_1, \ldots, x_d), \qquad \alpha \in \LP_\multii .
\end{align*}
The claimed linear relation is then directly given in Lemma~\ref{lem: app sums}. 
The properties \textnormal{(PDE)}, \textnormal{(COV)}, \textnormal{(LIN)} in Theorem~\ref{thm:CFT properties coblo} 
follow directly from those of $\PartF_\alpha$ (Theorem~\ref{thm:CFT properties}) by inverting the linear relation. 
The fusion property also follows by the linear relation
\begin{align*}
\mathfrak{U}^{\mathfrak{K'}}_\alpha = \sum_{\substack{\beta \in \LP_\multii \\ \beta \DPgeq \alpha}} M_{\alpha, \beta} \, \mathfrak{Z}^{\mathfrak{K'}}_\beta  = \sum_{ \substack{\beta \in \LP_N \\ \beta \geq_\multii \alpha}} \Delta^{\mathfrak{K'}}_\beta,
\end{align*}
with $\mathfrak{K'}$ in Theorem~\ref{thm:CFT properties coblo}.
Taking the renormalized limit of the middle expression yields 
\begin{align*}
\sum_{ \substack{\beta \in \LP_\multii \\ \beta \DPgeq \alpha}} M_{\alpha, \beta} \, \PartF_\beta = \CobloF_\alpha ,
\end{align*}
by virtue of Theorem~\ref{thm:CFT properties}. 
This concludes the proof.
\end{proof}

\bigskip{}
\section{Examples}
\label{app:explicit determinants}
\subsection{A generic example}

Suppose that $\alpha \in \LP_\multii$ is the following valenced link pattern of $N=5$ links between $d=5$ endpoints with valences $\multii = (1, 1, 4, 2, 2)$:
\begin{align*}
\alpha \quad = \quad \{ \{ 1, 3\}, \{ 2, 3\}, \{ 3, 4\}, \{ 3, 5\}, \{ 4, 5\} \} 
\quad = \quad
\raisebox{-0.04\textwidth}
{\includegraphics[width=0.20\textwidth]{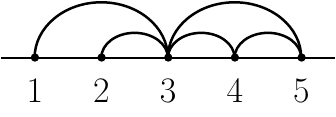}
} .
\end{align*}
We aim to find the expressions for the pure partition function $\PartF_\alpha ( x_1, \ldots, x_5)$ in~\eqref{eq:fused partition function} 
and the explicit basis function $\CobloF_\alpha ( x_1, \ldots, x_5)$ in~\eqref{eq:fused conformal block function}. 
Since both are linear combinations of the determinants $\{\Delta_\beta^\mathfrak{K} \;|\; \beta \in \LP_N \}$, 
we only exemplify one determinant. 
For instance, for the link pattern $\beta$ illustrated below, 
both\footnote{Here, we use the notation~\eqref{eq: partial order} from Appendix~\ref{app:coblo_det} and 
the notation $\imath (\alpha)$ from Figure~\ref{fig: UST fused}.} 
$\beta \geq_\multii \imath (\alpha) $ and $\beta \succeq \imath (\alpha)$ hold, 
so the determinant $\Delta_\beta^\mathfrak{K}$ appears in both $\PartF_\alpha$ and $\CobloF_\alpha$. 
To find $\Delta_\beta^\mathfrak{K}$ explicitly, 
we next tabulate the left-to-right orientation of $\beta$, 
as well as the corresponding boundary points and derivative operators determined by $\multii$. 
It is important to note that once the valences $\multii$ are fixed, 
the choice of the valenced link pattern $\alpha \in \LP_\multii$ 
does not enter the definition of $\Delta_\beta^\mathfrak{K}$. 

\bigskip
\begin{center}
\begin{tabular}{l L{0.7cm} L{0.5cm} L{0.5cm} L{0.5cm} L{0.5cm} L{0.5cm} L{0.5cm} L{0.5cm} L{0.5cm} L{0.5cm} L{0.5cm} L{0.5cm}}
\multicolumn{1}{r}{\raisebox{0.5cm}{$\beta = \; \; $}}& \multicolumn{12}{c}{{\includegraphics[width=9.9cm]{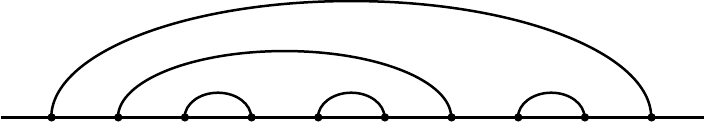}}}\\
\multicolumn{1}{l|}{index of boundary point} &&
 1 & 2 & 3 & 4 & 5 & 6 & 7 & 8 & 9 & 10 \\
\multicolumn{1}{l|}{left-to-right label ($\beta$)} & & 
$ a_1$ & $a_2$ & $a_3$ & $b_3$ & $a_4$ & $b_4$ & $b_2$ & $a_5$ & $b_5$ & $b_1$ & \\
\multicolumn{1}{l|}{corresp. boundary point ($\multii$)} & &
 $x_1$ & $x_2$ & $x_3$ & $x_3$ & $x_3$ & $x_3$ & $x_4$ & $x_4$ & $x_5$ & $x_5$ &  \\
\multicolumn{1}{l|}{corresp. derivative ($\multii$)} & &
 - & - & - & $\partial_3$ & $\partial_3^2$ & $\partial_3^3$ & - & $\partial_4$ & - & $\partial_5$ & 
\end{tabular}
\end{center}
\bigskip

Next, recall that $\Delta_\beta^\mathfrak{K} = \det (\mathfrak{K}(a_i, b_j))_{i,j=1}^N$, 
where the kernel $\mathfrak{K}(a_i, b_j)$ is defined in terms of the derivatives of Brownian excursion kernel $\ExcKH (y, z) = \tfrac{1}{\pi}\tfrac{1}{(z-y)^2}$ at the marked boundary points $x_i$, with $1 \leq i \leq d$. 
In this particular case, the matrix reads: 
\begin{center}
\begin{tabular}{l | L{0.3cm} C{0.5cm} C{0.5cm} C{0.5cm} C{0.5cm} C{0.5cm} }
&& $b_1$ & $b_2$ & $b_3$ & $b_4$ & $b_5$  \\
\hline
$a_1$ &&
$\partial_5 \ExcKH (x_1, x_5)$ & $\ExcKH (x_1, x_4)$ & $\partial_3 \ExcKH (x_1, x_3)$ & $\partial_3^3 \ExcKH (x_1, x_3)$ & $\ExcKH (x_1, x_5)$
 \\
$a_2$ & & 
$\partial_5 \ExcKH (x_2, x_5)$ & $\ExcKH (x_2, x_4)$ & $\partial_3 \ExcKH (x_2, x_3)$ & $\partial_3^3 \ExcKH (x_2, x_3)$ & $\ExcKH (x_2, x_5)$
 \\
$a_3$ & &
$\partial_5 \ExcKH (x_3, x_5)$ & $\ExcKH (x_3, x_4)$ & $0$ & $0$ & $\ExcKH (x_3, x_5)$
 \\
$a_4$ & &
$ \partial_3^2 \partial_5 \ExcKH (x_3, x_5)$ & $\partial_3^2 \ExcKH(x_3, x_4)$ & $0$ & $0$ & $\partial_3^2 \ExcKH (x_3, x_5)$ 
\\
$a_5$ &&
$\partial_4 \partial_5 \ExcKH (x_4, x_5)$ & $0$ & $\partial_4 \partial_3 \ExcKH (x_4, x_3)$ & $\partial_4 \partial_3^3 \ExcKH (x_4, x_3)$ & $\partial_4 \ExcKH (x_4, x_5)$
\end{tabular}
\end{center}
and plugging this into the concrete expressions, we obtain
\begin{align*}
\Delta_\beta^\mathfrak{K}
= \frac{1}{\pi^5}
\det
\left(
\begin{array}{c c c c c}
-\frac{2 }{(x_5 - x_1)^3} & \frac{ 1}{(x_4 - x_1)^2} & -\frac{2 }{(x_3 - x_1)^3} & -\frac{ 4! }{(x_3 - x_1)^5} & \frac{ 1}{(x_5 - x_1)^2}
 \\
-\frac{2 }{(x_5 - x_2)^3} & \frac{1}{(x_4 - x_2)^2} & -\frac{2}{(x_3 - x_2)^3} & -\frac{ 4! }{(x_3 - x_2)^5} & \frac{ 1}{(x_5 - x_2)^2}
 \\
-\frac{2 }{(x_5 - x_3)^3} & \frac{ 1}{(x_4 - x_3)^2} & 0 & 0 & \frac{ 1}{(x_5 - x_3)^2}
 \\
-\frac{4!}{(x_5 - x_3)^5} & \frac{3!}{(x_4 - x_3)^4} & 0 & 0 & \frac{3!}{(x_5 - x_3)^4} 
\\
- \frac{3! }{(x_5 - x_4)^4} & 0 & - \frac{3! }{(x_3 - x_4)^4} & -\frac{5! }{(x_3 - x_4)^6} & \frac{2 }{(x_5 - x_4)^3} 
\end{array}
\right).
\end{align*}

\subsection{The rainbow pattern}
\label{app:rainbow}
Our main results do not address (and do not hold true) for generic determinants 
$\{ \Delta^\mathfrak{K}_\beta \; | \; \beta \in \LP_N \}$ of the above type, 
but only for their suitable \emph{linear combinations} such as 
$\{ \PartF_\alpha \; | \; \alpha \in \LP_\multii \}$ or 
$\{ \CobloF_\alpha \; | \; \alpha \in \LP_\multii \}$. 
A very simple special case ---  already observed in the discrete model by Fomin~\cite{Fomin:LERW_and_total_positivity} 
--- is the $N$-link \quote{rainbow pattern} 
$\rainbow{N} := \{ \{ 1, 2N\}, \{ 2, 2N -1 \}, \ldots, \{ N, N+1\} \}$ (see Figure~\ref{fig:watermelon UST}).
Suppose that $\imath(\alpha) = \rainbow{N}$ and $\alpha$ is obtained from $\rainbow{N}$ 
via any (compatible) fusion with valences $\multii$. 
Then, the inverse Fomin type sum becomes just a single determinant, and 
\begin{align*}
\PartF_\alpha (x_1, \ldots, x_d)
= \CobloF_\alpha (x_1, \ldots, x_d)
= \Delta^\mathfrak{K}_{\alpha}.
\end{align*}
This case provides a simple sanity check for our results.

\bigskip

To consider a particularly symmetric setup, 
let us suppose that $\multii = (1, 1, \ldots, 1, N)$ and denote $x_{N+1} =:y$. 
Let $\alpha$ be the unique $\multii$-valencend link pattern obtained from $\rainbow{N}$ by fusing 
the last $N$ indices together. We tabulate: 
\begin{center}
\begin{tabular}{l L{0.3cm} L{0.5cm} L{0.5cm} L{0.5cm} L{0.5cm} L{0.5cm} L{0.5cm} L{0.5cm} L{0.5cm} }
\multicolumn{1}{l|}{index of boundary point ($\rainbow{N}$)} &&
$1$ & $2$ & \ldots & $N$ & $N+1$ & $N+2$ & \ldots & $2N$ \\
\multicolumn{1}{l|}{left-to-right label ($\rainbow{N}$)} & & 
$ a_1$ & $a_2$ & \ldots & $a_N$ & $b_N$ & $b_{N-1}$ & \ldots & $b_1$ \\
\multicolumn{1}{l|}{corresp. boundary point ($\multii$)} & &
 $x_1$ & $x_2$ & \ldots & $x_N$ & $y$ & $y$ & \ldots & $y$  \\
\multicolumn{1}{l|}{corresp. derivative ($\multii$)} & &
 - & - & \ldots & - & - & $\partial_y$ & \ldots &  $\partial_y^{N-1}$ 
\end{tabular}
\end{center}
From this, we obtain
\begin{align*}
\mathfrak{K}(a_i, b_j) \; = \; \partial_y^{N-j} \frac{\pi^{-1}}{(y-x_i)^2}
\; = \; \frac{(-1)^{N-j} (N-j + 1)! \pi^{-1} }{(y - x_i)^{N-j + 2}} ,
\end{align*}
and consequently, 
\begin{align*}
\PartF_\alpha (x_1, \ldots, x_N, y)
= & \; \Delta^\mathfrak{K}_\alpha = \det (\mathfrak{K}(a_i, b_j))_{i,j=1}^N \\
= & \; (-1)^{\lfloor  N / 2 \rfloor} \det ( \mathfrak{K}(a_i, b_{N-j+ 1}) )_{i,j=1}^N \\
= & \; (-1)^{\lfloor  N / 2 \rfloor} \det \bigg( \frac{(-1)^{j-1} j! }{ \pi \, (y - x_i)^{j+1}} \bigg)_{i,j=1}^N \\
= & \; \frac{1}{\pi^{N}} \prod_{k=1}^N \frac{k!}{(y-x_k)^2} \times \det \Big( \frac{ 1 }{(y - x_i)^{j-1}} \Big)_{i,j=1}^N.
\end{align*}
The determinant can be simplified by Vandermonde's determinant formula:
\begin{align} \label{eq:watermelon partition function}
\PartF_\alpha (x_1, \ldots, x_N, y) 
= \frac{1}{\pi^{N}}  \prod_{k=1}^N \frac{k!}{(y-x_k)^{N+1}} \times  \prod_{1 \leq i < j \leq N} (x_j - x_i) .
\end{align}
As a sanity check, the second-order BPZ PDEs~\eqref{eq:BPZ} can now be verified directly by hand.
This expression also plays a crucial role in the application of Appendix~\ref{subsec:watermelon SLE}.

\begin{figure}
\begin{overpic}[width=0.8\textwidth]{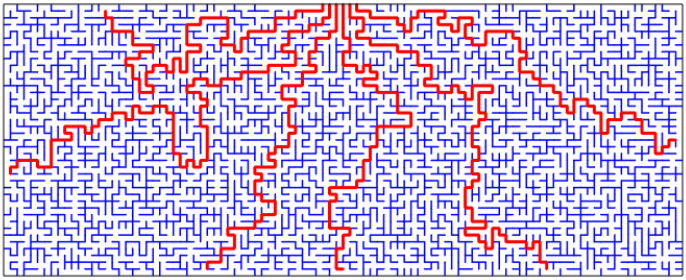}
 \put (29,-2) {\large $e_3$}
 \put (48.5,-2) {\large $e_4$}
 \put (79,-2) {\large $e_5$}
 \put (101,19.5) {\large $e_6$}
 \put (-3,15) {\large $e_2$}
 \put (15,41.5) {\large $e_1$}
 \put (43,41.5) {\large $e_{12}\ldots e_7$}
\end{overpic}
\caption{
\label{fig:watermelon UST}
The discrete geometry pointed out by Fomin~\cite{Fomin:LERW_and_total_positivity} and studied (in the limiting unfused case) 
early on in the SLE context~\cite{Kozdron-Lawler:Configurational_measure_on_mutually_avoiding_SLEs}, here transferred to spanning trees. 
The boundary branches from the interior vertices of $e_1, \ldots, e_N$ (here $N=6$) reach the boundary via the edges $e_{N+1}, \ldots, e_{2N}$, each using a different edge. 
The only topologically possible link pattern is then the rainbow pattern ${\rainbow{N}}$, 
and the resulting boundary-to-boundary branches have the same law as the odd-to-even index branches studied in this article, 
forming the rainbow pattern~\cite[Lemma~3.1]{KKP:Boundary_correlations_in_planar_LERW_and_UST}. 
The discussion in Appendices~\ref{app:rainbow} and~\ref{subsec:watermelon SLE} correspond to this setup, 
when the edges $e_1, \ldots, e_N$ are macroscopically apart, 
while the edges $e_{N+1}, \ldots, e_{2N}$ are only one lattice step apart, as illustrated in this figure.
}
\end{figure}

\bigskip{}
\section{Two SLE applications}
\label{app:fused SLE}
We now briefly sketch two applications to the  theory of Schramm-Loewner evolutions (SLE). 
This appendix is aimed for readers with probabilistic background and interests, and we assume at least superficial familiarity with SLE. 
Due to the advanced state of the literature, we do not give detailed proofs (though such could be produced with technical work).
We first discuss multiple SLE curves growing from the same point in Section~\ref{subsec:Fusing endpoints}.
The second application concerns scaling limits of the UST branches studied in this article.
Specifically, in Section~\ref{subsec:watermelon SLE} 
we explain how one can characterize the scaling limits of \quote{fused rainbow,} or \quote{half-watermelon,} 
type random curves illustrated in Figure~\ref{fig:watermelon UST} above. 

\subsection{Fusing endpoints of multiple SLEs}
\label{subsec:Fusing endpoints}

Let us explain how the \quote{valenced} pure partition functions 
describe the limit of local multiple SLEs as some of the marked points (other than the growth point) tend together
(see Figure~\ref{fig:NSLE fusion}). 
Recall first that, for local multiple SLEs of $N$ curves in $(\bH; x_1, \ldots, x_{2N})$, 
with the pairing $\alpha \in \LP_N$ and parameter $\kappa \in (0,8)$, 
the curve initial segment growing from $x_j$ is described by the \emph{Loewner equation}
\begin{align} \label{eq:Loewner ODE}
\begin{cases}
g_0 (z) = z , & z \in \overline{\bH} ,
\\
\partial_t g_t (z) = \dfrac{2}{g_t(z) - X\super{j}_t} , & z \in \overline{\bH} , \, t \geq 0 ,
\end{cases}
\end{align}
where the driving function is determined by $X\super{j}_0 = x_j$ and
\begin{align}
\label{eq:multiple SLE}
\diff X\super{j}_t = \sqrt{\kappa} \, \diff B_t + \kappa \,  \frac{\partial_{j} \PartF_\alpha (X\super{1}_t, \ldots, X\super{N}_t)}{\PartF_\alpha (X\super{1}_t, \ldots, X\super{2N}_t)} \diff t, 
\qquad X\super{i}_t = g_t (x_i) , \textnormal{ for } i \neq j.
\end{align}
Conventionally, we consider this process stopped at the exit time of a \quote{localization} neighborhood of $x_j$.
The results of this appendix are specific for $\kappa = 2$, but we will give the equations in the form that they (in some cases conjecturally) generalize to other $\kappa$.

\begin{figure}
\includegraphics[width=0.5\textwidth]{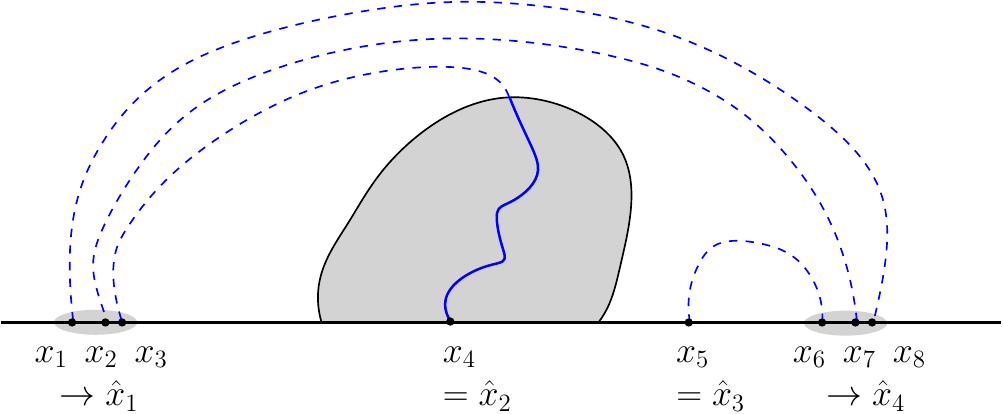}
\caption{
\label{fig:NSLE fusion}
Fusing endpoints of multiple SLEs: the initial segment from an unfused boundary point $x_4$ 
up to the exit time of its localization neighborhood (shaded), 
and an underlying link pattern compatible with the fusion.
}
\end{figure}

Now, fix valences $\multii \in \bZpos^d$ and a $\multii$-valenced link pattern $\hat{\alpha} \in \LP_\multii$, 
with $\imath(\hat{\alpha}) = \alpha$, and an index $1 \leq \hat{j} \leq d$ with $s_{\hat{j}} = 1$ so that the $\hat{j}$:th boundary point of $\hat{\alpha}$ corresponds after the unfusing map to the $j$:th point of $\alpha$ 
(i.e., $\summ_{\hat{j}} = j$).
Let us consider the limit of the multiple SLE process~(\ref{eq:Loewner ODE},~\ref{eq:multiple SLE}) 
as $x_1, \ldots, x_{2N}$ are fused to $\hat{x}_1, \ldots, \hat{x}_d$ according to the valences $\multii$ 
(Figure~\ref{fig:NSLE fusion}). 
Applying Proposition~\ref{prop:fused SLE} to both\footnote{With $\kappa=2$, the derivative $\partial_{j} \PartF_\alpha$ too 
is an inverse Fomin type sum, obtained from the pure partition function $\PartF_\alpha$ 
by differentiating the kernel $\mathfrak{K}(j, \cdot)$ with $\partial_j$. 
$\partial_{\hat{j}} \PartF_{\hat{\alpha}} $ can be described analogously in terms of $\PartF_{\hat{\alpha}}$. Alternatively, the uniform convergence $\PartF_{\alpha} \to \PartF_{\hat{\alpha}}$ given by Proposition~\ref{prop:fused SLE} can be extended to derivatives by the local complex analyticity of the functions.} 
$\PartF_\alpha$ and $\partial_{j} \PartF_\alpha$, 
we see that as the maximal distance $\smash{\underset{1\leq k \leq d}{\max}} \, \{ x_{\summ_k} - x_{\summ_{k-1}+1} \}$ 
of the boundary points to be fused tends to zero,
\begin{align*}
\frac{\partial_{j} \PartF_\alpha (x_1, \ldots, x_{2N})}{\PartF_\alpha (x_1, \ldots, x_{2N})} 
\quad \longrightarrow \quad 
\frac{\partial_{\hat{j}} \PartF_{\hat{\alpha}} (\hat{x}_1, \ldots, \hat{x}_d)}{\PartF_{\hat{\alpha}} (\hat{x}_1, \ldots, \hat{x}_d)},
\end{align*}
uniformly over compact sets $(\hat{x}_1, \ldots, \hat{x}_d) \in K \subset \chamber_d$. 
(Note that the strict \emph{positivity} of $\PartF_{\hat{\alpha}}$, i.e., 
property \textnormal{(POS)} in Theorem~\ref{thm:CFT properties}, is crucial here.) 
Consequently, the stopped driving functions converge in the same limit (in the uniform-over-compacts topology) 
almost surely to the stopped solution to the SDE
\begin{align}
\label{eq:fused multiple SLE}
\diff \hat{X}\super{\hat{j}}_t = \sqrt{\kappa} \, \diff B_t + \kappa \, \frac{\partial_{\hat{j}} \PartF_{\hat{\alpha}} (\hat{X}\super{1}_t, \ldots, \hat{X}\super{d}_t)}{\PartF_{\hat{\alpha}} (\hat{X}\super{1}_t, \ldots, \hat{X}\super{d}_t)} \diff t , 
\quad \hat{X}\super{i}_t = g_t (\hat{x}_i) , \textnormal{ for } i \neq \hat{j}.
\end{align}
In conclusion, the fused partition functions of the present work serve as partition functions of \quote{fused multiple SLE curves.}  
The convergence of driving functions could be fairly straightforwardly promoted to convergence of curves~\cite{Kemppainen-Smirnov:Random_curves_scaling_limits_and_Loewner_evolutions, Karrila:Multiple_SLE_local_to_global},
since the related random curves are simple (for $\kappa = 2 \leq 4$).
See also the recent~\cite{HPW:Multiradial_SLE_with_spiral} for a special case. 

Let us note that the above relied on Theorem~\ref{thm:CFT properties} \textnormal{(FUS)} (simultaneous generalization) and \textnormal{(POS)}, and tacitly also on \textnormal{(PDE)} and \textnormal{(COV)} via the definition of multiple SLEs. 
The \emph{linear independence} of $\{\PartF_{\hat{\alpha}} \;|\; \hat{\alpha} \in \LP_\multii \}$,
i.e., property \textnormal{(LIN)} in Theorem~\ref{thm:CFT properties}, 
also has a probabilistic interpretation: 
it implies that the laws of the curves/driving functions~\eqref{eq:fused multiple SLE} for different $\hat{\alpha} \in \LP_\multii$ are \emph{convex 
independent}. 
The proof of this fact would be similar to that of~\cite[Lemma~1]{Karrila:Computation_of_pairing_probabilities_in_multiple-curve_models}, replacing the maximum principle of elliptic PDEs 
with the (much simpler) fact that rational functions are either everywhere constant or nowhere locally constant.

\subsection{Scaling limits of half-watermelon branches}
\label{subsec:watermelon SLE}

We now briefly explain how one can characterize the scaling limit of the \quote{fused rainbow,} 
or \quote{half-watermelon,} type random curves, which were analyzed in Appendix~\ref{app:explicit determinants} and illustrated in Figure~\ref{fig:watermelon UST}.

Consider first the \textit{marginal law} of one curve. The analogous unfused case has been addressed in~\cite{Karrila:UST_branches_martingales_and_multiple_SLE2}, and the fused case can be treated identically, 
by investigating the convergence of observables~\cite[Proposition~5.2]{Karrila:UST_branches_martingales_and_multiple_SLE2} 
via our
Theorem~\ref{thm:scaling limit of pinched pertition functions}.
The conclusion is that, upon mapping conformally the limiting domain $(\domain; p_1, \ldots, p_{N+1})$ to $(\bH; x_1, \ldots, x_N, x_{N+1})$ with $x_{N+1}=y$, 
the image of the limit curve from any $x_j$, $1 \leq j \leq N$, is described by the Loewner growth process~\eqref{eq:Loewner ODE} where the driving function is determined by $X\super{j}_0 = x_j$ and
\begin{align}
\label{eq:SLE watermelon finite target}
\diff X\super{j}_t = \sqrt{\kappa} \,  \diff B_t + \kappa \,  \frac{\partial_{x_j} \PartF_\alpha (X\super{1}_t, \ldots, X\super{N}_t, Y_t)}{\PartF_\alpha (X\super{1}_t, \ldots, X\super{N}_t, Y_t)} \diff t , 
\qquad X\super{i}_t = g_t (x_i) , \textnormal{ for } i \neq j ,
\end{align}
and $Y_t = g_t(y)$ where in the present case, 
we have $\kappa=2$ and $\PartF_\alpha$ is given in Equation~\eqref{eq:watermelon partition function}. 

This convergence result is also a \emph{conformal invariance} property: the formula above holds (and describes the same random curves) for any $(x_1, \ldots, x_N, y)$ that can be obtained as a conformal image of the marked points. 
In particular, choosing different conformal maps so that $y \to \infty$ while 
the $x_i$:s converge, the SLE driving functions converge\footnote{The convergence holds in the uniform-on-compacts topology.} 
almost surely to a similar process with the SDEs~\eqref{eq:SLE watermelon finite target} altered to
the ($\SLE(\kappa; 2,2,\ldots,2)$ type) SDEs
\begin{align}
\label{eq:watermelon SLE marginal}
\diff X\super{j}_t = \sqrt{\kappa} \, \diff B_t + \sum_{i \neq j} \frac{2}{X\super{j}_t - X\super{i}_t} \diff t, 
\qquad X\super{i}_t = g_t (x_i) , \textnormal{ for } i \neq j .
\end{align}
This provides a more convenient description of the marginal law.
The processes~\eqref{eq:watermelon SLE marginal} (for any $\kappa < 8$) have for long been interpreted as the marginal law of one curve out of multiple SLEs growing from $x_1, \ldots, x_N$ to infinity~(see, e.g.,~\cite[Section~4.6]{BBK:Multiple_SLEs_and_statistical_mechanics_martingales}).

Lastly, we address the \emph{joint limit} of the multiple curves. 
It can be characterized in terms of iterated growth processes of this type\footnote{Promoting a spatial Markovian property of a discrete model to its scaling limit, see~\cite{Karrila:Multiple_SLE_local_to_global}.}. 
Note first that by an \emph{a posteriori} analysis of the $\SLE(\kappa; 2,2,\ldots,2)$ type process~\eqref{eq:watermelon SLE marginal}, it describes a chordal simple curve from $x_j $ to $\infty$. 
Now, let us grow the curve from $x_1$ until it reaches a spherical distance $\epsilon$ from $\infty$. 
The conditional law of the curve starting from $x_2$ is then a similar growth process in the domain slit by the first initial segment.  
The limiting curves being simple, knowing the laws of such initial segments with any $\epsilon$ characterizes 
the law of the entire curve collection.  

Alternatively, such iterated growth processes of the type~\eqref{eq:watermelon SLE marginal} produce 
the same curves as the \emph{simultaneous growth} description using independent Brownian motions $(B\super{1}, \ldots, B\super{N})$:
\begin{align}
\label{eq:SLE simult growth 1}
\partial_t g_t (z) = \; & \frac{1}{N} \sum_{i=1}^N \frac{2}{g_t (z) - X\super{i}_t }, 
\\ 
\label{eq:SLE simult growth 2}
\diff X\super{i}_t = \; & \sqrt{ \frac{\kappa}{N}} \, \diff B\super{i}_t + \frac{1}{N}  \sum_{\substack{k= 1 \\ k \neq i }}^N \frac{4 \, \diff t}{X\super{i}_t - X\super{k}_t}, \qquad i \in \{1,2,\ldots,N\} ,
\end{align}
providing another characterization of the limiting joint law.
A rigorous proof of the above statement appears not to be completely present in the literature, 
as also discussed in~\cite[Section~1.3]{Katori-Koshida:Three_phases_of_multiple_SLE_driven_by_non-colliding_Dyson_Brownian_motions}.
A possible proof would combine an infinitesimal growth argument similar to~\cite[Section~4]{Healey-Lawler:N_sided_radial_SLE} with the commutation property of~\cite{Dubedat:Commutation_relations_for_SLE} (also,~\cite[Appendix~A]{Kytola-Peltola:Pure_partition_functions_of_multiple_SLEs}).
The details in the case of the half-watermelon SLE process will appear in~\cite{HPW:Multiradial_SLE_with_spiral}, where it is applied to address the resampling property of multiradial $\SLE_\kappa$ with $\kappa \in (0,4]$. 

We conclude by noting that we are not aware of the processes~\eqref{eq:watermelon SLE marginal} 
or~(\ref{eq:SLE simult growth 1},~\ref{eq:SLE simult growth 2}) 
having previously been identified as a \emph{scaling limit} of any lattice model or any $\kappa$
--- in spite of them being fairly well-studied as continuous objects~(e.g.,~\cite{Monaco-Schleissinger:Multiple_SLE_and_the_complex_Burgers_equation, Hotta-Katori:Hydrodynamic_limit_of_multiple_SLE, Katori-Koshida:Three_phases_of_multiple_SLE_driven_by_non-colliding_Dyson_Brownian_motions, Hotta-Schleissinger:Limits_of_radial_multiple_SLE_and_Burgers-Loewner_differential_equation, Chen-Margarint}).
Outlining this was one of our motivations to include Appendices~\ref{app:explicit determinants}~\&~\ref{app:fused SLE} to this article.

\newcommand{\changeurlcolor}[1]{\hypersetup{urlcolor=#1}} 
\changeurlcolor{black}

\bibliographystyle{alpha}

\newpage

\printglossary[type=symbolslist]

\end{document}